\newcommand{\ket}[1]{|#1\rangle}
\newcommand{\bra}[1]{\langle#1|}
\newcommand{\Tr}{\mathrm{Tr}}
\newcommand{\Z}{\mathbb{Z}}
\newcommand{\BMS}{\mathrm{BM}}
\newcommand{\RBMS}{\mathrm{RBM}}
\newcommand{\Ocal}{\mathcal{O}}
\newcommand{\Sym}{\mathrm{Sym}}
\newcommand{\Lcal}{\mathcal{L}}
\newcommand{\Rcal}{\mathcal{R}}
\newcommand{\Ecal}{\mathcal{E}}
\newtheorem{theorem}{Theorem}[section]
\newtheorem{lemma}[theorem]{Lemma}
\newtheorem{assumption}[theorem]{Assumption}
\begin{document}
\title{Restricted Boltzmann Machines and Matrix Product States of 1D Translational Invariant Stabilizer Codes}

\author{Yunqin Zheng}
\affiliation{Physics Department, Princeton University, Princeton, New Jersey 08544, USA}

\author{Huan He}
\affiliation{Physics Department, Princeton University, Princeton, New Jersey 08544, USA}

\author{Nicolas Regnault}
\affiliation{
	Laboratoire Pierre Aigrain, D\'epartement de physique de l'ENS, \'Ecole normale sup\'erieure, PSL Research University, Universit\'e Paris Diderot, Sorbonne Paris Cit\'e, Sorbonne Universit\'es, UPMC Univ. Paris 06, CNRS, 75005 Paris, France
}

\author{B. Andrei Bernevig}
\affiliation{Physics Department, Princeton University, Princeton, New Jersey 08544, USA}
\affiliation{Physics Department, Freie Universitat Berlin, Arnimallee 14, 14195 Berlin, Germany}
\affiliation{Max Planck Institute of Microstructure Physics, 06120 Halle, Germany}


\begin{abstract}
We discuss the relations between restricted Boltzmann machine (RBM) states and the matrix product states (MPS) for the ground states of 1D translational invariant stabilizer codes. A generic translational invariant and finitely connected RBM state can be expressed as an MPS, and the matrices of the resulting MPS are of rank 1. We dub such an MPS as an RBM-MPS. This provides a necessary condition for exactly realizing a quantum state as an RBM state, if the quantum state can be written as an MPS. For generic 1D stabilizer codes having a non-degenerate ground state with periodic boundary condition, we obtain an expression for the lower bound of their MPS bond dimension, and an upper bound for the rank of their MPS matrices. In terms of RBM, we provide an algorithm to derive the RBM for the cocycle Hamiltonians whose MPS matrices are proved to be of rank 1. Moreover, the RBM-MPS produced by our algorithm has the minimal bond dimension. A family of examples is provided to explain the algorithm. We finally conjecture that these features hold true for all the 1D stabilizer codes having a non-degenerate ground state with periodic boundary condition, as long as their MPS matrices are of rank 1.
\end{abstract}

\maketitle

\tableofcontents

\clearpage

\section{Introduction}
\label{Sec.Introduction}

Restricted Boltzmann machines (RBM) and more generally neural networks \cite{Carleo2017Solving,torlai2018neural,carleo2018constructing,jia2018efficient,Kaubruegger2018Chiral,Glasser2018Neural,Clark2018Unifying,Nomura2017Restricted,Rao2018Identifying,morningstar2017deep,gao2017efficient,Deng2017Quantum,Chen2018Equivalence,Aoki2016Restricted,Tubiana2017Emergence, 2018PhRvB..97c5116C,Huang2017Accelerated, 2018arXiv181002352L}, have recently gained lots of attention as numerical tools for studying quantum many-body physics, boosted by the fast paced progress in machine learning. An RBM is a restriction from a Boltzmann machine (BM). The BM is defined on a bipartite graph, whose vertices are grouped into two classes: the visible vertices and the hidden vertices.  Suppose there are $n$ visible vertices and $m$ hidden vertices, and we associate the visible variables $g\in\{0,1\}^n$ and the hidden variables $h\in \{0,1\}^m$ on the visible and hidden vertices respectively. The variables $\{g, h\}$ obey the Boltzmann distribution, 
\begin{equation}\label{eq.boltzmanndistribution}
P(g,h)=\frac{1}{Z}\exp\left(-\Ecal(g,h)\right),
\end{equation}
where $\Ecal(g,h)$ is a real function mimicking the ``energy" in the Boltzmann distribution, and $Z=\sum_{g,h}\exp\left(-\Ecal(g,h)\right)$ is the partition function. As the name suggests, only the visible variables will show up in the physical probability distribution, while the hidden variables are summed over and thus hidden. Given Eq.~\eqref{eq.boltzmanndistribution}, the BM is defined to be the marginal distribution $P(g)$ over the visible variables $g$ by summing over all the hidden variables $\{h\}$
\begin{equation}
P(g) = \sum_{h} P(g,h) = \frac{1}{Z} \sum_{h}\exp\left(-\Ecal(g,h)\right).
\end{equation}
The RBM further requires that the ``energy" function $\Ecal(g,h)$ depends linearly on $g$ and $h$. The most important property of RBM is its representing power. It has been proven\cite{le2008representational} in the machine learning context that any probability distribution $P_0(g)$ of an $n$ number of $\Z_2$ variables, i.e., $g\in \{0,1\}^n$, can be approximated arbitrarily well by an RBM $P(g)$ given enough number of hidden spins. See Ref.~\onlinecite{le2008representational} for details. 

For the purposes of the quantum physics, it is natural to change the ``energy" function $\Ecal(g,h)$ from a real function to a complex one. Then we can interpret the ``complex probability distribution" $P(g)$ as the coefficients of a quantum many-body wave function:\footnote{To obtain a normalized state, we need to rescale $P(g)$ by a common factor irrelevant of $g$. } \footnote{The construction of the quantum wave-function from a classical Hamiltonian has been discussed in Ref.~\onlinecite{PhysRevB.91.155150} following the work of Rokhsar and Kivelson\cite{PhysRevLett.61.2376}.}
\begin{equation}\label{Eq.1}
\ket{\Psi} = \sum_{g} P(g) \ket{g},
\end{equation}
where $\ket{g}$ is the basis to expand the quantum states $\ket{\Psi}$. The RBM state refers to the ansatz in Eq.~\eqref{Eq.1}. 

The ground state of a 1D gapped local Hamiltonian has entanglement entropy $S(L)$ for a subregion of length $L$ which obeys area law\cite{Hastings2007Area} and is in fact a constant, i.e., $S(L)\sim \mathrm{constant}$. Therefore, we expect that it suffices to use a constant number of hidden spins per visible spin, when we represent such a ground state by an RBM. One of the purposes of this paper is to study the representing power of RBM for 1D stabilizer code ground states using as few hidden spins as possible.

For a 1D gapped local Hamiltonian, its ground state is efficiently encoded as a matrix product state (MPS)\cite{1987PhRvL..59..799A, fannes1992, PhysRevLett.69.2863, PhysRevLett.75.3537,  Hastings2007Area, 2006quant.ph..8197P, 2008AdPhy..57..143V,  2008PhRvL.100c0504S, schollwock2011}. An MPS can be obtained either numerically (via, e.g., the density matrix renormalization group) or analytically. The core reason of this efficiency is the area law satisfied by these many body quantum states. Indeed, the entanglement of a generic MPS is upper bounded by the dimension of the MPS matrices, i.e., the bond dimension $D$. On general grounds, the RBM state and the MPS share many features in common\cite{ Chen2018Equivalence, Deng2017Quantum, 2017arXiv170106246H, 2018arXiv180810601J}. Both the virtual indices in the MPS and the hidden variables in the RBM serve as the glue between different parts of the state, and thus provide nontrivial entanglement.  In the literature, some of the relations between MPS and the RBM have already been studied. In Refs.~\onlinecite{Chen2018Equivalence} and  \onlinecite{2018arXiv181002352L}, a numerical algorithm has been proposed to convert an RBM into an MPS as well as into a projected entangled pair state (PEPS) - a generalization of MPS in higher dimension. Refs.~\onlinecite{Chen2018Equivalence} and  \onlinecite{2018arXiv181002352L} also studied how the PEPS are mapped to the RBM for some subset of the stabilizer codes whose interactions are products of either purely Pauli $X$ or purely $Z$ operators (e.g. the toric code model\cite{kitaev2003fault}). In Refs.~\onlinecite{Deng2017Quantum} and \onlinecite{PhysRevB.96.195145}, an RBM state for the 1D $ZXZ$ model was found numerically. However, their RBM state is not optimal: the MPS from their RBM state has bond dimension 4, which is larger the minimal bond dimension 2. In other words, their MPS uses too many hidden spin variables. We present another analytical construction which yields an optimal RBM state for the $ZXZ$ model. We also systematically construct the optimal RBM states for a large family of stabilizer codes.  

In this paper, we make progresses toward answering the following questions, when the stabilizer codes have one ground state with periodic boundary condition (PBC):
\begin{enumerate}
	\item How to map a translational invariant and finitely connected RBM to an MPS?
	\item Given a stabilizer code, how to find the MPS of its ground state?
	\item Given a stabilizer code, can we cast the ground state as an RBM state minimizing the number of hidden spins?
\end{enumerate}

A crucial concept in our paper is the \emph{rank} of the MPS matrix, which we define to be the rank of the matrix with the \emph{fixed} physical index. We will justify the validity of this concept in Sec.~\ref{Sec.IntroBM}. 

This paper is organized as follows: In Sec.~\ref{Sec.IntroBM}, we review the BM and RBM, and discuss how an RBM state can be expressed as an MPS which we dub as RBM-MPS. In particular, we show that the rank of the non-vanishing RBM-MPS matrices must be 1.  In Sec.~\ref{Sec.MPSSC}, we present an algorithm deriving the MPS from a stabilizer code Hamiltonian. We illustrate the algorithm through the example of $ZZXZZ$ model. We derive a lower bound for the bond dimension of the MPS matrices. In Sec.~\ref{Sec.RankMPS},  we provide an upper bound for the rank of the MPS matrices. In Sec.~\ref{Sec.BMSofQSC}, we show that the MPS matrices for the cocycle Hamiltonians, which are representative Hamiltonians of generic SPT phases, are of rank 1. We provide an algorithm to derive the optimal RBM of the cocycle Hamiltonians whose bond dimension of the RBM-MPS saturates the lower bound that is derived in Sec.~\ref{Sec.MPSSC}. We apply our algorithm to the $Z^{q-1}XZ^{q-1}$ models deriving their RBM-MPS matrices. 

\section{(Restricted) Boltzmann Machine}
\label{Sec.IntroBM}

In this section, we introduce the notion of Boltzmann machine (BM) states,  restricted Boltzmann machine (RBM) states and their connection to MPS. 

\subsection{Definitions}

A BM state is a state defined by a classical Ising model on a graph. Each vertex of the graph carries a classical Ising spin $s^r=0, 1$ where $r$ is the index of the vertex. Each edge of the graph carries a weight $\widehat{W}_{rr'}\in \mathbb{C}$ that mimics the Ising ``interaction" between $s^r$ and $s^{r'}$, and each vertex also carries a bias $\widehat{\alpha}_r\in \mathbb{C}$ that mimics ``an external magnetic field". The ``energy" for such an Ising model is:
\begin{equation}\label{eq.IsingModel}
\Ecal_{\mathrm{BM}}(\{s_r\})= \sum_{r,r'} \widehat{W}_{rr'} s^r s^{r'} + \sum_{r} \widehat{\alpha}_r s^r,
\end{equation}
where the summation runs over all spins. In turn, a BM can be efficiently represented by a graph: (1) the vertices of the graph represent the spins $\{s^r\}$; (2) the nonzero weight of $s^r$ and $s^{r'}$ is represented by the link connecting $s^r$ and $s^{r'}$. The set of spins is divided into two disjoint subsets: the visible spins whose set is denoted by $V$ and the hidden spins denoted by $H$. We denote $g^r$ the visible spins and $h^s$ the hidden spins. Using these notations, the BM state is defined as:
\begin{equation}
|\BMS\rangle=\mathcal{C}\sum_{\substack{\{g^r\}\\ r\in V}} \sum_{\substack{\{h^{s}\}\\ s\in H}} \exp\bigg(-\Ecal_{\mathrm{BM}}(\{h^s\},\{g^r\}) \bigg) |\{g^r\}\rangle,
\end{equation} 
where $\mathcal{C}$ is a normalization constant that we will drop for simplicity. The states $|\{g^r\}\rangle$ are the basis states over the visible spins, i.e., a given $|\{g^r\}\rangle$ is the direct product of Pauli $Z$ eigenstates with eigenvalues $\{(-1)^{g^r}\}$. The ``energy" terms in $\Ecal_{\mathrm{BM}}(\{h^s\},\{g^r\})$ can be split into
\begin{equation}\label{eq.EnergyTermsSplit}
\begin{split}
\Ecal_{\mathrm{BM}}(\{h^s\},\{g^r\})=& \sum_{\substack{r,r'\in V}} R_{rr'}g^r g^{r'} + \sum_{\substack{s,s'\in H}} S_{ss'} h^sh^{s'} \\
&+\sum_{\substack{r\in V\\s\in H}}W_{rs}g^r h^s + \sum_{\substack{r\in V}} \beta_{r} g^r+\sum_{\substack{s\in H}} \alpha_{s} h^s,
\end{split}
\end{equation}
where $W_{rs}, R_{rr'}, S_{ss'}\in \mathbb{C}$ are the weights between visible and hidden, visible and visible, hidden and hidden spins respectively. $\beta_{r}\in \mathbb{C}$ is the bias of the visible spin $g^r$, and $\alpha_{s}\in \mathbb{C} $ is the bias of the hidden spin $h^s$. 

A \emph{restricted} Boltzmann machine (RBM) state is a special BM state satisfying
\begin{equation}
R_{rr'}=0, \quad \forall\; r,r' \in V;	\quad
S_{ss'}=0, \quad \forall\; s,s' \in H.
\end{equation} 
Thus an RBM state reads
\begin{equation}\label{Eq.RBMstate}
\begin{split}
&|\RBMS\rangle=\sum_{\substack{\{g^r\}\\ r\in V}} \sum_{\substack{\{h^s\}\\ s\in H}} \exp\bigg(-\Ecal_{\mathrm{RBM}}(\{h^s\},\{g^r\})\bigg) |\{g^r\}\rangle
\end{split}
\end{equation}
with 
\begin{equation}
\Ecal_{\mathrm{RBM}}(\{h^s\},\{g^r\})=\sum_{\substack{r\in V\\s\in H}}W_{rs}g^r h^s +\sum_{\substack{r\in V}} \beta_{r} g^r+\sum_{\substack{s\in H}} \alpha_{s} h^s .
\end{equation}
In this article, we will consider RBM states for 1D translational invariant systems. For this reason, we use $r, s$ to label the \emph{unit cells}, and $i,a$ to label the visible spin and hidden spins within a u］nit cell (which are dubbed ``orbitals'') respectively. We further require the RBM to be finitely connected, and by properly enlarging the unit cell, we can always choose the RBM to be nearest unit cell connected. Due to the requirement of translational invariance and nearest neighbor connectivity, we label the visible spins, the hidden spins, the weights and the biases as follows:
\begin{enumerate}
	\item The visible spins within the unit cell at $r$ are labeled by $g^r_i$ where $i=1, \ldots , q$ labels the orbitals within the unit cell. $q$ is the number of visible spins within each unit cell. 
	\item The hidden spins are divided into two categories:
	\begin{enumerate}
		\item $h^r_{a}$, $a\in\{1, \ldots , M\}$, labels the hidden spins connecting to the visible spins from the unit cell at $r-1$ and those from the unit cell at $r$, i.e., $h^r_{a}$ connects to both $\{g^{r-1}_i\}$ and $\{g^r_i\}$. $M$ is the total number of such hidden spins within the unit cell. Since we assume that the RBM is nearest unit cell connected, $h^r_{a}$ does not connect to the visible spins of another unit cell. We will dub such hidden spins as type-$h$ hidden spins.
		\item $\widetilde{h}^r_{b}$, $b\in\{1, \ldots , \widetilde{M}\}$, labels the hidden spins connecting to the visible spins within the unit cell at $r$, i.e., $\widetilde{h}^r_{b}$ only connects to $\{g^r_i\}$. $\widetilde{M}$ is the total number of such hidden spins within the unit cell. We will dub such hidden spins as type-$\widetilde{h}$ hidden spins. 
	\end{enumerate}
	\item The weight connecting $h^r_{a}$ and $g^r_{i}$ is labeled by $A_{ia}$, $i\in\{1, \ldots , q\}, a\in\{1, \ldots , M\}$. 
	\item The weight connecting $h^r_{a}$ and $g^{r-1}_{i}$ is labeled by $B_{ia}$, $i\in\{1, \ldots , q\}, a\in\{1, \ldots , M\}$. 
	\item The weight connecting $\widetilde{h}^r_{b}$ and $g^{r}_{i}$ is labeled by $\widetilde{C}_{ib}$, $i\in\{1, \ldots , q\}, b\in\{1, \ldots , \widetilde{M}\}$. 
	\item The bias of the visible spin $g^r_{i}$ is $\beta_i$, $i\in\{1, \ldots , q\}$. 
	\item The bias of the hidden spin $h^r_{a}$ is $\alpha_a$, $a\in\{1, \ldots , M\}$. 
	\item The bias of the hidden spin $\widetilde{h}^r_{b}$ is $\widetilde{\alpha}_b$, $b\in\{1, \ldots , \widetilde{M}\}$. 
\end{enumerate}
Due to translational invariance, the weights $A_{ia}$, $B_{ia}$, $\widetilde{C}_{ib}$ and the biases $\beta_i$, $\alpha_a$ and $\widetilde{\alpha}_b$ are all independent of the position of the unit cell $r$. We have distinguished the hidden spins into type-$h$ and type-$\widetilde{h}$ because, as will be explained in Sec.~\ref{Sec.relationtoMPS},  the hidden spins of type-$h$ contribute to the entanglement, while those of type-$\widetilde{h}$ do not. Correspondingly, we distinguish the weights $A_{ia}$ which connect the visible spin $g^r_i$ to the hidden spins of type-$h$, i.e., $h^r_{a}$,  and $\widetilde{C}_{ib}$ which connect the visible spin $g^r_i$ to the hidden spins of type-$\widetilde{h}$, i.e., $\widetilde{h}^r_{b}$.
In Fig.~\ref{Fig.RBMGeneral}, we show an example of such an RBM state with $q=3, M=2$ and $\widetilde{M}=2$. The visible spins (i.e., $g^r_i$) are represented by red circles. The hidden spins connecting to the visible spins from the neighboring unit cells (i.e., $h^r_{a}$) are represented by the rectangles and the hidden spins connecting to the visible spins from a single unit cell (i.e., $\widetilde{h}^r_{b}$) are represented by triangles. 

\begin{figure}[t]
	\centering
	\includegraphics[width=1\columnwidth]{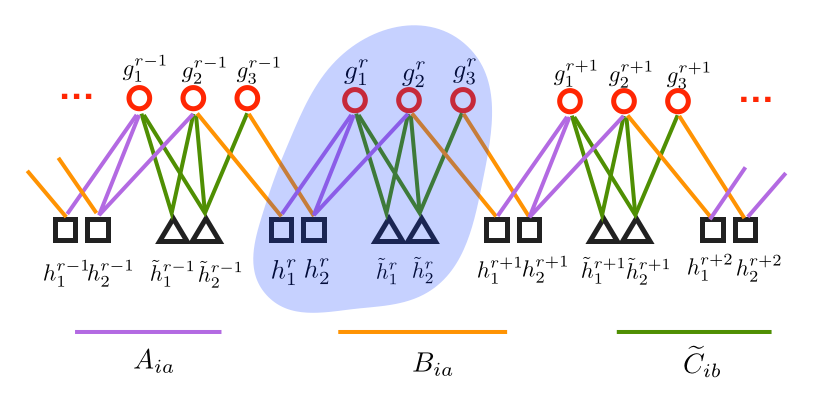}
	\caption{An example of RBM state corresponding to $q=3, M=2, \widetilde{M}=2$. The red circles represent visible spins. The black rectangles are the hidden spins connecting visible spin belonging to different unit cells, which are linked to the purple and orange lines representing the weights $A_{ia}$ and $B_{ia}$ respectively. The black triangles are the hidden spins connecting visible spins within the same unit cell, which are linked to the green lines representing the weights $\widetilde{C}_{ib}$. The blue region represents a unit cell. Notice that the nonzero weights are only between the hidden spins and the visible spins. }
	\label{Fig.RBMGeneral}
\end{figure}

With the notations introduced above, a translational invariant and nearest neighbor connected RBM state is
\begin{equation}\label{Eq.RBMState}
\begin{split}
|\mathrm{RBM}\rangle =& \sum_{\substack{\{g_i^r\}}} \sum_{\substack{\{h^r_a, \widetilde{h}^r_b\}}} \exp\bigg(-\Ecal_{\mathrm{RBM}}(\{h^r_a, \widetilde{h}^r_b\}, \{g^r_i\})\bigg)|\{g^r_i\}\rangle,
\end{split}
\end{equation}
with 
\begin{equation*}
\begin{split}
\Ecal_{\mathrm{RBM}}=&\sum_{r}\sum_{i=1}^q \left[\sum_{a=1}^{M}(A_{ia}g^r_i h^r_a+ B_{ia}g^r_i h^{r+1}_a)+\sum_{b=1}^{\widetilde{M}} \widetilde{C}_{ib}g^r_i \widetilde{h}^r_b\right]\\&+\sum_{r}\left[\sum_{i=1}^{q} \beta_{i} g^r_i+ \sum_{a=1}^{M} \alpha_{a} h^r_a+ \sum_{b=1}^{\widetilde{M}} \widetilde{\alpha}_b \widetilde{h}^r_b\right].
\end{split}
\end{equation*}

\subsection{Relation to MPS}
\label{Sec.relationtoMPS}

The RBM state defined by Eq.~\eqref{Eq.RBMState} can be cast into an MPS by mapping the hidden spins of the RBM to the virtual indices of the MPS. We name such MPS an \emph{RBM-MPS}. Specifically, Eq.~\eqref{Eq.RBMState} can be rewritten as
\begin{equation}\label{Eq.RBM2MPS}
|\mathrm{RBM}\rangle=\sum_{\substack{\{g_i^r\}}} \Tr \bigg( \prod_{r}T^{g^r_1\ldots g^r_q}\bigg) |\{g^r_i\}\rangle,
\end{equation} 
where
\begin{widetext}
\begin{equation}\label{eq.RBM2MPS}
T^{g^r_1 \ldots g^r_q}_{h^r_1 \ldots h^r_M, h^{r+1}_1\ldots h^{r+1}_M}
=e^{-\sum_{i,a=1}^{q, M}\left(A_{ia}g^r_i h^r_a+ B_{ia}g^r_i h^{r+1}_a\right) -\sum_{i=1}^q \beta_{i} g^r_i-\sum_{a=1}^M \alpha_{a} h^r_a } \sum_{\{\widetilde{h}^r_b\}} e^{- \sum_{i,b=1}^{q,\widetilde{M}} \widetilde{C}_{ib}g^r_i \widetilde{h}^r_b- \sum_{b=1}^{\widetilde{M}} \widetilde{\alpha}_b \widetilde{h}^r_b}.
\end{equation}
\end{widetext}
The bond dimension of the RBM-MPS is determined by the number of type-$h$ hidden spins, i.e., $M$. Hence only the hidden spins of type-$h$ contribute to the entanglement, while those of type-$\widetilde{h}$ do not. The optimal $M$ will be determined in Sec.~\ref{Sec.MPSSC}.  For instance, if each $h^r_a\in\{0 ,1\}$ is $\Z_2$ valued, the bond dimension is $2^{M}$. The tensor $T$ satisfies two useful properties: 
\begin{theorem}\label{theoremII.rank1ofRBM}
(a) $T^{g^r_1\ldots g^r_q}$ in Eq.~\eqref{eq.RBM2MPS} is either strictly zero or all its matrix elements are non-vanishing.
(b) If $T^{g^r_1\ldots g^r_q}$ is non-vanishing, it is of rank 1. If $T^{g^r_1\ldots g^r_q}$ vanishes, it is of rank 0. 
\end{theorem}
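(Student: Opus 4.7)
The plan is to exploit the \emph{bipartite-product} structure of the RBM energy in Eq.~\eqref{eq.RBM2MPS}: for fixed visible configuration $g^r = (g^r_1, \ldots, g^r_q)$, the dependence on the left virtual index $h^r$ and the right virtual index $h^{r+1}$ decouples completely, because the RBM has no hidden-hidden couplings and each type-$h$ hidden spin touches visible spins in only one unit cell on each side. This factorization immediately yields an outer-product form, from which both claims follow.

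Concretely, I would first peel off the $\widetilde{h}$ sum, noting that it depends only on $g^r$, and call the resulting scalar
\begin{equation*}
\Phi(g^r) \;=\; e^{-\sum_{i} \beta_i g^r_i} \sum_{\{\widetilde{h}^r_b\}} e^{-\sum_{i,b} \widetilde{C}_{ib} g^r_i \widetilde{h}^r_b - \sum_b \widetilde{\alpha}_b \widetilde{h}^r_b}.
\end{equation*}
The remaining factor in $T^{g^r}_{h^r,h^{r+1}}$ splits as a product of two exponentials, one depending on $(g^r,h^r)$ and the other on $(g^r,h^{r+1})$:
\begin{equation*}
L^{g^r}(h^r) \;=\; \prod_{a=1}^{M} e^{-h^r_a(\alpha_a + \sum_i A_{ia} g^r_i)}, \qquad R^{g^r}(h^{r+1}) \;=\; \prod_{a=1}^{M} e^{-h^{r+1}_a \sum_i B_{ia} g^r_i}.
\end{equation*}
Thus $T^{g^r}_{h^r,h^{r+1}} = \Phi(g^r) \, L^{g^r}(h^r) \, R^{g^r}(h^{r+1})$, which is the desired factorization.

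For part (a), I would observe that $L^{g^r}(h^r)$ and $R^{g^r}(h^{r+1})$ are products of ordinary exponentials of (possibly complex) numbers and therefore never vanish, and likewise $e^{-\sum_i \beta_i g^r_i} \neq 0$. Hence $T^{g^r}_{h^r,h^{r+1}}$ vanishes for some choice of virtual indices if and only if the scalar sum $\sum_{\widetilde{h}} \exp(\cdots)$ equals zero, in which case $T^{g^r}_{h^r,h^{r+1}}$ vanishes for \emph{all} choices of virtual indices. Conversely, if that sum is nonzero, every matrix element is nonzero. For part (b), the zero case is trivial, and in the nonzero case the factorization above is literally the outer product of the column vector $L^{g^r}$ (indexed by $h^r$) and the row vector $\Phi(g^r) R^{g^r}$ (indexed by $h^{r+1}$), both of which are nonzero; this is the definition of a rank-one matrix.

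There is no genuine obstacle here — the result is an immediate structural consequence of the RBM bipartite graph and the fact that only type-$h$ hidden spins carry virtual indices, with the one subtlety being the observation that $\Phi(g^r)$ \emph{can} vanish even though it is a sum of exponentials, because the weights $\widetilde{C}_{ib},\widetilde{\alpha}_b$ are complex; this is precisely what allows the two possibilities (rank $0$ or rank $1$, but nothing in between) to coexist and is what makes part (a) a nontrivial statement rather than a tautology.
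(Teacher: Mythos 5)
Your proof is correct and follows essentially the same route as the paper's: both rest on the observation that, for fixed physical indices, the dependence on the left and right virtual indices factorizes, with the $\widetilde{h}$ sum providing the only possibly vanishing overall scalar. The paper phrases part (b) via the proportionality of rows (a ratio argument) rather than writing the explicit outer product $\Phi\, L\, R$, but this is the same underlying decomposition, merely made more explicit in your version.
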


\begin{proof}
To prove (a), we notice that each matrix element of $T^{g^r_1\ldots g^r_q}$ is a common multiplicative factor $\sum_{\{\widetilde{h}^r_b\}} e^{- \sum_{i,b=1}^{q,\widetilde{M}} \widetilde{C}_{ib}g^r_i \widetilde{h}^r_b- \sum_{b=1}^{\widetilde{M}} \widetilde{\alpha}_b \widetilde{h}^r_b}$ independent of the hidden spins $\{h^r_a, h^{r+1}_{a}\}$ for all $a$, times a strictly nonzero expression of $\{h^r_a, h^{r+1}_{a}\}$: $e^{-\sum_{i,a=1}^{q, M}\left(A_{ia}g^r_i h^r_a+ B_{ia}g^r_i h^{r+1}_a\right) -\sum_{i=1}^q \beta_{i} g^r_i-\sum_{a=1}^M \alpha_{a} h^r_a }$. If the common multiplicative factor is zero then $T^{g^r_1\ldots g^r_q}$ vanishes. If the common multiplicative factor is nonzero, all matrix elements are non-vanishing.  

To prove (b), we observe that, when the matrix elements of $T^{g^r_1\ldots g^r_q}$ are non-vanishing, the ratio
\begin{equation}
\frac{T^{g^r_1 \ldots g^r_q}_{h^r_1 \ldots h^r_M, h^{r+1}_1\ldots h^{r+1}_M}}{T^{g^r_1 \ldots g^r_q}_{h^r_1 \ldots h^r_M, h'^{r+1}_1\ldots h'^{r+1}_M}}
\end{equation}
is independent of $h^r_1 \ldots h^r_M$, for any $h^{r+1}_1\ldots h^{r+1}_M$ and $h'^{r+1}_1\ldots h'^{r+1}_M$. Hence any two rows of the matrix $T^{g^r_1\ldots g^r_N}$ are proportional to each other, and thus the matrix is of rank 1. When $T^{g^r_1\ldots g^r_q}$ vanishes, by definition, it is of rank 0. 
\end{proof}
We emphasize that the form of the RBM depends on the visible spin basis choice. If we perform a basis transformation on the visible spins, this form, in general,  will no longer be preserved. 
Indeed, the rank of the MPS matrices after the basis rotation can be higher than 1 for a generic local unitary transformation. However, if the new MPS can be casted into an RBM with the same connection range, the rank of the RBM-MPS should be 1 by Theorem \ref{theoremII.rank1ofRBM}. This leads to a contradiction. 
Hence in this paper, we consider the RBM and the MPS under a particular choice of visible (physical) spin basis. The rank one condition refers to the MPS matrices with \emph{fixed} physical indices being of rank 1.  

Since the non-vanishing matrices of the RBM-MPS are of rank 1, it is natural to ask if the reverse statement also holds true, i.e., whether an MPS can be expressed as an RBM-MPS if the non-vanishing MPS matrices are of rank 1. In the rest of this article, we study this problem in the context of stabilizer codes. We conjecture that if the non-vanishing MPS matrices of the ground state of a translational invariant stabilizer code are of rank 1, such a ground state can also be found as an RBM state. In Sec.~\ref{Sec.MPSSC}, we first determine the condition for the non-vanishing MPS matrices of a stabilizer code to be of rank 1. In Sec.~\ref{Sec.BMSofQSC}, we give an algorithm to generate the RBM state for a large class of models (the cocycle models) whose MPS matrices are of rank 1. 

\section{Matrix Product State of a Stabilizer Code}
\label{Sec.MPSSC}

In this section, we present our algorithm to find an MPS for the ground state of a translational invariant stabilizer code. Along the way, we derive the formula which enables us to read the minimal bond dimension and the upper bound of the rank of the MPS matrices from the Hamiltonian terms. We illustrate our algorithm using an example, the $ZZXZZ$ model, and then discuss of the general case. Each steps of the algorithm is proven  in App.~\ref{app.CorrelationAndTransferMatrix}, \ref{app.Deriving0}, \ref{app.Deriving}, \ref{app.Ucommutation} and \ref{app.Linearizing}. We derive and prove our results based on the following assumptions throughout the paper:
\begin{assumption}\label{Assumption1}
We only consider the translational invariant stabilizer codes that have a unique ground state with PBC.
\end{assumption}
\begin{assumption}\label{Assumption2}
The MPS matrices of the translational invariant stabilizer codes become independent of the system size for sufficiently large system sizes. 
\end{assumption}
\begin{assumption}\label{Assumption3}
The MPS matrices are independent of the boundary condition. In other words, in the bulk, the MPS matrices for PBC are the same as those for the open boundary condition (OBC). 
\end{assumption}

We begin by stating the notations of spin chains. We  mainly consider  spin models defined on a finite chain with $L$ unit cells and PBC. Each unit cell contains $q$ spin-$\frac{1}{2}$'s. For the $i$-th  spin $(i=1, \ldots , q)$ in the $r$-th unit cell $(r=0, \ldots , L-1)$, we associate a two dimensional Hilbert space spanned by $|g^r_i\rangle$, where $g^r_i=0, 1$ corresponds to spin up and spin down respectively. $|g^r_i\rangle$ satisfies
\begin{equation}
Z^r_i \ket{g_{i}^{r}}=(-1)^{g^r_i}\ket{g_{i}^{r}},	\quad
X^r_i \ket{g_{i}^{r}}=\ket{1-g_{i}^{r}},
\end{equation} 
where $Z^r_i$ and $X^r_i$ are Pauli $Z$ and $X$ matrices acting on $\ket{g_{i}^{r}}$. 

\subsection{An Example of Stabilizer Codes: $ZZXZZ$ Model}\label{Sec.Example0}

To define the $ZZXZZ$ model, we place $3$ physical spin-$\frac{1}{2}$'s in each unit cell, i.e., $q=3$. (We choose $q=3$ since it fits naturally into the discussion of general cocycle models. See App.~\ref{app.SPT} for details.)
We introduce three sets of commuting operators $\mathcal{O}_\alpha^r$ $(\alpha=1,2,3)$ defined as
\begin{equation}\label{Eq.Ori}
\begin{split}
\Ocal_1^r&=Z^{r}_2 Z^{r}_3 X^{r+1}_1 Z^{r+1}_2 Z^{r+1}_3\\
\Ocal_2^r&=Z^{r}_3 Z^{r+1}_1 X^{r+1}_2 Z^{r+1}_3 Z^{r+2}_1\\
\Ocal_3^r&=Z^r_1 Z^r_2 X^r_3 Z^{r+1}_1 Z^{r+1}_2,
\end{split}
\end{equation}
where $r$ is defined modulo $L$ due to PBC. Using these operators, the Hamiltonian reads
\begin{equation}\label{eq.ZZXZZ}
\begin{split}
H_{ZZXZZ}=-\sum_{r=0}^{L-1} (\mathcal{O}_1^r+ \Ocal_2^r+\Ocal_3^r).
\end{split}
\end{equation}
All the terms in the Hamiltonian Eq.~\eqref{eq.ZZXZZ} mutually commute, and have eigenvalues $\pm 1$. 
Thus its ground state is the common positive eigenstate of $\Ocal^r_\alpha$ for any $r$ and $\alpha$, i.e., 
\begin{equation}\label{Eq.StabilizerEqsZZXZZ}
\Ocal^r_\alpha|\mathrm{GS}\rangle=|\mathrm{GS}\rangle,\quad \alpha=1,2,3, \; r=0, 1, \ldots , L-1.
\end{equation}
For example, one can construct $|\mathrm{GS}\rangle$ as 
\begin{equation}\label{Eq.GSZZXZZ}
|\mathrm{GS}\rangle=\prod_{r=0}^{L-1}\prod_{\alpha=1}^3\bigg(\frac{1+\Ocal^r_\alpha}{2}\bigg)|\mathbf{0}\rangle,
\end{equation}
where
\begin{equation}
\ket{\mathbf{0}}=\bigotimes_{r=0}^{L-1}\bigotimes_{i=1}^{3}\ket{0_{i}^{r}}.
\end{equation}
It is straightforward to verify that the $|\mathrm{GS}\rangle$ in Eq.~\eqref{Eq.GSZZXZZ} satisfies Eq.~\eqref{Eq.StabilizerEqsZZXZZ}.

Our goal in this section is to express the ground state $\ket{\mathrm{GS}}$ as an MPS
\begin{equation}\label{Eq.GS}
|\mathrm{GS}\rangle=\sum_{\{g^r_i\}}\Tr\bigg(\prod_{r=0}^{L-1}T^{g^r_1g^r_2g^r_3}\bigg)|\{g^r_i\}\rangle,
\end{equation}
where 
\begin{equation}
|\{g^r_i\}\rangle\equiv \bigotimes_{r=0}^{L-1} \bigotimes_{i=1}^3|g^r_i\rangle.
\end{equation} 
The matrix $T^{g^r_1g^r_2g^r_3}$ is labeled with three physical indices $g^r_1, g^r_2$ and $g^r_3$ in the $r$-th unit cell. The left and right virtual indices of $T^{g^r_1g^r_2g^r_3}$ and matrix elements will be solved in Sec.~\ref{Sec.Example}. The product of two $T$ matrices amounts to contracting the pair of virtual indices between them. The coefficient of $|\{g^r_i\}\rangle$ is determined by contracting all virtual indices with the same configuration of physical spins $\{g^r_i\}$. 
\begin{figure}
	\centering
	\includegraphics[width=0.8\columnwidth]{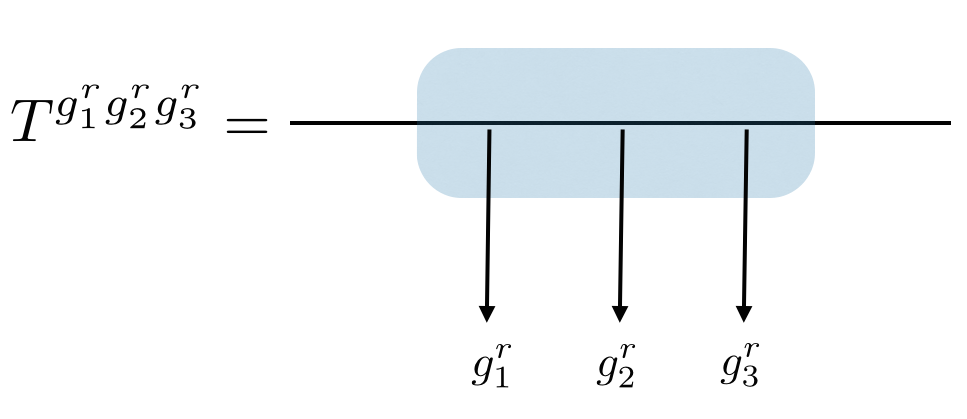}
	\caption{A graphical representation of the matrix $T^{g^r_1g^r_2g^r_3}$. We denote each physical index by an arrow. The shaded region represents a unit cell, and the virtual left and right indices are represented by the horizontal line. The virtual indices are not explicitly shown here. }
	\label{FigZZXZZMPS}
\end{figure}
\noindent The matrix $T^{g^r_1g^r_2g^r_3}$ is graphically represented in Fig.~\ref{FigZZXZZMPS}. Some notations and general properties of MPS for  stabilizer codes are given in App.~\ref{app.MPS}.

\subsection{MPS for the $ZZXZZ$ Model}
\label{Sec.Example}

To derive the MPS for the ground state $|\mathrm{GS}\rangle$ of the $ZZXZZ$ model Eq.~\eqref{eq.ZZXZZ}, we start with Eq.~\eqref{Eq.StabilizerEqsZZXZZ}. $\Ocal^r_i$ acts on the basis $|\{g^r_i\}\rangle$ in each summand of $|\mathrm{GS}\rangle$ in Eq.~\eqref{Eq.GS}. By re-arranging the summation, we derive the action of $\Ocal^r_i$ on the $T$-matrices. For example, let us consider the action of $\Ocal^r_1$,
\begin{widetext}
\begin{equation}\label{Eq.Or1T}
\begin{split}
\Ocal^r_1|\mathrm{GS}\rangle
&=\sum_{\{g^{r'}_i\}}\Tr\bigg(\prod_{r'=0}^{L-1}T^{g^{r'}_1g^{r'}_2g^{r'}_3}\bigg)\Ocal^{r}_1|\{g^{r'}_i\}\rangle\\
&=\sum_{\{g^{r'}_i\}}\Tr\bigg(\prod_{r'=0}^{L-1}T^{g^{r'}_1g^{r'}_2g^{r'}_3}\bigg)(-1)^{g^{r}_2+g^{r}_3+g^{r+1}_2+g^{r+1}_3}\bigg|\{g^{r'}_i\}|_{r'\le r},\{(1-g^{r+1}_1)g^{r+1}_2g^{r+1}_3\},\{g^{r^{\prime\prime}}_i\}|_{r^{\prime\prime}>r+1}\bigg\rangle\\
&=\sum_{\{\widehat{g}^{r'}_i\}}\Tr\bigg(\Big(\prod_{r'<r}T^{\widehat{g}^{r'}_1\widehat{g}^{r'}_2\widehat{g}^{r'}_3}\Big)\cdot T^{\widehat{g}^{r}_1\widehat{g}^{r}_2\widehat{g}^{r}_3}\cdot T^{(1-\widehat{g}^{r+1}_1)\widehat{g}^{r+1}_2\widehat{g}^{r+1}_3}
\cdot\Big(\prod_{r'>r+1}T^{\widehat{g}^{r'}_1\widehat{g}^{r'}_2\widehat{g}^{r'}_3}\Big)\bigg)(-1)^{\widehat{g}^{r}_2+\widehat{g}^{r}_3+\widehat{g}^{r+1}_2+\widehat{g}^{r+1}_3}|\{\widehat{g}^{r'}_i\}\rangle\\
&\equiv \sum_{\{\widehat{g}^{r'}_i\}}\Tr\bigg(\Big(\prod_{r'<r}T^{\widehat{g}^{r'}_1\widehat{g}^{r'}_2\widehat{g}^{r'}_3}\Big)\cdot \Ocal^r_1\circ\Big( T^{\widehat{g}^{r}_1\widehat{g}^{r}_2\widehat{g}^{r}_3}\cdot T^{\widehat{g}^{r+1}_1\widehat{g}^{r+1}_2\widehat{g}^{r+1}_3}\Big)\cdot \Big(\prod_{r'>r+1}T^{\widehat{g}^{r'}_1\widehat{g}^{r'}_2\widehat{g}^{r'}_3}\Big)\bigg)|\{\widehat{g}^{r'}_i\}\rangle.
\end{split}
\end{equation}
In the second equality, we use the definition of $\Ocal^r_1$ in Eq.~\eqref{Eq.Ori}, and in the last equality we \emph{defined} the action of $\Ocal^r_1$ on $T^{g^{r}_1g^{r}_2g^{r}_3}\cdot T^{g^{r+1}_1g^{r+1}_2g^{r+1}_3}$, via 
\begin{equation}\label{Eq.Or1TT}
\Ocal^r_1\circ\Big(T^{g^{r}_1g^{r}_2g^{r}_3}\cdot T^{g^{r+1}_1g^{r+1}_2g^{r+1}_3}\Big) 
\equiv (-1)^{g^{r}_2+g^{r}_3+g^{r+1}_2+g^{r+1}_3}\Big(T^{g^{r}_1g^{r}_2g^{r}_3}\cdot T^{(1-g^{r+1}_1)g^{r+1}_2g^{r+1}_3}\Big).
\end{equation}
For $\Ocal^r_2$ and $\Ocal^r_3$, we similarly define
\begin{equation}\label{Eq.Or23TT}
\begin{split}
&\Ocal^r_2\circ\Big(T^{g^{r}_1g^{r}_2g^{r}_3}\cdot T^{g^{r+1}_1g^{r+1}_2g^{r+1}_3}\cdot T^{g^{r+2}_1g^{r+2}_2g^{r+2}_3}\Big)
\equiv(-1)^{g^{r}_3+g^{r+1}_1+g^{r+1}_3+g^{r+2}_1} \Big(T^{g^{r}_1g^{r}_2g^{r}_3}\cdot T^{g^{r+1}_1(1-g^{r+1}_2)g^{r+1}_3}\cdot T^{g^{r+2}_1g^{r+2}_2g^{r+2}_3}\Big)	\\
&\Ocal^r_3\circ\Big( T^{g^r_1g^r_2g^r_3}\cdot T^{g^{r+1}_1g^{r+1}_2g^{r+1}_3}\Big) \equiv(-1)^{g^{r}_1+g^{r}_2+g^{r+1}_1+g^{r+1}_2}\Big( T^{g^r_1g^r_2(1-g^r_3)}\cdot T^{g^{r+1}_1g^{r+1}_2g^{r+1}_3}\Big).
\end{split}
\end{equation}
Using the definitions Eqs.~\eqref{Eq.Or1TT} and \eqref{Eq.Or23TT}, we find that a sufficient condition for the stabilizer condition Eq.~\eqref{Eq.StabilizerEqsZZXZZ} is 
\begin{equation}\label{Eq.ConstraintTmatrices}
\begin{split}
&\Ocal^r_1\circ\Big(T^{g^{r}_1g^{r}_2g^{r}_3}\cdot T^{g^{r+1}_1g^{r+1}_2g^{r+1}_3}\Big)=\Big(T^{g^{r}_1g^{r}_2g^{r}_3}\cdot T^{g^{r+1}_1g^{r+1}_2g^{r+1}_3}\Big)\\
&\Ocal^r_2\circ\Big(T^{g^{r}_1g^{r}_2g^{r}_3}\cdot T^{g^{r+1}_1g^{r+1}_2g^{r+1}_3}\cdot T^{g^{r+2}_1g^{r+2}_2g^{r+2}_3}\Big)=\Big(T^{g^{r}_1g^{r}_2g^{r}_3}\cdot T^{g^{r+1}_1g^{r+1}_2g^{r+1}_3}\cdot T^{g^{r+2}_1g^{r+2}_2g^{r+2}_3}\Big)\\
&\Ocal^r_3\circ\Big( T^{g^r_1g^r_2g^r_3}\cdot T^{g^{r+1}_1g^{r+1}_2g^{r+1}_3}\Big)=\Big(T^{g^r_1g^r_2g^r_3}\cdot T^{g^{r+1}_1g^{r+1}_2g^{r+1}_3}\Big).
\end{split}
\end{equation}
\end{widetext}
We prove in App.~\ref{app.CorrelationAndTransferMatrix} and \ref{app.Deriving0} that Eq.~\eqref{Eq.ConstraintTmatrices} is also a \emph{necessary} condition for Eq.~\eqref{Eq.StabilizerEqsZZXZZ}. A graphical representation of these equations is given in Fig.~\ref{FigZZXZZEqs}. 

\begin{figure}[t]
	\centering
	\includegraphics[width=1\columnwidth]{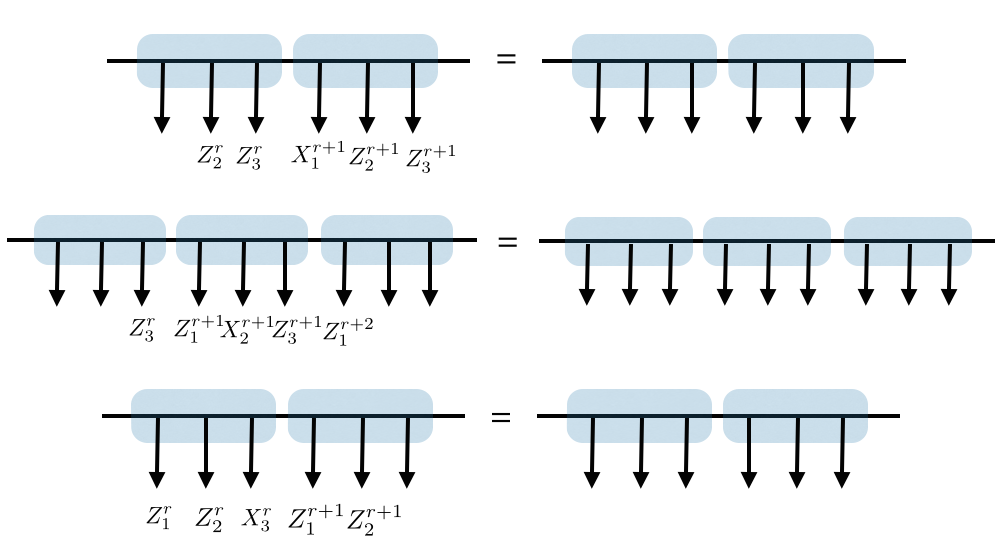}
	\caption{A graph representation of Eq.~\eqref{Eq.StabilizerEqsZZXZZ}. }
	\label{FigZZXZZEqs}
\end{figure}

We now construct a solution of $T$ matrices from Eq.~\eqref{Eq.ConstraintTmatrices}. It is difficult to solve Eq.~\eqref{Eq.ConstraintTmatrices} directly, as it is a set of nonlinear equations of the $T$ matrices. In the following, we will derive a new set of equations equivalent to Eq.~\eqref{Eq.ConstraintTmatrices}, which are linear in the $T$ matrices and only contain the matrices in the $r$-th unit cell.

The idea to derive the equations linear in the $T$ matrices is to decompose the Hamiltonian terms $\Ocal^r_\alpha$ into separate parts, where each part acts only on one unit cell, and then derive their action on a single $T$ matrix. (Notice that we are only allowed to cut in between the unit cells,  not  inside one unit cell.)  As a first step, we start by cutting the operator $\Ocal^r_\alpha$ into two parts: $\Ocal_1^r$ and $\Ocal_3^r$ can be cut into $\Ocal_1^r=\Lcal^r_{1, 1}\Rcal^r_{1, 1}$ and $\Ocal_3^r=\Lcal^r_{3, 1}\Rcal^r_{3, 1}$, such that the operators 
\begin{equation}\label{Eq.LR1}
\begin{split}
\Lcal^r_{1,1}&=I^{r}_{1}\otimes Z^{r}_2\otimes Z^{r}_3\\
\Rcal_{1,1}^r&= X^{r+1}_1 \otimes Z^{r+1}_2 \otimes Z^{r+1}_3
\end{split}
\end{equation}
and
\begin{equation}\label{Eq.LR3}
\begin{split}
\Lcal^r_{3,1}&=Z_1^r\otimes Z_2^r\otimes X_3^r\\
\Rcal_{3,1}^r&=Z^{r+1}_1 \otimes Z^{r+1}_2\otimes I^{r+1}_3
\end{split}
\end{equation}
only act on a given unit cell. The second subscript $\tau$ of $\Lcal^r_{\alpha, \tau}$ labels the position of bipartition of $\Ocal^r_\alpha$. Since $\Ocal_1^r$ and $\Ocal_3^r$ are supported on two unit cells,  there is only one way to cut them into two parts and hence $\tau$ only takes one value, i.e.,  $\tau=1$.  Such a unique bipartition is not possible for $\Ocal_2^r$ since $\Ocal^r_2$ is supported on 3 unit cells. Nevertheless, we can define two bipartitions as follows:
\begin{equation}
\Ocal_2^r=\Lcal^r_{2, 1}\Rcal^r_{2, 1},	\quad
\Ocal_2^r=\Lcal^r_{2, 2}\Rcal^r_{2, 2},
\end{equation} 
where $\Lcal^r_{2, \tau}, \Rcal^r_{2,\tau} (\tau=1,2)$ are
\begin{equation}\label{Eq.LR2}
\begin{split}
\Lcal_{2,1}^r&=I^{r}_1\otimes I^{r}_2\otimes Z^{r}_3\\
\Rcal_{2,1}^r&=Z^{r+1}_1\otimes X^{r+1}_2 \otimes Z^{r+1}_3 \otimes Z^{r+2}_1\otimes I^{r+2}_2\otimes I^{r+2}_3\\
\Lcal_{2,2}^r&=I^{r}_1\otimes I^{r}_2\otimes Z^{r}_3 \otimes Z^{r+1}_1\otimes X^{r+1}_2 \otimes Z^{r+1}_3\\ 
\Rcal_{2,2}^r&=Z^{r+2}_1\otimes I^{r+2}_2\otimes I^{r+2}_3.
\end{split}
\end{equation}
Notice that $\Rcal_{2,1}^r$ and $\Lcal_{2,2}^r$ have a  support over two unit cells while $\Rcal^r_{2,2}$ and $\Lcal^r_{2,1}$ have a support over a single unit cell.

\begin{figure}[t]
	\centering
	\includegraphics[width=1.0\columnwidth]{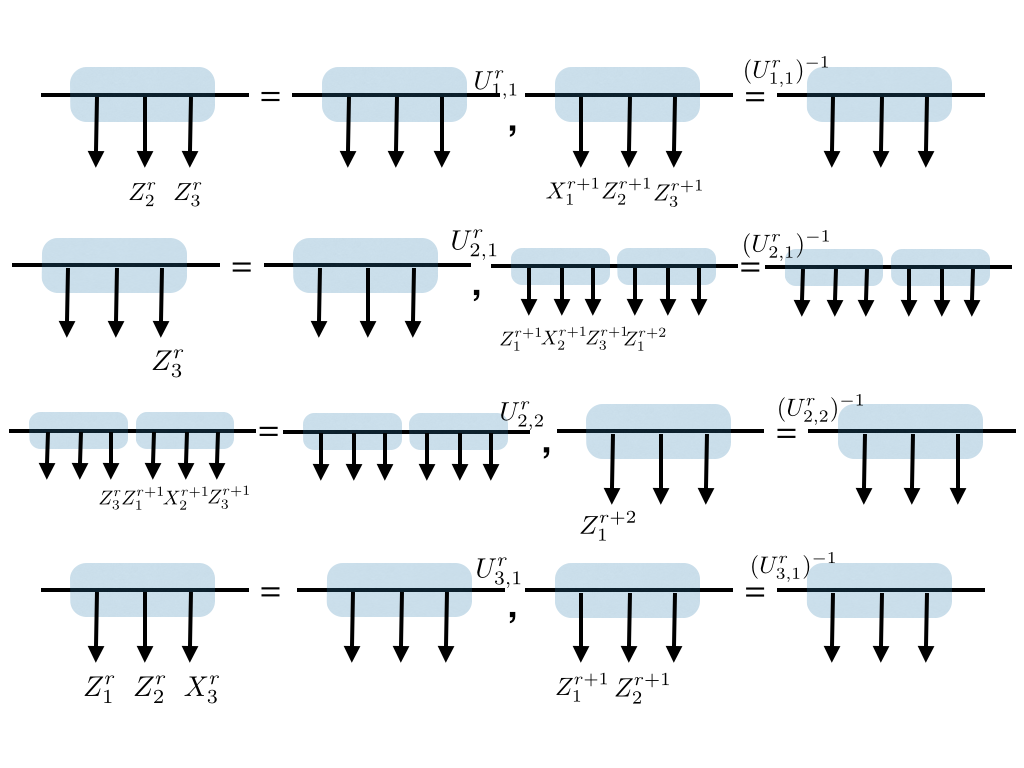}
	\caption{A graphical representation of Eqs.~\eqref{Eq.LRsingle1}, \eqref{Eq.LRsingle2} and \eqref{Eq.LRsingle3}. }
	\label{FigZZXZZEqsSingle}
\end{figure}

Let us consider the action of $\Lcal_{\alpha, \tau}^{r}$'s and $\Rcal_{\alpha, \tau}^{r}$'s on the $T$ matrices. First we focus on $\Ocal_1^r$. The product of two matrices $T^{g^{r}_1g^{r}_2g^{r}_3}\cdot T^{g^{r+1}_1g^{r+1}_2g^{r+1}_3}$ should be invariant under the combined action of $\Lcal_{1,1}^r \Rcal_{1,1}^r$, where $\Lcal_{1,1}^r $ acts only on $T^{g^{r}_1g^{r}_2g^{r}_3}$ while $\Rcal_{1,1}^r $ only on $T^{g^{r+1}_1g^{r+1}_2g^{r+1}_3}$. In App.~\ref{app.Deriving}, we prove in a general setting of stabilizer codes that the following condition is both necessary and sufficient of Eq.~\eqref{Eq.ConstraintTmatrices}: the action of $\Lcal^r_{1,1}$ on $T^{g^{r}_1g^{r}_2g^{r}_3}$ can be encoded by a transformation $U^r_{1,1}$ on the right virtual index of $T^{g^{r}_1g^{r}_2g^{r}_3}$, while the action of $\Rcal^r_{1,1}$ on $T^{g^{r+1}_1g^{r+1}_2g^{r+1}_3}$ can be encoded by the inverse of the same transformation $(U_{1,1}^{r})^{-1}$ on the left virtual index of $T^{g^{r+1}_1g^{r+1}_2g^{r+1}_3}$. 
Concretely, we have 
\begin{equation}\label{Eq.LRsingle1}
\begin{split}
\Lcal_{1,1}^r\circ T^{g^{r}_1g^{r}_2g^{r}_3}&=T^{g^{r}_1g^{r}_2g^{r}_3}\cdot U^r_{1,1}\\
\Rcal_{1,1}^r\circ T^{g^{r+1}_1g^{r+1}_2g^{r+1}_3}&=(U_{1,1}^{r})^{-1}\cdot T^{g^{r+1}_1g^{r+1}_2g^{r+1}_3},
\end{split}
\end{equation}
where $\circ$ represents the action on the physical indices (see Eqs.~\eqref{Eq.Or1TT} and \eqref{Eq.Or23TT}), and $\cdot$ represents the matrix multiplication (i.e., the contraction over the virtual indices). See Fig.~\ref{FigZZXZZEqsSingle}.  Similarly, for $\Lcal_{2,1}^r, \Rcal_{2,1}^r$ and $\Lcal_{2,2}^r,\Rcal_{2,2}^r$,
\begin{subequations}\label{Eq.LRsingle2}
\begin{align}
&\Lcal_{2,1}^r\circ T^{g^{r}_1g^{r}_2g^{r}_3}
=T^{g^{r}_1g^{r}_2g^{r}_3}\cdot U^r_{2,1}\label{Eq.LRsingle2a}\\
&\Rcal_{2,1}^r\circ (T^{g^{r+1}_1g^{r+1}_2g^{r+1}_3}\cdot T^{g^{r+2}_1g^{r+2}_2g^{r+2}_3})
\nonumber\\&~~~=(U_{2,1}^{r})^{-1}\cdot (T^{g^{r+1}_1g^{r+1}_2g^{r+1}_3}\cdot T^{g^{r+2}_1g^{r+2}_2g^{r+2}_3})\label{Eq.LRsingle2b}\\
&\Lcal_{2,2}^r\circ (T^{g^{r}_1g^{r}_2g^{r}_3}\cdot T^{g^{r+1}_1g^{r+1}_2g^{r+1}_3})
\nonumber\\&~~~=(T^{g^{r}_1g^{r}_2g^{r}_3}\cdot T^{g^{r+1}_1g^{r+1}_2g^{r+1}_3})\cdot U_{2,2}^r\label{Eq.LRsingle2c}\\
&\Rcal_{2,2}^r\circ T^{g^{r+2}_1g^{r+2}_2g^{r+2}_3}
\nonumber\\&~~~=(U^{r}_{2,2})^{-1}\cdot T^{g^{r+2}_1g^{r+2}_2g^{r+2}_3}. \label{Eq.LRsingle2d}
\end{align}
\end{subequations}
For $\Lcal_{3,1}^r, \Rcal_{3,1}^r$, we get
\begin{equation}\label{Eq.LRsingle3}
\begin{split}
\Lcal_{3,1}^r\circ T^{g^{r}_1g^{r}_2g^{r}_3}&=T^{g^{r}_1g^{r}_2g^{r}_3}\cdot U^r_{3,1}\\
\Rcal_{3,1}^r\circ T^{g^{r+1}_1g^{r+1}_2g^{r+1}_3}&=(U_{3,1}^{r})^{-1}\cdot T^{g^{r+1}_1g^{r+1}_2g^{r+1}_3}.
\end{split}
\end{equation}
Eqs.~\eqref{Eq.LRsingle1}, \eqref{Eq.LRsingle2} and \eqref{Eq.LRsingle3} are graphically represented in Fig.~\ref{FigZZXZZEqsSingle}.

Let us use the translational invariance and focus on the operators that act on the virtual indices between the $r$-th and $(r+1)$-th unit cells, i.e., $U_{1,1}^{r}, U_{2,1}^{r}, U_{2,2}^{r-1}$ and $U_{3,1}^r$. For Eqs.~\eqref{Eq.LRsingle1}, \eqref{Eq.LRsingle2} and \eqref{Eq.LRsingle3} being consistent, the virtual $U^{r'}_{\alpha', \tau'}$ and $(U^{r'}_{\alpha', \tau'})^{-1}$ operators on the right hand side (RHS) should satisfy the same commutation relations as the physical $\Lcal^{r'}_{\alpha', \tau'}$ and $\Rcal^{r'}_{\alpha', \tau'}$ operators on the left hand side (LHS) respectively. As a result, $\Lcal^{r'}_{\alpha', \tau'}$ and $\Rcal^{r'}_{\alpha', \tau'}$ share the same commutation relation.\footnote{This is because the commutation relations of $\Lcal$'s and those of the $\Rcal$'s are both related to  the commutation relations of $U$'s. } Here, 
\begin{equation}\label{Eq.set}
\begin{split}
\Lcal_{\alpha', \tau'}^{r'}&\in\{\Lcal_{1,1}^{r}, \Lcal_{2,1}^{r}, \Lcal_{2,2}^{r-1}, \Lcal_{3,1}^r\},\\
\Rcal_{\alpha', \tau'}^{r'}&\in \{\Rcal_{1,1}^{r}, \Rcal_{2,1}^{r}, \Rcal_{2,2}^{r-1}, \Rcal_{3,1}^r \}\\
U^{r'}_{\alpha', \tau'} &\in \{U_{1,1}^{r}, U_{2,1}^{r}, U_{2,2}^{r-1}, U_{3,1}^r\}
\end{split}
\end{equation}
This statement is proved in App.~\ref{app.Ucommutation} in a general setting of stabilizer codes. The commutation relations can be encoded using the compact notations:
\begin{equation}\label{eq.generalizedCliffordAlgebra}
\begin{split}
&\Lcal_{\alpha', \tau'}^{r'}\Lcal_{\alpha'', \tau''}^{r''} = (-1)^{\mathbf{t}_{(\alpha'\tau'),(\alpha''\tau'')}^{r',r''}} \Lcal_{\alpha'', \tau''}^{r''} \Lcal_{\alpha', \tau'}^{r'},	\\
&\Rcal_{\alpha', \tau'}^{r'}\Rcal_{\alpha'', \tau''}^{r''} = (-1)^{\mathbf{t}_{(\alpha'\tau'),(\alpha''\tau'')}^{r',r''}} \Rcal_{\alpha'', \tau''}^{r''} \Rcal_{\alpha', \tau'}^{r'},	\\
&U_{\alpha', \tau'}^{r'}U_{\alpha'', \tau''}^{r''} = (-1)^{\mathbf{t}_{(\alpha'\tau'),(\alpha''\tau'')}^{r',r''}} U_{\alpha'', \tau''}^{r''} U_{\alpha', \tau'}^{r'},	\\
\end{split}
\end{equation}
Since $\Rcal$ operators obey the same commutation relations as the $\Lcal$'s, we just focus on the $\Lcal$ operators. 
The coefficients $\mathbf{t}_{(\alpha'\tau'),(\alpha''\tau'')}^{r',r''}$ form an anti-symmetric $\mathbf{t}$ matrix, which under the basis $(\Lcal_{1,1}^{r}, \Lcal_{2,1}^{r}, \Lcal_{2,2}^{r-1}, \Lcal_{3,1}^r )^T$ is given by\footnote{We only focus on the commutation relations between the virtual $U$ operators which act on the same virtual bond, e.g. $\{U_{1,1}^{r}, U_{2,1}^{r}, U_{2,2}^{r-1}, U_{3,1}^r\}$. According to Eq.~\eqref{eq.generalizedCliffordAlgebra}, they have the same commutation relations as $\{\Lcal_{1,1}^{r}, \Lcal_{2,1}^{r}, \Lcal_{2,2}^{r-1}, \Lcal_{3,1}^r \}$.  These commutation relations constrain the representation of the virtual $U$ operators. The commutation relations between $\{\Lcal_{1,1}^{r}, \Lcal_{2,1}^{r}, \Lcal_{2,2}^{r-1}, \Lcal_{3,1}^r \}$ and other $\Lcal$ operators, e.g. $\Lcal^{r+1}_{1,1}$ do not constrain the representation of the virtual $U$ operators, thus we do not consider them. }
\begin{equation}\label{eq.exampletmatrix}
\mathbf{t}= \left( \begin{matrix}
0 & 0 & 1 & 1	\\
0 & 0 & 0 & 1	\\
-1 & 0 & 0 & 0	\\
-1 & -1 & 0 & 0	\\
\end{matrix}\right).
\end{equation}
We first determine the dimension of irreducible representation of the algebra Eq.~\eqref{eq.generalizedCliffordAlgebra} that $\{\Lcal_{1,1}^{r}, \Lcal_{2,1}^{r}, \Lcal_{2,2}^{r-1}, \Lcal_{3,1}^r \}$ and $\{U_{1,1}^{r}, U_{2,1}^{r}, U_{2,2}^{r-1}, U_{3,1}^r \}$ obey. It is convenient to introduce: 
\begin{equation}\label{Eq.CommutingPairs}
\begin{split}
\widetilde{\Lcal}_{1}=\Lcal_{3,1}^r,&~~\widetilde{U}_{1}=U_{3,1}^r,\\ 
\widetilde{\Lcal}_2=\Lcal_{2,1}^{r},&~~\widetilde{U}_2=U_{2,1}^{r} \\
\widetilde{\Lcal}_3=\Lcal_{2,2}^{r-1},&~~\widetilde{U}_3=U_{2,2}^{r-1} \\
\widetilde{\Lcal}_4=\Lcal_{1,1}^{r}\Lcal_{2,1}^{r},&~~\widetilde{U}_4=U_{1,1}^{r}U_{2,1}^{r}.
\end{split}
\end{equation}
The new operators $\{\widetilde{\Lcal}_{\alpha}, \widetilde{U}_{\alpha}, \alpha=1,2,3,4\}$ satisfy a simpler algebra, 
\begin{equation}
\begin{split}
\widetilde{\Lcal}_{\alpha}\widetilde{\Lcal}_{\beta}&=(-1)^{\widetilde{\mathbf{t}}_{\alpha\beta}}\widetilde{\Lcal}_{\beta}\widetilde{\Lcal}_{\alpha}\\
\widetilde{U}_{\alpha}\widetilde{U}_{\beta}&=(-1)^{\widetilde{\mathbf{t}}_{\alpha\beta}}\widetilde{U}_{\beta}\widetilde{U}_{\alpha},
\end{split}
\end{equation}
with 
\begin{equation}
\widetilde{\mathbf{t}}=
\begin{pmatrix}
0 & 1 & 0 & 0\\
-1 & 0 & 0 & 0\\
0 & 0 & 0 & 1\\
0 & 0 & -1 & 0
\end{pmatrix}.
\end{equation}
The algebra of $\{\widetilde{\Lcal}_{\alpha}\}$ (or $\{\widetilde{U}_\alpha\}$) is decoupled into two subalgebras, generated by $\{\widetilde{\Lcal}_1, \widetilde{\Lcal}_2\}$ (or $\{\widetilde{U}_1, \widetilde{U}_2\}$) and $\{\widetilde{\Lcal}_3, \widetilde{\Lcal}_4\}$ (or $\{\widetilde{U}_3, \widetilde{U}_4\}$) respectively. Each subalgebra has a two dimensional irreducible representation. Hence the total dimension of the irreducible representation of $\{\widetilde{\Lcal}_{\alpha}\}$ (or $\{\widetilde{U}_\alpha\}$) is $2\times 2=4$. Finally, noticing that the transformation between $\Lcal$'s and $\widetilde{\Lcal}$'s is invertible. The inverse transformation of  Eq.~\eqref{Eq.CommutingPairs} is 
\begin{eqnarray}
\begin{split}
\Lcal_{3,1}^r=\widetilde{\Lcal}_{1},&~~U_{3,1}^r=\widetilde{U}_{1},\\ 
\Lcal_{2,1}^{r}=\widetilde{\Lcal}_2,&~~U_{2,1}^{r}=\widetilde{U}_2 \\
\Lcal_{2,2}^{r-1}=\widetilde{\Lcal}_3,&~~U_{2,2}^{r-1}=\widetilde{U}_3 \\
\Lcal_{1,1}^{r}=\widetilde{\Lcal}_4(\widetilde{\Lcal}_2)^{-1},&~~U_{1,1}^{r}=\widetilde{U}_4 (\widetilde{U}_2)^{-1}
\end{split}
\end{eqnarray}
Since $\widetilde{\Lcal}$'s form a irreducible representation, the physical operators $\Lcal_{1,1}^{r+1}, \Lcal_{2,1}^{r+1}, \Lcal_{2,2}^{r}$ and $\Lcal_{3,1}^r$ (and thus by Eq.~\eqref{eq.generalizedCliffordAlgebra}, $U_{1,1}^{r+1}, U_{2,1}^{r+1}, U_{2,2}^{r}$ and $U_{3,1}^r$ as well) also provide a 4 dimensional irreducible representation. 

For the rest of the task, we need to first find a matrix representation of $U_{1,1}^{r}, U_{2,1}^{r}, U_{2,2}^{r-1}$ and $U_{3,1}^r$ satisfying the algebra Eq.~\eqref{eq.generalizedCliffordAlgebra}, and then solve for $T^{g^{r}_{1}g^{r}_2g^{r}_3}$ in Eqs.~\eqref{Eq.LRsingle1}, \eqref{Eq.LRsingle2} and \eqref{Eq.LRsingle3}. Since the irreducible representation of the algebra is 4-dimensional, the virtual $U$ operators should be $4\times 4$ matrices. 

The matrix representations for the $U$ operators are not unique.  If $U^{r'}_{\alpha', \tau'}$ satisfies the algebra Eq.~\eqref{eq.generalizedCliffordAlgebra} and $T^{g^{r'}_1 g^{r'}_2 g^{r'}_3}$ is the solution of Eqs.~\eqref{Eq.LRsingle1}, \eqref{Eq.LRsingle2} and \eqref{Eq.LRsingle3}, then $S\cdot U^{r'}_{\alpha', \tau'}\cdot S^{-1}$, with $S$ independent of $r',\alpha',\tau'$, also satisfies Eq.~\eqref{eq.generalizedCliffordAlgebra} and the corresponding solution for the $T$ matrices is given by $S\cdot T^{g^{r'}_1 g^{r'}_2 g^{r'}_3}\cdot S^{-1}$.
Hence without loss of generality, let us choose the virtual $U$ operators $\{U^{r'}_{\alpha', \tau'}\}$ to be:\footnote{We choose these $U$ operators in order to compare with the MPS matrices derived from the RBM states in Sec.~\ref{Sec.BMSofQSC}.}
\begin{equation}\label{eq.virtualoperators}
\begin{split}
U_{1,1}^{r}&=(X \otimes X)^r,	\\
U_{2,1}^{r}&=(I \otimes X)^r,	\\
U_{2,2}^{r-1}&=((-Y) \otimes I)^r.	\\
U_{3,1}^r&=(I \otimes (-Y))^r.	\\
\end{split}
\end{equation} 
where the superscript $r$ on the RHS indicates that the operators acts on the virtual bonds connecting the the $r$-th and $r+1$-th unit cell. 
We denote the corresponding MPS matrix elements as $T^{g^r_1g^r_2g^r_3}_{h_1h_2, h_3h_4}$ where $h_1, h_2\in\{0,1\}$ represent the left virtual indices and $h_3, h_4\in\{0,1\}$ represent the right virtual indices. In Eq.~\eqref{eq.virtualoperators}, the first Pauli matrices act on the virtual indices $h_1$ and $h_3$, while the second Pauli matrices act on the virtual indices $h_2$ and $h_4$. 

So far, we have only considered bipartition of the Hamiltonian terms $\Ocal^r_\alpha$. For the operators that have support over three or more unit cells, we can take the combinations of $\Lcal$ and $\Rcal$ operators so that $\Ocal^r_\alpha$ can be decomposed into a product of operators which only act within a single unit cell.  We enumerate the decompositions for all three types of operators:
\begin{eqnarray}\label{Eq.enu}
\begin{split}
\Ocal^r_{1,1}&=\Lcal^r_{1,1}\cdot \Rcal^r_{1,1}\\
\Ocal^r_2&=  \Lcal^r_{2,1}\cdot \left(\Rcal^r_{2,1}(\Rcal^r_{2,2})^{-1}\right)\cdot \Rcal^r_{2,2}\\
&= \Lcal^r_{2,1}\cdot \left((\Lcal^r_{2,1})^{-1}\Lcal^r_{2,2}\right)\cdot \Rcal^r_{2,2}\\
\Ocal^{r}_{3,1}&= \Lcal^r_{3,1} \cdot \Rcal^r_{3,1}
\end{split}
\end{eqnarray}
where all the terms on the RHS of Eq.~\eqref{Eq.enu}, in particular $\left(\Rcal^r_{2,1}(\Rcal^r_{2,2})^{-1}\right)$ and  $\left((\Lcal^r_{2,1})^{-1}\Lcal^r_{2,2}\right)$, are supported within one unit cell.   
In App.~\ref{app.Linearizing}, we show in a general setting of stabilizer codes, that Eq.~\eqref{Eq.LRsingle2} is equivalent to the following equations linear to the $T$ matrix in the $r$-th unit cell:
\begin{equation}\label{Eq.11}
\begin{split}
&\Lcal_{1,1}^r\circ T^{g^{r}_1g^{r}_2g^{r}_3}=T^{g^{r}_1g^{r}_2g^{r}_3}\cdot U^r_{1,1}\\
&\Rcal_{1,1}^{r-1}\circ T^{g^{r}_1g^{r}_2g^{r}_3}=(U_{1,1}^{r-1})^{-1}\cdot T^{g^{r}_1g^{r}_2g^{r}_3}\\
&\Lcal_{2,1}^r\circ T^{g^{r}_1g^{r}_2g^{r}_3}=T^{g^{r}_1g^{r}_2g^{r}_3}\cdot U^r_{2,1}\\&
((\Lcal^{r-1}_{2,1})^{-1}\Lcal_{2,2}^{r-1})\circ T^{g^{r}_1g^{r}_2g^{r}_3}= (U^{r-1}_{2,1})^{-1}\cdot T^{g^{r}_1g^{r}_2g^{r}_3}\cdot U_{2,2}^{r-1}\\&
(\Rcal_{2,1}^{r-1}(\Rcal^{r-1}_{2,2})^{-1})\circ T^{g^{r}_1g^{r}_2g^{r}_3}=(U^{r-1}_{2,1})^{-1}\cdot T^{g^{r}_1g^{r}_2g^{r}_3}\cdot U^{r-1}_{2,2}\\&
\Rcal_{2,2}^{r-2}\circ T^{g^{r}_1g^{r}_2g^{r}_3}=(U^{r-2}_{2,2})^{-1}\cdot T^{g^{r}_1g^{r}_2g^{r}_3}\\
&\Lcal_{3,1}^r\circ T^{g^{r}_1g^{r}_2g^{r}_3}=T^{g^{r}_1g^{r}_2g^{r}_3}\cdot U^r_{3,1}\\
&\Rcal_{3,1}^{r-1}\circ T^{g^{r}_1g^{r}_2g^{r}_3}=(U_{3,1}^{r-1})^{-1}\cdot T^{g^{r}_1g^{r}_2g^{r}_3},
\end{split}
\end{equation}
where by translational invariance
\begin{equation}
\begin{split}
&U_{1,1}^{r}=(X \otimes X)^r, ~~~U_{1,1}^{r-1}=(X \otimes X)^{r-1},	\\	
&U_{2,1}^{r}=(I \otimes X)^{r}, ~~~U_{2,1}^{r-1}=(I \otimes X)^{r-1},	\\
&U_{2,2}^{r-1}=((-Y) \otimes I)^{r}, ~~~U_{2,2}^{r-2}=((-Y) \otimes I)^{r-1},	\\
&U_{3,1}^r=(I \otimes (-Y))^{r}, ~~~U_{3,1}^{r-1}=(I \otimes (-Y))^{r-1}.	\\
\end{split}
\end{equation} 
More concretely, the equations in \eqref{Eq.11} are 
\begin{equation}\label{Eq.MPSequations}
\begin{split}
T^{g^r_1g^r_2g^r_3}_{h_1h_2, h_3h_4}(-1)^{g^r_2+g^r_3}&=T^{g^r_1g^r_2g^r_3}_{h_1h_2, (1-h_3)(1-h_4)}\\
T^{(1-g^{r}_1)g^{r}_2g^{r}_3}_{h_1h_2, h_3h_4}(-1)^{g^{r}_2+g^{r}_3}&=T^{g^{r}_1g^{r}_2g^{r}_3}_{(1-h_1)(1-h_2), h_3h_4}\\
T^{g^r_1g^r_2g^r_3}_{h_1h_2, h_3h_4}(-1)^{g^r_3}&=T^{g^r_1g^r_2g^r_3}_{h_1h_2, h_3(1-h_4)}\\
T^{g^{r}_1(1-g^{r}_2)g^{r}_3}_{h_1h_2, h_3h_4}(-1)^{g^{r}_1+g^{r}_3}&=-i T^{g^{r}_1g^{r}_2g^{r}_3}_{h_1(1-h_2), (1-h_3)h_4}(-1)^{1-h_3}\\
T^{g^{r}_1(1-g^{r}_2)g^{r}_3}_{h_1h_2, h_3h_4}(-1)^{g^{r}_1+g^{r}_3}&= -iT^{g^{r}_1g^{r}_2g^{r}_3}_{h_1(1-h_2), (1-h_3)h_4}(-1)^{1-h_3}\\
T^{g^{r}_1g^{r}_2g^{r}_3}_{h_1h_2, h_3h_4}(-1)^{g^{r}_1}&=-i T^{g^{r}_1g^{r}_2g^{r}_3}_{(1-h_1)h_2, h_3h_4}(-1)^{h_1}\\
T^{g^r_1g^r_2(1-g^r_3)}_{h_1h_2, h_3h_4}(-1)^{g^r_1+g^r_2}&=-i T^{g^r_1g^r_2g^r_3}_{h_1h_2, h_3(1-h_4)}(-1)^{1-h_4}\\
T^{g^{r}_1g^{r}_2g^{r}_3}_{h_1h_2, h_3h_4}(-1)^{g^{r}_1+g^{r}_2}&=-i T^{g^{r}_1g^{r}_2g^{r}_3}_{h_1(1-h_2), h_3h_4}(-1)^{h_2}.
\end{split}
\end{equation}
Thus we have derived a set of linear equations Eq.~\eqref{Eq.MPSequations} of the $T$-matrices  from the non-linear ones Eq.~\eqref{Eq.ConstraintTmatrices}. Solving Eq.~\eqref{Eq.MPSequations}, we obtain a solution up to a total scale factor:
\begin{equation}\label{Eq.MPSZZXZZ}
	\begin{split}
		T^{000}=
		\begin{pmatrix}
			1&1&1&1\\
			i&i&i&i\\
			i&i&i&i\\
			-1&-1&-1&-1
		\end{pmatrix},
		T^{001}=
		\begin{pmatrix}
			i&-i&i&-i\\
			-1&1&-1&1\\
			-1&1&-1&1\\
			-i&i&-i&i
		\end{pmatrix},\\
		T^{010}=
		\begin{pmatrix}
			-1&-1&1&1\\
			i&i&-i&-i\\
			-i&-i&i&i\\
			-1&-1&1&1
		\end{pmatrix},
		T^{011}=
		\begin{pmatrix}
			i&-i&-i&i\\
			1&-1&-1&1\\
			-1&1&1&-1\\
			i&-i&-i&i
		\end{pmatrix},\\
		T^{100}=
		\begin{pmatrix}
			-1&-1&-1&-1\\
			i&i&i&i\\
			i&i&i&i\\
			1&1&1&1
		\end{pmatrix},
		T^{101}=
		\begin{pmatrix}
			i&-i&i&-i\\
			1&-1&1&-1\\
			1&-1&1&-1\\
			-i&i&-i&i
		\end{pmatrix},\\
		T^{110}=
		\begin{pmatrix}
			1&1&-1&-1\\
			i&i&-i&-i\\
			-i&-i&i&i\\
			1&1&-1&-1
		\end{pmatrix},
		T^{111}=
		\begin{pmatrix}
			i&-i&-i&i\\
			-1&1&1&-1\\
			1&-1&-1&1\\
			i&-i&-i&i
		\end{pmatrix}.
	\end{split}
\end{equation}
These matrices are of rank 1 and all the tensor elements are nonzero. We emphasize that they match the two properties (a) and (b) in Theorem \ref{theoremII.rank1ofRBM}, and this match depends on the proper choice of the matrices for $U$ operators. Indeed, if there is a $U$ operator containing Pauli $Z$ matrix, for instance $U_{1,1}^{r}=X\otimes X, U_{2,1}^{r}=X\otimes I, U_{2,2}^{r-1}=I\otimes Z, U_{3,1}^r=Z\otimes I$, then the MPS matrix elements can have both zeros and nonzeros. The appearance of zero matrix elements makes it difficult to match the MPS to the RBM element-wise, because the matrix elements of RBM-MPS are all non-vanishing as shown in Theorem \ref{theoremII.rank1ofRBM}.

In fact, we do not have to solve the matrices and then find their ranks. We can immediately find the rank of the matrices from the Hamiltonian terms.  From Eq.~\eqref{Eq.MPSequations} used only for $T^{000}$, 
\begin{equation}\label{Eq.MPSequationsRank}
\begin{split}
T^{000}_{h_1h_2, h_3h_4}&=T^{000}_{h_1h_2, (1-h_3)(1-h_4)}\\
T^{000}_{h_1h_2, h_3h_4}&=T^{000}_{h_1h_2, h_3(1-h_4)}\\
T^{000}_{h_1h_2, h_3h_4}&=-i T^{000}_{(1-h_1)h_2, h_3h_4}(-1)^{h_1}\\
T^{000}_{h_1h_2, h_3h_4}&=-i T^{000}_{h_1(1-h_2), h_3h_4}(-1)^{h_2}.
\end{split}
\end{equation}
The physical indices are unchanged on both sides of Eq.~\eqref{Eq.MPSequationsRank} simply because the four equations are coming from acting with the physical operators on the LHS of Eq.~\eqref{Eq.MPSequationsRank},
\begin{equation}\label{Eq.fouroperators}
\begin{split}
\Lcal^{r}_{1,1}&=I^r_{1}\otimes Z^r_{2}\otimes Z^r_3, \\
\Lcal^{r}_{2,1}&=I^r_1\otimes I^r_2\otimes Z^r_3, \\
\Rcal^{r}_{2,2}&=Z^{r}_1\otimes I^{r}_2\otimes I^{r}_3, \\
\Rcal^{r}_{3,1}&=Z^{r}_1\otimes Z^{r}_2\otimes I^{r}_3, 
\end{split}
\end{equation}
which contain only Pauli $Z$ operators and identities. 
Hence, the left indices $h_1$ and $h_2$ of the matrix $T^{000}$ obey two independent constraints, and the right indices $h_3$ and $h_4$ obey two independent constraints. Therefore, the rank of the matrix $T^{000}$ can be at most 1, since each constraint for the left (\emph{or} right) indices eliminates half of the total rank. Acting with Eq.~\eqref{Eq.fouroperators} on the $T$-matrices with other physical indices, we also get two independent constraints on the left and right indices respectively. Hence the $T$ matrices with any physical indices are of rank 1. For other general models obeying the assumptions \eqref{Assumption1}, \eqref{Assumption2} and \eqref{Assumption3}, we can similarly find the constraints on the rank of the matrices by counting the independent $\Lcal$ or $\Rcal$ operators with only Pauli $Z$ matrices without solving explicitly the matrices by brute-force. We elaborate this idea in Sec.~\ref{Sec.RankMPS}.

We finally comment that for the $ZZXZZ$ model with one spin per unit cell, i.e., 
\begin{eqnarray}
H^{1-\mathrm{site}}_{ZZXZZ}= \sum_{i=0}^{r-1} Z^{r-2}Z^{r-1}X^r Z^{r+1}Z^{r+2}
\end{eqnarray}
Using the same calculation in this section, the ground state of $H^{1-\mathrm{site}}_{ZZXZZ}$ can be expressed as an MPS, 
\begin{eqnarray}\label{Eq.rank1MPS}
|\mathrm{GS}\rangle_{1-\mathrm{site}}= \sum_{\{g^r\}}\Tr\bigg(  \prod_{r=0}^{L-1}  T^{g^r} \bigg)|\{g^r\}\rangle
\end{eqnarray}
where the MPS matrices are
\begin{eqnarray}
T^{0}= 
\begin{pmatrix}
1 & 0 & 1 & 0\\
-i & 0 & -i & 0\\
0 & 1 &0 & 1\\
0 & -i & 0 & -i
\end{pmatrix}, ~~~
T^{1}=
\begin{pmatrix}
0 & -1 & 0 & 1\\
0 & -i & 0 & i\\
-1 & 0 & 1 & 0\\
-i & 0 & i & 0
\end{pmatrix}
\end{eqnarray}
Notice that both $T^{0}$ and $T^{1}$ are of rank $2$. By theorem \ref{theoremII.rank1ofRBM}, it is impossible to express the MPS  Eq.~\eqref{Eq.rank1MPS} as an RBM state.

\subsection{General Stabilizer Code Convention}
\label{Sec.GeneralSC}

A generic translational invariant stabilizer code is described by the Hamiltonian
\begin{equation}\label{Eq.GeneralStabilizerCodes}
H=-\sum_{r=0}^{L-1}\sum_{\alpha=1}^t \Ocal^r_{\alpha}
\end{equation}
where $t$ is the total number of types of interactions and $\alpha\in{1,\ldots, t}$ labels the type. Each unit cell contains $q$ spin-$\frac{1}{2}$'s. $\Ocal^r_{\alpha}$ is a product of Pauli $X$ and $Z$ operators such that (1) $\Ocal^r_\alpha$ is Hermitian and $(\Ocal^r_\alpha)^2=1$ for any $r, \alpha$; and (2) $\Ocal^r_\alpha$ and $\Ocal^{r'}_{\alpha'}$ commute for any $r, r', \alpha, \alpha'$, i.e., 
\begin{equation}
[\Ocal^r_{\alpha}, \Ocal^{r'}_{\alpha'}]=0, \quad \forall\; \alpha, \alpha^\prime, r, r^\prime.
\end{equation}
Each interaction term $\Ocal^r_\alpha$ is supported over the unit cells $r, r+1, ..., r+P_{\alpha}-1$, and  can be written as an ordered product of $P_\alpha$ number of local operators $o^r_{\alpha, \tau}$
\begin{equation}
\Ocal^r_{\alpha}=\prod_{\tau=1}^{P_\alpha} o^r_{\alpha, \tau},
\end{equation}
where $o^r_{\alpha, \tau}$ is a product of Pauli matrices only supported on the $(r+\tau-1)$-th unit cell. For convenience, we define $\Lcal^r_{\alpha, \tau}$ and $\Rcal^r_{\alpha, \tau}$ as follows
\begin{equation}\label{Eq.LR}
\begin{split}
\Lcal^r_{\alpha, \tau}&= \prod_{\mu=1}^{\tau} o^r_{\alpha, \mu},	\\
\Rcal^r_{\alpha, \tau}&= \prod_{\mu=\tau+1}^{P_\alpha} o^r_{\alpha, \mu},\quad \tau=1, \ldots, P_\alpha-1.
\end{split}
\end{equation}
By Assumption \ref{Assumption1}, we consider only cases where Eq.~\eqref{Eq.GeneralStabilizerCodes} has a unique ground state when using PBC. The ground state $\ket{\mathrm{GS}}$ is the common eigenstate of all $\Ocal^r_{\alpha}$'s with eigenvalue 1 for all $r$ and $\alpha$, 
\begin{equation}\label{Eq.GSproperty}
\Ocal^r_{\alpha} \ket{\mathrm{GS}} = \ket{\mathrm{GS}}, \quad \forall\; r, \alpha.
\end{equation}
Our goal is to express the ground state $|\mathrm{GS}\rangle$ as an MPS
\begin{equation}\label{Eq.MPSGS}
|\mathrm{GS}\rangle=\sum_{\{g^r_i\}}\Tr\bigg(\prod_{r=0}^{L-1}T^{g^r_1\ldots g^r_q}\bigg)|\{g^r_i\}\rangle.
\end{equation}
where $g^r_{\alpha}$, $(\alpha=1, ..., q)$, labels the value of the $\alpha$-th physical spin in the $r$-th unit cell. 

\subsection{General Algorithm to Construct MPS}
\label{Sec.GeneralMPS}

The calculation algorithm to construct an MPS representation is divided into in 4 steps. These 4 steps will follow those of Sec.~\ref{Sec.Example} for the $ZZXZZ$ model.

\begin{figure}[b]
	\centering
	\includegraphics[width=1\columnwidth]{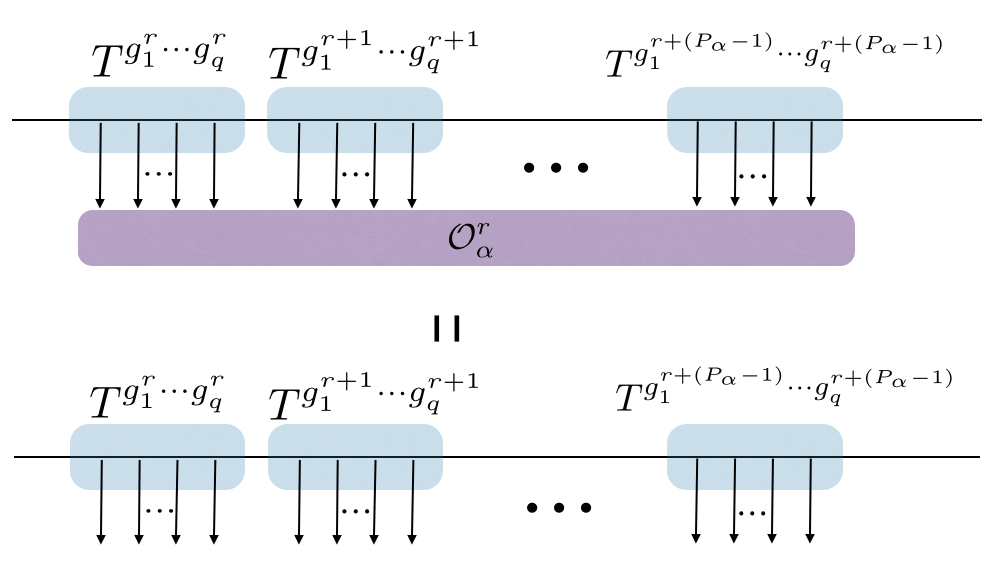}
	\caption{Graphical representation of Eq.~\eqref{Eq.MPSgeneral}. The shaded purple region represents the operator $\Ocal^r_\alpha$ acting on the physical indices.}
	\label{Fig_EqMPS}
\end{figure}

\paragraph*{1.} We start with the stabilizer condition Eq.~\eqref{Eq.GSproperty}. A sufficient condition for the MPS of Eq.~\eqref{Eq.MPSGS} to satisfy Eq.~\eqref{Eq.GSproperty} is
\begin{equation}\label{Eq.MPSgeneral}
\Ocal^r_\alpha\circ \Big(\prod_{r'=r}^{r+P_\alpha-1}T^{g^{r'}_{1}\ldots g^{r'}_q}\Big)=\Big(\prod_{r'=r}^{r+P_\alpha-1}T^{g^{r'}_{1}\ldots g^{r'}_q}\Big).
\end{equation}
Eq.~\eqref{Eq.MPSgeneral} is graphically represented as Fig.~\ref{Fig_EqMPS}.   In fact, Eq.~\eqref{Eq.MPSgeneral} is not only sufficient, but also necessary for Eq.~\eqref{Eq.GSproperty}, as derived in App.~\ref{app.Deriving0}.

\begin{figure*}
	\centering
	\includegraphics[width=0.9\textwidth]{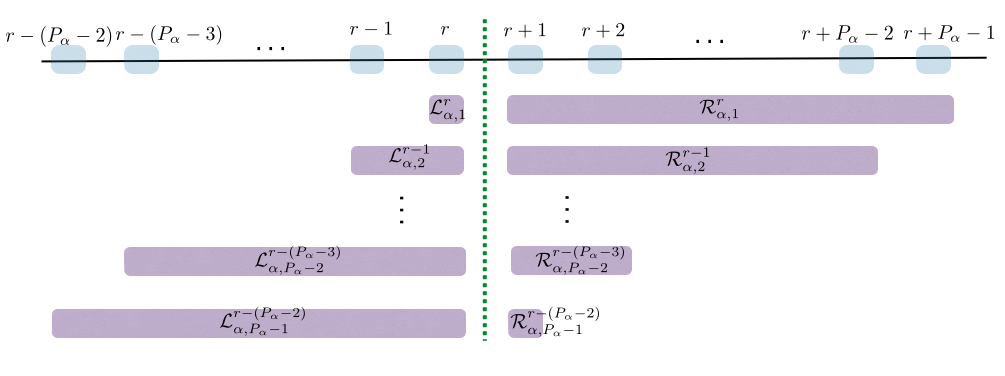}
	\caption{An illustration of the  operators $\Lcal^{r-(\tau-1)}_{\alpha, \tau}$ and $\Rcal^{r-(\tau-1)}_{\alpha, \tau}$ with fixed $r$ and $\alpha$, and all $1\tau\leq P_{\alpha}-1$.  The blue blocks represent unit cells. The purple blocks represent the operators $\Lcal^{r-(\tau-1)}_{\alpha, \tau}$, and the operators $\Rcal^{r-(\tau-1)}_{\alpha, \tau}$. }
	\label{Fig_Operators}
\end{figure*}

\paragraph*{2.} To find a solution of Eq.~\eqref{Eq.MPSgeneral}, we consider a  bipartition of  the Hamiltonian term $\Ocal^r_{\alpha}$ into the product of the left and right part, i.e., $\Lcal^r_{\alpha, \tau}$ and $\Rcal^r_{\alpha, \tau}$. The two parts act solely on two disjoint and contiguous sets of unit cells. For $P_{\alpha}>1$, since  $\Ocal^r_{\alpha}$ is supported on the unit cells between $r$ and $r+P_\alpha-1$, $\Lcal^r_{\alpha, \tau}$ is chosen to be supported from $r$ to $r+\tau-1$-th unit cell, and $\Rcal^r_{\alpha, \tau}$ is supported from $(r+\tau)$ to $(r+P_\alpha-1)$-th unit cell. The definitions of $\Lcal^r_{\alpha, \tau}$ and $\Rcal^r_{\alpha, \tau}$ are in given  Eq.~\eqref{Eq.LR}. 
Either $\Lcal^r_{\alpha, \tau}$ or $\Rcal^r_{\alpha, \tau}$ can act nontrivially on the MPS, although their product leaves the MPS invariant. This nontrivial action can be captured by a transformation on the virtual index exactly across the cut (between the $(r+\tau-1)$-th and the $(r+\tau)$-th unit cell). From Eq.~\eqref{Eq.MPSgeneral}, we find
\begin{eqnarray}\label{Eq.LRgeneral}
&&\Lcal^r_{\alpha, \tau}\circ \Big(\prod_{r'=r}^{r+\tau-1}T^{g^{r'}_{1}\ldots g^{r'}_q}\Big)= \Big(\prod_{r'=r}^{r+\tau-1}T^{g^{r'}_{1}\ldots g^{r'}_q}\Big)\cdot U^r_{\alpha, \tau}\\
&&\Rcal^r_{\alpha, \tau}\circ \Big(\prod_{r'=r+\tau}^{r+P_\alpha-1}T^{g^{r'}_{1}\ldots g^{r'}_q}\Big)=(U^r_{\alpha, \tau})^{-1}\cdot \Big(\prod_{r'=r+\tau}^{r+P_\alpha-1}T^{g^{r'}_{1}\ldots g^{r'}_q}\Big) .\nonumber
\end{eqnarray}
Eq.~\eqref{Eq.LRgeneral} is graphically represented as Fig.~\ref{Fig_EqMPSleftright}. We prove in App.~\ref{app.Deriving} that Eq.~\eqref{Eq.LRgeneral} is both necessary and sufficient for Eq.~\eqref{Eq.MPSgeneral}.

\begin{figure}
	\centering
	\includegraphics[width=1\columnwidth]{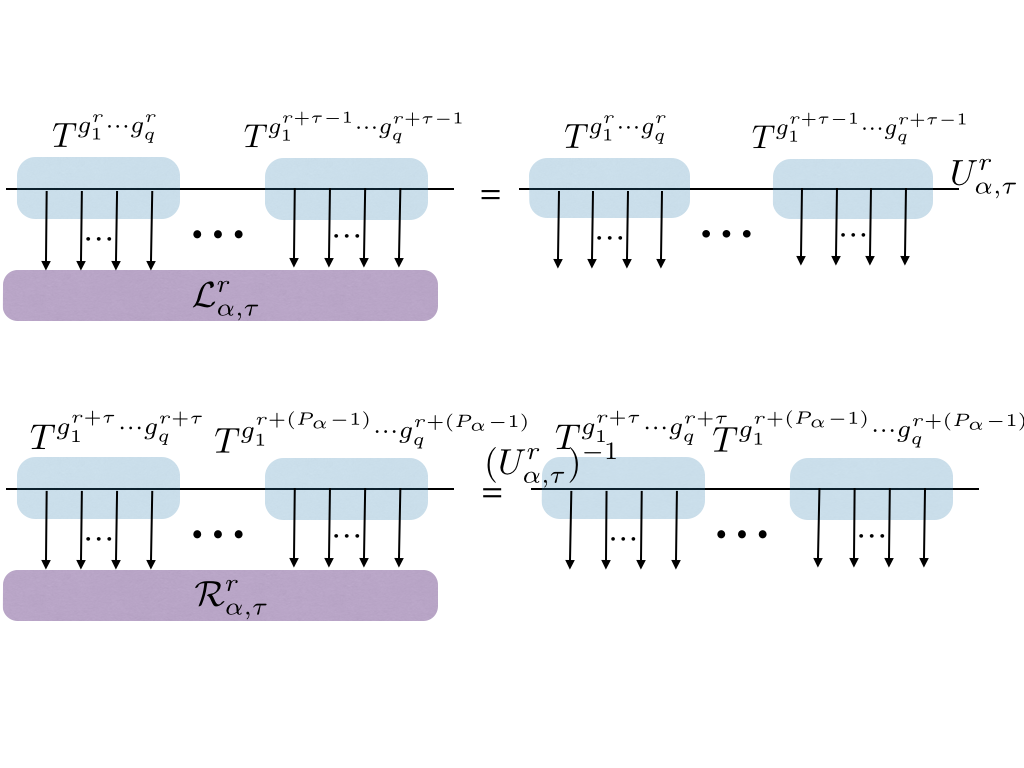}
	\caption{Graphical representation of Eq.~\eqref{Eq.LRgeneral}. The virtual operator $U^r_{i, \tau}$ and $(U^r_{i, \tau})^{-1}$ act on the right virtual index between the $r+\tau-1$ and $r$-th unit cell.}
	\label{Fig_EqMPSleftright}
\end{figure}

For convenience, for each choice of $(\alpha, \tau)$ we can shift $r$ to $r-\tau+1$ in Eq.~\eqref{Eq.LRgeneral} by translational invariance such that the $U^{r-\tau+1}_{\alpha, \tau}$ acts on the virtual bond between the $r$-th and $(r+1)$-th unit cell. See Fig.~\ref{Fig_Operators} for the operators that are obtained from shifting $\Ocal^r_{\alpha}$.  
Under the shift $r\to r-\tau+1$, Eq.~\eqref{Eq.LRgeneral} becomes
\begin{equation}\label{Eq.Lequations}
\begin{split}
&\Lcal^{r-\tau+1}_{\alpha, \tau}\circ \Big(\prod_{r'=r-\tau+1}^{r}T^{g^{r'}_{1}\ldots g^{r'}_q}\Big)= \Big(\prod_{r'=r-\tau+1}^{r}T^{g^{r'}_{1}\ldots g^{r'}_q}\Big)\cdot U^{r-\tau+1}_{\alpha, \tau}\\
&\Rcal^{r-\tau+1}_{\alpha, \tau}\circ \Big(\prod_{r'=r+1}^{r+P_\alpha-\tau}T^{g^{r'}_{1}\ldots g^{r'}_q}\Big)=(U^{r-\tau+1}_{\alpha, \tau})^{-1}\cdot \Big(\prod_{r'=r+1}^{r+P_\alpha-\tau}T^{g^{r'}_{1}\ldots g^{r'}_q}\Big),\\
& 1\leq \alpha \leq t, ~~~ 1\leq \tau\leq P_\alpha-1
\end{split}
\end{equation}

When the operator $\Ocal_{\alpha}^r$ is supported only over 1 unit cell, i.e., $P_\alpha=1$,  Eq.~\eqref{Eq.MPSgeneral} is already linear. Without loss of generality, we take $\Lcal^r_{\alpha, 1}=\Ocal^{r}_{\alpha}$,  $\Rcal^r_{\alpha, 1}=I^r$ and $U^r_{\alpha, 1}=I^{r}$ where $I^r$ is an identity operator acting on the $r$-th unit cell.

\paragraph*{3.} We further determine the minimal bond dimension of $T^{g_1^{r}\ldots g_q^{r}}$. 
We prove in App.~\ref{app.Ucommutation} that the commutation and anti-commutation relations of the virtual $U$ operators on RHS of Eq.~\eqref{Eq.Lequations} should match those of the physical $\Lcal$ and $\Rcal$ operators on the LHS,
	\begin{equation}\label{Eq.Algebra}
	\begin{split}
	\Lcal_{\alpha', \tau'}^{r-(\tau'-1)}&\Lcal_{\alpha'', \tau''}^{r-(\tau''-1)} 
	\\&= (-1)^{\mathbf{t}_{(\alpha'\tau'),(\alpha''\tau'')}^{r-(\tau'-1),r-(\tau''-1)}} \Lcal_{\alpha'', \tau''}^{r-(\tau''-1)} \Lcal_{\alpha', \tau'}^{r-(\tau'-1)},	\\
	\Rcal_{\alpha', \tau'}^{r-(\tau'-1)}&\Rcal_{\alpha'', \tau''}^{r-(\tau''-1)} 
	\\&= (-1)^{\mathbf{t}_{(\alpha'\tau'),(\alpha''\tau'')}^{r-(\tau'-1),r-(\tau''-1)}} \Rcal_{\alpha'', \tau''}^{r-(\tau''-1)} \Rcal_{\alpha', \tau'}^{r-(\tau'-1)},	\\
	U_{\alpha', \tau'}^{r-(\tau'-1)}&U_{\alpha'', \tau''}^{r-(\tau''-1)} 
	\\&= (-1)^{\mathbf{t}_{(\alpha'\tau'),(\alpha''\tau'')}^{r-(\tau'-1),r-(\tau''-1)}} U_{\alpha'', \tau''}^{r-(\tau''-1)} U_{\alpha', \tau'}^{r-(\tau'-1)},	\\
	1\leq \alpha', \alpha''&\leq t, ~~~1\leq \tau'\leq P_{\alpha'}-1, ~~~1\leq \tau''\leq P_{\alpha''}-1
	\end{split}
	\end{equation}
The parameter 
\begin{equation}\label{Eq.mod2}
\mathbf{t}_{(\alpha'\tau'),(\alpha''\tau'')}^{r-(\tau'-1),r-(\tau''-1)}=0, 1\mod 2
\end{equation} 
encodes whether $\textstyle U_{\alpha', \tau'}^{r-(\tau'-1)}$ and $\textstyle U_{\alpha'', \tau''}^{r-(\tau''-1)}$ commute or anti-commute. We ensemble the parameters $\textstyle \mathbf{t}_{(\alpha'\tau'),(\alpha''\tau'')}^{r-(\tau'-1),r-(\tau''-1)}$ into an anti-symmetric matrix $\mathbf{t}$. \footnote{In general, if the set of operators $\{U_{\lambda}\}$ satisfy $U_{\lambda'} U_{\lambda''}= e^{i \mathbf{t}_{\lambda', \lambda''}} U_{\lambda'}U_{\lambda''}$, where $\mathbf{t}_{\lambda', \lambda''}$ is a real number characterizing the commutation relations between $\{U_{\lambda}\}$, $\mathbf{t}_{\lambda', \lambda''}$ is antisymmetric: $\mathbf{t}_{\lambda', \lambda''}=-\mathbf{t}_{\lambda'', \lambda'}$. This is because from the above commutation relation, one move the phase to the left hand side as $e^{-i \mathbf{t}_{\lambda', \lambda''}}U_{\lambda'} U_{\lambda''}=  U_{\lambda'}U_{\lambda''}$. Combining with the definition $U_{\lambda''} U_{\lambda'}= e^{i \mathbf{t}_{\lambda'', \lambda'}} U_{\lambda''}U_{\lambda'}$, we derive that $e^{-i \mathbf{t}_{\lambda', \lambda''}}=e^{i \mathbf{t}_{\lambda'', \lambda'}}$. This yields that $\mathbf{t}_{\lambda'', \lambda'}=-\mathbf{t}_{\lambda', \lambda''}\mod 2\pi$. Applying to our case,  $\mathbf{t}_{(\alpha'\tau'),(\alpha''\tau'')}^{r-(\tau'-1),r-(\tau''-1)}=-\mathbf{t}_{(\alpha''\tau''),(\alpha'\tau')}^{r-(\tau''-1),r-(\tau'-1)}\mod 2$.  }

The algebra Eq.~\eqref{Eq.Algebra} is a generalization of the Clifford algebra, where the standard Clifford algebra is generated by mutually anti-commuting operators. In Ref.~\onlinecite{2010arXiv1005.4300J}, it was shown that any integer-valued antisymmetric matrix $\mathbf{t}$ can be block diagonalized by a unimodular integer matrix $V$, such that each nontrivial block is a $2\times 2$ anti-symmetric matrix with integer off-diagonal elements. Due to Eq.~\eqref{Eq.mod2}, only the modulo 2 values of the off-diagonal elements of the nontrivial $2\times 2$ blocks matter. The nontrivial blocks can therefore  be written as follows:\footnote{When there is an operator $\Lcal^{r-\tau+1}_{\alpha, \tau}$ commuting with all other $\Lcal^{r-\tau'+1}_{\alpha', \tau'}$ for any $(\tau', \alpha')$, the $\mathbf{t}$-matrix is not full rank. For instance, when there is an operator $\Ocal^r_{\alpha}$ which is only supported over one unit cell, i.e., $P_{\alpha}=1$, then by our convention in the previous paragraph, $\Lcal^r_{\alpha, 1}=\Ocal^r_{\alpha}, \Rcal^r_{\alpha, 1}=1$. Then $\Lcal^r_{\alpha, 1}$ commutes with $\Lcal^{r-\tau'+1}_{\alpha', \tau'}$ for any $(\tau', \alpha')$. Hence there are 0 blocks in the decomposition Eq.~\eqref{Eq.29}, i.e., $\mathbf{t}$ matrix is not full rank. }
\begin{equation}\label{Eq.29}
V \mathbf{t} V^{T} = 
\left(\begin{matrix}
0	&	1	\\
-1	&	0	\\
\end{matrix}\right)
\oplus
\left(\begin{matrix}
0	&	1	\\
-1	&	0	\\
\end{matrix}\right)
\cdots
\oplus
\left(\begin{matrix}
0	&	1	\\
-1	&	0	\\
\end{matrix}\right)
\oplus
0
\cdots.
\end{equation}
Here we explicitly keep the minus signs to make the antisymmetry manifest.   
In the new basis, the operators $\Lcal^r_{\alpha, \tau}$ become decoupled pairs of anti-commuting operators (such as Eq.~\eqref{Eq.CommutingPairs}); there are $\frac{\mathrm{rank}(\mathbf{t})}{2}$ such pairs. Since each pair provides a two dimensional irreducible representation, the dimension of the irreducible representation of the generalized Clifford algebra Eq.~\eqref{Eq.Algebra} is given by 
\begin{equation}\label{eq.dim=rankt}
D = 2^{\frac{\mathrm{rank}(\mathbf{t})}{2}}.
\end{equation}
Since the dimension of an irreducible representation of the algebra Eq.~\eqref{Eq.Algebra} is $D$, the matrices of the $U^r_{\alpha, \tau}$ operators, as well as the MPS matrix $T^{g^r_1\ldots g^r_q}$ under the irreducible representation should be $D\times D$ matrices.\footnote{One may consider $U^r_{\alpha, \tau}=0$ (for all $\alpha, \tau$ and  $r$) to be a solution of Eq.~\eqref{Eq.Algebra}. However, due to Eq.~\eqref{Eq.Lequations}, the $T^{g^r_1\ldots g^r_q}$ would be zero, hence the MPS is a null state. So we do not consider this solution.} Since the representation is irreducible, $D$ is also the minimal bond dimension. For the $ZZXZZ$ model with 3 spins per unit cell discussed in Sec.~\ref{Sec.Example}, the $\mathbf{t}$ matrix is given by Eq.~\eqref{eq.exampletmatrix}, which is of rank 4. By Eq.~\eqref{eq.dim=rankt}, the minimal bond dimension of the MPS is $D=2^{4/2}=4$, which matches the MPS explicitly derived in Eq.~\eqref{Eq.MPSZZXZZ}. 

\paragraph*{4.} We solve Eq.~\eqref{Eq.Lequations} for the MPS matrices $T$ with the minimal bond dimension $D$. Let us first determine the form of $U$. The matrix elements of $U$ can be obtained by finding the  representation of the algebra Eq.~\eqref{Eq.Algebra}. Here we focus only on irreducible representations such that the bond dimension is minimal.  Notice that there exist multiple choices of $U$ operators satisfying the same algebra Eq.~\eqref{Eq.Algebra}. However, since we only consider models with a single ground state, different solutions of $T$ from different choices of $U$ should correspond to the same ground state. Hence it is sufficient to work with one choice of $U$. As shown in App.~\ref{App. VirtualUOperator},  $U$ can always be constructed as a tensor product of $\frac{\mathrm{rank}(\mathbf{t})}{2}$ Pauli matrices. After specifying the virtual $U$ operators, we manipulate the equations in Eq.~\eqref{Eq.Lequations} such that all the physical operators on the LHS only act on the $r$-th unit cell, and all the equations are linear in $T^{g^r_1\cdots g^r_q}$. For instance, using the definition Eq.~\eqref{Eq.LR},  $\Lcal^{r-\tau+1}_{\alpha, \tau}=\prod_{\mu=1}^{\tau} o_{\alpha, \mu}^{r-\tau+1}$ and  $\Lcal^{r-\tau+1}_{\alpha, \tau-1}=\prod_{\mu=1}^{\tau-1} o_{\alpha, \mu}^{r-\tau+1}$, the combination $\left((\Lcal^{r-\tau+1}_{\alpha, \tau-1})^{-1}\Lcal^{r-\tau+1}_{\alpha, \tau}\right)=o_{\alpha, \tau}^{r-\tau+1}$ is only supported on the $r$-th  unit cell. In App.~\ref{app.Linearizing}, we show that Eq.~\eqref{Eq.Lequations} are equivalent to
\begin{equation}\label{Eq.LFinial}
\begin{split}
\Lcal^r_{\alpha, 1}\circ T^{g^{r}_{1}\ldots g^{r}_q}
&=T^{g^{r}_{1}\ldots g^{r}_q} \cdot U^r_{\alpha, 1}\\
\left((\Lcal^{r-\tau+1}_{\alpha, \tau-1})^{-1}\Lcal^{r-\tau+1}_{\alpha, \tau}\right) \circ T^{g^r_1\cdots g^r_q}
&=(U^{r-\tau+1}_{\alpha, \tau-1})^{-1}\cdot T^{g^r_1\cdots g^r_q}\cdot U^{r-\tau+1}_{\alpha, \tau}\\
\left(\Rcal^{r-\tau+1}_{\alpha, \tau-1}(\Rcal^{r-\tau+1}_{\alpha, \tau})^{-1}\right)\circ T^{g_1^{r} \ldots g_q^{r}} 
&=(U^{r-\tau+1}_{\alpha, \tau-1})^{-1} \cdot T^{g_1^{r} \ldots g_q^{r}} \cdot U^{r-\tau+1}_{\alpha, \tau}\\
\Rcal^{r-(P_\alpha-1)}_{\alpha, P_\alpha-1}\circ T^{g_1^{r} \ldots g_q^{r}} 
&= (U^{r-(P_\alpha-1)}_{\alpha, P_\alpha-1})^{-1}\cdot T^{g_1^{r} \ldots g_q^{r}},\\
 1\leq \alpha\leq t, ~~&~ 2\leq \tau\leq P_{\alpha}-1
\end{split}
\end{equation}
Since Eq.~\eqref{Eq.LFinial} is a set of linear equations in $T$, they can be numerically solved efficiently. For all the models we have explicitly checked (e.g. $Z^{q-1}X Z^{q-1}$ with $2\leq q\leq 6$), Eq.~\eqref{Eq.LFinial} has one non-zero solution up to an overall scaling. 

\section{An Inequality for Rank of MPS}
\label{Sec.RankMPS}

As discussed in Sec.~\ref{Sec.IntroBM}, a necessary condition for the existence of a finitely connected RBM of a stabilizer code ground state is Theorem \ref{theoremII.rank1ofRBM}. In this section, we propose an inequality which allows us to directly constrain the rank of the MPS without solving for the MPS matrices. 

Before we state and prove our theorem, it is convenient to introduce two notations. Denote a set of operators:
\begin{equation}\label{Eq.BigL}
\begin{split}
\mathbb{L}=\bigg\{\Lcal^r_{1, 1}, \Lcal^r_{2, 1}, \ldots , \Lcal^r_{t, 1}\bigg\}.
\end{split}
\end{equation}
In particular, $\mathbb{L}$ contains a special subset dubbed as $\mathbb{L}_{Z}$ such that the operators in $\mathbb{L}_Z$ are only the tensors products of Pauli $Z$ and the identity $I$ matrices. Denote $N^{\Lcal}_{Z}$ as the number of independent operators in $\mathbb{L}_{Z}$. Notice that due to translational invariance,  $N^{\Lcal}_{Z}$ is independent of $r$.

\begin{theorem}\label{MainTheorem}
For the matrices $T^{g_1^{r}\ldots g_q^{r}}$ satisfying Eq.~\eqref{Eq.LFinial} where the $U$ matrices are tensor product of Pauli matrices, the rank of $T^{g_1^{r}\ldots g_q^{r}}$ is upper bounded:
\begin{equation}\label{Eq.RankEstimate}
\mathrm{rank}(T^{g_1^{r}\ldots g_q^{r}})\le \frac{D}{2^{N^{\Lcal}_{Z}}}=2^{\frac{\mathrm{rank}(\mathbf{t})}{2}-N^{\Lcal}_{Z}},~~~ \forall\{g^r_i\}.
\end{equation} 
where $N^{\Lcal}_{Z}$ is the number of independent operators in $\mathbb{L}_{Z}$. 
\end{theorem}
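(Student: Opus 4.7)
The plan is to read off row-space constraints on $T^{g^r_1 \ldots g^r_q}$ from the $Z$-only operators in $\mathbb{L}_Z$ via the first equation of Eq.~\eqref{Eq.LFinial}, and to show that each of the $N^{\Lcal}_Z$ independent such constraints cuts the available row space in half. Since the matrices $T^{g^r_1 \ldots g^r_q}$ are $D \times D$ with $D = 2^{\mathrm{rank}(\mathbf{t})/2}$ by Eq.~\eqref{eq.dim=rankt}, the composite reduction will produce exactly the claimed bound $\mathrm{rank}(T^{g^r_1 \ldots g^r_q}) \le D/2^{N^{\Lcal}_Z}$.

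First I would note that any $\Lcal^r_{\alpha, 1} \in \mathbb{L}_Z$ is a tensor product of Pauli $Z$'s and identities, so it acts on the physical basis $|\{g^r_i\}\rangle$ by a global sign $\epsilon^\alpha(\{g^r_i\}) = \pm 1$. The first equation of Eq.~\eqref{Eq.LFinial} then collapses to the purely virtual statement
\begin{equation*}
\epsilon^\alpha(\{g^r_i\})\, T^{g^r_1 \ldots g^r_q} = T^{g^r_1 \ldots g^r_q} \cdot U^r_{\alpha, 1},
\end{equation*}
equivalently $T^{g^r_1 \ldots g^r_q}\bigl(U^r_{\alpha, 1} - \epsilon^\alpha I\bigr) = 0$. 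Thus every row of $T^{g^r_1 \ldots g^r_q}$ is a left $\epsilon^\alpha$-eigenvector of $U^r_{\alpha, 1}$. By the hypothesis of the theorem, $U^r_{\alpha, 1}$ is a tensor product of Pauli matrices, hence a Hermitian involution on the $D$-dimensional virtual space whose $\pm 1$ eigenspaces each have dimension $D/2$.

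Second, I would upgrade these single constraints to a simultaneous one. All members of $\mathbb{L}_Z$ are built from commuting $Z$'s and $I$'s and therefore mutually commute; by the algebraic correspondence Eq.~\eqref{Eq.Algebra} the associated $U^r_{\alpha, 1}$'s mutually commute as well. Hence the row space of $T^{g^r_1 \ldots g^r_q}$ is forced into the joint eigenspace of the commuting family $\{U^r_{\alpha, 1} : \Lcal^r_{\alpha, 1} \in \mathbb{L}_Z\}$ with the prescribed sign pattern $\{\epsilon^\alpha(\{g^r_i\})\}$. For $N^{\Lcal}_Z$ independent, mutually commuting, $\pm 1$ Pauli-string involutions on a $D$-dimensional space, any such joint eigenspace has dimension exactly $D/2^{N^{\Lcal}_Z}$, which bounds $\mathrm{rank}(T^{g^r_1 \ldots g^r_q})$ as required.

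The main obstacle is justifying that $N^{\Lcal}_Z$ independent physical $\Lcal$'s give rise to $N^{\Lcal}_Z$ independent virtual $U$'s; only then does the factor $2^{N^{\Lcal}_Z}$ reduction in the joint-eigenspace dimension actually hold. The algebra in Eq.~\eqref{Eq.Algebra} constrains only commutation signs, so a priori a nontrivial product of the selected $\Lcal$'s could degenerate to $\pm I$ on the virtual side even while remaining nontrivial on the physical side. I would close this gap by invoking the explicit irreducible representation of App.~\ref{App. VirtualUOperator}: the $U$'s can be realized as independent tensor products of Pauli matrices on $\mathrm{rank}(\mathbf{t})/2$ virtual qubits, and the representation of the generalized Pauli group generated by $\mathbb{L}$ on this irrep is faithful up to a phase, which preserves independence. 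This is also where one sees that the bound is meaningful only when $N^{\Lcal}_Z \le \mathrm{rank}(\mathbf{t})/2$; otherwise the system of constraints is over-determined and forces $T^{g^r_1 \ldots g^r_q}$ to vanish outright.
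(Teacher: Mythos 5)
Your proof follows essentially the same route as the paper's: restrict to the $Z$-only operators in $\mathbb{L}_Z$, observe that each acts as a scalar $\epsilon^\alpha(\{g^r_i\})$ on the physical side so that the first line of Eq.~\eqref{Eq.LFinial} collapses to the purely virtual condition $T^{g^r_1\ldots g^r_q}\,(U^r_{\alpha,1}-\epsilon^\alpha I)=0$ constraining only the right virtual index, and then use the fact that each independent Pauli-string $U$ halves the admissible space. The one step you treat more explicitly than the paper --- that $N^{\Lcal}_Z$ independent physical operators yield $N^{\Lcal}_Z$ independent constraints on the virtual side --- is what the paper delegates to Theorem~\ref{theoremF.sameLsameU} together with the tensor-product construction of App.~\ref{App. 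VirtualUOperator}, and your joint-eigenspace phrasing is an equivalent (and arguably cleaner) way of saying the same thing.
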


\begin{proof}
To constrain $\mathrm{rank}(T^{g_1^{r}\ldots g_q^{r}})$, we only focus on a subset of Eq.~\eqref{Eq.LFinial} satisfying:
\begin{enumerate}[(1)]
	\item the physical operator on LHS of Eq.~\eqref{Eq.LFinial} only involves the operators in $\mathbb{L}_Z$;
	\item the virtual operator on RHS of Eq.~\eqref{Eq.LFinial} only acts on the right virtual index.
\end{enumerate}
Explicitly, this subset of equations are all included in the following equations:
\begin{equation}\label{Eq.66}
\begin{split}
\Lcal^r_{\alpha, 1}\circ T^{g_1^{r} \ldots g_q^{r}} &= T^{g_1^{r} \ldots g_q^{r}}\cdot U^r_{\alpha, 1}, \quad 
\forall \; \Lcal^r_{\alpha, 1} \in \mathbb{L}_Z. \\
\end{split}
\end{equation}
This subset is useful to constrain $\mathrm{rank}(T^{g_1^{r}\ldots g_q^{r}})$ because
\begin{enumerate}[(1)]
	\item both LHS and RHS of this subset of equations only involve the same matrix $T^{g_1^{r} \ldots g_q^{r}}$. Indeed, since $\Lcal^r_{\alpha, 1}$ belongs to $\mathbb{L}_Z $, the LHS is proportional to the matrix $T^{g_1^{r} \ldots g_q^{r}}$; 
	\item only the columns of $T^{g_1^{r} \ldots g_q^{r}}$ are constrained.
\end{enumerate} 
Using Theorem \ref{theoremF.sameLsameU} of App.~\ref{app.Linearizing},
the number of independent equations among Eq.~\eqref{Eq.66}(i.e., the number of independent constraints for the columns ) is given by the number of independent operators in $\mathbb{L}_Z$, i.e., $N^{\Lcal}_Z$. We know that $U$ operators form a generalized Clifford algebra, and as proven in App.~\ref{App. VirtualUOperator}, their matrices are tensor products of the Pauli matrices. More precisely, each virtual $U$ operator either swaps and/or multiplies by some factors ($\pm i$ or $\pm 1$) on half of the columns.  Hence, each independent  constraint eliminates half of the rank. Therefore, the rank of the MPS $T$ matrix is upper bounded:
\begin{equation}
\mathrm{rank}(T^{g_1^{r} \ldots g_q^{r}}) \le \frac{D}{2^{N^{\Lcal}_{Z}}}.
\end{equation}
This completes proving Theorem \ref{MainTheorem}. 
\end{proof}

In the 1D stabilizer codes we have studied, the upper bound in Eq.~\eqref{Eq.RankEstimate} always saturates.
	
\section{Restricted Boltzmann Machine State of a Stabilizer Code}
\label{Sec.BMSofQSC}

In this section, we discuss how to express the ground states of a class of stabilizer codes, which we dub as \textit{cocycle models},  as RBM states. They are a special class of Hamiltonians describing 1D symmetry protected topological phases.  We first use Theorem \ref{MainTheorem} to prove that the rank of the ground state MPS is 1. Then we use the $ZZXZZ$ model as an example to illustrate the construction of the RBM state with the RBM-MPS bond dimension 4. We further present a general and explicit algorithm to construct the RBM states for an arbitrary cocycle model, with the minimal RBM-MPS bond dimension. We finally conjecture that for any stabilizer code which satisfies Assumptions \ref{Assumption1}, \ref{Assumption2} and \ref{Assumption3} and also the necessary condition \ref{theoremII.rank1ofRBM}, it is possible to express its ground state as an RBM state with the minimal RBM-MPS bond dimension. 

\bigskip

\subsection{MPS Matrix Rank For Cocycle SPT Models}
\label{Sec.CocycleMPSrankOne}

In this section, we apply Theorem \ref{MainTheorem} to a particular family of stabilizer codes --- the cocycle Hamiltonians for symmetry protected topological phases --- and show that their MPS matrices are of rank 1. In App.~\ref{app.SPT}, we provide some backgrounds about the cocycle Hamiltonians, including the projective representations of the global symmetry $G$, cocycles $\omega_2$, cohomology group $\mathcal{H}^2(G,U(1))$ and 1D SPT phases. The cocycle $\omega_2 \in \mathcal{H}^2(G,U(1))$ classifies the 1D SPT phases with the discrete onsite symmetry $G$. In this paper, we restrict $G$ to be $(\mathbb{Z}_{2})^{q}$. The group elements are parametrized by $g=(g_1,g_2,\ldots,g_q)$ with $g_i\in \mathbb{Z}_2=\{0,1\}$, and the generic form of the cocycle is \cite{propitius1995topological,wang2015field}:
\begin{equation}\label{eq.cocycleexpression}
	\omega_2(g, g') = \exp\left(- i \pi \sum_{1\leq i<j\leq q} P_{ij} g_{j} g_{i}' \right), \quad
	g, g'\in G,
\end{equation}
where $P_{ij}$ can be either 0 or 1. The cocycles can also be used to construct representative SPT wave functions and representative parent Hamiltonians which are stabilizer codes. For simplicity, we dub the representative states and representative Hamiltonians as \textit{cocycle states} and \textit{cocycle Hamiltonians} respectively. See App.~\ref{app.SPT} for a brief overview.

The cocycle Hamiltonian for a $(\mathbb{Z}_2)^q$ SPT phase (with $q$ spin-$\frac{1}{2}$'s per unit cell) with a given generic cocycle $\omega_2$ Eq.~\eqref{eq.cocycleexpression} is 
\begin{equation}\label{eq.SPTHamiltonian}
	H_{(\Z_2)^q, \omega_2}=-\sum_{r=0}^{L-1}\sum_{\alpha=1}^{q} \Ocal^r_\alpha,
\end{equation}
with
\begin{widetext}
\begin{equation}\label{Eq.10}
	\Ocal^r_\alpha=
	\begin{cases}
		\displaystyle 
		\prod_{1<l\leq q}(Z^{r+1}_l Z^{r}_l)^{P_{1 l}} X^{r+1}_1 & \alpha=1\\
		\prod_{\alpha<l\leq q}
		(Z^{r+1}_l Z^{r}_l)^{P_{\alpha l}} X^{r+1}_\alpha \prod_{1\leq k< \alpha}(Z^{r+2}_k Z^{r+1}_k)^{P_{k\alpha}} & 1< \alpha <q\\
		\displaystyle X^r_q \prod_{1\leq k< q}(Z^{r+1}_k Z^{r}_k)^{P_{kq}} & \alpha=q.
	\end{cases}
\end{equation}
\end{widetext}
For $1< \alpha < q$, $\Ocal^r_\alpha$ are supported on $3$ unit cells; while for $\alpha=1, q$, $\Ocal^r_{1}$ and $\Ocal^r_{q}$ are supported on 2 unit cells. 
The Hamiltonian $H_{(\Z_2)^q, \omega_2}$ has the ground state (see App.~\ref{app.SPT} for details)
\begin{equation}\label{Eq.SPTState1}
|\mathrm{GS}\rangle_{(\Z_2)^q, \omega_2}= \sum_{\{g_i^r\}} \exp\bigg( i\pi \sum_{r=0}^{L-1}\sum_{1\leq i<j\leq q} P_{ij}(g^{r}_j-g^{r-1}_j)g^r_i \bigg) |\{g^r_i\}\rangle.
\end{equation}
When
\begin{equation}
	P = \left(\begin{matrix}
		0	&	1	&	1	\\
		0	&	0	&	1	\\
		0	&	0	&	0	\\
	\end{matrix}\right),
\end{equation}
the Hamiltonian Eq.~\eqref{eq.SPTHamiltonian} reduces to the Hamiltonian of the $ZZXZZ$ model, i.e.,  Eq.~\eqref{eq.ZZXZZ}.

\begin{theorem}\label{TheoremIV1}
	For the stabilizer codes of Eq.~\eqref{eq.SPTHamiltonian}, if $T^{g^r_1 \ldots g^r_q}$ is not null, then
	\begin{equation}
		\mathrm{rank}(T^{g^r_1 \ldots g^r_q}) = 1.
	\end{equation}
\end{theorem}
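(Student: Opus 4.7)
The plan is to apply Theorem \ref{MainTheorem}, which gives $\mathrm{rank}(T^{g^r_1\ldots g^r_q}) \le 2^{\mathrm{rank}(\mathbf{t})/2 - N^{\Lcal}_Z}$. To conclude that the rank equals $1$ whenever $T$ is non-null, I will show that the exponent vanishes, i.e., $\mathrm{rank}(\mathbf{t}) = 2 N^{\Lcal}_Z$, for every cocycle Hamiltonian of the form Eq.~\eqref{eq.SPTHamiltonian}. Combined with the hypothesis that $T$ is non-null, this forces $\mathrm{rank}(T)=1$.

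First, I enumerate the $\Lcal$-operators acting on the virtual bond between cells $r$ and $r+1$. Reading off Eq.~\eqref{Eq.10} (and shifting the origin where needed), these comprise: the $Z$-strings $A_\alpha := \Lcal^r_{\alpha,1} = \prod_{l>\alpha}(Z^r_l)^{P_{\alpha l}}$ for $1 \leq \alpha \leq q-1$; the mixed operators $B_\alpha := \Lcal^{r-1}_{\alpha,2}$ for $2 \leq \alpha \leq q-1$, whose restriction to cell $r$ equals $X^r_\alpha \prod_{l>\alpha}(Z^r_l)^{P_{\alpha l}} \prod_{k<\alpha}(Z^r_k)^{P_{k\alpha}}$ (plus $Z^{r-1}$-factors on cell $r-1$); and $C := \Lcal^r_{q,1} = X^r_q \prod_{k<q}(Z^r_k)^{P_{kq}}$. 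Only the $A_\alpha$'s are $Z$-only, and since $A_\alpha$ corresponds to the $\alpha$-th row of the strictly upper-triangular $\mathbb{F}_2$ matrix $P$ viewed as a binary vector, we have $N^{\Lcal}_Z = \mathrm{rank}_{\mathbb{F}_2}(P)$.

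Next I compute the pairwise commutation matrix. Direct inspection yields: any two $A$'s commute; $A_\alpha$ anti-commutes with $B_{\alpha'}$ iff $P_{\alpha,\alpha'} = 1$; and $A_\alpha$ anti-commutes with $C$ iff $P_{\alpha,q} = 1$. The crucial observation is that any two $B$'s commute and every $B_\alpha$ commutes with $C$. This arises from a double-anti-commutation cancellation: on cell $r$, the $X^r_\alpha$ in $B_\alpha$ meets the $Z^r_\alpha$ in $B_{\alpha'}$ (supplied by its $\prod_{k<\alpha'}$ factor) precisely when $P_{\alpha,\alpha'}=1$, while simultaneously $X^r_{\alpha'}$ in $B_{\alpha'}$ meets $Z^r_{\alpha'}$ in $B_\alpha$ (supplied by its $\prod_{l>\alpha}$ factor) under the same condition; the two sign flips combine to an overall commutation. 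The analogous mechanism kills the $B$-$C$ bracket.

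Consequently, in the ordered basis $(A_1,\ldots,A_{q-1}, B_2,\ldots,B_{q-1},C)$, the $\mathbf{t}$-matrix takes the off-diagonal block form $\begin{pmatrix} 0 & M \\ M^T & 0 \end{pmatrix}$, where $M$ is the $(q-1)\times(q-1)$ submatrix of $P$ on rows $\{1,\ldots,q-1\}$ and columns $\{2,\ldots,q\}$. Because $P$ has zero first column and zero last row, dropping them preserves rank, so $\mathrm{rank}(M) = \mathrm{rank}(P) = N^{\Lcal}_Z$, and hence $\mathrm{rank}(\mathbf{t}) = 2\,\mathrm{rank}(M) = 2 N^{\Lcal}_Z$. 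Theorem \ref{MainTheorem} then gives the upper bound $2^0 = 1$, which forces $\mathrm{rank}(T) = 1$ for non-null $T$. The main technical obstacle lies in the Pauli-overlap bookkeeping of the commutation step: one must distinguish carefully which $Z^r$-factors in each $B_{\alpha'}$ originate from the $\prod_{l>\alpha'}$ part of Eq.~\eqref{Eq.10} versus the $\prod_{k<\alpha'}$ part, since only then does the pairwise cancellation in the $B$-$B$ and $B$-$C$ brackets become transparent.
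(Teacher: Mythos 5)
Your proposal is correct and follows essentially the same route as the paper: it invokes Theorem \ref{MainTheorem}, enumerates the same $2(q-1)$ operators $\{\Lcal^r_{\alpha,1}\}_{\alpha\le q-1}$, $\{\Lcal^{r-1}_{\alpha,2}\}_{2\le\alpha\le q-1}$, $\Lcal^r_{q,1}$, arrives at the same off-diagonal block form of $\mathbf{t}$ with block $M=\Lambda$ (the paper's Eqs.~\eqref{Eq.tSPT}--\eqref{Eq.LambdaMatrix}), and identifies $N^{\Lcal}_Z=\mathrm{rank}(\Lambda)$ so that the exponent in the bound vanishes. The only difference is cosmetic: you spell out the double-sign cancellation behind the mutual commutation of the mixed operators, which the paper leaves as a ``straightforward'' computation.
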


\begin{proof}
	
	To calculate $\mathrm{rank}(T^{g^r_1 \ldots g^r_q})$, we apply Theorem \ref{MainTheorem}, where the upper bound of the rank of $T^{g^r_1\cdots g^r_{q}}$ is given by $2^{\frac{\mathrm{rank}(\mathbf{t})}{2}-N^{\Lcal}_Z}$. We will first compute $\mathrm{rank}(\mathbf{t})$ and $N^{\Lcal}_Z$ respectively, and show that the upper bound is 1. We further show that the upper bound is saturated, which completes the proof of the theorem. 
		
	We first compute $\mathrm{rank}(\mathbf{t})$. To calculate the $\mathbf{t}$-matrix, we enumerate all possible $\Lcal^{r-\tau+1}_{\alpha, \tau}$ with all possible $(\alpha, \tau)$ and fixed $r$.  For $1< \alpha<q$, $\tau=1,2$; for $\alpha=1~ \mathrm{or}~ q$, $\tau=1$. Hence there are $2(q-1)$ $\Lcal$ operators:
	\begin{widetext}
		\begin{equation}\label{eq.Ocalleft}
			\begin{split}
				\Lcal^{r}_{1,1} &\equiv (Z^{r}_{2})^{P_{12}}\otimes (Z^{r}_{3})^{P_{13}}\otimes \cdots \otimes (Z^{r}_{q-1})^{P_{1(q-1)}}\otimes (Z^{r}_{q})^{P_{1q}}\\
				\vdots\\
				\Lcal^{r}_{q-2,1} &\equiv (Z^{r}_{q-1})^{P_{(q-2)(q-1)}}\otimes (Z^{r}_{q})^{P_{(q-2)q}}\\
				\Lcal^{r}_{q-1,1} &\equiv (Z^{r}_q)^{P_{(q-1)q}}\\
				\Lcal^{r-1}_{2,2} &\equiv (Z^{r-1}_{3})^{P_{23}}\otimes \cdots \otimes (Z^{r-1}_{q})^{P_{2q}}\otimes (Z^{r}_{1})^{P_{12}}\otimes X^r_2 \otimes (Z^{r}_{3})^{P_{23}}\otimes (Z^{r}_{4})^{P_{24}}\otimes \cdots \otimes (Z^{r}_{q})^{P_{2q}}\\
				\Lcal^{r-1}_{3,2} &\equiv (Z^{r-1}_{4})^{P_{34}}\otimes \cdots \otimes (Z^{r-1}_{q})^{P_{3q}}\otimes (Z^{r}_{1})^{P_{13}}\otimes (Z^{r}_{2})^{P_{23}} \otimes X^r_3 \otimes (Z^{r}_{4})^{P_{34}}\otimes (Z^{r}_{5})^{P_{35}}\otimes \cdots \otimes (Z^{r}_{q})^{P_{3q}}\\
				\vdots\\
				\Lcal^{r-1}_{q-1,2} &\equiv (Z^{r-1}_{q})^{P_{(q-1)q}} \otimes (Z^{r}_{1})^{P_{1(q-1)}}\otimes (Z^{r}_{2})^{P_{2(q-1)}} \otimes \cdots \otimes (Z^{r}_{q-2})^{P_{(q-2)(q-1)}}\otimes X^r_{q-1} \otimes (Z^{r}_{q})^{P_{(q-1)q}}\\
				\Lcal^{r}_{q,1} &\equiv (Z^{r}_{1})^{P_{1q}}\otimes (Z^{r}_{2})^{P_{2q}}\otimes \cdots \otimes (Z^{r}_{q-1})^{P_{(q-1)q}}\otimes X^r_q.
			\end{split}
		\end{equation}
	\end{widetext}
	We have suppressed the identity operators for simplicity. Among all the operators in Eq.~\eqref{eq.Ocalleft}, the first $q-1$ and the last one act  only on the $r$-th unit cell, while the remaining act both on the $r-1$-th and $r$-th unit cells. 
	It is straightforward to compute the commutation relation and determine the $\mathbf{t}$ matrix. In the basis where the operators are listed as in Eq.~\eqref{eq.Ocalleft}, i.e., $\{\Lcal^{r}_{1,1}, \cdots, \Lcal^{r}_{q-2,1}, \Lcal^{r}_{q-1,1} , \Lcal^{r-1}_{2,2}, \Lcal^{r-1}_{3,2}, \cdots, \Lcal^{r-1}_{q-1,2}, \Lcal^{r}_{q,1}\}$, the $\mathbf{t}$ matrix reads
	\begin{equation}\label{Eq.tSPT}
		\mathbf{t}=
		\begin{pmatrix}
			\mathbf{0}& \Lambda\\
			-\Lambda^{T} & \mathbf{0}
		\end{pmatrix},
	\end{equation}
	where $\mathbf{0}$ is a $(q-1)\times (q-1)$ dimensional  zero matrix, and $\Lambda$ is a $(q-1)\times (q-1)$ upper triangular matrix:
	\begin{equation}\label{Eq.LambdaMatrix}
		\Lambda=
		\begin{pmatrix}
			P_{12}&\cdots&P_{1(q-1)}& P_{1q}\\
			&\ddots&\vdots&\vdots\\
			& & P_{(q-2)(q-1)} & P_{(q-2)q}\\
			& & & P_{(q-1)q}
		\end{pmatrix}.
	\end{equation}
	Therefore, by Eq.~\eqref{Eq.tSPT}, we have:
	\begin{equation}\label{eq.SPTrankt}
		\mathrm{rank}(\mathbf{t})=2\mathrm{rank}(\Lambda).
	\end{equation} 
	Counting $\mathrm{rank}(\Lambda)$ is simply counting the number of independent rows in $\Lambda$. 
	
	We proceed to evaluate $\textstyle N_{Z}^{\Lcal}$. Recall that $\textstyle N_{Z}^{\Lcal}$ is defined to be the number of independent operators among $\mathbb{L}_Z$. In this case, we have: 
	\begin{equation}\label{Eq.12}
		\mathbb{L}_Z=\{\Lcal^r_{1, 1}, \Lcal^r_{2, 1}, \ldots , \Lcal^r_{q-1, 1}\}.
	\end{equation}
	A crucial observation is that the powers of the $Z$'s among the operators in Eq.~\eqref{Eq.12} are in  one-to-one correspondence  with the rows of the $\Lambda$ matrix in Eq.~\eqref{Eq.LambdaMatrix}. Hence, the number of independent operators among Eq.~\eqref{Eq.12} coincides with the number of independent rows of the $\Lambda$ matrix Eq.~\eqref{Eq.LambdaMatrix}, i.e., 
	\begin{equation}\label{eq.SPTNz}
		N_{Z}^{\Lcal}=\mathrm{rank}(\Lambda).
	\end{equation} 
	Using Theorem \ref{MainTheorem} and Eqs.~\eqref{eq.SPTrankt} and \eqref{eq.SPTNz}, we obtain 
	\begin{equation}\label{Eq.rankSPT}
		\mathrm{rank}(T^{g_1\ldots g_q})\le 
		2^{\frac{\mathrm{rank}(\mathbf{t})}{2}-N_{Z}^{\Lcal}}=
		2^{\frac{2\mathrm{rank}(\Lambda)}{2}-\mathrm{rank}(\Lambda)}=1.
	\end{equation}
	We have assumed that $T^{g_1\ldots g_q}$ is not null. $\mathrm{rank}(T^{g_1\ldots g_q})$ is thus assumed to be positive. Constrained by Eq.~\eqref{Eq.rankSPT}, we conclude that
	\begin{equation}
		\mathrm{rank}(T^{g_1\ldots g_q}) = 1.
	\end{equation}
\end{proof}

Since in the ground state Eq.~\eqref{Eq.SPTState1} for any spin configuration $\{g^r_i\}$ the coefficient of the basis $|\{g^r_i\}\rangle$ is a non-vanishing number, the MPS matrices are non-vanishing for any physical indices $g^r_1\ldots g^r_q$. This shows that the matrices $T^{g^r_1 \ldots g^r_q}$ are indeed not null. Hence the MPS matrix rank is 1 for the ground state MPS of an arbitrary cocycle Hamiltonian in Eq.~\eqref{eq.SPTHamiltonian} with the global symmetry $(\Z_2)^q$.

\subsection{An Example: $ZZXZZ$ Model Revisited}
\label{Sec.ExamplesRevisited}

In this section, we derive the RBM for the $ZZXZZ$ model with the RBM-MPS bond dimension $4$. 

We start with the ground state $|\mathrm{GS}\rangle$ of the $ZZXZZ$ model Eq.~\eqref{Eq.GSZZXZZ}. Concretely, by restricting Eq.~\eqref{Eq.SPTState1} to $q=3$,  and using $ P_{12}=P_{23}=P_{13}=1$, we obtain the ground state
\begin{equation}\label{Eq.GSZ2XZ2}
|\mathrm{GS}\rangle_{ZZXZZ}=\sum_{\{g^r_i\}}\exp\bigg(i\pi \sum_{r=0}^{L-1} \sum_{1\leq i< j\leq 3}(g_j^{r}-g^{r-1}_{j})g^r_{i}\bigg)|\{g^r_i\}\rangle.
\end{equation}
The coefficient of the configuration $|\{g^r_i\}\rangle$ is an exponent of a quadratic function of the physical spins. The idea to write Eq.~\eqref{Eq.GSZ2XZ2} in the form of an RBM state is to introduce hidden spins and to transform the quadratic terms in $g$ to linear terms. This is achieved by applying a series of identities proved in App.~\ref{app.UsefulIden}. The identities can be summarized as
\begin{equation}\label{Eq.Identities_n}
\begin{split}
&\exp\bigg(i\pi \mathrm{Sym}(g_1, \cdots, g_n)\bigg)\\
=&\frac{1}{\sqrt{2}}\sum_{h=0}^{1} \exp\bigg(i\frac{\pi}{2}(1-2h)\sum_{i=1}^n g_i-i\frac{\pi}{4}(1-2h)\bigg),
\end{split}
\end{equation}
where $g_i\in\{0,1\}$, and $\mathrm{Sym}(g_1, \cdots, g_n)$ is a symmetric summation of quadratic expressions in $g_i$, i.e.,
\begin{equation}
\mathrm{Sym}(g_1, \cdots, g_n)\equiv \sum_{1\leq i<j\leq n}g_jg_i.
\end{equation}
We introduce the following definitions to simplify the discussion below: 
\begin{enumerate}
	\item \textit{The on-site terms}: the quadratic terms involving only the visible spins from a single unit cell. For example: $g_j^{r}g^r_{i}$, $g_j^{r-1}g^{r-1}_{i}$, etc. 
	\item \textit{The inter-site terms}: the quadratic terms involving the visible spins from different unit cells. For example: $g_j^{r-1}g^r_{i}$, $g_j^{r}g^{r-1}_{i}$, etc. 
	\item \emph{The on-site symmetric expressions}: the symmetric expressions involving only visible spins from a single unit cell. For example: $\Sym(g^r_i, g^r_j, g^r_k)$, etc. 
	\item \emph{The inter-site symmetric expressions}: the symmetric expressions involving visible spins from different unit cells. For example: $\Sym(g^{r-1}_i, g^{r}_j, g^r_k)$, etc. 
\end{enumerate}

To convert Eq.~\eqref{Eq.GSZ2XZ2} into an RBM state, our strategy is as follows. We group all the quadratic terms in the exponent of Eq.~\eqref{Eq.GSZ2XZ2} into a sum of symmetric expressions, and apply the identity Eq.~\eqref{Eq.Identities_n} to each symmetric expression. 
For the inter-site symmetric expression, applying Eq.~\eqref{Eq.Identities_n} introduces a hidden spin of type-$h$; for the on-site symmetric expression, applying Eq.~\eqref{Eq.Identities_n} introduces a hidden spin of type-$\widetilde{h}$. As discussed in Sec.~\ref{Sec.IntroBM}, each hidden spin of type-$h$ doubles the bond dimension once we write the RBM state as an MPS (i.e., RBM-MPS), while the hidden spin of type-$\widetilde{h}$ does not contribute to the bond dimension. Hence, to obtain the RBM state whose RBM-MPS bond dimension is as small as possible,  we are aiming to group the quadratic expressions in Eq.~\eqref{Eq.GSZ2XZ2} to as few inter-site symmetric expressions as possible, together with some additional on-site symmetric expressions.


We first discuss the inter-site terms in Eq.~\eqref{Eq.GSZ2XZ2}, i.e., $\sum_{1\leq i<j\leq 3}g^{r-1}_{j}g^r_{i}$, because on-site terms do not contribute to the inter-site symmetric expressions.  
There are different ways to decompose the inter-site terms in the exponent of Eq.~\eqref{Eq.GSZ2XZ2} as a summation of symmetric expressions. Superficially, there are 3 inter-site terms, $\sum_{1\leq i<j\leq 3}g^{r-1}_{j}g^r_{i}=g^{r-1}_{2}g^r_{1}+g^{r-1}_{3}g^r_{1}+g^{r-1}_{3}g^r_{2}$, and it seems that one has to introduce 3 hidden variables by applying Eq.~\eqref{Eq.Identities_n} to the three terms separately. However, it is possible to organize the three inter-site terms into the sum of two inter-site symmetric expressions and one on-site symmetric expression. Concretely, 
\begin{equation}\label{eq.regroupexample}
\begin{split}
&\sum_{1\leq i<j\leq 3}g^{r-1}_{j}g^r_{i}= \\&\mathrm{Sym}(g^{r-1}_2, g^r_1)+\mathrm{Sym}(g^{r-1}_3, g^r_1, g^r_2)-\mathrm{Sym}(g^r_1,g^r_2).
\end{split}
\end{equation}
Under the decomposition Eq.~\eqref{eq.regroupexample} and applying Eq.~\eqref{Eq.Identities_n}, we need to introduce 2 hidden spins of type-$h$, which we denote as $h^r_1$ and $h^r_2$. From the discussion in the last paragraph, the bond dimension of the RBM-MPS is $2^2=4$, which precisely matches the \emph{minimal} bond dimension of the $ZZXZZ$ model derived in Sec.~\ref{Sec.Example}. This shows that there is no way to decompose the quadratic expression $\sum_{1\leq i<j\leq 3}g^{r-1}_{j}g^r_{i}$ in Eq.~\eqref{Eq.GSZ2XZ2} as a sum of at most one inter-site symmetric expression, together with some additional on-site symmetric expressions. Different decompositions of $\sum_{1\leq i<j\leq 3}g^{r-1}_{j}g^r_{i}$ should include at least two inter-site symmetric expressions. We will provide a general recipe of grouping the inter-site terms in Sec.~\ref{Sec.RBMCocycle} for all the 1D cocycle models and show that the grouping is optimal.

We further consider the on-site terms $\sum_{1\leq i<j\leq 3}g^{r}_{j}g^r_{i}$. We use the same decomposition as Eq.~\eqref{eq.regroupexample} by replacing $g^{r-1}_j$ with $g^r_j$, and obtain
\begin{equation}
\begin{split}
\sum_{1\leq i<j\leq 3}g^{r}_{j}g^r_{i}= \mathrm{Sym}(g^{r}_2, g^r_1)+\mathrm{Sym}(g^{r}_3, g^r_1, g^r_2)-\mathrm{Sym}(g^r_1,g^r_2).
\end{split}
\end{equation}
Applying Eq.~\eqref{Eq.Identities_n} for all the symmetric expressions, the ground state $|\mathrm{GS}\rangle_{ZZXZZ}$ can be rewritten as an RBM state
\begin{widetext}
\begin{equation}\label{eq.ZZXZZRBMderivation}
\begin{split}
|\mathrm{GS}\rangle_{ZZXZZ}
=&\sum_{\{g^r_i\}}\exp\bigg(i\pi \sum_{r=0}^{L-1} -\mathrm{Sym}(g^{r-1}_2, g^r_1)-\mathrm{Sym}(g^{r-1}_3, g^r_1, g^r_2)
+ \mathrm{Sym}(g^{r}_2, g^r_1)+\mathrm{Sym}(g^{r}_3, g^r_1, g^r_2)\bigg)|\{g^r_i\}\rangle\\
=& \sum_{\{g^r_i\}}\sum_{\substack{\{h^r_1,h^{r}_2\}\{\widetilde{h}^r_1, \widetilde{h}^r_2\}}} \prod_{r=0}^{L-1}\exp\bigg( 
-i\frac{\pi}{2} (1-2h^r_1)(g_2^{r-1}+g_1^r)+i\frac{\pi}{4}(1-2h_1^r)-i\frac{\pi}{2} (1-2h^r_2)(g_3^{r-1}+g_1^r+g_2^r)	\\
&+i\frac{\pi}{4}(1-2h_2^r)+i\frac{\pi}{2} (1-2\widetilde{h}^r_1)(g_1^{r}+g_2^r)-i\frac{\pi}{4}(1-2\widetilde{h}_1^r)+i\frac{\pi}{2} (1-2\widetilde{h}^r_2)(g^r_1+g_2^{r}+g_3^r)-i\frac{\pi}{4}(1-2\widetilde{h}_2^r)\bigg)|\{g^r_i\}\rangle
\end{split}
\end{equation}
We have suppressed the overall normalization constant. From the discussion in Sec.~\ref{Sec.IntroBM}, the RBM state Eq.~\eqref{eq.ZZXZZRBMderivation} can further be written as an MPS with the MPS matrix:
\begin{equation}\label{eq.ZZXZZtensor}
\begin{split}
&T^{g^{r}_1g^r_2g^r_3}_{h^r_1h^r_2, h^{r+1}_1h^{r+1}_2}=\\
&\sum_{\substack{\{\widetilde{h}^r_1, \widetilde{h}^r_2\}}}\exp\bigg( -i\frac{\pi}{2} (1-2h^r_1)g_1^r-i\frac{\pi}{2} (1-2h^{r+1}_1)g_2^r+i\frac{\pi}{4}(1-2h_1^r)
-i\frac{\pi}{2} (1-2h^r_2)(g_1^r+g_2^r)-i\frac{\pi}{2} (1-2h^{r+1}_2)g_3^r\\
&+i\frac{\pi}{4}(1-2h_2^r)+i\frac{\pi}{2} (1-2\widetilde{h}^r_1)(g_1^{r}+g_2^r)-i\frac{\pi}{4}(1-2\widetilde{h}_1^r)
+i\frac{\pi}{2} (1-2\widetilde{h}^r_2)(g^r_1+g_2^{r}+g_3^r)-i\frac{\pi}{4}(1-2\widetilde{h}_2^r)\bigg).
\end{split}
\end{equation}
\end{widetext}
The bond dimension of the RBM-MPS Eq.~\eqref{eq.ZZXZZtensor} is indeed 4, which matches the bond dimension derived from the RBM state Eq.~\eqref{eq.ZZXZZRBMderivation}. Since we have shown in Sec.~\ref{Sec.Example} that the \emph{minimal} bond dimension of the $ZZXZZ$ MPS is 4, there can not be an RBM state with the number of hidden spin of type-$h$ per unit cell less than 2.   This implies that our RBM state is the most \emph{optimal}, in the sense that the number of hidden spins of type-$h$ is minimal.   



\begin{figure}[b]
	\centering
	\includegraphics[width=\columnwidth]{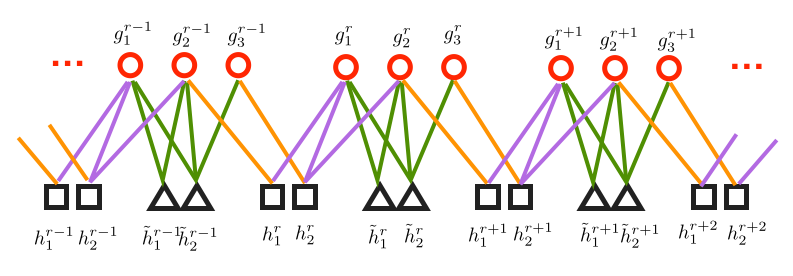}
	\caption{Graphical representation of the RBM state of the $ZZXZZ$ model. }
	\label{FigRBMZZXZZ}
\end{figure}

Fig.~\ref{FigRBMZZXZZ} is a graphical representation of the RBM state Eq.~\eqref{eq.ZZXZZRBMderivation}. In fact, the RBM-MPS matrices  Eq.~\eqref{eq.ZZXZZtensor} are the same as the MPS matrices Eq.~\eqref{Eq.MPSZZXZZ} in derived in Sec.~\ref{Sec.Example}.  As we will see in the next subsection, for more general models $Z^{q-1} X Z^{q-1}$, each unit cell contains $q$ visible spins. Our construction yields the RBM-MPS bond dimension $2^{q-1}$, and we need to introduce $2(q-1)$ hidden spins on average for each unit cell.  Among them,  $(q-1)$ are of the type-$h$ while the remaining $(q-1)$ are of the type-$\widetilde{h}$.

\subsection{RBM States of Cocycle Hamiltonians}
\label{Sec.RBMCocycle}

In Sec.~\ref{Sec.CocycleMPSrankOne}, we have shown that the MPS matrices of the $(\Z_2)^q$ cocycle Hamiltonians (with $q$ spin-$\frac{1}{2}$'s per unit cell) are all of rank 1. Then it is natural to ask if the ground state of the cocycle Hamiltonians can always be expressed as an RBM state, whose RBM-MPS bond dimension being $D$ defined in Eq.~\eqref{eq.dim=rankt}. In this subsection, we describe a procedure to obtain the RBM states with minimal number of hidden spins. In particular, we generalize and apply the procedures of Sec.~\ref{Sec.ExamplesRevisited}, and we present explicit RBM states for $Z^{q-1}XZ^{q-1}$ cocycle Hamiltonians with arbitrary $q$. See App.~\ref{app.J} for more examples. 

The cocycle Hamiltonian in Eq.~\eqref{eq.SPTHamiltonian} has the ground state $|\mathrm{GS}\rangle_{(\Z_2)^q, \omega_2}$ in Eq.~\eqref{Eq.SPTState1}. To convert it to an RBM state, we follow the same procedures in Sec.~\ref{Sec.ExamplesRevisited}. The core idea  is that we need to group the inter-site terms $\sum_{1\leq i<j\leq q} P_{ij}g^{r-1}_j g^r_i$ as a sum of the $\mathrm{rank}(\Lambda)$ inter-site symmetric expressions together with some on-site terms. Since each inter-site symmetric expression contributes a hidden spin of type-$h$ which doubles the bond dimension of the RBM-MPS, the bond dimension of the RBM-MPS is thus $2^{\mathrm{rank}(\Lambda)}\equiv  2^{\frac{\mathrm{rank}(\mathbf{t})}{2}}$. This is precisely the \emph{minimal} bond dimension derived in Sec.~\eqref{Sec.GeneralMPS}, which in turn implies that the decomposition of the inter-site terms is optimal, i.e, the number of type-$h$ hidden spins is minimal in our construction. 

\begin{lemma}\label{lemma.GaussianElimination}
For an inter-site quadratic term,
\begin{equation}
\left( g^{r-1} \right)^{T} \cdot \Gamma \cdot g^r = \sum_{i,j=1}^{q} \Gamma_{ij}g^{r-1}_i g^r_j, \quad \Gamma_{ij} \in \{0,1\},
\end{equation} 
there exists a unimodular transformation $G$ such that $\Gamma$ transforms to
\begin{equation}\label{Eq.24}
\Gamma \to
\widehat{\Gamma}=(G)^T\cdot \Gamma \cdot G=
\begin{pmatrix}
\gamma	\\
\mathbf{0}
\end{pmatrix}
\mod{2},
\end{equation}
where the integer matrix $\gamma$ of size $\mathrm{rank}(\Gamma)\times q$ has full row rank:
\begin{equation}
\mathrm{rank}(\gamma) = \mathrm{rank}(\Gamma).
\end{equation}
The vectors $g^{r-1}$ and $g^{r}$ transform as
\begin{equation}\label{Eq.90}
g^{r-1}_{i}\to \widehat{g}^{r-1}_i=\sum_{j=1}^{q}G^{-1}_{ij}g^{r-1}_j, \quad
g^{r}_{i}\to \widehat{g}^r_i=\sum_{j=1}^{q}G^{-1}_{ij}g^r_j.
\end{equation}
and 
\begin{eqnarray}
\left( g^{r-1} \right)^{T} \cdot \Gamma \cdot g^r = \left( \widehat{g}^{r-1} \right)^{T} \cdot \widehat{\Gamma} \cdot \widehat{g}^r 
\end{eqnarray}
\end{lemma}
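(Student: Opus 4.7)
The plan is to reduce the problem to Gaussian elimination over $\mathbb{F}_2$ and then lift the resulting $\mathbb{F}_2$-linear change of basis to an integer unimodular matrix. Throughout I regard $\Gamma$ as a matrix over $\mathbb{F}_2$ and set $\rho = \mathrm{rank}_{\mathbb{F}_2}(\Gamma)$.

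\textbf{Step 1 (row reduction mod 2).} Since the row space of $\Gamma$ over $\mathbb{F}_2$ has dimension $\rho$, standard Gaussian elimination produces $M \in GL_{q}(\mathbb{F}_2)$ such that the last $q-\rho$ rows of $M\Gamma$ vanish modulo 2.

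\textbf{Step 2 (integer lift).} I would then lift $M$ to a unimodular integer matrix. The key observation is that every element of $GL_{q}(\mathbb{F}_2)$ factors as a product of elementary matrices --- row swaps and row additions --- and each of these is the mod-2 reduction of an obvious unimodular integer matrix (a permutation matrix, or the identity plus a single off-diagonal $1$). Multiplying these integer lifts gives $\widetilde{M} \in GL_{q}(\mathbb{Z})$ with $\widetilde{M} \equiv M \pmod{2}$. I define $G$ by $G^{T} = \widetilde{M}$, so $G$ is a unimodular integer matrix.

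\textbf{Step 3 (block form and rank).} Right multiplication by $G$ does not change which rows of a matrix are zero, hence $\widehat{\Gamma} = G^{T}\Gamma G$ still has its last $q-\rho$ rows zero modulo 2. Write $\widehat{\Gamma} \equiv \binom{\gamma}{\mathbf{0}} \pmod{2}$ with $\gamma$ of size $\rho \times q$. Because $\det G = \pm 1$, $G$ is invertible over $\mathbb{F}_2$, so the congruence $\Gamma \mapsto G^{T}\Gamma G$ preserves $\mathbb{F}_2$-rank. Thus $\mathrm{rank}_{\mathbb{F}_2}(\widehat{\Gamma}) = \rho$, and since all rank is concentrated in the top block, $\mathrm{rank}(\gamma) = \rho = \mathrm{rank}(\Gamma)$, i.e.\ $\gamma$ has full row rank.

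\textbf{Step 4 (bilinear identity).} The identity $(g^{r-1})^{T}\Gamma\, g^{r} = (\widehat{g}^{r-1})^{T}\widehat{\Gamma}\,\widehat{g}^{r}$ is immediate: substituting $g^{r-1} = G\widehat{g}^{r-1}$ and $g^{r} = G\widehat{g}^{r}$ gives $(g^{r-1})^{T}\Gamma g^{r} = (\widehat{g}^{r-1})^{T}(G^{T}\Gamma G)\widehat{g}^{r}$, and this already holds over $\mathbb{Z}$, not only mod 2.

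\textbf{Main obstacle.} The only conceptual subtlety is that translational invariance forces the \emph{same} $G$ to act on both arguments of the bilinear form, whereas a conventional Smith-style reduction would allow independent row and column operations. This is harmless because the requested block form is purely a statement about vanishing \emph{rows}, which depends only on left multiplication by $G^{T}$; the accompanying right multiplication by $G$ merely recombines columns and cannot reintroduce nonzero rows. Everything else is routine linear algebra once the elementary lift $GL_{q}(\mathbb{F}_2) \hookrightarrow GL_{q}(\mathbb{Z})$ is noted.
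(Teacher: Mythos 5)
Your proposal is correct and follows essentially the same route as the paper: the paper performs the row reduction directly with the integer elementary matrices $G_1(i,j)$ (row swap) and $G_2(i,j)$ (row addition), which are already unimodular, and then observes that the accompanying right multiplication by $G$ preserves the zero rows and the full row rank of the top block. Your extra ``lift from $GL_q(\mathbb{F}_2)$ to $GL_q(\mathbb{Z})$'' step is just a rephrasing of this same construction, and your explicit remark that rank over $\mathbb{F}_2$ is preserved under the invertible column recombination is a slightly more careful version of the paper's $\gamma=\gamma'\cdot G$ step.
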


\begin{proof}
Our proof is based on the Gaussian elimination algorithm. For simplicity, we first introduce the matrix notations: $I$ represents the identity $q\times q$ matrix, and $E(i, j)$ represents a $q\times q$ matrix whose elements are
\begin{equation}
\left( E(i,j) \right)_{m,n} = \delta_{m,i}\delta_{n,j}, \quad\forall\; m,n=1,2,\ldots,q.
\end{equation}
In other words, the only nonzero value of $E(i,j)$ is $1$ located at the $i$-th row and $j$-th column. Moreover, we use the following two types of matrix row transformations: 
\begin{equation}\label{Eq.G1G2}
\begin{split}
G_1(i, j)&=I+ E(i, j)+E(j, i)-E(i, i)-E(j, j)\\
G_2(i, j)&=I+E(j, i), \quad i\neq j.
\end{split}
\end{equation}
It is obvious that both $G_1$ and $G_2$ are unimodular, i.e.,
\begin{equation}
|\det(G_1(i,j))|=1, \quad
|\det(G_2(i,j))|=1.
\end{equation} 
The products of $G_1$'s and $G_2$'s are also unimodular.

The first transformation $G_1(i, j)$ interchanges the $i$-th row and the $j$-th row of $\Gamma$, and the second transformation $G_2(i, j)$ adds the $i$-th row to the $j$-th row.\footnote{Notice that the matrix determinant $\det(G_2(i,i))= \det(I+E(i,i))=0$. Since we only consider uni-modular transformation, we do not allow $i=j$ in $G_2(i,j)$.  } There exists a sequence of $G_1(i,j)$ and $G_2(i,j)$ such that:
\begin{equation}\label{Eq.ReducedLambda}
\prod_{m} G_{k_m} (i_m,j_m) \cdot \Gamma= 
\begin{pmatrix}
\gamma'	\\
\mathbf{0}
\end{pmatrix} \quad \mod{2},
\end{equation}
where the matrix $\gamma^\prime$ of size $\mathrm{rank}(\Gamma)\times q$ has full row rank, and its elements are either 0 and 1. Denote:
\begin{equation}
G = \left( \prod_{m} G_{k_m} (i_m,j_m) \right)^T.
\end{equation}
Using Eq.~\eqref{Eq.ReducedLambda}, we have:
\begin{equation}
\widehat{\Gamma} = 
G^T \cdot \Gamma \cdot G 
= 
\begin{pmatrix}
\gamma'	\\
\mathbf{0}
\end{pmatrix} \cdot G 
=
\begin{pmatrix}
\gamma	\\
\mathbf{0}
\end{pmatrix}
\mod{2},
\end{equation}
where
\begin{equation}
\gamma = \gamma^\prime \cdot G,
\end{equation}
and $\gamma$ of size $\mathrm{rank}(\Gamma) \times q$ has full row rank. 
\end{proof}

\begin{lemma}\label{lemma.numberOfSymmetricExpressions}
The inter-site term in the ground state $|\mathrm{GS}\rangle_{(\Z_2)^q, \omega_2}$ Eq.~\eqref{Eq.SPTState1} $\sum_{1\leq i<j\leq q} P_{ij}g^{r-1}_j g^r_i$ can be grouped into $\mathrm{rank}(\Lambda)$ number of inter-site symmetric expressions and $\mathrm{rank}(\Lambda)$ on-site symmetric expressions,  where $\Lambda$ is defined in Eq.~\eqref{Eq.LambdaMatrix}.
\end{lemma}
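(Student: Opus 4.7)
The plan is to reduce the inter-site piece $\sum_{1\le i<j\le q} P_{ij}\, g^{r-1}_j g^r_i$ to a bilinear form in a new variable basis in which the bilinear structure is in row-reduced form, and then decompose each nonzero row as a single inter-site symmetric expression plus an on-site correction.

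First, I would rewrite the inter-site term as $(g^{r-1})^{T}\cdot \Gamma\cdot g^{r}$, where the $q\times q$ integer matrix $\Gamma$ has entries $\Gamma_{mn} = P_{nm}$ for $n<m$ and $\Gamma_{mn}=0$ otherwise. Since the nonzero pattern of $\Gamma$ is just a reindexing of the upper-triangular block $\Lambda$ of Eq.~\eqref{Eq.LambdaMatrix}, one has $\mathrm{rank}(\Gamma) = \mathrm{rank}(\Lambda)$. I would then apply Lemma~\ref{lemma.GaussianElimination} to obtain a unimodular $G$ and new $\{0,1\}$-valued variables $\widehat{g}^{r-1}_i = \sum_j (G^{-1})_{ij}\, g^{r-1}_j$ and $\widehat{g}^r_i = \sum_j (G^{-1})_{ij}\, g^r_j$ (taken mod 2), giving
\begin{equation*}
(g^{r-1})^T \Gamma\, g^r \;=\; \sum_{i=1}^{\mathrm{rank}(\Lambda)} \widehat{g}^{r-1}_i \sum_{j \in S_i} \widehat{g}^r_j \pmod{2},
\end{equation*}
where $S_i \equiv \{\, j : \gamma_{ij} = 1 \bmod 2\,\}$ is the support of the $i$-th nonzero row of the reduced matrix $\widehat{\Gamma}$.

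Next, for each $i=1,\ldots,\mathrm{rank}(\Lambda)$, I would apply the elementary identity
\begin{equation*}
\widehat{g}^{r-1}_i \sum_{j\in S_i}\widehat{g}^r_j \;=\; \Sym\!\bigl(\widehat{g}^{r-1}_i,\, \{\widehat{g}^r_j\}_{j\in S_i}\bigr) \;-\; \Sym\!\bigl(\{\widehat{g}^r_j\}_{j\in S_i}\bigr),
\end{equation*}
which simply isolates the cross terms $\widehat{g}^{r-1}_i \widehat{g}^r_j$ from the purely on-site pairs $\widehat{g}^r_j \widehat{g}^r_{j'}$ already hidden inside the larger Sym. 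Summing over $i$ yields exactly $\mathrm{rank}(\Lambda)$ inter-site symmetric expressions $\Sym(\widehat{g}^{r-1}_i,\{\widehat{g}^r_j\}_{j\in S_i})$ and $\mathrm{rank}(\Lambda)$ on-site symmetric expressions $-\Sym(\{\widehat{g}^r_j\}_{j\in S_i})$ (some of the latter are trivially zero when $|S_i|\le 1$, but we still bookkeep them), which is exactly the statement of the lemma.

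I expect the main obstacle to be bookkeeping rather than a conceptual hurdle: Lemma~\ref{lemma.GaussianElimination} already forces the reduced matrix $\widehat{\Gamma}$ to have precisely $\mathrm{rank}(\Lambda)$ nonzero rows, and the Sym--splitting identity above attaches to each such row a single inter-site symmetric expression plus a single on-site remainder, so the counting is essentially forced. The one point that needs care is that the Sym identities are applied in the rotated basis $\{\widehat{g}^r_i\}$; when this decomposition is later fed into Eq.~\eqref{Eq.Identities_n} to introduce one hidden spin per symmetric expression, each hidden spin couples to the original visible spins $g^r_j$ linearly through the rows of $G^{-1}$, which is still consistent with the translational-invariant, nearest-neighbor-connected RBM ansatz and produces the $\mathrm{rank}(\Lambda)$ type-$h$ hidden spins per unit cell needed to saturate the minimal RBM-MPS bond dimension $D = 2^{\mathrm{rank}(\Lambda)}$.
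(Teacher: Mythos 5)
Your proposal is correct and follows essentially the same route as the paper's own proof: rewrite the inter-site term as $(g^{r-1})^{T}\cdot\Gamma\cdot g^{r}$ with $\mathrm{rank}(\Gamma)=\mathrm{rank}(\Lambda)$, row-reduce via Lemma~\ref{lemma.GaussianElimination}, and split each of the $\mathrm{rank}(\Lambda)$ nonzero rows into one inter-site symmetric expression minus one on-site symmetric expression. The only additions you make (tracking mod-2 equalities and noting that trivial $\Sym$'s with $|S_i|\le 1$ are still bookkept) are harmless refinements of the same argument.
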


\begin{proof}
We first define the $\Gamma$ matrix:
\begin{equation}\label{Eq.Gamma}
\Gamma \equiv
\begin{pmatrix}
0 & 0 & \cdots & 0 & 0\\
P_{12} & 0 & \cdots & 0 & 0\\
P_{13} & P_{23} &\cdots &0 & 0\\
\vdots & \vdots & \ddots & \vdots & \vdots\\
P_{1q} & P_{2q} & \cdots & P_{(q-1)q} & 0
\end{pmatrix}=
\begin{pmatrix}
0 & 0 & \cdots & 0 & 0\\
&&& & 0\\
&&& & 0\\
&&\Lambda^T& & \vdots\\
&&& & 0
\end{pmatrix}.
\end{equation}
The matrix $\Gamma$ is a $q\times q$ matrix, whose each element is defined modulo 2. There are 0’s in the first row and last column because $g^{r-1}_1$ and $g^{r}_q$ do not appear in the sum $\sum_{1\leq i<j\leq q}P_{ij}g^{r-1}_j  g^{r}_i$. The bottom-left $(q-1)\times (q-1)$ block of $\Gamma$ is $\Lambda^T$ where $\Lambda$ is defined in Eq.~\eqref{Eq.LambdaMatrix}. In particular,
\begin{equation}
\mathrm{rank}(\Gamma)=\mathrm{rank}(\Lambda).
\end{equation} 
Using this notation, we have:
\begin{equation}\label{Eq.45}
\sum_{1\leq i<j\leq q} P_{ij}g^{r-1}_j g^r_i = (g^{r-1})^T\cdot \Gamma \cdot g^r.
\end{equation} 
Using Lemma \ref{lemma.GaussianElimination}, Eq.~\eqref{Eq.45} can be simplified:
\begin{equation}\label{Eq.26}
\begin{split}
\sum_{1\leq i<j\leq q} P_{ij}g^{r-1}_j g^r_i
=\sum_{i=1}^{\mathrm{rank}(\Lambda)}\widehat{g}^{r-1}_{i}\sum_{j=1}^{q}\widehat{\Gamma}_{ij}\widehat{g}^r_{j}.
\end{split}
\end{equation}
It can be decomposed by the symmetric expressions:
\begin{equation}
\begin{split}
\sum_{1\leq i<j\leq q} P_{ij}g^{r-1}_j g^r_i
&= \sum_{i=1}^{\mathrm{rank}(\Lambda)} \Sym(\widehat{g}^{r-1}_i, \widehat{\Gamma}_{i1}\widehat{g}^r_{1}, \ldots , \widehat{\Gamma}_{iq}\widehat{g}^r_{q})\\
&~~~-\sum_{i=1}^{\mathrm{rank}(\Lambda)} \Sym( \widehat{\Gamma}_{i1}\widehat{g}^r_{1}, \ldots , \widehat{\Gamma}_{iq}\widehat{g}^r_{q}).
\end{split}
\end{equation}
The first $\mathrm{rank}(\Lambda)$ terms are inter-site symmetric expressions, and the remaining $\mathrm{rank}(\Lambda)$ terms are the on-site terms. This completes the proof. 
\end{proof} 

\begin{theorem}
There exists an RBM for the state Eq.~\eqref{Eq.SPTState1} whose RBM-MPS has the minimal bond dimension $2^{\mathrm{rank}(\Lambda)}$ where $\Lambda$ is defined in Eq.~\eqref{Eq.LambdaMatrix}. 
\end{theorem}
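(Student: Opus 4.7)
The plan is to construct the RBM explicitly, following the recipe illustrated by the $ZZXZZ$ example in Sec.~\ref{Sec.ExamplesRevisited}, and then to verify that the resulting RBM-MPS bond dimension equals $2^{\mathrm{rank}(\Lambda)}$, which was already shown to be the minimal bond dimension in Sec.~\ref{Sec.GeneralMPS} via Eqs.~\eqref{eq.dim=rankt} and \eqref{eq.SPTrankt}.

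First I would start from the explicit wave function Eq.~\eqref{Eq.SPTState1} and split the exponent into an on-site piece $\sum_{r}\sum_{i<j}P_{ij}g_j^r g_i^r$ and an inter-site piece $-\sum_{r}\sum_{i<j}P_{ij}g_j^{r-1}g_i^r$. For the inter-site piece I would apply Lemma \ref{lemma.numberOfSymmetricExpressions}, which (via the unimodular change of variables of Lemma \ref{lemma.GaussianElimination}) rewrites it as a sum of exactly $\mathrm{rank}(\Lambda)$ inter-site symmetric expressions plus $\mathrm{rank}(\Lambda)$ purely on-site symmetric expressions. For the on-site piece $\sum_{i<j}P_{ij}g_j^r g_i^r$ I would apply the same unimodular transformation $G$ (now with both $g^{r-1}$ and $g^r$ replaced by $g^r$) to group it into a small number of on-site symmetric expressions. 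The result is that the entire phase of $|\mathrm{GS}\rangle_{(\Z_2)^q,\omega_2}$ is a sum, over $r$, of $\mathrm{rank}(\Lambda)$ inter-site symmetric expressions together with a fixed number of on-site symmetric expressions.

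Next, I would apply the identity Eq.~\eqref{Eq.Identities_n} once to each of the symmetric expressions. Each inter-site symmetric expression, when linearized, introduces a single hidden spin that couples to visible spins from two neighboring unit cells; by the labeling of Sec.~\ref{Sec.IntroBM}, this is a type-$h$ hidden spin and can be taken to be one of the $h^r_a$ with translationally invariant weights. Each on-site symmetric expression, likewise, introduces a type-$\widetilde{h}$ hidden spin $\widetilde{h}^r_b$. After this step the amplitude is manifestly of the RBM form Eq.~\eqref{Eq.RBMState} with linear ``energy'' function, and the weights $A_{ia}, B_{ia}, \widetilde{C}_{ib}, \beta_i, \alpha_a, \widetilde{\alpha}_b$ can be read off directly from the $\pm i\pi/2$ and $\pm i\pi/4$ prefactors.

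Finally, I would count hidden spins and convert to MPS via Eq.~\eqref{eq.RBM2MPS}. By construction, the number of type-$h$ hidden spins per unit cell is exactly $\mathrm{rank}(\Lambda)$, so the RBM-MPS bond dimension is $2^{\mathrm{rank}(\Lambda)}$. Comparing with Eqs.~\eqref{eq.dim=rankt} and \eqref{eq.SPTrankt}, this matches the minimal bond dimension $D=2^{\mathrm{rank}(\mathbf{t})/2}=2^{\mathrm{rank}(\Lambda)}$ derived in Sec.~\ref{Sec.GeneralMPS}, so the construction is optimal. The only nontrivial step is Lemma \ref{lemma.numberOfSymmetricExpressions}: verifying that Gaussian elimination of $\Gamma$ in Eq.~\eqref{Eq.Gamma} really does produce exactly $\mathrm{rank}(\Lambda)$ non-vanishing rows over $\mathbb{F}_2$, and that the resulting $\widehat{g}_i^{r}$ are legitimate $\Z_2$ variables under the change of basis; once that is in hand, the rest of the argument is a direct application of Eq.~\eqref{Eq.Identities_n} followed by reading off the RBM weights. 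The potential obstacle — proving optimality — does not actually arise here, since optimality follows for free from the matching with the lower bound established in Sec.~\ref{Sec.MPSSC}.
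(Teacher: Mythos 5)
Your proposal is correct and follows essentially the same route as the paper: decompose the inter-site phase via Lemmas \ref{lemma.GaussianElimination} and \ref{lemma.numberOfSymmetricExpressions} into $\mathrm{rank}(\Lambda)$ inter-site symmetric expressions plus on-site remainders, linearize each with Eq.~\eqref{Eq.Identities_n} (type-$h$ for inter-site, type-$\widetilde{h}$ for on-site), and conclude optimality by matching the bond dimension $2^{\mathrm{rank}(\Lambda)}$ against the lower bound $D=2^{\mathrm{rank}(\mathbf{t})/2}$ from Eqs.~\eqref{eq.dim=rankt} and \eqref{eq.SPTrankt}. Your treatment of the on-site piece by reusing the transformation $G$ with $g^{r-1}\to g^r$ is exactly what the paper does in Eq.~\eqref{Eq.46}.
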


\begin{proof}
Using Lemma \ref{lemma.GaussianElimination} and \ref{lemma.numberOfSymmetricExpressions}, we obtain
\begin{equation}\label{Eq.CocycleGrouping}
\begin{split}
&\exp\bigg(-i \pi \sum_{1\leq i<j\leq q} P_{ij}g^{r-1}_j g^r_i \bigg)
=\exp\bigg(-i\pi (\widehat{g}^{r-1})^{T} \cdot \widehat{\Gamma }\cdot \widehat{g}^r\bigg)	\\
&=\exp\bigg(-i\pi \sum_{i=1}^{\mathrm{rank}(\Lambda)} \Sym(\widehat{g}^{r-1}_i, \widehat{\Gamma}_{i1}\widehat{g}^r_{1}, \ldots , \widehat{\Gamma}_{iq}\widehat{g}^r_{q}) \\
&~~~+ i\pi \sum_{i=1}^{\mathrm{rank}(\Lambda)} \Sym( \widehat{\Gamma}_{i1}\widehat{g}^r_{1}, \ldots , \widehat{\Gamma}_{iq}\widehat{g}^r_{q}) \bigg).
\end{split}
\end{equation}
Applying Eq.~\eqref{Eq.Identities_n} to the inter-site symmetric expressions leads to:
\begin{widetext}
\begin{equation}\label{Eq.CocycleGrouping1}
\begin{split}
\exp\bigg(-i \pi \sum_{1\leq i<j\leq q} P_{ij}g^{r-1}_j g^r_i \bigg)
=&\prod_{i=1}^{\mathrm{rank}(\Lambda)}\Bigg[\frac{1}{\sqrt{2}}\sum_{h^r_i=0}^{1} \exp\bigg(-i\frac{\pi}{2}(1-2h^r_i)(\widehat{g}^{r-1}_i+\sum_{j=1}^{q}\widehat{\Gamma}_{ij}\widehat{g}^{r}_j )+i\frac{\pi}{4}(1-2h^r_i)\bigg) 	\\
&\times \exp\bigg(-i\pi \mathrm{Sym}(\widehat{\Gamma}_{i1}\widehat{g}^r_1, \cdots, \widehat{\Gamma}_{iq}\widehat{g}^r_{q})\bigg) \Bigg].
\end{split}
\end{equation}
\end{widetext}
Notice that further introducing the hidden spins by linearizing the on-site terms on RHS of Eq.~\eqref{Eq.CocycleGrouping} does not increase the bond dimension of the RBM-MPS. Hence we have shown that the RBM-MPS derived via the above algorithm has $\mathrm{rank}(\Lambda)$ hidden spins of type $h$,  which corresponds to the RBM-MPS bond dimension $D=2^{\mathrm{rank}(\Gamma)}=2^{\mathrm{rank}(\Lambda)}$.  This matches the bond dimension Eq.~\eqref{eq.dim=rankt} associated with the irreducible representation in Sec.~\ref{Sec.MPSSC}. 
\end{proof}

We use the rest of this section to express the state Eq.~\eqref{Eq.SPTState1} as an RBM explicitly.
\begin{eqnarray}\label{Eq.46}
\begin{split}
&\exp\bigg(i \pi \sum_{1\leq i<j\leq q} P_{ij}(g^r_j-g^{r-1}_j )g^r_i \bigg)\\
&=\exp\bigg(-i\pi \sum_{i=1}^{\mathrm{rank}(\Lambda)} \Sym(\widehat{g}^{r-1}_i, \widehat{\Gamma}_{i1}\widehat{g}^r_{1}, \ldots , \widehat{\Gamma}_{iq}\widehat{g}^r_{q}) \\
&~~~+i\pi \sum_{i=1}^{\mathrm{rank}(\Lambda)} \Sym(\widehat{g}^{r}_i, \widehat{\Gamma}_{i1}\widehat{g}^r_{1}, \ldots , \widehat{\Gamma}_{iq}\widehat{g}^r_{q})  \bigg).
\end{split}
\end{eqnarray}
Applying Eq.~\eqref{Eq.Identities_n} to Eq.~\eqref{Eq.46}, we can write the ground state $|\mathrm{GS}\rangle_{(\Z_2)^q, \omega_2}$ as an RBM state
\begin{equation}\label{Eq.32}
\begin{split}
&|\mathrm{GS}\rangle_{(\Z_2)^q, \omega_2} = \\&\sum_{\{g_i^r\}, \{h^r_i\}, \{\widetilde{h}^r_{i}\}}  \prod_{i=1}^{\mathrm{rank}(\Lambda)}\exp\bigg(-i\frac{\pi}{2}(1-2h^r_i)(\widehat{g}^{r-1}_i+\sum_{j=1}^{q}\widehat{\Gamma}_{ij}\widehat{g}^{r}_j )\\&+i\frac{\pi}{4}(1-2h^r_i)+i\frac{\pi}{2}(1-2\widetilde{h}^r_i)(\widehat{g}^{r}_i+\sum_{j=1}^{q}\widehat{\Gamma}_{ij}\widehat{g}^{r}_j )\\&-i\frac{\pi}{4}(1-2\widetilde{h}^r_i)\bigg)|\{g^r_i\}\rangle.
\end{split}
\end{equation}
We find that in the particular construction Eq.~\eqref{Eq.32}, the number of inter-site hidden spin is the same as the number of on-site hidden spin, for an arbitrary cocycle Hamiltonian.  The relation between $\{\widehat{g}^r_i\}$ and $\{g^r_i\}$ depends on the cocycle parameters $P_{ij}$, as per Eq.~\eqref{Eq.90}.

\subsection{RBM Construction for $Z^{q-1}X Z^{q-1}$ Model}

To exemplify our RBM construction, we apply  the above algorithm to the stabilizer code $Z^{q-1}X Z^{q-1}$ for an arbitrary cocycle. Another example is discussed in App.~\ref{app.J}. The $Z^{q-1}X Z^{q-1}$ model corresponds to the cocycle Hamiltonian with $P_{ij}=1$ for any $1\leq i<j\leq q$. 

The Hamiltonian of the $Z^{q-1} X Z^{q-1}$ model is 
\begin{equation}
\begin{split}
&H_{Z^{q-1}XZ^{q-1}}\\&=-\sum_{r=0}^{L-1} \Bigg(\prod_{i=1}^{q-1}Z^r_i X_{q}^r \prod_{i=1}^{q-1} Z^{r+1}_i+\prod_{i=2}^{q}Z^r_{i} X^{r+1}_1 \prod_{i=2}^{q}Z^{r+1}_i\\&~~~+ \sum_{s=3}^{q}\bigg( \prod_{i=s}^{q} Z^r_{i} \prod_{j=1}^{s-2} Z^{r+1}_j X_{s-1}^{r+1} \prod_{k=s}^{q} Z^{r+1}_{k} \prod_{l=1}^{s-2} Z^{r+2}_l \bigg)~ \Bigg).
\end{split}
\end{equation}
Its ground state is 
\begin{equation}\label{Eq.GSZnXZn}
\begin{split}
&|\mathrm{GS}\rangle_{Z^{q-1}XZ^{q-1}}=\\&\sum_{\{g^r_i\}}\prod_{r=0}^{L-1}\exp\bigg(i\pi \sum_{1 \leq j<i\leq q}(g_i^{r}-g^{r-1}_{i})g^r_{j}\bigg)|\{g^r_i\}\rangle.
\end{split}
\end{equation}
The the $q\times q$ $\Gamma$ matrix and the $(q-1)\times (q-1)$ $\Lambda$ matrix are
\begin{equation}\label{Eq.LambdaMatrixZZXZZ}
\Gamma=
\begin{pmatrix}
0 & 0 & \cdots && 0\\
1 & & && 0\\
1 & 1 & & &0\\
\vdots & \vdots & \ddots & & \vdots \\
1 & 1 & \cdots &1& 0
\end{pmatrix}, 
~~
\Lambda= 
\begin{pmatrix}
1 & 1 & \cdots& 1 \\
& 1 & \cdots& 1\\
 &  & \ddots & \vdots\\
 &  &  & 1
\end{pmatrix}.
\end{equation}
To transform the $\Gamma$ matrix to the form as in Eq.~\eqref{Eq.24}, we switch the rows using
\begin{eqnarray}
G^{T}= G_1(q-1, q)\cdots G_1(1,2).
\end{eqnarray}
The visible spins transform as 
\begin{eqnarray}
\begin{pmatrix}
g^r_{1}\\
g^r_{2}\\
\vdots\\
g^r_{q-1}\\
g^r_{q}
\end{pmatrix}
\to 
\begin{pmatrix}
\widehat{g}^r_{1}\\
\widehat{g}^r_{2}\\
\vdots\\
\widehat{g}^r_{q-1}\\
\widehat{g}^r_{q}
\end{pmatrix}
= 
G^{-1}\cdot 
\begin{pmatrix}
g^r_{1}\\
g^r_{2}\\
\vdots\\
g^r_{q-1}\\
g^r_{q}
\end{pmatrix}
=
\begin{pmatrix}
g^r_{2}\\
g^r_{3}\\
\vdots\\
g^r_{q}\\
g^r_{1}
\end{pmatrix}.\\
~~~\nonumber
\end{eqnarray}
The $\Gamma$ matrix transforms as 
\begin{eqnarray}
\Gamma\to \widehat{\Gamma}= G^T\cdot \Gamma\cdot G= 
\begin{pmatrix}
1 & & && 0\\
1 & 1 & & &0\\
\vdots & \vdots & \ddots & & \vdots \\
1 & 1 & \cdots &1& 0\\
0 & 0 & \cdots & 0& 0
\end{pmatrix}.
\end{eqnarray}
All the $q-1$ rows in the top $(q-1)\times q$ block of $\widehat{\Gamma}$ are independent,
\begin{eqnarray}
\mathrm{rank}(\widehat{\Gamma})=\mathrm{rank}(\Gamma)=\mathrm{rank}(\Lambda)=q-1.
\end{eqnarray}
As a result, the exponents in Eq.~\eqref{Eq.GSZnXZn} can be written as 
\begin{equation}
\begin{split}
\sum_{1 \leq j<i\leq q}(g_i^{r}-g^{r-1}_{i})g^r_{j}=&-\sum_{i=1}^{q-1} \Sym(g_{i+1}^{r-1}, g_{i}^r, g_{i-1}^r, \ldots , g_1^r)\\
&+\sum_{i=1}^{q-1}\mathrm{Sym}(g_{i+1}^{r}, g_{i}^r, g_{i-1}^r, \ldots , g_1^r).
\end{split}
\end{equation}
On RHS of the equality, the first $q-1$ symmetric functions are inter-site terms. Using Eq.~\eqref{Eq.Identities_n} we introduce $q-1$ hidden spins of type-$h$ contributing to $2^{\mathrm{rank}(\Gamma)}=2^{q-1}$ bond dimension of the RBM-MPS. The remaining   $q-1$ symmetric functions only contain on-site quadratic terms. Using Eq.~\eqref{Eq.Identities_n}, we introduce $q-1$ hidden spins of type-$\widetilde{h}$. Combining these two operations, we have:
\begin{widetext}
\begin{equation}
\begin{split}
|\mathrm{GS}\rangle_{Z^{q-1}XZ^{q-1}}=
&\sum_{\{g_i^r\}} \sum_{\substack{\{h_1^r\}\ldots \{h^r_{q-1}\}}} \prod_{r=0}^{L-1} \exp\bigg(-i\frac{\pi}{2}\sum_{i=1}^{q-1} (1-2h_{i}^r)(g_{i+1}^{r-1}+\sum_{j=1}^{i}g_j^r)+i\frac{\pi}{4}\sum_{i=1}^{q-1} (1-2h_{i}^r)\bigg)	\\
&\times\sum_{\{\widetilde{h}_{i}^r\} }\exp\bigg(i\frac{\pi}{2}\sum_{i=1}^{q-1}(1-2\widetilde{h}_{i}^r)\sum_{j=1}^{i+1}g^r_{j}-i\frac{\pi}{4}\sum_{i=1}^{q-1}(1-2\widetilde{h}^r_{i})\bigg)|\{g^r_i\}\rangle.
\end{split}
\end{equation}
This RBM can be casted into an rank-1 MPS, and the matrix elements of the RBM-MPS are:
\begin{equation}\label{Eq.RBMMPSZnXZn}
\begin{split}
T^{g^r_1, \ldots , g^r_{q}}_{h^{r}_1\ldots h^r_{q-1}, h^{r+1}_1\ldots h^{r+1}_{q-1}}
=&\exp\bigg(-i\frac{\pi}{2}\sum_{i=1}^{q-1} (1-2h_{i}^r)(\sum_{j=1}^{i}g_j^r)- i\frac{\pi}{2}\sum_{i=1}^{q-1} (1-2h_{i}^{r+1})g_{i+1}^{r}+i\frac{\pi}{4}\sum_{i=1}^{q-1} (1-2h_{i}^r)\bigg)	\\
&\times  \sum_{\{\widetilde{h}_{i}^r\} }\exp\bigg(i\frac{\pi}{2}\sum_{i=1}^{q-1}(1-2\widetilde{h}_{i}^r)\sum_{j=1}^{i+1}g^r_{j}-i\frac{\pi}{4}\sum_{i=1}^{q-1}(1-2\widetilde{h}^r_{i})\bigg).
\end{split}
\end{equation}
\end{widetext}
We discuss two particular cases. When $q=2$, the model corresponds to  the $ZXZ$ model. A graphical representation of the $ZXZ$ model is shown in Fig.~\ref{FigRBMZXZ}. We notice that the corresponding RBM-MPS has bond dimension $2$. In the RBM derived in Ref.~\onlinecite{PhysRevB.96.195145}, the corresponding bond dimension is $4$, which is not minimal. When $q=3$ which corresponds to the $ZZXZZ$ model, we find that the RBM-MPS matrices in Eq.~\eqref{Eq.RBMMPSZnXZn} precisely agrees with the MPS matrices in Eq.~\eqref{Eq.MPSZZXZZ}.

\begin{figure}[t]
	\centering
	\includegraphics[width=1\columnwidth]{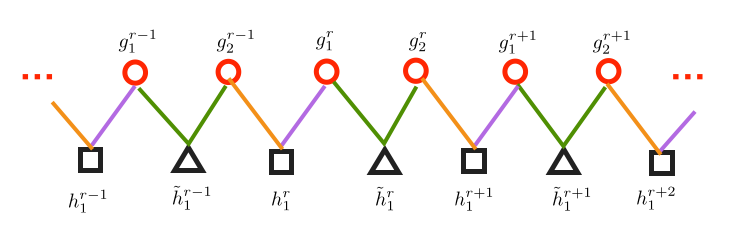}
	\caption{Graphical representation of the RBM state of the $ZXZ$ model. The red circles represent visible spins, the black rectangles represent the hidden spins connecting visible spin belonging to different unit cells, and the black triangles represent the hidden spins connecting visible spins within the same unit cell. }
	\label{FigRBMZXZ}
\end{figure}

In summary, we have shown that for cocycle Hamiltonians, the ground state can be expressed as an RBM state with the minimal RBM-MPS bond dimension. 
We further conjecture, that for an arbitrary translational invariant stabilizer code with non-degenerate ground state with PBC, if its ground state MPS matrix is of rank 1, then it is possible to express its ground state as an RBM state with the minimal RBM-MPS bond dimension matching Eq.~\eqref{eq.dim=rankt}. We leave the proof of this conjecture for future work. 

\section{Conclusion}
\label{Sec.Conclusion}

We conclude this paper by summarizing the main results for a 1D translational invariant stabilizer code with a non-degenerate ground state on PBC. 
\begin{enumerate}
	\item We have shown that a translational invariant and finitely connected RBM state can be converted to an MPS, which we dub as an RBM-MPS. We show that the non-vanishing matrix of the RBM-MPS is always of rank 1. 
	\item We provide an algorithm to determine the MPS for the stabilizer codes satisfying our assumptions given in Sec.~\ref{Sec.MPSSC}. 
	The MPS bond dimension $2^{\frac{\mathrm{rank}(\mathbf{t})}{2}}$ is proved to be the irreducible representation of the generalized Clifford algebra Eq.~\eqref{Eq.Algebra}. The $\mathbf{t}$ matrix elements can be read off from the Hamiltonian. 
	\item An upper bound for the rank of the MPS matrices is proved. For all the stabilizer codes we have explicitly considered, the upper bound is saturated. In particular, we have proved that the MPS matrices of the cocycle Hamiltonians are all of rank 1. 
	\item For the cocycle Hamiltonians, we present an explicit construction of the RBM state with the minimal RBM-MPS bond dimension. We exemplify our construction using a family of cocycle Hamiltonians explicitly. For a stabilizer code satisfying the assumptions \ref{Assumption1}, \ref{Assumption2} and \ref{Assumption3}, we conjecture that the ground state can be expressed as an RBM state and that its RBM-MPS bond dimension is the minimal one, as long as its MPS matrices are of rank 1.
\end{enumerate}

We have restricted our study to 1D stabilizer codes with a non-degenerate ground state for PBC, i.e., without symmetry breaking. Including cases with symmetry breaking could be envisioned rather simply. Consider a classical Ising model as the simplest case of such a 1D stabilizer code, which has two maximally polarized ground states.  A (trivial) MPS of bond dimension 1 can be built for each maximally polarized state. For each of these two MPS, we can derive a (trivial) RBM. We conjecture that such a construction could be extended to the more involved cases.

For future investigation, 
it would be interesting to extend the discussion of the present paper to higher dimensions. Some examples of the RBM representation of SPT states and topologically ordered states in  two  and three dimensions \cite{2018arXiv181002352L, Chen2018Equivalence, jia2018efficient} have been studied. A general understanding of RBM states for higher dimensional stabilizer codes is still missing. We hope to get some insights for higher dimensional systems by considering stabilizer codes. Following the ideas developed in this paper, we hope to address the two following challenges: 1) how to systematically derive the PEPS representations of stabilizer codes, where the Hamiltonian terms can be the mixed products of both Pauli $X$ and $Z$ operators\cite{PhysRevB.97.125102}; 2) whether a given PEPS of a stabilizer code can be represented by an efficient RBM state with the minimal RBM-PEPS bond dimension.

\bigskip

\textit{Note Added:} During the preparation of the present manuscript appeared a paper Ref.~\onlinecite{2018arXiv180908631Z} partially overlapping with our results. The authors have proposed an efficient numerical algorithm to construct efficient RBM state for an quantum stabilizer code. In particular, their RBM state of the $ZXZ$ model is identical to ours. 
	 
\section*{Acknowledgment}

Y.Z and H.H thanks the support from Physics Department of Princeton University. N.R. is supported by the grant No. ANR-17-CE30-0013-01. B.A.B. is supported by the Department of Energy Grant No. de-sc0016239, the National Science Foundation EAGER Grant No. noaawd1004957, Simons Investigator Grants No. ONRN00014-14-1-0330, and No. NSF-MRSEC DMR- 1420541, the Packard Foundation, the Schmidt Fund for Innovative Research. 

\appendix

\clearpage

\section{Conventions for MPS and Canonical MPS}
\label{app.MPS}

\subsection{Conventions for MPS and Transfer Matrix}

Since each unit cell contains $q$ spins-$\frac{1}{2}$'s, it is natural to start with the translational invariant MPS in Eq.~\eqref{Eq.MPSGS}, i.e., 
\begin{equation}
	|\mathrm{GS}\rangle=\sum_{\{g^r_i\}}\Tr\bigg(\prod_{r=0}^{L-1}T^{g^r_1\ldots g^r_q}\bigg)|\{g^r_i\}\rangle.
\end{equation} 
For convenience, we introduce the notation of the physical operators acting on the MPS tensors. Denoting $X^{r}_{i}$ and $Z^{r}_{i}$ as the Pauli $X$ and $Z$ operators acting on $i$-th orbital ($i=1, \ldots ,q$) in the $r$-th unit cell, their action on the MPS matrices are defined as:
\begin{equation}
	X^{r}_{i}\circ T^{g^{r'}_1\ldots g^{r'}_i\ldots g^{r'}_q}=
	\begin{cases}
		T^{g^{r'}_1\ldots (1-g^{r'}_i)\ldots g^{r'}_q} & \mathrm{if}~r'=r\\
		T^{g^{r'}_1\ldots g^{r'}_i\ldots g^{r'}_q} & \mathrm{if}~r'\neq r,
	\end{cases}
\end{equation}
and 
\begin{equation}
	Z^{r}_{i}\circ T^{g^{r'}_1\ldots g^{r'}_i\ldots g^{r'}_q}= (-1)^{\delta_{rr'}g^{r'}_i} T^{g^{r'}_1\ldots g^{r'}_i\ldots g^{r'}_q}.
\end{equation}
For other, more complex operators, the notation $\circ$ can be naturally generalized. 

To make the equations more compact, let $\mathbf{h}_i\in \{1,..., D\}$ be the virtual indices of the MPS matrices, where $D$ is the bond dimension. Notice that the bold font $\mathbf{h}_i$ is different from the $\Z_2$ valued virtual indices $h$'s in the main text. For instance, the MPS matrix elements of Eq.~\eqref{Eq.MPSequations} become $T^{g^r_1g^r_2g^r_3}_{h_1h_2, h_3h_4}\equiv (T^{g^r_1g^r_2g^r_3})_{\mathbf{h}_1, \mathbf{h}_2}$, so we identify $\mathbf{h}_1$ and $ \mathbf{h}_2$ as the composite of $\Z_2$ valued $h$ indices, i.e., $h_1h_2$ and $h_3h_4$ respectively. 
Given the MPS matrix elements $(T^{g_1^{r} \ldots g_q^{r}})_{\mathbf{h}_1, \mathbf{h}_2}$, where $\mathbf{h}_1, \mathbf{h}_2\in \{1,..., D\}$ are the left and right virtual indices, we can construct the MPS transfer matrix $\mathbb{T}_{\mathbf{h}_1\mathbf{h}_3, \mathbf{h}_2 \mathbf{h}_4}$ by contracting over the physical indices,
\begin{equation}\label{eq.transfermatrixdefinition}
	{\mathbb{T}}_{\mathbf{h}_1\mathbf{h}_3,\mathbf{h}_2\mathbf{h}_4}=\sum_{g^r_1\ldots g^r_q} ({T}^{g^r_1\ldots g^r_q})_{\mathbf{h}_1,\mathbf{h}_2} ({T}^{g^r_1\ldots g^r_q})^*_{\mathbf{h}_3,\mathbf{h}_4} .
\end{equation}
Here, $\mathbf{h}_1\mathbf{h}_3$ is regarded as a composite left virtual index of the transfer matrix, of dimension $D^2$. The same applies to $\mathbf{h}_2\mathbf{h}_4$. The transfer matrix ${\mathbb{T}}$ is a $D^2\times D^2$ matrix. 

\subsection{Review of Canonical MPS}
\label{app.CanonicalForm}

We now review the definition and the properties of canonical MPS, and apply the canonical MPS to stabilizer codes. The MPS matrix $\breve{T}^{g^r_1\ldots g^r_q}$ is called ``canonical" if its transfer matrix satisfies:
\begin{equation}\label{Eq.28}
	\begin{split}
		\sum_{\mathbf{h}_2} \left( \breve{\mathbb{T}} \right)_{\mathbf{h}_1\mathbf{h}_3,\mathbf{h}_2\mathbf{h}_2} &= \delta_{\mathbf{h}_1\mathbf{h}_3},\\
		\sum_{\mathbf{h}_1,\mathbf{h}_3} \Lambda_{\mathbf{h}_1\mathbf{h}_3} \left( \breve{\mathbb{T}} \right)_{\mathbf{h}_1\mathbf{h}_3,\mathbf{h}_2\mathbf{h}_4} &= \Lambda_{\mathbf{h}_2\mathbf{h}_4},
	\end{split}
\end{equation}
where $\breve{\mathbb{T}}$ is the transfer matrix of $\breve{T}^{g^r_1\ldots g^r_q}$, and $\Lambda$ is a \emph{full-rank} diagonal matrix whose diagonal elements are the entanglement spectrum of a single cut. 
In Ref.~\onlinecite{2006quant.ph..8197P}, it was shown that a generic MPS matrix $T^{g^r_1\ldots g^r_q}$ on an open chain can be mapped to the canonical form $\breve{T}^{g^r_1\ldots g^r_q}$ via a similarity transformation
\begin{equation}\label{eq.SimilarityTransformation}
	T^{g^r_1\ldots g^r_q}= S\cdot \breve{T}^{g^r_1\ldots g^r_q}\cdot S^{-1},
\end{equation}
where $S$ is an invertible matrix. We use $\breve{\ }$ to denote the canonical form of the MPS matrix and the MPS transfer matrix throughout the appendix.

In Ref.~\onlinecite{PhysRevB.94.075151}, it was proved that when there is a non-degenerate ground state on any compact space, the entanglement spectrum of a stabilizer code ground state is flat. The reduced density matrices are, in fact, projectors. Their original proof was formulated in 2 spatial dimensions, but it can be directly generalized to arbitrary dimensions. See Ref.~\onlinecite{2017arXiv171001744M} for the application to 3 spatial dimensions. Here we apply their conclusion to the case of 1 spatial dimension. Hence, the entanglement spectrum of a 1D stabilizer code with PBC is flat.

The reduced density matrix on a local and contractible region of a gapped state should not depend on the boundary condition far away from the local region. Thus the entanglement spectrum does not depend on the boundary condition either. Thus for the 1D stabilizer code with OBC, the entanglement spectrum is flat. Hence $\Lambda$ in Eq.~\eqref{Eq.28} is also flat for one of the ground states with OBC. Since $\Lambda$ is full-rank, there are no zero diagonal elements in $\Lambda$ and $\Lambda$ is proportional to an identity matrix. Hence the canonical MPS of a stabilizer code satisfies the following conditions
\begin{equation}\label{Eq.Canonical}
	\begin{split}
		\sum_{\mathbf{h}_2} \left( \breve{\mathbb{T}} \right)_{\mathbf{h}_1\mathbf{h}_3,\mathbf{h}_2\mathbf{h}_2} &= \delta_{\mathbf{h}_1\mathbf{h}_3},\\
		\sum_{\mathbf{h}_1} \left( \breve{\mathbb{T}} \right)_{\mathbf{h}_1\mathbf{h}_1,\mathbf{h}_2\mathbf{h}_4} &= \delta_{\mathbf{h}_2\mathbf{h}_4}.
	\end{split}
\end{equation}
The two conditions in Eq.~\eqref{Eq.Canonical} are graphically represented in Fig.~\ref{FigCnonical}. 

Hence we can use Eq.~\eqref{Eq.Canonical} to solve for the MPS with OBC. By Assumption~\ref{Assumption3}, the MPS matrices for the OBC shall also be the MPS matrices for the PBC. 

\begin{figure}
	\centering
	\includegraphics[width=1\columnwidth]{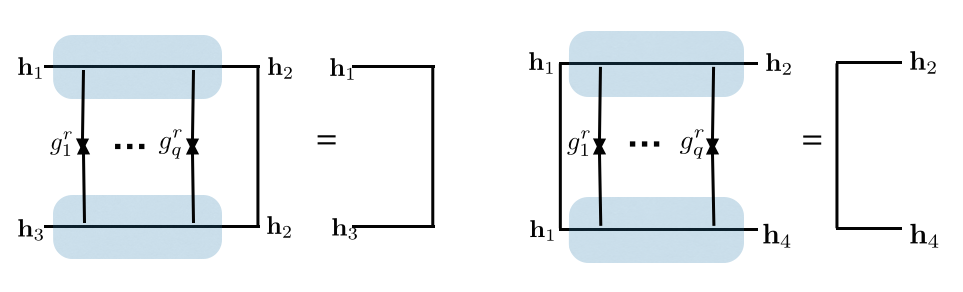}
	\caption{Graphical representation of Eq.~\eqref{Eq.Canonical}. }
	\label{FigCnonical}
\end{figure}

\section{Correlation Functions and Transfer Matrix Eigenvalues}
\label{app.CorrelationAndTransferMatrix}

In this appendix, we derive the eigenvalue structure of the transfer matrix of a general translational invariant stabilizer code. As we will prove, there is only one nonzero eigenvalue of the MPS transfer matrix, obtained by Jordan decomposition. Moreover, a finite power of the MPS transfer matrix can be decomposed as a tensor product of two vectors. The lemmas and theorems will be used in App.~\ref{app.Deriving0}.

\begin{lemma}\label{LemmaC1}
	Suppose an operator $\mathbf{O}$ anti-commutes with some of the Hamiltonian terms in Eq.~\eqref{Eq.GeneralStabilizerCodes}, i.e., $H=-\sum_{r=0}^{L-1}\sum_{\alpha=1}^t \Ocal^r_{\alpha}$, its expectation value of the ground state of Eq.~\eqref{Eq.GeneralStabilizerCodes} satisfies
	\begin{equation}
		\bra{\mathrm{GS}} \mathbf{O} \ket{\mathrm{GS}} =0.
	\end{equation}
\end{lemma}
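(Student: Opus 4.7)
The plan is to use two structural facts about the ground state of a stabilizer code that have already been established in the paper: (i) each stabilizer $\Ocal^r_\alpha$ is Hermitian and squares to the identity, and (ii) the ground state is a $+1$ eigenvector of every $\Ocal^r_\alpha$, which by Hermiticity gives both $\Ocal^r_\alpha \ket{\mathrm{GS}} = \ket{\mathrm{GS}}$ and $\bra{\mathrm{GS}} \Ocal^r_\alpha = \bra{\mathrm{GS}}$. These two facts are the only inputs needed.

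Concretely, I would proceed as follows. Pick any particular $\Ocal^{r_0}_{\alpha_0}$ such that $\{\mathbf{O}, \Ocal^{r_0}_{\alpha_0}\} = 0$ (one is guaranteed to exist by hypothesis). Then I insert the identity $(\Ocal^{r_0}_{\alpha_0})^2 = \mathbb{1}$ into the expectation value and use the eigenvalue condition on both sides, obtaining the short chain
\begin{equation}
\bra{\mathrm{GS}} \mathbf{O} \ket{\mathrm{GS}} = \bra{\mathrm{GS}} \Ocal^{r_0}_{\alpha_0}\, \mathbf{O}\, \Ocal^{r_0}_{\alpha_0} \ket{\mathrm{GS}} = -\bra{\mathrm{GS}} (\Ocal^{r_0}_{\alpha_0})^2\, \mathbf{O} \ket{\mathrm{GS}} = -\bra{\mathrm{GS}} \mathbf{O} \ket{\mathrm{GS}},
\end{equation}
where the middle equality uses anti-commutation to push $\Ocal^{r_0}_{\alpha_0}$ past $\mathbf{O}$ at the cost of a sign, and the last equality uses $(\Ocal^{r_0}_{\alpha_0})^2 = \mathbb{1}$. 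This forces $\bra{\mathrm{GS}}\mathbf{O}\ket{\mathrm{GS}} = 0$.

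There is essentially no obstacle here; the only thing to double-check is that the conventions of the paper truly give a $+1$ eigenvalue (yes, by Eq.~\eqref{Eq.GSproperty}), Hermiticity, and $(\Ocal^r_\alpha)^2 = 1$ (both stated in Sec.~\ref{Sec.GeneralSC}). Assumption~\ref{Assumption1} (unique ground state) is not even needed for the argument, but it guarantees that $\bra{\mathrm{GS}}\mathbf{O}\ket{\mathrm{GS}}$ is an unambiguous object. The step-by-step derivation above is essentially the entire proof.
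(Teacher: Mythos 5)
Your proposal is correct and follows essentially the same route as the paper's proof: both insert the anti-commuting stabilizer using the $+1$ eigenvalue condition, push it past $\mathbf{O}$ at the cost of a sign, and conclude $\bra{\mathrm{GS}}\mathbf{O}\ket{\mathrm{GS}} = -\bra{\mathrm{GS}}\mathbf{O}\ket{\mathrm{GS}}$. The only cosmetic difference is that you insert the stabilizer on both sides and invoke $(\Ocal^{r_0}_{\alpha_0})^2=\mathbb{1}$, whereas the paper inserts it once on the ket side and absorbs it into the bra after anti-commuting; the two are trivially equivalent.
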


\begin{proof}
	Without loss of generality, suppose $\mathbf{O}$ anti-commutes with $\Ocal^0_{1}$ in Eq.~\eqref{Eq.GeneralStabilizerCodes}. Since the ground state $\ket{\mathrm{GS}}$ satisfies the stabilizer condition Eq.~\eqref{Eq.GSproperty}, we have
	\begin{equation}
		\begin{split}
			\bra{\mathrm{GS}} \mathbf{O} \ket{\mathrm{GS}} 
			=& \bra{\mathrm{GS}} \mathbf{O} \Ocal^0_{1} \ket{\mathrm{GS}} 	\\
			=& - \bra{\mathrm{GS}} \Ocal^0_{1} \mathbf{O} \ket{\mathrm{GS}} \\
			=& - \bra{\mathrm{GS}} \mathbf{O} \ket{\mathrm{GS}}.
		\end{split}
	\end{equation}
	Hence
	\begin{equation}
		\bra{\mathrm{GS}} \mathbf{O} \ket{\mathrm{GS}} = 0.
	\end{equation}
\end{proof}

Consider two operators $\sigma_i$, $i=1,2$. We denote $p_1$ (resp. $p_2$) the support of $\sigma_1$ (resp. $\sigma_2$) on the unit cells $r_1\leq r\leq r_1+p_1-1$ (resp. $r_2\leq r\leq r_2+p_2-1$). We define the distance $d(\sigma_1, \sigma_2)$ of the two operators as the number of unit cells between the two operators plus one, i.e., 
\begin{equation}
	d(\sigma_1, \sigma_2)= 
	\begin{cases}
		r_2-r_1-p_1+1&, ~~r_2\geq r_1+p_1\\
		r_1-r_2-p_2+1&, ~~r_1\geq r_2+p_2\\
		0 &, ~~ r_1+p_1> r_2> r_1-p_2.
	\end{cases}
\end{equation}
In particular, when two operators overlap even only on  one site, their distance is zero. When the distance of two operators $\sigma_1$ and $\sigma_2$ are larger than $P$, where $P$ is the range of another operator $\Ocal$, then $\Ocal$ can not overlap simultaneously with $\sigma_1$ and $\sigma_2$. 

\begin{lemma}\label{LemmaC2}
	Suppose $\sigma_1$ and $\sigma_2$ are products of Pauli matrices supported on different regions of distance larger than the maximal interaction range, i.e.:
	\begin{equation}
		d(\sigma_1,\sigma_2) > \max\{P_{1}, \ldots , P_{t}\},
	\end{equation}
	where $P_\alpha$ is the support of $\alpha$-th type of the Hamiltonian term $\Ocal_\alpha^r$. Then, their expectation values satisfy
	\begin{equation}\label{Eq.16}
		\begin{split}
			\bra{\mathrm{GS}} \sigma_1 \sigma_2 \ket{\mathrm{GS}} 
			= \bra{\mathrm{GS}}\sigma_1 \ket{\mathrm{GS}} \bra{\mathrm{GS}} \sigma_2 \ket{\mathrm{GS}}.
		\end{split}
	\end{equation}
\end{lemma}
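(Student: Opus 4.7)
The plan is to split into two cases depending on whether each of $\sigma_1, \sigma_2$ commutes with every Hamiltonian term $\Ocal^r_\alpha$ in Eq.~\eqref{Eq.GeneralStabilizerCodes}.

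\textbf{Case 1: both $\sigma_1$ and $\sigma_2$ commute with every $\Ocal^r_\alpha$.} Because Assumption~\ref{Assumption1} forces the ground state to be unique, the stabilizer group $\mathcal{S}$ generated by $\{\Ocal^r_\alpha\}$ is a maximal abelian subgroup of the Pauli group modulo phases; equivalently, any Pauli operator commuting with every $\Ocal^r_\alpha$ lies in $\pm\mathcal{S}$. Hence each $\sigma_i$ satisfies $\sigma_i |\mathrm{GS}\rangle = \pm|\mathrm{GS}\rangle$, and both sides of Eq.~\eqref{Eq.16} reduce to the same product of two signs.

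\textbf{Case 2: at least one of them, say $\sigma_1$, anti-commutes with some $\Ocal^{r_0}_{\alpha_0}$.} Lemma~\ref{LemmaC1} immediately gives $\langle\mathrm{GS}|\sigma_1|\mathrm{GS}\rangle=0$, so the RHS of Eq.~\eqref{Eq.16} vanishes. To show the LHS vanishes as well, I observe that two anti-commuting Pauli products must share at least one site in their supports, so $\mathrm{supp}(\Ocal^{r_0}_{\alpha_0})$ lies within $P_{\alpha_0}$ consecutive unit cells that touch $\mathrm{supp}(\sigma_1)$. Since $d(\sigma_1,\sigma_2) > \max_\alpha P_\alpha \geq P_{\alpha_0}$, a direct check on the support intervals shows $\mathrm{supp}(\Ocal^{r_0}_{\alpha_0})\cap\mathrm{supp}(\sigma_2)=\emptyset$, so $\Ocal^{r_0}_{\alpha_0}$ commutes with $\sigma_2$. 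Consequently $\sigma_1\sigma_2$ still anti-commutes with $\Ocal^{r_0}_{\alpha_0}$, and a second application of Lemma~\ref{LemmaC1} yields $\langle\mathrm{GS}|\sigma_1\sigma_2|\mathrm{GS}\rangle=0$. The roles of $\sigma_1$ and $\sigma_2$ are symmetric, so the case in which $\sigma_2$ anti-commutes with some stabilizer term is handled identically.

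The only non-routine ingredient is the ``centralizer equals stabilizer group up to sign'' claim used in Case~1. I would justify it directly: such a Pauli $P$ preserves the one-dimensional ground space, so $P\,\Pi_{\mathrm{GS}} = c\,\Pi_{\mathrm{GS}}$ with $c\in\{\pm 1\}$; expanding $\Pi_{\mathrm{GS}} = |\mathcal{S}|^{-1}\sum_{S\in\mathcal{S}} S$ and using the linear independence of signed Pauli matrices then forces $P\mathcal{S} = c\,\mathcal{S}$ as multisets of operators, i.e.\ $\pm P \in \mathcal{S}$. Everything else is elementary bookkeeping on supports combined with two invocations of Lemma~\ref{LemmaC1}.
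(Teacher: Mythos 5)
Your proof is correct, and its treatment of the cases where some $\sigma_i$ anti-commutes with a stabilizer is essentially the paper's: both arguments kill the right-hand side with Lemma~\ref{LemmaC1} and then use the support/distance bookkeeping to show that the offending stabilizer cannot reach the other operator, so that $\sigma_1\sigma_2$ still anti-commutes with it and the left-hand side vanishes too (you merge the paper's cases 2--4 into one, which is a clean simplification). Where you genuinely diverge is the case in which both operators commute with every $\Ocal^r_\alpha$. The paper argues spectrally: $[\sigma_i,H]=0$ implies $\sigma_i$ maps excited eigenstates to excited eigenstates of the same energy, so $\bra{\mathrm{GS}}\sigma_i\ket{E,k}=0$, and inserting $\mathbb{1}=\ket{\mathrm{GS}}\bra{\mathrm{GS}}+\sum_{E,k}\ket{E,k}\bra{E,k}$ (valid by Assumption~\ref{Assumption1}) gives factorization directly. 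You instead invoke the maximality of the stabilizer group --- any Pauli product in the centralizer is, up to phase, an element of $\mathcal{S}$ --- so that each $\sigma_i$ acts as a scalar $c_i$ on $\ket{\mathrm{GS}}$ and both sides of Eq.~\eqref{Eq.16} equal $c_1c_2$. Your justification of that centralizer claim via $\Pi_{\mathrm{GS}}=|\mathcal{S}|^{-1}\sum_{S\in\mathcal{S}}S$ and linear independence of signed Paulis is sound; it yields a stronger intermediate statement than the paper needs, at the cost of importing a stabilizer-formalism fact, whereas the paper's spectral route uses only uniqueness of the ground state. Two cosmetic points: a non-Hermitian Pauli product (e.g.\ $XZ=-iY$) has eigenvalues $\pm i$, so $c\in\{\pm1,\pm i\}$ rather than $\{\pm1\}$ --- this does not affect the equality $c_1c_2=c_1\cdot c_2$; and note that neither your Case~1 nor the paper's actually needs the distance hypothesis there.
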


\begin{proof}
	$\sigma_1$ and $\sigma_2$ either commute or anti-commute with the Hamiltonian terms, because $\sigma_1$, $\sigma_2$ and stabilizer operators are all products of Pauli matrices. We prove this lemma case by case:
	\begin{enumerate}
		\item $\sigma_1$ and $\sigma_2$ both commute with all stabilizer operators. 
		\begin{equation}\label{Eq.13}
			[H, \sigma_i]=0, ~~ i=1, 2.
		\end{equation}
		Hence for any excited eigenstate $|E, k\rangle$ of the Hamiltonian $H$, i.e., $H|E, k\rangle=E|E, k\rangle$ ($E$ is the energy and $k$ labels the degeneracy within the energy eigenspace), $\sigma_i |E, k\rangle$ is also an excited eigenstate of $H$. One can see this from Eq.~\eqref{Eq.13}: $[H, \sigma_i]|E, k\rangle=0$ for $i=1, 2$, which implies $\sigma_i |E, k\rangle$ is an energy eigenstate of $H$ with energy $E$. So
		\begin{equation}\label{Eq.15}
			\langle \mathrm{GS}| \sigma_i |E, k\rangle=0, ~~ i=1, 2.
		\end{equation}
		Then
		\begin{equation}\label{Eq.18}
			\begin{split}
				&\bra{\mathrm{GS}}\sigma_1 \ket{\mathrm{GS}} \bra{\mathrm{GS}} \sigma_2 \ket{\mathrm{GS}} \\&= \bra{\mathrm{GS}}\sigma_1 \bigg(1-\sum_{E, k}\ket{E, k} \bra{E, k}\bigg) \sigma_2 \ket{\mathrm{GS}} \\
				&= \bra{\mathrm{GS}} \sigma_1 \sigma_2 \ket{\mathrm{GS}} -\sum_{E, k}\bra{\mathrm{GS}}\sigma_1 \ket{E, k} \bra{E, k}\sigma_2 \ket{\mathrm{GS}} \\
				&=\bra{\mathrm{GS}} \sigma_1 \sigma_2 \ket{\mathrm{GS}},
			\end{split}
		\end{equation}
		where in the first equality, we have used Assumption \ref{Assumption1}, and in the last equality, we have used Eq.~\eqref{Eq.15}. Hence Eq.~\eqref{Eq.16} holds true in this case. 
		
		\item $\sigma_1$ commutes with all stabilizer operators while $\sigma_2$ anti-commutes with some of the stabilizer operators. Hence, $\sigma_2$ and $\sigma_1\sigma_2$ both satisfy Lemma \ref{LemmaC1}. Their expectation values are both 0:
		\begin{equation}
			\bra{\mathrm{GS}} \sigma_2 \ket{\mathrm{GS}} = 0, \quad
			\bra{\mathrm{GS}} \sigma_1\sigma_2 \ket{\mathrm{GS}} = 0.
		\end{equation}
		Therefore, Eq.~\eqref{Eq.16} holds true in this case.
		
		\item $\sigma_1$ anti-commutes with some of the stabilizer operators while $\sigma_2$ commutes with all stabilizer operators. This is the same situation as the last one. Both sides of Eq.~\eqref{Eq.16} vanish.
		
		\item $\sigma_1$ and $\sigma_2$ both anti-commute with some of stabilizer operators. Using Lemma \ref{LemmaC1}, their expectation values both vanish. There does not exist a stabilizer operator which overlaps simultaneously with $\sigma_1$ and $\sigma_2$, because $\sigma_1$ and $\sigma_2$ are separated with a distance larger than the maximal interaction range $\max\{P_{1}, \ldots , P_{t}\}$. Hence, $\sigma_1 \sigma_2$ still anti-commutes with some of the stabilizer operators. So both sides of Eq.~\eqref{Eq.16} vanish.
	\end{enumerate}
	This completes the proof.
\end{proof}

\begin{theorem}\label{TheoremB3}
	Suppose two arbitrary operators $\mathbf{O}$ and $\widetilde{\mathbf{O}}$ are supported on different regions separated by a distance larger than $\max\{P_{1}, \ldots , P_{t}\}$. Then we have
	\begin{equation}
		\bra{\mathrm{GS}} \mathbf{O} \widetilde{\mathbf{O}} \ket{\mathrm{GS}} - \bra{\mathrm{GS}} \mathbf{O} \ket{\mathrm{GS}} \bra{\mathrm{GS}} \widetilde{\mathbf{O}} \ket{\mathrm{GS}} = 0.
	\end{equation}
\end{theorem}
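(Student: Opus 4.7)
The plan is to reduce Theorem \ref{TheoremB3} to the special case already handled by Lemma \ref{LemmaC2}, where the two operators are products of Pauli matrices. The key observation is that the Pauli strings form a basis for the space of operators on any finite set of qubits, so an arbitrary operator supported on a given region can be expanded as a linear combination of Pauli strings supported on that same region.

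First, I would decompose $\mathbf{O} = \sum_{a} c_a \sigma_{1,a}$ and $\widetilde{\mathbf{O}} = \sum_{b} \widetilde{c}_b \sigma_{2,b}$, where each $\sigma_{1,a}$ is a Pauli string supported on (a subset of) the support of $\mathbf{O}$, and similarly each $\sigma_{2,b}$ is supported on (a subset of) the support of $\widetilde{\mathbf{O}}$. Crucially, this preserves the distance hypothesis: for every pair $(a,b)$ we have $d(\sigma_{1,a},\sigma_{2,b}) \geq d(\mathbf{O},\widetilde{\mathbf{O}}) > \max\{P_1,\ldots,P_t\}$, so Lemma \ref{LemmaC2} applies to each pair.

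Next, I would use linearity of the expectation value to write
\begin{equation*}
\bra{\mathrm{GS}} \mathbf{O}\widetilde{\mathbf{O}} \ket{\mathrm{GS}}
= \sum_{a,b} c_a \widetilde{c}_b \bra{\mathrm{GS}} \sigma_{1,a}\sigma_{2,b} \ket{\mathrm{GS}},
\end{equation*}
and apply Lemma \ref{LemmaC2} term by term to factor each expectation value as a product. Refactoring the resulting double sum gives
\begin{equation*}
\sum_{a,b} c_a \widetilde{c}_b \bra{\mathrm{GS}} \sigma_{1,a} \ket{\mathrm{GS}} \bra{\mathrm{GS}} \sigma_{2,b} \ket{\mathrm{GS}}
= \bra{\mathrm{GS}} \mathbf{O} \ket{\mathrm{GS}} \bra{\mathrm{GS}} \widetilde{\mathbf{O}} \ket{\mathrm{GS}},
\end{equation*}
which is exactly the desired identity.

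There is no real obstacle here beyond the bookkeeping of supports: the only subtle point is confirming that the Pauli expansions do not enlarge the supports of $\mathbf{O}$ and $\widetilde{\mathbf{O}}$, so the distance condition needed by Lemma \ref{LemmaC2} is inherited automatically. Since any operator acting on a finite set of qubits lies in the span of Pauli strings on precisely those qubits, this inheritance is immediate, and Theorem \ref{TheoremB3} follows as a direct linear extension of Lemma \ref{LemmaC2}.
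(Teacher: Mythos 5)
Your proposal is correct and follows essentially the same route as the paper: expand both operators in the Pauli-string basis supported on their respective regions, note that the distance hypothesis is inherited by every pair of terms, apply Lemma \ref{LemmaC2} termwise, and conclude by linearity. The paper's proof is identical in structure, so there is nothing to add.
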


\begin{proof}
	First we can expand the two operators as the summations of the products of Pauli matrices:
	\begin{equation}
		\begin{split}
			\mathbf{O} = \sum_{i} \phi_i \sigma_i	\\
			\widetilde{\mathbf{O}} = \sum_{j} \theta_j \widetilde{\sigma_j},	\\
		\end{split}
	\end{equation}
	where the terms $\sigma_i$ and $\widetilde{\sigma_j}$ are products of Pauli matrices supported in two separated regions, and $\phi_i$ and $\theta_j$ are complex coefficients. Recall our assumption that $\mathbf{O}$ and $\widetilde{\mathbf{O}}$ are supported on different regions separated by a distance larger than the maximal interaction range $\max\{P_{1},\ldots, P_{t}\}$. Then, $\sigma_i$ and $\widetilde{\sigma_j}$ are also supported on regions with a distance larger than $\max\{P_{1},\ldots, P_{t}\}$. Hence, $\sigma_i$ and $\widetilde{\sigma_j}$ satisfy Lemma \ref{LemmaC2}. Therefore
	\begin{equation}
		\begin{split}
			\bra{\mathrm{GS}} \mathbf{O} \widetilde{\mathbf{O}} \ket{\mathrm{GS}}	
			=& \sum_{i,j} \phi_i \theta_j \bra{\mathrm{GS}} \sigma_i \widetilde{\sigma_j} \ket{\mathrm{GS}}	\\
			=& \sum_{i,j} \phi_i \theta_j \bra{\mathrm{GS}} \sigma_i \ket{\mathrm{GS}} \bra{\mathrm{GS}} \widetilde{\sigma_j} \ket{\mathrm{GS}}	\\
			=& \bra{\mathrm{GS}} \mathbf{O} \ket{\mathrm{GS}} \bra{\mathrm{GS}} \widetilde{\mathbf{O}} \ket{\mathrm{GS}}.
		\end{split}
	\end{equation}
	This completes the proof.
\end{proof}

\begin{theorem}\label{TheoremC3}
	Let $T^{g^r_1 \ldots g^r_q}_{\mathbf{h}_1, \mathbf{h}_2}$ be the MPS matrix element of a translational invariant stabilizer code where $\mathbf{h}_1$ and $\mathbf{h}_2$ are the virtual indices, and $\mathbb{T}_{\mathbf{h}_1\mathbf{h}_3, \mathbf{h}_2\mathbf{h}_4}$ be the MPS transfer matrix of $T$ defined in Eq.~\eqref{eq.transfermatrixdefinition}. Then $\mathbb{T}$ has only 1 nonzero eigenvalue. 
\end{theorem}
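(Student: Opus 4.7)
The plan is to tie the spectral structure of $\mathbb{T}$ to the vanishing of connected correlators that was established in Theorem~\ref{TheoremB3}. For any pair of local operators $\mathbf{O}$ and $\widetilde{\mathbf{O}}$ acting on single unit cells separated by a distance $d$, a standard MPS computation expresses the two-point function as
\begin{equation}
\langle \mathrm{GS}|\mathbf{O}\,\widetilde{\mathbf{O}}|\mathrm{GS}\rangle \;=\; \frac{\Tr\bigl(E_{\mathbf{O}}\,\mathbb{T}^{\,d-1}\,E_{\widetilde{\mathbf{O}}}\,\mathbb{T}^{\,L-d-1}\bigr)}{\Tr(\mathbb{T}^{L})},
\end{equation}
where $E_{\mathbf{O}}$ and $E_{\widetilde{\mathbf{O}}}$ are the one-site transfer matrices with the corresponding operator insertions. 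By Assumption~\ref{Assumption2} this formula is independent of $L$ for $L$ large enough, so I may send $L\to\infty$ freely.

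Next I would Jordan-decompose $\mathbb{T}$, writing $\mathbb{T} = \sum_k (\lambda_k P_k + N_k)$ where $P_k$ is the spectral projector onto the generalized eigenspace for eigenvalue $\lambda_k$ and $N_k$ is the nilpotent part. By Assumption~\ref{Assumption1} the ground state is unique, so the leading eigenvalue $\lambda_1$ is non-degenerate and strictly larger in modulus than the others (otherwise one would recover either a degeneracy or a non-normalizable state in the $L\to\infty$ limit). Inserting the Jordan decomposition into the correlator and dividing by the partition function (which is dominated by $\lambda_1^L$), the connected part $\langle \mathbf{O}\widetilde{\mathbf{O}}\rangle - \langle \mathbf{O}\rangle\langle\widetilde{\mathbf{O}}\rangle$ becomes a finite sum of terms of the form $c_{k,m}(\mathbf{O},\widetilde{\mathbf{O}})\, d^{\,m}\,(\lambda_k/\lambda_1)^{d}$ with $k\ge 2$, where the integer $m$ comes from the size of Jordan blocks.

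Theorem~\ref{TheoremB3} tells me that this connected correlator is \emph{identically zero} as soon as $d>\max\{P_1,\dots,P_t\}$. Since the sum over $k,m$ is finite and the functions $d^{m}(\lambda_k/\lambda_1)^{d}$ are linearly independent in $d$ for distinct pairs $(\lambda_k,m)$, the vanishing of the connected correlator on the infinite arithmetic progression $d=\max P_\alpha+1,\max P_\alpha+2,\dots$ forces each coefficient $c_{k,m}(\mathbf{O},\widetilde{\mathbf{O}})$ to vanish. To promote this into a statement purely about $\mathbb{T}$, I would let $\mathbf{O}$ and $\widetilde{\mathbf{O}}$ run over a complete basis of one-site operators; the resulting insertion matrices $E_{\mathbf{O}}$, $E_{\widetilde{\mathbf{O}}}$ span an operator subspace large enough to resolve every generalized eigenvector of $\mathbb{T}$ not lying in the $\lambda_1$-eigenspace. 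The only way every $c_{k,m}$ can vanish is if the subleading contribution of $\mathbb{T}$ itself is absent, i.e., $\lambda_k=0$ for all $k\ge 2$. Thus $\mathbb{T}$ possesses a single nonzero eigenvalue $\lambda_1$, all other eigenvalues being zero (possibly with nontrivial Jordan blocks, which the theorem statement does not forbid).

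The main obstacle I anticipate is the last step: rigorously arguing that letting $\mathbf{O},\widetilde{\mathbf{O}}$ range over a complete operator basis really does probe every off-diagonal block of the Jordan form. Equivalently, one has to show that no subleading generalized eigenvector of $\mathbb{T}$ can be ``silent'' against all local operator insertions; this follows from the fact that the set $\{E_{\mathbf{O}}\}$ as $\mathbf{O}$ ranges over a Pauli basis of a single unit cell spans the full $D^2\times D^2$ matrix algebra acting on the virtual space, provided the MPS is injective on one site (which can otherwise be arranged by blocking a finite number of unit cells, compatible with Assumption~\ref{Assumption2}). Once this injectivity-type statement is in place, the argument above closes cleanly.
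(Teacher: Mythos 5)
Your proposal follows essentially the same route as the paper's proof: express the two-point function through powers of $\mathbb{T}$, use Assumption~\ref{Assumption2} to take $L\to\infty$ and isolate the leading Jordan block, and then invoke Theorem~\ref{TheoremB3} to force every subleading eigenvalue to vanish. If anything, you are more explicit than the paper about the final step (that the operator insertions $E_{\mathbf{O}}$ must span enough of the virtual algebra so that no subleading generalized eigenvector is ``silent''), a point the paper's proof asserts rather tersely by saying the vanishing holds ``for any operators.''
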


\begin{proof}
	For convenience, we introduce the notation:
	\begin{equation}
		\begin{split}
			&\mathbb{T}[\mathbf{O}^r]_{\mathbf{h}_1\mathbf{h}_3, \mathbf{h}_2\mathbf{h}_4}	\\
			=&\sum_{g^r_1 \ldots g^r_q}( \mathbf{O}^r\circ T^{g^r_1\ldots g^r_q})_{\mathbf{h}_1, \mathbf{h}_2} (T^{g^r_1\ldots g^r_q})^{\star}_{\mathbf{h}_3, \mathbf{h}_4}.
		\end{split}
	\end{equation}
	Moreover, the transfer matrix $\mathbb{T}$ can always be decomposed into Jordan blocks: 
	\begin{equation}
		\mathbb{T}= U (\mathcal{P}_{\lambda_0}+ \mathcal{P}_{\lambda_1}+ \mathcal{P}_{\lambda_2}+\cdots ) U^{-1},
	\end{equation}
	where $|\lambda_0|>|\lambda_1|>|\lambda_2|>\cdots$ are the eigenvalues of $\mathbb{T}$, and $\mathcal{P}_{\lambda_i}$ is the corresponding Jordan block. By a proper scaling of $\mathbb{T}$, we let $\lambda_0=1$. Using this normalization, $\mathcal{P}_{\lambda_0}\equiv \mathcal{P}_{1}$ is non-degenerate due to the gap and non-degeneracy of the ground state.\cite{2006quant.ph..8197P, zeng2015quantum} Without loss of generality, let us consider the special basis of the virtual indices such that $U$ is an identity matrix, i.e.,
	\begin{equation}
		\mathbb{T}= \mathcal{P}_{\lambda_0}+ \mathcal{P}_{\lambda_1}+ \mathcal{P}_{\lambda_2}+\cdots.
	\end{equation}
	Suppose we have two operators $\mathbf{O}^r$ and $\widetilde{\mathbf{O}}^{r+l}$ with a sufficiently large (but finite) $l$ such that they satisfy Theorem \ref{TheoremB3}. The expectation value of $\mathbf{O}^r$ and $\widetilde{\mathbf{O}}^{r+l}$ can be written in terms of transfer matrices:
	\begin{equation}\label{Eq.19}
		\begin{split}
			\bra{\mathrm{GS}} \mathbf{O}^r \widetilde{\mathbf{O}}^{r+l} \ket{\mathrm{GS}}	
			=&\frac{\Tr\left[\mathbb{T}^{r}(\mathbb{T}[\mathbf{O}^r])\mathbb{T}^{l-1}(\mathbb{T}[\widetilde{\mathbf{O}}^{r+l}])\mathbb{T}^{L-1-r-l}\right]}{\Tr\left(\mathbb{T}^L\right)}	\\
			=&\frac{\Tr\left[\mathbb{T}^{L-l-1}(\mathbb{T}[\mathbf{O}^r])\mathbb{T}^{l-1}(\mathbb{T}[\widetilde{\mathbf{O}}^{r+l}])\right]}{\Tr\left(\mathbb{T}^L\right)}.
		\end{split}
	\end{equation}
	By Assumption \ref{Assumption2} in the beginning of Sec.~\ref{Sec.MPSSC}, the MPS matrices is independent of the system size when $L$ is sufficient large. For simplicity, let us take the limit:
	\begin{equation}
		\lim\limits_{L \rightarrow \infty} \mathbb{T}^{L-l-1}=\lim\limits_{L \rightarrow \infty} \mathbb{T}^{L} = \mathcal{P}_{1}.
	\end{equation}
	Eq.~\eqref{Eq.19} then simplifies to
	\begin{equation}\label{eq.simplifyCorrealtionFunctions}
		\begin{split}
			\bra{\mathrm{GS}} \mathbf{O}^r \widetilde{\mathbf{O}}^{r+l} \ket{\mathrm{GS}}	
			=&\frac{\Tr\left[\mathcal{P}_1(\mathbb{T}[\mathbf{O}^r])\mathbb{T}^{l-1}(\mathbb{T}[\widetilde{\mathbf{O}}^{r+l}])\right]}{\Tr\left(\mathcal{P}_1\right)}	\\
			=&\Tr\left[\mathcal{P}_1(\mathbb{T}[\mathbf{O}^r])\mathbb{T}^{l-1}(\mathbb{T}[\widetilde{\mathbf{O}}^{r+l}])\right].
		\end{split}
	\end{equation}
	Using the Jordan blocks decomposition of $\mathbb{T}$ ($\lambda_0=1$), we have
	\begin{equation}
		\mathbb{T}^{l-1} = \mathcal{P}_1 + \sum_{|\lambda| < 1} \mathcal{P}^{l-1}_{\lambda}.
	\end{equation}
	Substituting to Eq.~\eqref{eq.simplifyCorrealtionFunctions}, we have
	\begin{equation}
		\begin{split}
			&\bra{\mathrm{GS}} \mathbf{O}^r \widetilde{\mathbf{O}}^{r+l} \ket{\mathrm{GS}}	
			\\=&\Tr\left( \mathcal{P}_1 (\mathbb{T}[\mathbf{O}^r]) (\mathcal{P}_1 + \sum_{|\lambda| < 1} \mathcal{P}^{l-1}_{\lambda}) (\mathbb{T}[\widetilde{\mathbf{O}}^{r+l}]) \right)	\\
			=&\Tr\left[\mathcal{P}_1(\mathbb{T}[\mathbf{O}^r])\mathcal{P}_1(\mathbb{T}[\widetilde{\mathbf{O}}^{r+l}])\right] \\
			&+\sum_{|\lambda| < 1} \Tr\left[ \mathcal{P}_1 (\mathbb{T}[\mathbf{O}^r]) \mathcal{P}^{l-1}_{\lambda} (\mathbb{T}[\widetilde{\mathbf{O}}^{r+l}]) \right].
		\end{split}
	\end{equation}
	Since $\mathcal{P}_1$ is 1 dimensional (unique gapped ground state),
	\begin{equation}
		\begin{split}
			&\Tr\left[\mathcal{P}_1(\mathbb{T}[\mathbf{O}^r])\mathcal{P}_1(\mathbb{T}[\widetilde{\mathbf{O}}^{r+l}])\right] \\
			=&\Tr\left(\mathcal{P}_1\mathbb{T}[\mathbf{O}^r]\right) \Tr\left(\mathcal{P}_1\mathbb{T}[\widetilde{\mathbf{O}}^{r+l}]\right)	\\
			=&\bra{\mathrm{GS}} \mathbf{O}^r \ket{\mathrm{GS}}\bra{\mathrm{GS}} \widetilde{\mathbf{O}}^{r+l} \ket{\mathrm{GS}}.
		\end{split}
	\end{equation}
	Hence
	\begin{equation}
		\begin{split}
			&\bra{\mathrm{GS}} \mathbf{O}^r \widetilde{\mathbf{O}}^{r+l} \ket{\mathrm{GS}}	\\
			=& \bra{\mathrm{GS}} \mathbf{O}^r \ket{\mathrm{GS}}\bra{\mathrm{GS}} \widetilde{\mathbf{O}}^{r+l} \ket{\mathrm{GS}}	\\
			&+ \sum_{|\lambda| < 1} \Tr\left[ \mathcal{P}_1 (\mathbb{T}[\mathbf{O}^r]) \mathcal{P}^{l-1}_{\lambda} (\mathbb{T}[\widetilde{\mathbf{O}}^{r+l}]) \right].
		\end{split}
	\end{equation}
	Theorem \ref{TheoremB3} implies that:
	\begin{equation}
		\sum_{\lambda \neq 1} \Tr\left[ \mathcal{P}_1(\mathbb{T}[\mathbf{O}^r]) \mathcal{P}^{l-1}_{\lambda} (\mathbb{T}[\widetilde{\mathbf{O}}^{r+l}]) \right] = 0
	\end{equation}
	for \emph{any} operators $\mathbf{O}^r$ and $\widetilde{\mathbf{O}}^{r+l}$ with a sufficiently large but finite $l$. Then the only possibility is that for all $\lambda \neq 1$, $\lambda = 0$. In other words, the only nonzero eigenvalue of $\mathbb{T}$ is 1. This completes the proof.
\end{proof}

We numerically checked the $Z^{q-1}X Z^{q-1}$ models with $2\leq q \leq 6$ and found that the transfer matrix indeed has only 1 nonzero eigenvalue. 

\begin{lemma}\label{lemma.ZeroJordanBlock}
	For a Jordan block $\mathcal{P}_0$ of size $m \times m$ with zero diagonal elements, then
	\begin{equation}
		\left( \mathcal{P}_0 \right)^n=0,
	\end{equation}
	where the integer $n\ge m$. 
\end{lemma}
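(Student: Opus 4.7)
The plan is to use the fact that a Jordan block with zero diagonal is, by the standard convention, the matrix with $1$'s on the superdiagonal and zeros everywhere else. First I would write $\mathcal{P}_0$ in this canonical form and read off its action on the standard basis $\{e_1, e_2, \ldots, e_m\}$: one has $\mathcal{P}_0\, e_i = e_{i-1}$ for $i \geq 2$ and $\mathcal{P}_0\, e_1 = 0$. Thus $\mathcal{P}_0$ acts as a lowering shift on the basis, killing one basis vector each time it is applied.

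Iterating, I would prove by a quick induction on $k$ that
\begin{equation}
(\mathcal{P}_0)^k\, e_i = \begin{cases} e_{i-k} & \text{if } i > k, \\ 0 & \text{if } i \leq k.\end{cases}
\end{equation}
Setting $k = m$ then gives $(\mathcal{P}_0)^m\, e_i = 0$ for every $i = 1, \ldots, m$, so $(\mathcal{P}_0)^m = 0$. For any $n \geq m$, write $(\mathcal{P}_0)^n = (\mathcal{P}_0)^{n-m} \cdot (\mathcal{P}_0)^m = 0$, which gives the claim. As a one-line alternative, the characteristic polynomial of $\mathcal{P}_0$ is $\lambda^m$, so the Cayley-Hamilton theorem immediately yields $(\mathcal{P}_0)^m = 0$.

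There is no real obstacle in this proof; it is a textbook fact about nilpotent Jordan blocks. The only point worth being explicit about is the convention: by "Jordan block with zero diagonal elements" one means the standard super-diagonal form with eigenvalue $0$, and once this is fixed the shift-action argument or Cayley-Hamilton finishes the job in a couple of lines.
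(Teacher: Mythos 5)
Your proof is correct and follows essentially the same route as the paper's: both establish the shift action $\mathcal{P}_0 e_1 = 0$, $\mathcal{P}_0 e_i = e_{i-1}$ on the standard basis and iterate it $m$ times to conclude nilpotency, then extend to $n \geq m$ by factoring out $(\mathcal{P}_0)^m$. The Cayley--Hamilton remark is a valid shortcut but not needed; nothing is missing.
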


\begin{proof}
	In terms of matrix elements, $\mathcal{P}_0$ is:
	\begin{equation}
		\mathcal{P}_{0} = \left( \begin{matrix}
			0	&	1	&	0	&	\ldots	&	0 	\\
			0	&	0	&	1	&	\ldots 	&	0	\\
			\ldots &&\ldots &&	\\
			0	&	0	&	0	&	\ldots 	&	1	\\
			0	&	0	&	0	&	\ldots 	&	0	\\
		\end{matrix}\right)
	\end{equation}
	Denote $e_{i}$ as the vector of size $m$ whose $i$-th entry is 1 and 0 otherwise. Then we can show that:
	\begin{equation}
		\begin{split}
			\mathcal{P}_0 \cdot e_1 &= 0,	\\
			\mathcal{P}_0 \cdot e_i &= e_{i-1}, \quad\forall i=2,2,\ldots,m.	\\
		\end{split}
	\end{equation}
	Hence, for any vector $e_i$ ($i=1,2,\ldots,m$), we can prove that:
	\begin{equation}
		\left(\mathcal{P}_0\right)^m \cdot e_i = \left(\mathcal{P}_0\right)^{m-1} \cdot e_{i-1}  \ldots = \left(\mathcal{P}_0\right)^{m-i+1} \cdot e_{1} = 0
	\end{equation}
	Therefore, we conclude that:
	\begin{equation}
		\left(\mathcal{P}_0\right)^m = 0.
	\end{equation}
	For any integer $n\ge m$, we also have:
	\begin{equation}
		\left(\mathcal{P}_0\right)^{n} = 0.
	\end{equation}
\end{proof}

\begin{theorem}\label{theoremC.transfermatrixdecomposition}
	Suppose the transfer matrix $\mathbb{T}$ of size $D^2 \times D^2$ satisfies Theorem \ref{TheoremC3}. In other words, its nonzero eigenvalues contain a unique 1. Then
	\begin{equation}\label{eq.powerOfTransferMatrix}
		\left( \mathbb{T} \right)^{D^2} = u v
	\end{equation}
	for a column vector $u$ of size $D^2$ and a row vector $v$ of size $D^2$ such that
	\begin{equation}\label{eq.uvnorm1}
		v \cdot u = 1,
	\end{equation}
	where $\cdot$ represents the vector multiplication. In terms of matrix elements, Eq.~\eqref{eq.powerOfTransferMatrix} is
	\begin{equation}
		\left(\left( \mathbb{T} \right)^{D^2}\right)_{\mathbf{h}_1\mathbf{h}_3,\mathbf{h}_2\mathbf{h}_4} = u_{\mathbf{h}_1\mathbf{h}_3} v_{\mathbf{h}_2\mathbf{h}_4},
	\end{equation}
	and Eq.~\eqref{eq.uvnorm1} is
	\begin{equation}\label{eq.uvnorm1elements}
		\sum_{\mathbf{h}_1,\mathbf{h}_2=1}^{D} u_{\mathbf{h}_1\mathbf{h}_2} v_{\mathbf{h}_1\mathbf{h}_2} = 1.
	\end{equation}
\end{theorem}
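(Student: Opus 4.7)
\textbf{Proof proposal for Theorem \ref{theoremC.transfermatrixdecomposition}.} The plan is to invoke the Jordan decomposition of $\mathbb{T}$ together with the input from Theorem \ref{TheoremC3} and the nilpotency statement of Lemma \ref{lemma.ZeroJordanBlock} to show that $\mathbb{T}^{D^2}$ collapses to a rank-one matrix, from which the factorization $\mathbb{T}^{D^2}=uv$ and the normalization $v\cdot u=1$ follow immediately.

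First I would write $\mathbb{T}=U J U^{-1}$ where $J$ is its Jordan normal form. By hypothesis the only nonzero eigenvalue is $1$, so $J$ decomposes as a direct sum $J=\mathcal{P}_{1}\oplus \mathcal{P}_{0}^{(1)}\oplus\mathcal{P}_{0}^{(2)}\oplus\cdots$, where $\mathcal{P}_{1}$ is the Jordan block attached to the eigenvalue $1$ and the $\mathcal{P}_{0}^{(k)}$ are Jordan blocks attached to the eigenvalue $0$. Recall from the proof of Theorem \ref{TheoremC3} (which used the uniqueness of the gapped ground state on PBC) that the eigenvalue $1$ is non-degenerate, so $\mathcal{P}_{1}$ is the $1\times 1$ block $(1)$. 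Each nilpotent block $\mathcal{P}_{0}^{(k)}$ has size at most $D^{2}$, hence by Lemma \ref{lemma.ZeroJordanBlock} we have $(\mathcal{P}_{0}^{(k)})^{D^{2}}=0$ for every $k$. Consequently $J^{D^{2}}=\mathcal{P}_{1}\oplus 0\oplus 0\oplus\cdots$, which, written in the original basis, gives
\begin{equation}
\mathbb{T}^{D^{2}}=U\bigl(\mathcal{P}_{1}\oplus 0 \oplus 0\oplus\cdots\bigr)U^{-1}.
\end{equation}

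This matrix has rank one: up to a relabeling of the Jordan basis so that the $\mathcal{P}_{1}$ block sits in the top-left corner, it equals the outer product of the first column $u$ of $U$ with the first row $v$ of $U^{-1}$. Hence $\mathbb{T}^{D^{2}}=uv$ at the level of matrices, or $(\mathbb{T}^{D^{2}})_{\mathbf{h}_{1}\mathbf{h}_{3},\mathbf{h}_{2}\mathbf{h}_{4}}=u_{\mathbf{h}_{1}\mathbf{h}_{3}}v_{\mathbf{h}_{2}\mathbf{h}_{4}}$ in components. For the normalization, $v\cdot u$ is the $(1,1)$ entry of $U^{-1}U=\mathbf{1}$, i.e. $v\cdot u=1$, which is exactly Eq.~\eqref{eq.uvnorm1} and its component form Eq.~\eqref{eq.uvnorm1elements}.

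There is no real obstacle here; the only subtlety worth flagging is that the non-degeneracy of the eigenvalue $1$ is not part of the hypothesis as phrased, but is inherited from Theorem \ref{TheoremC3} (in fact from Assumption \ref{Assumption1} together with the gap). I would either state this explicitly at the start of the proof or invoke Theorem \ref{TheoremC3} directly so that the reader sees that $\mathcal{P}_{1}$ is one-dimensional and the argument above goes through. Everything else is routine linear algebra.
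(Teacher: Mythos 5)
Your proof is correct and follows essentially the same route as the paper: Jordan decomposition of $\mathbb{T}$, a one-dimensional block for the non-degenerate eigenvalue $1$, annihilation of the nilpotent blocks after $D^2$ powers via Lemma \ref{lemma.ZeroJordanBlock}, and identification of $u$ and $v$ with the first column of $U$ and first row of $U^{-1}$ so that $v\cdot u=(U^{-1}U)_{1,1}=1$. Your remark that the non-degeneracy of the eigenvalue $1$ is inherited from the unique-gapped-ground-state assumption is also consistent with how the paper handles it.
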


\begin{proof}
	Using the fact that $\mathbb{T}$ satisfies Theorem \ref{TheoremC3}, its Jordan decomposition is:
	\begin{equation}
		\mathbb{T} = U (\mathcal{P}_1 + \mathcal{P}_0) U^{-1},
	\end{equation}
	where $\mathcal{P}_1$ is the projector into the 1 dimensional Jordan block for eigenvalue 1 and $\mathcal{P}_{0}$ is the projector into the Jordan block for eigenvalue 0. Therefore, 
	\begin{equation}
		\mathbb{T}^{D^2} = U (\mathcal{P}_1^{D^2} + \mathcal{P}_0^{D^2}) U^{-1} = U \mathcal{P}_1 U^{-1},
	\end{equation}
	where we have used Lemma \ref{lemma.ZeroJordanBlock} and the fact that the size of $\mathcal{P}_0$ is smaller than $D^2 \times D^2$:
	\begin{equation}
		\mathcal{P}_0^{D^2}=0.
	\end{equation} 
	Since the Jordan block with eigenvalue $1$ is 1 dimensional, there is only one nontrivial matrix element which locates at the diagonal of $\mathcal{P}_1$. Without loss of generality, we assume that the only nonzero element of $\mathcal{P}_1$ locates at 1-th row and 1-th column. Hence, we can write this equation in terms of matrix elements 
	\begin{equation}
		\begin{split}
			\mathbb{T}^{D^2}_{\mathbf{h}_1\mathbf{h}_3,\mathbf{h}_2\mathbf{h}_4} 
			=& U_{\mathbf{h}_1\mathbf{h}_3,1} \left( U^{-1} \right)_{1,\mathbf{h}_2\mathbf{h}_4}	\\
			\equiv& u_{\mathbf{h}_1\mathbf{h}_3} v_{\mathbf{h}_2\mathbf{h}_4},
		\end{split}
	\end{equation}
	where we define
	\begin{equation}
		\begin{split}
			u_{\mathbf{h}_1\mathbf{h}_3} &\equiv U_{\mathbf{h}_1\mathbf{h}_3,1}	\\
			v_{\mathbf{h}_2\mathbf{h}_4} &\equiv \left( U^{-1} \right)_{1,\mathbf{h}_2\mathbf{h}_4}.	\\
		\end{split}
	\end{equation}
	From these definitions
	\begin{equation}
		v\cdot u= (U^{-1}\cdot U)_{1,1}=1.
	\end{equation}
	This completes the proof. 
\end{proof}

Now we explore the properties for the canonical MPS with the tensor $\breve{T}$ and Eq.~\eqref{Eq.Canonical}.

\begin{lemma}\label{lemmaD.canonicalMPS}
	For a stabilizer code, the transfer matrix of the ground state canonical MPS satisfies Eq.~\eqref{Eq.Canonical}. We prove that:
	\begin{equation}\label{Eq.B41}
		\sum_{\mathbf{h}_1} \left( \breve{\mathbb{T}}^{n} \right)_{\mathbf{h}_1\mathbf{h}_1,\mathbf{h}_2\mathbf{h}_4} = \delta_{\mathbf{h}_2\mathbf{h}_4},	\quad
		\sum_{\mathbf{h}_2} \left( \breve{\mathbb{T}}^{n} \right)_{\mathbf{h}_1\mathbf{h}_3,\mathbf{h}_2\mathbf{h}_2} = \delta_{\mathbf{h}_1\mathbf{h}_3},
	\end{equation}
	for any integer $n>0$.
\end{lemma}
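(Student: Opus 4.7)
The plan is to prove both identities in Eq.~\eqref{Eq.B41} by induction on $n$, using the $n=1$ case (Eq.~\eqref{Eq.Canonical}) as the base. The two identities are symmetric (one is a left-contraction, the other a right-contraction of the transfer matrix), so the same argument handles both with trivial modifications. I will write out the induction for the first identity; the second is identical up to swapping which pair of indices is contracted.

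The base case $n=1$ is exactly the canonical-form condition Eq.~\eqref{Eq.Canonical}, which holds by the review in App.~\ref{app.CanonicalForm} (with $\Lambda \propto I$ for stabilizer codes because their entanglement spectrum is flat). For the inductive step, assume $\sum_{\mathbf{h}_1}\bigl(\breve{\mathbb{T}}^{n}\bigr)_{\mathbf{h}_1\mathbf{h}_1,\mathbf{h}_2\mathbf{h}_4} = \delta_{\mathbf{h}_2\mathbf{h}_4}$. I would split off one transfer matrix on the left,
\begin{equation}
\sum_{\mathbf{h}_1}\bigl(\breve{\mathbb{T}}^{n+1}\bigr)_{\mathbf{h}_1\mathbf{h}_1,\mathbf{h}_2\mathbf{h}_4}
= \sum_{\mathbf{h}_1,\mathbf{h}_5,\mathbf{h}_6} \bigl(\breve{\mathbb{T}}\bigr)_{\mathbf{h}_1\mathbf{h}_1,\mathbf{h}_5\mathbf{h}_6}\bigl(\breve{\mathbb{T}}^{n}\bigr)_{\mathbf{h}_5\mathbf{h}_6,\mathbf{h}_2\mathbf{h}_4},
\end{equation}
then apply the $n=1$ identity $\sum_{\mathbf{h}_1}\bigl(\breve{\mathbb{T}}\bigr)_{\mathbf{h}_1\mathbf{h}_1,\mathbf{h}_5\mathbf{h}_6} = \delta_{\mathbf{h}_5\mathbf{h}_6}$ to collapse the first factor, leaving $\sum_{\mathbf{h}_5}\bigl(\breve{\mathbb{T}}^{n}\bigr)_{\mathbf{h}_5\mathbf{h}_5,\mathbf{h}_2\mathbf{h}_4}$, which equals $\delta_{\mathbf{h}_2\mathbf{h}_4}$ by the induction hypothesis. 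The second identity follows by splitting off one transfer matrix on the right and using the other half of Eq.~\eqref{Eq.Canonical}.

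There is no real obstacle here; the statement is essentially the observation that the ``left and right fixed points'' of $\breve{\mathbb{T}}$ are the identity matrix (viewed as a $D^2$-dimensional vector), and being a fixed point is preserved under further multiplication by $\breve{\mathbb{T}}$. The only subtlety worth flagging is that this argument genuinely relies on the canonical form being valid, which for stabilizer codes in turn relies on Assumption~\ref{Assumption3} (the bulk MPS matrices are the same for PBC and OBC) together with the flatness of the entanglement spectrum proved in Ref.~\onlinecite{PhysRevB.94.075151}; both are already invoked in App.~\ref{app.CanonicalForm} to establish the $n=1$ case, so the induction is clean.
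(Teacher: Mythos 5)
Your proof is correct and takes essentially the same route as the paper's: the paper also peels one transfer matrix off the contracted side, collapses it with the one-step canonical condition Eq.~\eqref{Eq.Canonical}, and reduces $\breve{\mathbb{T}}^{n}$ to $\breve{\mathbb{T}}^{n-1}$, iterating down to $n=1$ (phrased as ``repeatedly apply this equation'' rather than as a formal induction). The second identity is likewise dispatched by symmetry in both arguments.
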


\begin{proof}
	Using the definition of the canonical MPS in Eq.~\eqref{Eq.Canonical}, we first show that
	\begin{equation}
		\begin{split}
			\sum_{\mathbf{h}_1} \left( \breve{\mathbb{T}}^{n} \right)_{\mathbf{h}_1\mathbf{h}_1,\mathbf{h}_2\mathbf{h}_4}
			=& \sum_{\mathbf{h}_1,\mathbf{h}_5,\mathbf{h}_6} \breve{\mathbb{T}}_{\mathbf{h}_1\mathbf{h}_1,\mathbf{h}_5\mathbf{h}_6} \left( \breve{\mathbb{T}}^{n-1} \right)_{\mathbf{h}_5\mathbf{h}_6,\mathbf{h}_2\mathbf{h}_4}	\\
			=& \sum_{\mathbf{h}_5,\mathbf{h}_6} \delta_{\mathbf{h}_5\mathbf{h}_6} \left( \breve{\mathbb{T}}^{n-1} \right)_{\mathbf{h}_5\mathbf{h}_6,\mathbf{h}_2\mathbf{h}_4}	\\
			=& \sum_{\mathbf{h}_1} \left( \breve{\mathbb{T}}^{n-1} \right)_{\mathbf{h}_1\mathbf{h}_1,\mathbf{h}_2\mathbf{h}_4}.
		\end{split}
	\end{equation}
	Then we repeatedly apply this equation until there is only 1 $\breve{\mathbb{T}}$ matrix.
	\begin{equation}
		\begin{split}
			\sum_{\mathbf{h}_1} \left( \breve{\mathbb{T}}^{n} \right)_{\mathbf{h}_1\mathbf{h}_1,\mathbf{h}_2\mathbf{h}_4}
			=& \sum_{\mathbf{h}_1} \left( \breve{\mathbb{T}}^{n-1} \right)_{\mathbf{h}_1\mathbf{h}_1,\mathbf{h}_2\mathbf{h}_4}	\\
			=& \sum_{\mathbf{h}_1} \left( \breve{\mathbb{T}}^{n-2} \right)_{\mathbf{h}_1\mathbf{h}_1,\mathbf{h}_2\mathbf{h}_4}	\\
			&\vdots	\\
			=& \sum_{\mathbf{h}_1} \left( \breve{\mathbb{T}} \right)_{\mathbf{h}_1\mathbf{h}_1,\mathbf{h}_2\mathbf{h}_4}	\\
			=&\delta_{\mathbf{h}_2\mathbf{h}_4}.
		\end{split}
	\end{equation}
	Similarly, we can prove the other equation. This completes the proof.
\end{proof}

\begin{lemma}\label{lemmaD.uvproportionaltoidentity}
	For a stabilizer code, the transfer matrix of its ground state canonical MPS satisfies Theorem \ref{theoremC.transfermatrixdecomposition}. We prove that the elements of $u$ and $v$ are 
	\begin{equation}\label{eq.uvproportionaltoidentity}
		u_{\mathbf{h}_1\mathbf{h}_2} = \frac{\delta_{\mathbf{h}_1\mathbf{h}_2}}{\Tr(v)},\quad
		v_{\mathbf{h}_1\mathbf{h}_2} = \frac{\delta_{\mathbf{h}_1\mathbf{h}_2}}{\Tr(u)},
	\end{equation}
	where 
	\begin{equation}
		\Tr(u) = \sum_{\mathbf{h}} u_{\mathbf{h}\mathbf{h}}, \quad
		\Tr(v) = \sum_{\mathbf{h}} v_{\mathbf{h}\mathbf{h}}.
	\end{equation}
	In other words,
	\begin{equation}\label{eq.caonicaltransfermatrixD2}
		\left( \breve{\mathbb{T}}^{D^2} \right)_{\mathbf{h}_1\mathbf{h}_3,\mathbf{h}_2\mathbf{h}_4} 
		= \frac{1}{\Tr(u)\Tr(v)} \delta_{\mathbf{h}_1\mathbf{h}_3}\delta_{\mathbf{h}_2\mathbf{h}_4} 
		= \frac{1}{D} \delta_{\mathbf{h}_1\mathbf{h}_3}\delta_{\mathbf{h}_2\mathbf{h}_4}.
	\end{equation}
\end{lemma}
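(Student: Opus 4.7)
The plan is to combine the rank-one decomposition of $\breve{\mathbb{T}}^{D^2}$ supplied by Theorem \ref{theoremC.transfermatrixdecomposition} with the two partial-trace identities for the canonical transfer matrix, Eq.~\eqref{Eq.B41}, applied in the case $n = D^2$. Since the canonical MPS of a stabilizer code satisfies both hypotheses (it obeys Eq.~\eqref{Eq.Canonical} by Lemma~\ref{lemmaD.canonicalMPS}, and its transfer matrix has a unique nonzero eigenvalue by Theorem~\ref{TheoremC3}), we may freely use both simultaneously.

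First, I would write
\begin{equation}
\left(\breve{\mathbb{T}}^{D^2}\right)_{\mathbf{h}_1\mathbf{h}_3,\mathbf{h}_2\mathbf{h}_4} = u_{\mathbf{h}_1\mathbf{h}_3}\, v_{\mathbf{h}_2\mathbf{h}_4}
\end{equation}
from Theorem~\ref{theoremC.transfermatrixdecomposition}, and then substitute this rank-one form into the two canonical conditions of Lemma~\ref{lemmaD.canonicalMPS} at $n=D^2$. Contracting the left pair of indices gives $\Tr(u)\, v_{\mathbf{h}_2\mathbf{h}_4} = \delta_{\mathbf{h}_2\mathbf{h}_4}$, while contracting the right pair gives $u_{\mathbf{h}_1\mathbf{h}_3}\, \Tr(v) = \delta_{\mathbf{h}_1\mathbf{h}_3}$. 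As a by-product $\Tr(u)$ and $\Tr(v)$ are both nonzero, so these equations can be solved directly for the entries of $u$ and $v$, yielding exactly Eq.~\eqref{eq.uvproportionaltoidentity}.

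To obtain Eq.~\eqref{eq.caonicaltransfermatrixD2}, I would plug these expressions back into the factorized form, which immediately gives
\begin{equation}
\left(\breve{\mathbb{T}}^{D^2}\right)_{\mathbf{h}_1\mathbf{h}_3,\mathbf{h}_2\mathbf{h}_4} = \frac{\delta_{\mathbf{h}_1\mathbf{h}_3}\delta_{\mathbf{h}_2\mathbf{h}_4}}{\Tr(u)\Tr(v)}.
\end{equation}
The remaining task is to identify $\Tr(u)\Tr(v) = D$. For this I would use the normalization Eq.~\eqref{eq.uvnorm1elements}, namely $\sum_{\mathbf{h}_1,\mathbf{h}_2} u_{\mathbf{h}_1\mathbf{h}_2} v_{\mathbf{h}_1\mathbf{h}_2} = 1$: inserting the diagonal forms of $u$ and $v$ collapses the sum to $D/(\Tr(u)\Tr(v)) = 1$, so $\Tr(u)\Tr(v)=D$.

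There is no real obstacle here; the statement is essentially a direct algebraic consequence of the three previously established facts (the rank-one factorization of $\breve{\mathbb{T}}^{D^2}$, the canonical partial traces, and the normalization $v \cdot u = 1$). The only point that would require a brief sanity check is verifying that $\Tr(u)$ and $\Tr(v)$ are nonzero before dividing by them, but this is automatic because the contracted canonical conditions already force $\Tr(u)\, v = I$ and $u\, \Tr(v) = I$ as matrices, so both traces must be nonzero for consistency.
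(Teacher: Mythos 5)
Your proposal is correct and follows essentially the same route as the paper's proof: apply the rank-one factorization of $\breve{\mathbb{T}}^{D^2}$ from Theorem \ref{theoremC.transfermatrixdecomposition}, contract with the canonical partial-trace identities of Lemma \ref{lemmaD.canonicalMPS} at $n=D^2$ to solve for $u$ and $v$, and then use the normalization $v\cdot u=1$ to fix $\Tr(u)\Tr(v)=D$. No gaps.
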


\begin{proof}
	Using Theorem \ref{theoremC.transfermatrixdecomposition} for a canonical MPS, we have:
	\begin{equation}
		\left( \breve{\mathbb{T}}^{D^2} \right)_{\mathbf{h}_1\mathbf{h}_3,\mathbf{h}_2\mathbf{h}_4} = u_{\mathbf{h}_1\mathbf{h}_3} v_{\mathbf{h}_2\mathbf{h}_4}.
	\end{equation}
	Applying Lemma \ref{lemmaD.canonicalMPS} with $n=D^2$, we obtain:
	\begin{equation}
		\delta_{\mathbf{h}_2\mathbf{h}_4} = \sum_{\mathbf{h}_1}\left( \breve{\mathbb{T}}^{D^2} \right)_{\mathbf{h}_1\mathbf{h}_1,\mathbf{h}_2\mathbf{h}_4} = \Tr(u) v_{\mathbf{h}_2\mathbf{h}_4}.
	\end{equation}
	Hence, the second equation of Eq.~\eqref{eq.uvproportionaltoidentity} is proved. Similarly, we can prove the first one. Using Eqs.~\eqref{eq.uvnorm1} and \eqref{eq.uvproportionaltoidentity}, we find that 
	\begin{equation}
		v\cdot u= \frac{D}{\Tr(u)\Tr(v)}= 1.
	\end{equation} 
	This yields
	\begin{equation}
		\Tr(u)\Tr(v) = D.
	\end{equation}
	Hence, Eq.~\eqref{eq.caonicaltransfermatrixD2} is proved.
\end{proof}

Note that Lemma \ref{lemmaD.uvproportionaltoidentity} is not true for a general MPS transfer matrix. Indeed, using the similarity transformation Eq.~\eqref{eq.SimilarityTransformation}, a general MPS transfer matrix is related to a canonical one:
\begin{equation}
	\mathbb{T}_{\mathbf{h}_1\mathbf{h}_2,\mathbf{h}_3\mathbf{h}_4} = \sum_{\mathbf{h}_{5,6,7,8}} S_{\mathbf{h}_1,\mathbf{h}_5}S^\star_{\mathbf{h}_2,\mathbf{h}_6} \breve{\mathbb{T}}_{\mathbf{h}_5\mathbf{h}_6,\mathbf{h}_7\mathbf{h}_8} S^{-1}_{\mathbf{h}_7,\mathbf{h}_3}S^{-1 \star}_{\mathbf{h}_8,\mathbf{h}_4}
\end{equation}
where $S$ is the similarity transformation. Applying Lemma \ref{lemmaD.uvproportionaltoidentity}, we get:
\begin{equation}
	\begin{split}
		&\left( \mathbb{T}^{D^2} \right)_{\mathbf{h}_1\mathbf{h}_2,\mathbf{h}_3\mathbf{h}_4} 	\\
		=& \sum_{\mathbf{h}_{5,6,7,8}} S_{\mathbf{h}_1,\mathbf{h}_5}S^\star_{\mathbf{h}_2,\mathbf{h}_6} 
		\left( \breve{\mathbb{T}}^{D^2} \right)_{\mathbf{h}_5\mathbf{h}_6,\mathbf{h}_7\mathbf{h}_8} 
		S^{-1}_{\mathbf{h}_7,\mathbf{h}_3}S^{-1 \star}_{\mathbf{h}_8,\mathbf{h}_4}	\\
		=& \sum_{\mathbf{h}_{5,6,7,8}} S_{\mathbf{h}_1,\mathbf{h}_5}S^\star_{\mathbf{h}_2,\mathbf{h}_6} 
		\frac{1}{D} \delta_{\mathbf{h}_5\mathbf{h}_6}\delta_{\mathbf{h}_7\mathbf{h}_8}
		S^{-1}_{\mathbf{h}_7,\mathbf{h}_3}S^{-1 \star}_{\mathbf{h}_8,\mathbf{h}_4}	\\
		=& \frac{1}{D}\sum_{\mathbf{h}_{5,7}} S_{\mathbf{h}_1,\mathbf{h}_5}S^\star_{\mathbf{h}_2,\mathbf{h}_5} 
		S^{-1}_{\mathbf{h}_7,\mathbf{h}_3}S^{-1 \star}_{\mathbf{h}_7,\mathbf{h}_4}	\\
	\end{split}
\end{equation}
The similarity transformation $S$ is required to be invertible, but does not have to be unitary. Hence, we conclude that Lemma \ref{lemmaD.uvproportionaltoidentity} is not true for a general MPS transfer matrix.

\begin{lemma}\label{lemmaD.transfermatrixpower}
	For a stabilizer code, the transfer matrix of the ground state canonical MPS $\breve{\mathbb{T}}$ satisfies:
	\begin{equation}
		\left( \breve{\mathbb{T}}^{n} \right)_{\mathbf{h}_1\mathbf{h}_3,\mathbf{h}_2\mathbf{h}_4} =
		\frac{1}{D} \delta_{\mathbf{h}_1\mathbf{h}_3}\delta_{\mathbf{h}_2\mathbf{h}_4}, \quad \forall\; n > D^2 \in \mathbb{N}.
	\end{equation}
\end{lemma}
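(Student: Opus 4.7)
The plan is to reduce this to an induction whose base case is already established as Lemma~\ref{lemmaD.uvproportionaltoidentity} (which handles $n = D^2$) and whose inductive step is a one-line application of the canonical-MPS fixed-point identity from Eq.~\eqref{Eq.Canonical}.

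First I would observe that Lemma~\ref{lemmaD.uvproportionaltoidentity} already gives
\[
\left( \breve{\mathbb{T}}^{D^2} \right)_{\mathbf{h}_1\mathbf{h}_3,\mathbf{h}_2\mathbf{h}_4} = \frac{1}{D}\,\delta_{\mathbf{h}_1\mathbf{h}_3}\,\delta_{\mathbf{h}_2\mathbf{h}_4},
\]
so the statement holds for $n = D^2$. For the inductive step, assume the identity holds at some $n \geq D^2$, factor $\breve{\mathbb{T}}^{n+1} = \breve{\mathbb{T}}^{n}\cdot \breve{\mathbb{T}}$, and contract:
\[
\bigl(\breve{\mathbb{T}}^{n+1}\bigr)_{\mathbf{h}_1\mathbf{h}_3,\mathbf{h}_2\mathbf{h}_4}
= \sum_{\mathbf{h}_5,\mathbf{h}_6} \frac{1}{D}\,\delta_{\mathbf{h}_1\mathbf{h}_3}\,\delta_{\mathbf{h}_5\mathbf{h}_6}\, \breve{\mathbb{T}}_{\mathbf{h}_5\mathbf{h}_6,\mathbf{h}_2\mathbf{h}_4}
= \frac{1}{D}\,\delta_{\mathbf{h}_1\mathbf{h}_3}\sum_{\mathbf{h}_5} \breve{\mathbb{T}}_{\mathbf{h}_5\mathbf{h}_5,\mathbf{h}_2\mathbf{h}_4}.
\]
Applying the second line of Eq.~\eqref{Eq.Canonical} (the left fixed-point condition satisfied by the canonical transfer matrix of a stabilizer code), the remaining sum collapses to $\delta_{\mathbf{h}_2\mathbf{h}_4}$, completing the induction. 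One could equally well multiply on the right and use the other fixed-point condition; the two directions are symmetric.

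There is no real obstacle: the content of the lemma has already been concentrated in Lemma~\ref{lemmaD.uvproportionaltoidentity}, and the present statement is essentially the observation that once the transfer matrix has been raised to a high enough power to kill the zero Jordan block (Lemma~\ref{lemma.ZeroJordanBlock}), further powers act trivially on the surviving rank-one piece $uv$ because, for the canonical normalization, $v\cdot u = 1$ and both $u$ and $v$ are proportional to identities by Eq.~\eqref{eq.uvproportionaltoidentity}. The induction is just the matrix-element version of this remark. The only thing to be careful about is that Eq.~\eqref{Eq.Canonical} applies to $\breve{\mathbb{T}}$ itself, which is why we factor off a single $\breve{\mathbb{T}}$ at each inductive step rather than trying to apply Eq.~\eqref{Eq.Canonical} directly to $\breve{\mathbb{T}}^{n}$.
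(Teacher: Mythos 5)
Your proof is correct and follows essentially the same route as the paper's: both start from Lemma~\ref{lemmaD.uvproportionaltoidentity} for the $n=D^2$ case and then absorb the extra factors of $\breve{\mathbb{T}}$ using the canonical trace condition. The only difference is organizational — you peel off one $\breve{\mathbb{T}}$ at a time by induction and invoke Eq.~\eqref{Eq.Canonical} directly, whereas the paper factors $\breve{\mathbb{T}}^{n}=\breve{\mathbb{T}}^{D^2}\breve{\mathbb{T}}^{n-D^2}$ once and collapses the remainder with Lemma~\ref{lemmaD.canonicalMPS}, which is itself just the iterated form of the same identity.
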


\begin{proof}
	Using Lemma \ref{lemmaD.uvproportionaltoidentity}, we have
	\begin{equation}
		\begin{split}
			\left( \breve{\mathbb{T}}^{n} \right)_{\mathbf{h}_1\mathbf{h}_3,\mathbf{h}_2\mathbf{h}_4} 
			=& \left( \breve{\mathbb{T}}^{D^2} \breve{\mathbb{T}}^{n-D^2} \right)_{\mathbf{h}_1\mathbf{h}_3,\mathbf{h}_2\mathbf{h}_4}	\\
			=& \sum_{\mathbf{h}_5,\mathbf{h}_6} 
			\frac{1}{D} \delta_{\mathbf{h}_1\mathbf{h}_3} \delta_{\mathbf{h}_5\mathbf{h}_6} 
			\left( \breve{\mathbb{T}}^{n-D^2} \right)_{\mathbf{h}_5\mathbf{h}_6,\mathbf{h}_2\mathbf{h}_4}	\\
			=& \frac{1}{D} \delta_{\mathbf{h}_1\mathbf{h}_3} \sum_{\mathbf{h}_5} \left( \breve{\mathbb{T}}^{n-D^2} \right)_{\mathbf{h}_5\mathbf{h}_5,\mathbf{h}_2\mathbf{h}_4}.
		\end{split}
	\end{equation}
	Using Lemma \ref{lemmaD.canonicalMPS}, we obtain
	\begin{equation}\label{Eq.B55}
		\left( \breve{\mathbb{T}}^{n} \right)_{\mathbf{h}_1\mathbf{h}_3,\mathbf{h}_2\mathbf{h}_4} 
		= \frac{1}{D} \delta_{\mathbf{h}_1\mathbf{h}_3} \delta_{\mathbf{h}_2\mathbf{h}_4}.
	\end{equation}
	This completes the proof.
\end{proof}

We further remark that the Lemma \ref{lemmaD.transfermatrixpower} holds only when $n>D^2$, which is more restricted than the condition, i.e., $n>0$, for the  Lemma \ref{lemmaD.canonicalMPS} holds true. However, when we contract over the two virtual indices $\mathbf{h}_1$ and $\mathbf{h}_3$ (or $\mathbf{h}_2$ and $\mathbf{h}_4$) in Eq.~\eqref{Eq.B55}, we get Eq.~\eqref{Eq.B41}.

\section{Stabilizer Operator Acts on MPS Locally}
\label{app.Deriving0}

In this appendix, we prove that Eq.~\eqref{Eq.ConstraintTmatrices} (and its generic case Eq.~\eqref{Eq.MPSgeneral}) is a sufficient and necessary condition satisfied by any MPS description of the 1D stabilizer codes fulfilling the 3 assumptions of Sec.~\ref{Sec.MPSSC}. 

\begin{theorem}\label{TheoremD1}
	Eq.~\eqref{Eq.MPSgeneral} is a necessary and sufficient condition for Eq.~\eqref{Eq.GSproperty} when the system size $L\ge D^2+ \max\{P_1, \ldots , P_t\}$ where $P_1,P_2,\ldots,P_t$ is defined in Lemma \ref{LemmaC2}.
\end{theorem}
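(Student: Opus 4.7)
The plan is to prove the two directions separately. Sufficiency is an immediate substitution: if Eq.~\eqref{Eq.MPSgeneral} holds, then inserting $\Ocal^r_\alpha$ into the trace representation Eq.~\eqref{Eq.MPSGS} of $|\mathrm{GS}\rangle$ only acts on the contiguous block $\prod_{r'=r}^{r+P_\alpha-1}T^{g^{r'}_1\ldots g^{r'}_q}$, which is invariant, so $\Ocal^r_\alpha|\mathrm{GS}\rangle = |\mathrm{GS}\rangle$. For necessity I would introduce the defect tensor
\begin{equation*}
\Delta(\{g^{r'}_i\}_{\mathrm{in}}) \equiv \Ocal^r_\alpha\circ\prod_{r'=r}^{r+P_\alpha-1} T^{g^{r'}_1\ldots g^{r'}_q} - \prod_{r'=r}^{r+P_\alpha-1} T^{g^{r'}_1\ldots g^{r'}_q},
\end{equation*}
so that Eq.~\eqref{Eq.MPSgeneral} is the statement $\Delta\equiv 0$. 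The strategy is to compute $\|\Ocal^r_\alpha|\mathrm{GS}\rangle-|\mathrm{GS}\rangle\|^{2}$ directly, reduce it to a sum of squared Frobenius norms of $\Delta$, and then invoke Eq.~\eqref{Eq.GSproperty} to conclude $\Delta=0$.

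Before computing the norm, I would pass to the canonical form $\breve{T}$ via the similarity transformation Eq.~\eqref{eq.SimilarityTransformation}: since the similarity matrices $S$ telescope inside a PBC trace and cancel in the local matrix product defining $\Delta$, both the physical state $|\mathrm{GS}\rangle$ and Eq.~\eqref{Eq.MPSgeneral} are preserved by $T\to \breve{T}$, and it suffices to work with canonical tensors. In canonical form, expanding the square of the norm and summing the external physical indices collapses them into the $(L-P_\alpha)$-th power of the canonical transfer matrix, giving schematically
\begin{equation*}
\|\Ocal^r_\alpha|\mathrm{GS}\rangle-|\mathrm{GS}\rangle\|^{2}=\!\!\!\sum_{\substack{\{g\}_{\mathrm{in}}\\ \mathbf{h}_1\mathbf{h}_2\mathbf{h}_3\mathbf{h}_4}}\!\!\! \bigl(\breve{\mathbb{T}}^{L-P_\alpha}\bigr)_{\mathbf{h}_2\mathbf{h}_4,\mathbf{h}_1\mathbf{h}_3}\,\breve{\Delta}_{\mathbf{h}_1\mathbf{h}_2}\,\breve{\Delta}^{*}_{\mathbf{h}_3\mathbf{h}_4}.
\end{equation*}
The hypothesis $L\geq D^{2}+\max_\alpha P_\alpha$ guarantees $L-P_\alpha\geq D^{2}$, so Lemma~\ref{lemmaD.transfermatrixpower} replaces $\breve{\mathbb{T}}^{L-P_\alpha}$ by $\tfrac{1}{D}\delta_{\mathbf{h}_2\mathbf{h}_4}\delta_{\mathbf{h}_1\mathbf{h}_3}$, and the right-hand side reduces to $\tfrac{1}{D}\sum_{\{g\}_{\mathrm{in}}}\|\breve{\Delta}(\{g\}_{\mathrm{in}})\|_{F}^{2}$. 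Vanishing of this non-negative sum, forced by Eq.~\eqref{Eq.GSproperty}, forces every $\breve{\Delta}$ and hence every $\Delta$ to be zero, which is Eq.~\eqref{Eq.MPSgeneral}.

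The main obstacle is not the closing Hilbert--Schmidt manipulation, which is essentially a one-liner, but the fact that $\breve{\mathbb{T}}^{L-P_\alpha}$ actually collapses to the pair of Kronecker deltas at the \emph{finite} power $n=L-P_\alpha\geq D^{2}$. This is the content of Lemma~\ref{lemmaD.transfermatrixpower}, and it rests on Theorem~\ref{TheoremC3} that the MPS transfer matrix of a stabilizer code has a single nonzero eigenvalue, together with the Jordan decomposition of Lemma~\ref{lemma.ZeroJordanBlock}. Theorem~\ref{TheoremC3} itself is a non-trivial consequence of the cluster-decomposition properties of stabilizer-code correlators developed in Lemmas~\ref{LemmaC1}--\ref{LemmaC2}. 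For a generic MPS, $\breve{\mathbb{T}}^{n}$ would retain subleading eigenvalues and cross-terms in the squared norm could in principle hide a nonzero $\breve{\Delta}$; the stabilizer hypothesis is exactly what makes the argument go through at finite $L$.
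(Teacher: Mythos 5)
Your proof is correct, and it leans on exactly the same machinery as the paper's: reduction to the canonical form via the similarity transformation Eq.~\eqref{eq.SimilarityTransformation}, the collapse of the finite power $\breve{\mathbb{T}}^{L-P_\alpha}$ to $\tfrac{1}{D}\delta_{\mathbf{h}_1\mathbf{h}_3}\delta_{\mathbf{h}_2\mathbf{h}_4}$ guaranteed by Lemma~\ref{lemmaD.transfermatrixpower} (itself resting on Theorem~\ref{TheoremC3}), and the hypothesis $L\ge D^2+\max_\alpha P_\alpha$ used precisely to make $L-P_\alpha\ge D^2$. Where you differ is the closing step. The paper argues \emph{linearly}: it multiplies the trace identity $\Tr[(\Ocal^0_1\circ\prod_{\mathrm{in}}\breve T)\prod_{\mathrm{out}}\breve T]=\Tr[\prod_{\mathrm{in}}\breve T\prod_{\mathrm{out}}\breve T]$ by the complex conjugate of the \emph{outer} block alone, sums the outer physical indices to produce $\breve{\mathbb{T}}^{L-P_1}$, and after the delta collapse reads off the local identity $\Ocal^0_1\circ(\prod_{\mathrm{in}}\breve T)_{\mathbf{h}_2\mathbf{h}_1}=(\prod_{\mathrm{in}}\breve T)_{\mathbf{h}_2\mathbf{h}_1}$ entry by entry, with no appeal to positivity. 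You instead argue \emph{quadratically}: you square the defect, obtain $\|\Ocal^r_\alpha|\mathrm{GS}\rangle-|\mathrm{GS}\rangle\|^2=\tfrac{1}{D}\sum_{\{g\}_{\mathrm{in}}}\|\breve\Delta\|_F^2$, and use non-negativity of each Frobenius norm to kill every $\breve\Delta(\{g\}_{\mathrm{in}})$ at once. Your version buys a compact, physically transparent statement (``the defect has zero norm'') and handles all inner physical configurations in one stroke; the paper's version buys a purely linear derivation that delivers the matrix identity directly without invoking positivity. The only cosmetic mismatch is that Lemma~\ref{lemmaD.transfermatrixpower} is stated for $n>D^2$ while the hypothesis only gives $n=L-P_\alpha\ge D^2$; since Lemma~\ref{lemmaD.uvproportionaltoidentity} already establishes the delta collapse at $n=D^2$ exactly, this off-by-one is harmless, and the paper's own proof uses the lemma in the same way.
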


\begin{proof}
	By substituting Eq.~\eqref{Eq.MPSgeneral} into the left hand side of Eq.~\eqref{Eq.GSproperty}, it is trivial to show that Eq.~\eqref{Eq.MPSgeneral} is a sufficient condition for Eq.~\eqref{Eq.GSproperty}. Hence, our focus in the rest of the proof is to show that Eq.~\eqref{Eq.MPSgeneral} is also a necessary condition for Eq.~\eqref{Eq.GSproperty}. It suffices to prove this statement for a particular operator $\Ocal^0_{1}$. The proof generalizes to other operators. 
	
	The strategy of this proof is to first establish this statement for the canonical MPS $\breve{T}$ and then for a general MPS $T$. Typically, we will encounter many long equations where there are $T$-matrices with their physical indices uncontracted on both sides. Using the properties of the canonical MPS, i.e., Eq.~\eqref{Eq.Canonical}, we are able to shorten the equations by contracting out those $T$-matrices. We will use this trick many times below.   
	Similar to Sec.~\ref{Sec.Example}, Eq.~\eqref{Eq.GSproperty} for $\Ocal^0_1$ implies: (Notice that $\Ocal^0_1$ is supported from $r=0$ to $r=P_1-1$)
	\begin{equation}\label{Eq.4}
		\begin{split}
			&\Tr \left( 
			\Ocal^0_1\circ 
			\left(\prod_{r=0}^{P_1-1} \breve{T}^{g^r_1 \ldots g^r_q}\right) \cdot 
			\left(\prod_{r=P_1}^{L-1} \breve{T}^{g^r_1 \ldots g^r_q}\right)\right)	\\
			=&
			\Tr\left(\prod_{r=0}^{L-1} \breve{T}^{g^{r}_1\ldots g^{r}_q}\right).
		\end{split}
	\end{equation}
	Multiplying both sides with $\left(\prod_{r=P_1}^{L-1} \breve{T}^{g^r_1 \ldots g^r_q}\right)^\star_{\mathbf{h}_1\mathbf{h}_2}$ and summing over their physical indices, we obtain:
	\begin{widetext}
		\begin{equation}\label{Eq.C2}
			\begin{split}
				&\sum_{g^{P_1}_1 \ldots g^{P_1}_q \ldots g^{L-1}_1 \ldots g^{L-1}_q} 
				\Tr \left( 
				\Ocal^0_1\circ 
				\left(\prod_{r=0}^{P_1-1} \breve{T}^{g^r_1 \ldots g^r_q}\right) \cdot 
				\left(\prod_{r=P_1}^{L-1} \breve{T}^{g^r_1 \ldots g^r_q}\right)\right)	
				\left(\prod_{r=P_1}^{L-1} \breve{T}^{g^r_1 \ldots g^r_q}\right)^\star_{\mathbf{h}_1\mathbf{h}_2}	\\
				=&
				\sum_{g^{P_1}_1 \ldots g^{P_1}_q \ldots g^{L-1}_1 \ldots g^{L-1}_q} 
				\Tr\left(\prod_{r=0}^{L-1} \breve{T}^{g^{r}_1\ldots g^{r}_q}\right)
				\left(\prod_{r=P_1}^{L-1} \breve{T}^{g^r_1 \ldots g^r_q}\right)^\star_{\mathbf{h}_1\mathbf{h}_2}.
			\end{split}
		\end{equation}
	\end{widetext}
	Summing over the physical indices gives rise to transfer matrices. We rewrite this equation with explicit virtual indices as follows
	\begin{equation}
		\begin{split}
			&\sum_{\mathbf{h}_3\mathbf{h}_4}
			\Ocal^0_1\circ \left(\prod_{r=0}^{P_1-1} \breve{T}^{g^r_1 \ldots g^r_q}\right)_{\mathbf{h}_3\mathbf{h}_4} \left( \breve{\mathbb{T}}^{L-P_1} \right)_{\mathbf{h}_4\mathbf{h}_1,\mathbf{h}_3\mathbf{h}_2}	\\
			=&
			\sum_{\mathbf{h}_3\mathbf{h}_4}
			\left(\prod_{r=0}^{P_1-1} \breve{T}^{g^r_1 \ldots g^r_q}\right)_{\mathbf{h}_3\mathbf{h}_4} \left( \breve{\mathbb{T}}^{L-P_1} \right)_{\mathbf{h}_4\mathbf{h}_1,\mathbf{h}_3\mathbf{h}_2}.
		\end{split}
	\end{equation}
	Using Lemma \ref{lemmaD.transfermatrixpower} and considering $L\ge D^2+ \max\{P_1, \ldots , P_t\}$ as stated, we simplify
	\begin{equation}
		\begin{split}
			&\sum_{\mathbf{h}_3\mathbf{h}_4}
			\Ocal^0_1\circ \left(\prod_{r=0}^{P_1-1} \breve{T}^{g^r_1 \ldots g^r_q}\right)_{\mathbf{h}_3\mathbf{h}_4} \delta_{\mathbf{h}_4\mathbf{h}_1} \delta_{\mathbf{h}_3\mathbf{h}_2}	\\
			=&
			\sum_{\mathbf{h}_3\mathbf{h}_4}
			\left(\prod_{r=0}^{P_1-1} \breve{T}^{g^r_1 \ldots g^r_q}\right)_{\mathbf{h}_3\mathbf{h}_4} \delta_{\mathbf{h}_4\mathbf{h}_1} \delta_{\mathbf{h}_3\mathbf{h}_2}.
		\end{split}
	\end{equation}
	Equivalently,
	\begin{equation}\label{Eq.20}
		\Ocal^0_1\circ \left(\prod_{r=0}^{P_1-1} \breve{T}^{g^r_1 \ldots g^r_q}\right)_{\mathbf{h}_2\mathbf{h}_1}
		=\left(\prod_{r=0}^{P_1-1} \breve{T}^{g^r_1 \ldots g^r_q}\right)_{\mathbf{h}_2\mathbf{h}_1}.
	\end{equation}
	\begin{figure}
		\centering
		\includegraphics[width=1\columnwidth]{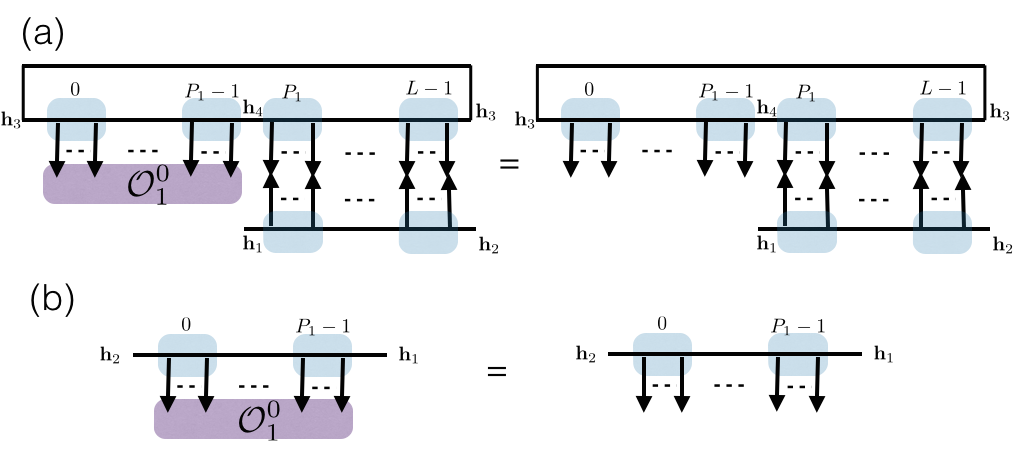}
		\caption{Graphical representation of (a) Eq.~\eqref{Eq.C2} and (b) Eq.~\eqref{Eq.20}. }
		\label{Fig_AppC}
	\end{figure}
	See Fig.~\ref{Fig_AppC} for the graphical representation of Eqs.~\eqref{Eq.C2} and \eqref{Eq.20}.  
	Notice that a general MPS tensor $T$ differs from $\breve{T}$ by a similarity transformation in Eq.~\eqref{eq.SimilarityTransformation}, then after doing a similarity transformation on both sides of Eq.~\eqref{Eq.20}, we find that an analogue equation for non-canonical MPS also holds, 
	\begin{equation}
		\Ocal^0_1\circ \left(\prod_{r=0}^{P_1-1} T^{g^r_1 \ldots g^r_q}\right)_{\mathbf{h}_2\mathbf{h}_1}
		=\left(\prod_{r=0}^{P_1-1} T^{g^r_1 \ldots g^r_q}\right)_{\mathbf{h}_2\mathbf{h}_1}.
	\end{equation}
	This completes the proof.
\end{proof}

Applying the theorem \ref{TheoremD1} to the $ZZXZZ$ model, we find that Eq.~\eqref{Eq.ConstraintTmatrices} is a necessary and sufficient condition for Eq.~\eqref{Eq.StabilizerEqsZZXZZ} when the system size is large enough, i.e., $L\geq 16+3=19$.

\section{The Action of $\Lcal$ and $\Rcal$ Operators on the MPS Matrices}
\label{app.Deriving}

\begin{theorem}\label{TheoremE1}
	Eq.~\eqref{Eq.LRgeneral} is a necessary and sufficient condition of Eq.~\eqref{Eq.MPSgeneral}.
\end{theorem}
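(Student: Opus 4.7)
The sufficient direction follows by direct substitution. Given \eqref{Eq.LRgeneral}, multiply the two identities and contract the virtual indices across the cut between the $(r+\tau-1)$-th and $(r+\tau)$-th unit cells. The factor $U^r_{\alpha,\tau}\cdot (U^r_{\alpha,\tau})^{-1}=I$ collapses, and since $\Ocal^r_\alpha=\Lcal^r_{\alpha,\tau}\cdot \Rcal^r_{\alpha,\tau}$ with $\Lcal^r_{\alpha,\tau}$ and $\Rcal^r_{\alpha,\tau}$ acting on disjoint sets of unit cells, we recover Eq.~\eqref{Eq.MPSgeneral}. So most of the effort goes into the converse.

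For necessity, define the left and right ``excited blocks''
\begin{equation*}
A^{\{g^{r},\ldots,g^{r+\tau-1}\}} \equiv \Lcal^r_{\alpha,\tau}\circ\prod_{r'=r}^{r+\tau-1}T^{g^{r'}_1\ldots g^{r'}_q},
\end{equation*}
\begin{equation*}
B^{\{g^{r+\tau},\ldots,g^{r+P_\alpha-1}\}} \equiv \Rcal^r_{\alpha,\tau}\circ\prod_{r'=r+\tau}^{r+P_\alpha-1}T^{g^{r'}_1\ldots g^{r'}_q},
\end{equation*}
together with the unexcited counterparts $T^{(L)}$ and $T^{(R)}$. Eq.~\eqref{Eq.MPSgeneral} then reads $A\cdot B=T^{(L)}\cdot T^{(R)}$ pointwise in the physical indices on the support of $\Ocal^r_\alpha$, where $\cdot$ is contraction over the cut virtual bond. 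The goal is to exhibit an invertible $D\times D$ matrix $U^r_{\alpha,\tau}$, independent of the physical indices, such that $A=T^{(L)}\cdot U^r_{\alpha,\tau}$ and $B=(U^r_{\alpha,\tau})^{-1}\cdot T^{(R)}$.

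To extract $U^r_{\alpha,\tau}$ I would first pass to the canonical form $\breve T$ via the similarity transformation \eqref{eq.SimilarityTransformation}; it is enough to prove the statement for $\breve T$, since a common change of basis $T=S\breve T S^{-1}$ intertwines $U^r_{\alpha,\tau}$ with its conjugate by $S$. In the canonical frame I would pad the chain with at least $D^2$ extra unit cells on both sides of the support of $\Ocal^r_\alpha$, close the PBC, and contract both sides of $A\cdot B=\breve T^{(L)}\cdot \breve T^{(R)}$ against the complex conjugate MPS on the padding sites, exactly as in App.~\ref{app.Deriving0}. Lemma~\ref{lemmaD.transfermatrixpower} collapses the long strings of $\breve{\mathbb{T}}$ to $\tfrac{1}{D}\delta_{\mathbf{h}_1\mathbf{h}_3}\delta_{\mathbf{h}_2\mathbf{h}_4}$, which decouples the outer virtual indices and leaves an identity purely on the support of $\Ocal^r_\alpha$. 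One then uses the injectivity of the canonical MPS map on a block of $\tau$ unit cells (equivalently the left-canonical identity in Eq.~\eqref{Eq.Canonical}) to solve for the unique $U^r_{\alpha,\tau}$ with $A=\breve T^{(L)}\cdot U^r_{\alpha,\tau}$; substituting this back into $A\cdot B=\breve T^{(L)}\cdot \breve T^{(R)}$ and again invoking injectivity, now from the right, forces $B=(U^r_{\alpha,\tau})^{-1}\cdot \breve T^{(R)}$. Invertibility of $U^r_{\alpha,\tau}$ follows from the fact that $\Lcal^r_{\alpha,\tau}$ is a product of Pauli operators and hence unitary on the physical Hilbert space of the $\tau$ involved unit cells.

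The main technical obstacle is the last step: certifying that the extracted $U^r_{\alpha,\tau}$ is genuinely independent of the physical indices $\{g^{r'}_i\}$, i.e.\ that it really lives only on the cut virtual bond. This hinges on the ``injectivity after blocking'' property enjoyed by the canonical MPS of a stabilizer code with a unique PBC ground state — once the block covers enough unit cells, the collection of canonical matrices $\{\breve T^{g^r_1\ldots g^r_q}\}$ has full column (and row) rank when viewed as a map between virtual spaces, so the factorization $A=\breve T^{(L)}\cdot U^r_{\alpha,\tau}$ is unambiguous and the resulting $U^r_{\alpha,\tau}$ is unique. Assumptions~\ref{Assumption1}--\ref{Assumption3}, together with the canonical-form results of App.~\ref{app.CanonicalForm} and App.~\ref{app.CorrelationAndTransferMatrix}, are exactly what is needed to guarantee this injectivity.
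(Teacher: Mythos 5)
Your proposal follows essentially the same route as the paper's proof in App.~\ref{app.Deriving}: pass to the canonical form via Eq.~\eqref{eq.SimilarityTransformation}, contract the relation $\Lcal\circ\breve T^{(L)}\cdot\Rcal\circ\breve T^{(R)}=\breve T^{(L)}\cdot\breve T^{(R)}$ against the conjugated right block so that the canonical conditions Eq.~\eqref{Eq.Canonical} (through Lemma~\ref{lemmaD.canonicalMPS}) isolate an explicit virtual operator $\breve U^r_{\alpha,\tau}$ manifestly independent of the left block's physical indices, and transfer back to a general MPS by the same similarity transformation. The one under-justified step is your closing invertibility claim: unitarity of $\Lcal^r_{\alpha,\tau}$ on the physical Hilbert space does not by itself make the virtual operator invertible; the paper instead applies $\Lcal^r_{\alpha,\tau}$ twice, uses $(\Lcal^r_{\alpha,\tau})^2=I$ to get $\breve T\cdot(\breve U^r_{\alpha,\tau})^2=\breve T$, and contracts with $\breve T^{\dagger}$ via Eq.~\eqref{Eq.Canonical} to conclude $(\breve U^r_{\alpha,\tau})^2=I$ --- a two-line computation you should substitute for the appeal to physical unitarity.
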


\begin{proof}
	
	It is trivial to show that Eq.~\eqref{Eq.LRgeneral} is a sufficient condition of Eq.~\eqref{Eq.MPSgeneral}. Our focus in this proof is to show that it is also a necessary condition. Without loss of generality, we only need to prove this for a particular pair of $\Lcal$ and $\Rcal$ operators, $\Lcal^r_{1, 1}$ and $\Rcal^r_{1, 1}$.
	
	The strategy of this proof is to first establish this statement for the canonical MPS $\breve{T}$ and then for a general MPS $T$.
	The matrix element $\breve{T}^{g^{r}_{1}\ldots g^{r}_{q}}_{\mathbf{h}_1, \mathbf{h}_2}$ of a canonical MPS satisfies Eq.~\eqref{Eq.Canonical}. We start with Eq.~\eqref{Eq.MPSgeneral}, and restore the virtual indices as follows,
	\begin{equation}\label{Eq.B2}
		\begin{split}
			&\sum_{\mathbf{h}_2}
			\left(\Lcal^r_{1,1}\circ \breve{T}^{g^{r}_{1}\ldots g^{r}_{q}} \right)_{\mathbf{h}_1, \mathbf{h}_2} 
			\left( \Rcal^r_{1,1}
			\circ 
			\left( \prod_{r'=r+1}^{r+P_1-1} \breve{T}^{g^{r'}_{1} \ldots g^{r'}_q} \right)_{\mathbf{h}_2, \mathbf{h}_3} \right)\\
			&=\sum_{\mathbf{h}_2} \breve{T}^{g^{r}_{1}\ldots g^{r}_{q}} \left(\prod_{r'=r+1}^{r+P_1-1} \breve{T}^{g^{r'}_{1} \ldots g^{r'}_q} \right)_{\mathbf{h}_2, \mathbf{h}_3}.
		\end{split}
	\end{equation}
	Multiplying $\textstyle \left(\prod_{r'=r+1}^{r+P_1-1} \breve{T}^{g^{r'}_{1} \ldots g^{r'}_q} \right)^\star_{\mathbf{h}_4, \mathbf{h}_3}$ on both sides of the Eq.~\eqref{Eq.B2}, and summing over both the physical indices $g^{r'}_{1},\ldots,g^{r'}_{q}$ with $r+1\leq r'\leq r+P_1-1$ and the virtual index $\mathbf{h}_3$, we find that 
	\begin{widetext}
		\begin{equation}\label{Eq.B3}
			\begin{split}
				&\sum_{\mathbf{h}_2, \mathbf{h}_3, g^{r'}_{1},\ldots,g^{r'}_{q}|_{r+1\leq r'\leq r+P_1-1}}
				\left(\Lcal^r_{1,1}\circ \breve{T}^{g^{r}_{1}\ldots g^{r}_{q}} \right)_{\mathbf{h}_1, \mathbf{h}_2} 
				\left(\Rcal^r_{1,1}\circ \left(\prod_{r'=r+1}^{r+P_1-1} \breve{T}^{g^{r'}_{1} \ldots g^{r'}_q} \right) \right)_{\mathbf{h}_2, \mathbf{h}_3}
				\left(\prod_{r'=r+1}^{r+P_1-1} \breve{T}^{g^{r'}_{1} \ldots g^{r'}_q} \right)^\star_{\mathbf{h}_4, \mathbf{h}_3}	\\
				=&\sum_{\mathbf{h}_2, \mathbf{h}_3, g^{r'}_{1},\ldots,g^{r'}_{q}|_{r+1\leq r'\leq r+P_1-1}} 
				\breve{T}^{g^{r}_{1}\ldots g^{r}_{q}}_{\mathbf{h}_1, \mathbf{h}_2}
				\left( \prod_{r'=r+1}^{r+P_1-1} \breve{T}^{g^{r'}_{1} \ldots g^{r'}_q} \right)_{\mathbf{h}_2, \mathbf{h}_3}
				\left( \prod_{r'=r+1}^{r+P_1-1} \breve{T}^{g^{r'}_{1} \ldots g^{r'}_q} \right)^\star_{\mathbf{h}_4, \mathbf{h}_3}	\\
				=&\sum_{\mathbf{h}_2, \mathbf{h}_3} 
				\breve{T}^{g^{r}_{1}\ldots g^{r}_{q}}_{\mathbf{h}_1, \mathbf{h}_2}
				\left( \breve{\mathbb{T}}^{P_1-1} \right)_{\mathbf{h}_2\mathbf{h}_4, \mathbf{h}_3\mathbf{h}_3}	\\
				=&\sum_{\mathbf{h}_2} 
				\breve{T}^{g^{r}_{1}\ldots g^{r}_{q}}_{\mathbf{h}_1, \mathbf{h}_2}
				\delta_{\mathbf{h}_4, \mathbf{h}_2}	\\
				=&\breve{T}^{g^{r}_{1}\ldots g^{r}_{q}}_{\mathbf{h}_1, \mathbf{h}_4},
			\end{split}
		\end{equation}
		where in the third equality, we use Lemma \ref{lemmaD.canonicalMPS}. Let us define
		\begin{equation}\label{eq.notationUforL}
			\begin{split}
				(\breve{U}^r_{1,1})_{\mathbf{h}_2, \mathbf{h}_4}\equiv 
				\sum_{\mathbf{h}_3, g^{r'}_{1},\ldots ,g^{r'}_{q}|_{r+1\leq r'\leq r+P_1-1}}\left(\Rcal^r_{1,1}\circ \left(\prod_{r'=r+1}^{r+P_1-1} \breve{T}^{g^{r'}_{1} \ldots g^{r'}_q} \right) \right)_{\mathbf{h}_2, \mathbf{h}_3}
				\left(\prod_{r'=r+1}^{r+P_1-1} \breve{T}^{g^{r'}_{1} \ldots g^{r'}_q} \right)^\star_{\mathbf{h}_4, \mathbf{h}_3}.
			\end{split}
		\end{equation}
			\begin{figure}[t]
				\centering
				\includegraphics[width=0.7\columnwidth]{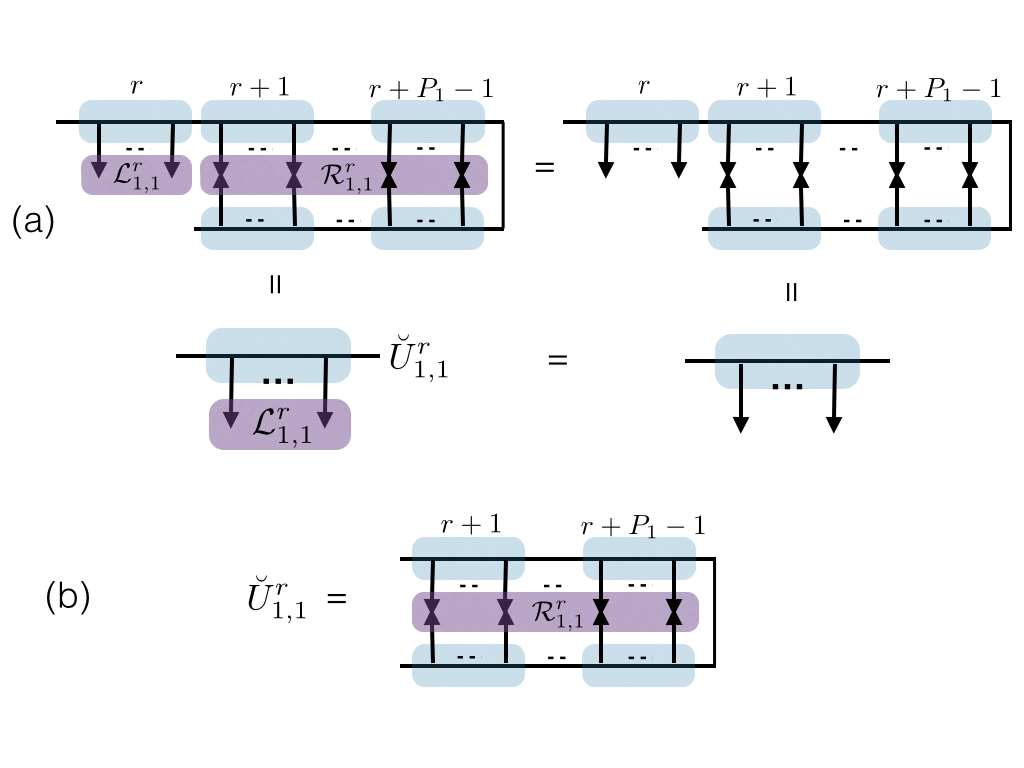}
				\caption{Graphical representation of (a) Eq.~\eqref{Eq.B3} and (b) the virtual operator $U^r_{1,1}$. }
				\label{Fig_Deriving}
			\end{figure}
	\end{widetext}
	LHS of Eq.~\eqref{Eq.B3} becomes
	\begin{equation}\label{eq.LHSofB3}
		\sum_{\mathbf{h}_2}(\Lcal^r_{1,1}\circ \breve{T}^{g^{r}_{1}\ldots g^{r}_{q}})_{\mathbf{h}_1, \mathbf{h}_2} (\breve{U}^r_{1,1})_{\mathbf{h}_2, \mathbf{h}_4}.
	\end{equation}
	Eq.~\eqref{Eq.B3} and the definition of $U^r_{1,1}$ are graphically represented in $(a)$ and $(b)$ of Fig.~\ref{Fig_Deriving} respectively. Combining Eqs.~\eqref{Eq.B3}, \eqref{eq.notationUforL} and \eqref{eq.LHSofB3}, we find
	\begin{equation}\label{Eq.D5}
		\sum_{\mathbf{h}_2}(\Lcal^r_{1,1}\circ \breve{T}^{g^{r}_{1}\ldots g^{r}_{q}})_{\mathbf{h}_1, \mathbf{h}_2} (\breve{U}^r_{1,1})_{\mathbf{h}_2, \mathbf{h}_4}=\breve{T}^{g^{r}_{1}\ldots g^{r}_{q}}_{\mathbf{h}_1, \mathbf{h}_4}.
	\end{equation}
	Applying $\Lcal^r_{1,1}$ on both sides, since $(\Lcal^r_{1,1})^2$ is an identity operator\footnote{This is because the Hamiltonian terms are Hermitian, and should be product of Hermitian operators, i.e. Pauli operators $X, Y$ and $Z$. }, we obtain
	\begin{equation}\label{Eq.B6}
		\begin{split}
			\sum_{\mathbf{h}_2}( \breve{T}^{g^{r}_{1}\ldots g^{r}_{q}})_{\mathbf{h}_1, \mathbf{h}_2} (\breve{U}^r_{1,1})_{\mathbf{h}_2, \mathbf{h}_4}=(\Lcal^r_{1,1}\circ \breve{T}^{g^{r}_{1}\ldots g^{r}_{q}})_{\mathbf{h}_1, \mathbf{h}_4}.
		\end{split}
	\end{equation}
	This is one of the first set of equations in Eq.~\eqref{Eq.LRgeneral} when the tensors are canonical. Substituting the RHS of Eq.~\eqref{Eq.B6} into the LHS of Eq.~\eqref{Eq.D5}, we find 
	\begin{eqnarray}
		\sum_{\mathbf{h}_2}(\breve{T}^{g^{r}_{1}\ldots g^{r}_{q}})_{\mathbf{h}_1, \mathbf{h}_2} [(\breve{U}^r_{1,1})^2]_{\mathbf{h}_2, \mathbf{h}_3}=(\breve{T}^{g^{r}_{1}\ldots g^{r}_{q}})_{\mathbf{h}_1, \mathbf{h}_3}.
	\end{eqnarray}
	Using the property of the canonical form Eq.~\eqref{Eq.Canonical}, we obtain that $(\breve{U}^r_{1,1})^2=I$ is an identity operator, hence
	\begin{eqnarray}
		\breve{U}^r_{1,1}=(\breve{U}^r_{1,1})^{-1}.
	\end{eqnarray}
	In particular, the $U$ matrices are invertible. 
	Since the product $\Lcal^{r}_{1,1}\Rcal^r_{1,1}$ leaves $\breve{T}^{g^{r}_{1}\ldots g^{r}_{q}}\cdot \prod_{r'=r+1}^{r+P_1-1}\breve{T}^{g^{r'}_{1}\ldots g^{r'}_{q}}$ invariant, $\Rcal^r_{1,1}$ has to transform $\prod_{r'=r+1}^{r+P_1-1}\breve{T}^{g^{r'}_{1}\ldots g^{r'}_{q}}$ as
	\begin{equation}\label{Eq.B7}
		\begin{split}
			& \left( \Rcal^r_{1,1}\circ \left( \prod_{r'=r+1}^{r+P_1-1} \breve{T}^{g^{r'}_{1} \ldots g^{r'}_q} \right) \right)_{\mathbf{h}_1, \mathbf{h}_4}	\\
			=& \sum_{\mathbf{h}_2}
			(\breve{U}^r_{1,1})^{-1}_{\mathbf{h}_1, \mathbf{h}_2}\left( \prod_{r'=r+1}^{r+P_1-1} \breve{T}^{g^{r'}_{1} \ldots g^{r'}_q} \right)_{\mathbf{h}_2, \mathbf{h}_4},
		\end{split}
	\end{equation}
	which is one of the second set of equations in Eq.~\eqref{Eq.LRgeneral} when the tensors are canonical. In Eq.~\eqref{Eq.B7}, we use $(\breve{U}^r_{1,1})^{-1}$ explicitly to manifest the fact that $(\Lcal^r_{1,1}\Rcal^r_{1,1})$ leaves the MPS invariant. Similarly, we can prove for other pairs of $\Lcal$ and $\Rcal$ operators. Therefore, we have completed the proof for the canonical MPS $\breve{T}$.

	For a generic MPS $T^{g^r_1\ldots g^r_q}$, it is related to its canonical form via a similarity transformation, Eq.~\eqref{eq.SimilarityTransformation}. The equations that $T$ obeys can be inferred from those $\breve{T}$ obeys in Eqs.~\eqref{Eq.B6} and \eqref{Eq.B7}:
	\begin{widetext}
		\begin{equation}
			\begin{split}
				&\sum_{\mathbf{h}_2}( T^{g^{r}_{1}\ldots g^{r}_{q}})_{\mathbf{h}_1, \mathbf{h}_2} (U^r_{1,1})_{\mathbf{h}_2, \mathbf{h}_4}=(\Lcal^r_{1,1}\circ T^{g^{r}_{1}\ldots g^{r}_{q}})_{\mathbf{h}_1, \mathbf{h}_4}\\&
				\sum_{\mathbf{h}_2}({U}^r_{1,1})^{-1}_{\mathbf{h}_1, \mathbf{h}_2}\left( \prod_{r'=r+1}^{r+P_1-1} {T}^{g^{r'}_{1} \ldots g^{r'}_q} \right)_{\mathbf{h}_2, \mathbf{h}_4}=\left(\Rcal^r_{1,1}\circ \left( \prod_{r'=r+1}^{r+P_1-1} {T}^{g^{r'}_{1} \ldots g^{r'}_q} \right)\right)_{\mathbf{h}_1, \mathbf{h}_4},
			\end{split}
		\end{equation}
	\end{widetext}
	where
	\begin{equation}
		U^r_{1,1}=S\cdot \breve{U}^r_{1,1}\cdot S^{-1}.
	\end{equation}
	where $S$ is the similarity transformation defined in Eq.~\eqref{eq.SimilarityTransformation}.
	Similarly for other pairs of $\Lcal$ and $\Rcal$ operators. Therefore, Eq.~\eqref{Eq.LRgeneral} also holds. This completes the proof. 
\end{proof}

Applying Theorem \ref{TheoremE1} to the $ZZXZZ$ model, we find that Eqs.~\eqref{Eq.LRsingle1}, \eqref{Eq.LRsingle2} and \eqref{Eq.LRsingle3} are necessary and sufficient conditions for Eq.~\eqref{Eq.ConstraintTmatrices}. 

\section{Commutation Relations of $U$ Operators}
\label{app.Ucommutation}

\begin{theorem}
	(Eq.~\eqref{Eq.LRgeneral}) $U_{\alpha,\tau}^{r}$ operators have the same commutation/anti-commutation relation as the $\Lcal_{\alpha,\tau}^{r}$ operators or $\Rcal_{\alpha,\tau}^{r}$ operators.
\end{theorem}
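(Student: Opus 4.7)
My plan is to work in two stages. In the first stage, I would verify that the $\Lcal$'s and $\Rcal$'s already share the same commutation relations, so that the single matrix $\mathbf{t}$ in Eq.~\eqref{Eq.Algebra} is well-defined. Since $\Lcal^{r'}_{\alpha',\tau'}$ and $\Rcal^{r'}_{\alpha',\tau'}$ act on disjoint sets of unit cells they commute with each other, and similarly for the primed pair. Expanding the commutativity of the stabilizers
\begin{equation*}
\Ocal^{r'}_{\alpha'}\Ocal^{r''}_{\alpha''}=\Ocal^{r''}_{\alpha''}\Ocal^{r'}_{\alpha'},\qquad \Ocal^{r}_{\alpha}=\Lcal^{r}_{\alpha,\tau}\Rcal^{r}_{\alpha,\tau},
\end{equation*}
and moving the $\Lcal$'s past the $\Rcal$'s of the same index without signs, the total sign must equal 1, so the sign from swapping the two $\Lcal$'s equals that from swapping the two $\Rcal$'s modulo 2. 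This yields the first two lines of Eq.~\eqref{Eq.Algebra} with a common exponent $\mathbf{t}$.

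In the second stage, I would lift this commutation relation to the virtual $U$ operators. The key idea is to apply Eq.~\eqref{Eq.LRgeneral} twice, for the shifted operators $\Lcal_{\alpha',\tau'}^{r-(\tau'-1)}$ and $\Lcal_{\alpha'',\tau''}^{r-(\tau''-1)}$, both of which by construction leave their $U$ image on the same virtual bond (between unit cells $r$ and $r+1$). Choose an integer $r_L$ small enough that the interval $[r_L,r]$ contains the supports of both operators, and consider the product of canonical MPS tensors $\prod_{s=r_L}^{r}\breve{T}^{g^s_1\ldots g^s_q}$. Applying the two operators in opposite orders and using Eq.~\eqref{Eq.LRgeneral} in each step (extending it trivially across the tensors not touched by $\Lcal$) gives
\begin{equation*}
\Big(\prod_{s=r_L}^{r}\breve{T}^{g^s_1\ldots g^s_q}\Big)\cdot \breve{U}_{\alpha',\tau'}^{r-(\tau'-1)}\breve{U}_{\alpha'',\tau''}^{r-(\tau''-1)}
=(-1)^{\mathbf{t}_{(\alpha'\tau'),(\alpha''\tau'')}^{r-(\tau'-1),r-(\tau''-1)}}\Big(\prod_{s=r_L}^{r}\breve{T}^{g^s_1\ldots g^s_q}\Big)\cdot \breve{U}_{\alpha'',\tau''}^{r-(\tau''-1)}\breve{U}_{\alpha',\tau'}^{r-(\tau'-1)}.
\end{equation*}

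Next I would strip off the MPS prefactor to expose the $U$-algebra. Multiplying both sides by $\big(\prod_{s=r_L}^{r}\breve{T}^{g^s_1\ldots g^s_q}\big)^{\star}_{\mathbf{h},\mathbf{h}'}$, summing over the physical indices and over one virtual index, and invoking the canonical identities of Lemma~\ref{lemmaD.canonicalMPS} exactly as in Eq.~\eqref{Eq.B3} collapses the MPS block into a Kronecker delta on the left virtual index. This leaves the bare operator identity for the $\breve{U}$'s on the remaining bond, i.e., the third line of Eq.~\eqref{Eq.Algebra}. Finally, transporting to a generic (non-canonical) MPS via the similarity $U=S\,\breve{U}\,S^{-1}$ of Eq.~\eqref{eq.SimilarityTransformation} preserves commutation relations, and the same statement for the $\Rcal$'s follows immediately by repeating the argument with Eq.~\eqref{Eq.LRgeneral}'s right half or by invoking the equality of $\Lcal$- and $\Rcal$-commutators from the first stage.

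The main obstacle is justifying the cancellation of the $\prod \breve{T}$ prefactor: a priori the equality $A\cdot X=A\cdot Y$ does not imply $X=Y$ unless $A$ is left-invertible in the appropriate sense. This is precisely where the canonical form is essential, because only for the canonical MPS do the transfer matrix identities of Lemma~\ref{lemmaD.canonicalMPS} produce the required delta functions upon contraction; for a generic MPS one must first pass through the canonical representative. I expect the rest of the argument, including bookkeeping of the shifts $r\to r-(\tau-1)$ so that both $U$'s land on the same virtual bond, to be straightforward once this step is in place.
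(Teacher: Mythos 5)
Your proposal is correct and follows essentially the same route as the paper's proof in App.~\ref{app.Ucommutation}: apply Eq.~\eqref{Eq.LRgeneral} twice in the two orders to a block of canonical MPS tensors, contract against the conjugate block and use the canonical conditions (Lemma~\ref{lemmaD.canonicalMPS}) to strip off the $\prod\breve{T}$ prefactor, then transport to a general MPS via the similarity transformation of Eq.~\eqref{eq.SimilarityTransformation}. Your preliminary stage deriving directly that the $\Lcal$'s and $\Rcal$'s share the same sign matrix is a small addition that the paper only asserts in a footnote, and your closing remark correctly pinpoints the cancellation of the MPS prefactor as the step that genuinely requires the canonical form.
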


\begin{proof}
	For convenience, we first denote:
	\begin{equation}
		\Lcal^r_{\alpha,\tau} \Lcal^r_{\alpha^\prime,\tau^\prime} = (-1)^{\mathbf{t}^{r}_{\alpha\tau,\alpha^\prime\tau^\prime}} \Lcal^r_{\alpha^\prime,\tau^\prime} \Lcal^r_{\alpha,\tau}.
	\end{equation}
	where $\mathbf{t}^{r}_{\alpha\tau,\alpha^\prime\tau^\prime}$ is an integer.
	Consider these two operators acting on the tensors of the canonical MPS $\breve{T}$:
	\begin{equation}
		\begin{split}
			&\Lcal^r_{\alpha, \tau} \Lcal^r_{\alpha^\prime, \tau^\prime} \circ \left(\prod_{r'=r}^{r+\tau-1} \breve{T}^{g^{r'}_{1}\ldots g^{r'}_q} \right)	\\
			=&
			(-1)^{\mathbf{t}^{r}_{\alpha\tau,\alpha^\prime\tau^\prime}} 
			\Lcal^r_{\alpha^\prime, \tau^\prime} \Lcal^r_{\alpha, \tau} \circ \left(\prod_{r'=r}^{r+\tau-1} \breve{T}^{g^{r'}_{1}\ldots g^{r'}_q} \right).
		\end{split}
	\end{equation}
	Apply Eq.~\eqref{Eq.LRgeneral} to both sides of the equation twice when the tensor in Eq.~\eqref{Eq.LRgeneral} is the canonical one $\breve{T}$:
	\begin{equation}
		\begin{split}
			& \left(\prod_{r'=r}^{r+\tau-1} \breve{T}^{g^{r'}_{1}\ldots g^{r'}_q} \right) \breve{U}_{\alpha^\prime,\tau^\prime}^{r} \breve{U}_{\alpha,\tau}^{r} \\
			=& (-1)^{\mathbf{t}^{r}_{\alpha\tau,\alpha^\prime\tau^\prime}} 
			\left(\prod_{r'=r}^{r+\tau-1} \breve{T}^{g^{r'}_{1}\ldots g^{r'}_q} \right) \breve{U}_{\alpha,\tau}^{r} \breve{U}_{\alpha^\prime,\tau^\prime}^{r}.
		\end{split}
	\end{equation}
	Multiply both sides with $\left(\prod_{r'=r}^{r+\tau-1} \breve{T}^{g^{r'}_{1}\ldots g^{r'}_q} \right)^\dagger$ and sum over the physical indices:
	\begin{widetext}
		\begin{equation}
			\begin{split}
				&\sum_{g^{r}_{1}\ldots g^{r}_q \ldots g^{r+\tau-1}_{1}\ldots g^{r+\tau-1}_q} 
				\left(\prod_{r'=r}^{r+\tau-1} \breve{T}^{g^{r'}_{1}\ldots g^{r'}_q} \right)^\dagger \left(\prod_{r'=r}^{r+\tau-1} 
				\breve{T}^{g^{r'}_{1}\ldots g^{r'}_q} \right) 
				\breve{U}_{\alpha^\prime,\tau^\prime}^{r} \breve{U}_{\alpha,\tau}^{r}	\\
				=& (-1)^{\mathbf{t}^{r}_{\alpha\tau,\alpha^\prime\tau^\prime}} 
				\sum_{g^{r}_{1}\ldots g^{r}_q \ldots g^{r+\tau-1}_{1}\ldots g^{r+\tau-1}_q} 
				\left(\prod_{r'=r}^{r+\tau-1} \breve{T}^{g^{r'}_{1}\ldots g^{r'}_q} \right)^\dagger 
				\left(\prod_{r'=r}^{r+\tau-1} \breve{T}^{g^{r'}_{1}\ldots g^{r'}_q} \right) 
				\breve{U}_{\alpha,\tau}^{r} \breve{U}_{\alpha^\prime,\tau^\prime}^{r}.
			\end{split}
		\end{equation}
	\end{widetext}
	Using the canonical conditions Eq.~\eqref{Eq.Canonical}, we can find that:
	\begin{equation}
		\breve{U}_{\alpha^\prime,\tau^\prime}^{r} \breve{U}_{\alpha,\tau}^{r} = 
		(-1)^{\mathbf{t}^{r}_{\alpha\tau,\alpha^\prime\tau^\prime}} \breve{U}_{\alpha,\tau}^{r} \breve{U}_{\alpha^\prime,\tau^\prime}^{r}.
	\end{equation}
	Hence, we have completed the proof that $\breve{U}_{\alpha,\tau}^{r}$ operators form the same commutation relations as the $\Lcal_{\alpha,\tau}^{r}$ does. Similarly, we can prove that the $\breve{U}_{\alpha,\tau}^{r}$ operators form the same commutation relations as the $\Rcal_{\alpha,\tau}^{r}$ does. 
	
	We further discuss the case where the MPS matrix ${T}^{g^{r}_{1}\ldots g^{r}_q} $ is not canonical. Since ${T}^{g^{r}_{1}\ldots g^{r}_q} $ is related to its canonical form via a similarity transformation Eq.~\eqref{eq.SimilarityTransformation}, the virtual $U$ operator is related to $\breve{U}$ via the same similarity transformation, $S$, i.e., $U^{r}_{\alpha, \tau}= S\cdot \breve{U}^{r}_{\alpha, \tau}\cdot S^{-1}$. Hence 
	\begin{equation}
		\begin{split}
			{U}_{\alpha^\prime,\tau^\prime}^{r} {U}_{\alpha,\tau}^{r} &= S\cdot \breve{U}_{\alpha^\prime,\tau^\prime}^{r} \cdot S^{-1}\cdot S\cdot \breve{U}_{\alpha,\tau}^{r}\cdot S^{-1}\\&=S\cdot \breve{U}_{\alpha^\prime,\tau^\prime}^{r} \breve{U}_{\alpha,\tau}^{r}\cdot S^{-1}
			\\&=
			(-1)^{\mathbf{t}^{r}_{\alpha\tau,\alpha^\prime\tau^\prime}} S\cdot \breve{U}_{\alpha,\tau}^{r} \breve{U}_{\alpha^\prime,\tau^\prime}^{r}\cdot S^{-1}\\&=
			(-1)^{\mathbf{t}^{r}_{\alpha\tau,\alpha^\prime\tau^\prime}} {U}_{\alpha,\tau}^{r} {U}_{\alpha^\prime,\tau^\prime}^{r}.
		\end{split}
	\end{equation}
	So the virtual $U$ operators (associated to the non-canonical MPS) also satisfy the same commutation relation as the physical $\Lcal$ operators. 
\end{proof}

\section{Linear Equations for Local Tensors}
\label{app.Linearizing}

In this appendix, we prove that Eq.~\eqref{Eq.11} (and its generalization Eq.~\eqref{Eq.LFinial}) is a necessary and sufficient condition of Eqs.~\eqref{Eq.LRsingle1}, \eqref{Eq.LRsingle2} and \eqref{Eq.LRsingle3} (and their generalization Eq.~\eqref{Eq.LRgeneral}). 

\begin{theorem}\label{TheoremG1}
	Eq.~\eqref{Eq.LFinial} is a necessary and sufficient condition of Eq.~\eqref{Eq.LRgeneral}.
\end{theorem}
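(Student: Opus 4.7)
The plan is to prove the equivalence in two directions, leveraging the local decomposition $\Lcal^{r-\tau+1}_{\alpha,\tau}=\prod_{\mu=1}^{\tau}o^{r-\tau+1}_{\alpha,\mu}$ with each $o^{r-\tau+1}_{\alpha,\mu}$ acting solely on the $(r-\tau+\mu)$-th unit cell (and analogously for $\Rcal$). The key observation is that $(\Lcal^{r-\tau+1}_{\alpha,\tau-1})^{-1}\Lcal^{r-\tau+1}_{\alpha,\tau}=o^{r-\tau+1}_{\alpha,\tau}$ is supported only on the $r$-th unit cell, and symmetrically for the $\Rcal$ combinations, so Eq.~\eqref{Eq.LFinial} is genuinely a local system acting on the single tensor $T^{g^r_1\ldots g^r_q}$.

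For the sufficiency direction (Eq.~\eqref{Eq.LFinial}$\Rightarrow$Eq.~\eqref{Eq.LRgeneral}), I would concatenate the local equations by induction on $\tau$. Starting from the first equation $\Lcal^r_{\alpha,1}\circ T^{g^r_1\ldots g^r_q}=T^{g^r_1\ldots g^r_q}\cdot U^r_{\alpha,1}$ and multiplying successively by the middle equations, the virtual operators telescope: writing $\Lcal^{r-\tau+1}_{\alpha,\tau}=\Lcal^{r-\tau+1}_{\alpha,\tau-1}\,o^{r-\tau+1}_{\alpha,\tau}$ and using the middle equation of Eq.~\eqref{Eq.LFinial} to absorb $o^{r-\tau+1}_{\alpha,\tau}$ as $(U^{r-\tau+1}_{\alpha,\tau-1})^{-1}(\cdot)U^{r-\tau+1}_{\alpha,\tau}$, the intermediate $(U^{r-\tau+1}_{\alpha,\tau-1})^{-1}$ cancels against the $U^{r-\tau+1}_{\alpha,\tau-1}$ on the right of the shorter product, leaving only the outermost $U^{r-\tau+1}_{\alpha,\tau}$. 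The $\Rcal$ side is handled by the mirror argument.

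For the necessity direction (Eq.~\eqref{Eq.LRgeneral}$\Rightarrow$Eq.~\eqref{Eq.LFinial}), the $\tau=1$ equations are immediate. For the middle equations, I would consider the two adjacent instances of Eq.~\eqref{Eq.LRgeneral} with $\tau$ and $\tau-1$ and form their ratio: applying $\Lcal^{r-\tau+1}_{\alpha,\tau}$ to the product of $\tau$ tensors yields a right-$U^{r-\tau+1}_{\alpha,\tau}$; applying $\Lcal^{r-\tau+1}_{\alpha,\tau-1}$ to the first $\tau-1$ tensors (and trivially extending to include $T^{g^r_1\ldots g^r_q}$) yields a $U^{r-\tau+1}_{\alpha,\tau-1}$ inserted to the immediate left of $T^{g^r_1\ldots g^r_q}$. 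Combining these into the single-cell operator $o^{r-\tau+1}_{\alpha,\tau}$ produces an equation of the form $(\prod_{r'=r-\tau+1}^{r-1}T^{g^{r'}_1\ldots g^{r'}_q})\cdot [(o^{r-\tau+1}_{\alpha,\tau}\circ T^{g^r_1\ldots g^r_q})-(U^{r-\tau+1}_{\alpha,\tau-1})^{-1}\cdot T^{g^r_1\ldots g^r_q}\cdot U^{r-\tau+1}_{\alpha,\tau}]=0$, and we then need to cancel the left prefactor to obtain Eq.~\eqref{Eq.LFinial}.

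The main obstacle is exactly this cancellation: we cannot simply invert the product of MPS matrices. My plan is to mimic the canonical-form trick used in App.~\ref{app.Deriving}: first reduce to the canonical gauge $\breve{T}$ via the similarity transformation Eq.~\eqref{eq.SimilarityTransformation}, then multiply the bracketed identity on the left by $(\prod_{r'=r-\tau+1}^{r-1}\breve{T}^{g^{r'}_1\ldots g^{r'}_q})^{\dagger}$ and sum over the physical indices $g^{r'}_i$ for $r-\tau+1\leq r'\leq r-1$. This produces the $(\tau-1)$-th power of the canonical transfer matrix $\breve{\mathbb{T}}$ contracted against $\delta$ from the left canonical condition in Eq.~\eqref{Eq.Canonical}, which by Lemma~\ref{lemmaD.canonicalMPS} collapses to a Kronecker delta on the virtual indices and cleanly extracts the single-tensor equation. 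The $\Rcal$ equations are obtained in the same way by attaching the mirror product on the right, and the final step is to undo the similarity transformation, which reinstates Eq.~\eqref{Eq.LFinial} in its general form since both $T$ and $U$ transform consistently under $S$.
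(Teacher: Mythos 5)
Your proposal is correct and follows essentially the same route as the paper's proof in App.~\ref{app.Linearizing}: reduce to the canonical gauge, use the $\tau{-}1$ instance of Eq.~\eqref{Eq.LRgeneral} to trade the action of $\Lcal^{r-\tau+1}_{\alpha,\tau-1}$ for a virtual $U$ insertion, isolate the single-cell operator $(\Lcal^{r-\tau+1}_{\alpha,\tau-1})^{-1}\Lcal^{r-\tau+1}_{\alpha,\tau}$, and cancel the prefix product of tensors by contracting with its conjugate and invoking the canonical condition (Eq.~\eqref{Eq.Canonical} / Lemma~\ref{lemmaD.canonicalMPS}), before undoing the similarity transformation. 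The only cosmetic difference is that the paper presents the base case $\tau=1,2$ explicitly and iterates, while you state the general inductive step directly.
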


\begin{proof}
	It is trivial to show that Eq.~\eqref{Eq.LFinial} is a sufficient condition for Eq.~\eqref{Eq.LRgeneral}. Our focus in this proof is to show that Eq.~\eqref{Eq.LFinial} is also a necessary condition for Eq.~\eqref{Eq.LRgeneral}. We start with the leftmost $\Lcal$ operator in Eq.~\eqref{Eq.LRgeneral}.

	\begin{figure}[t]
		\centering
		\includegraphics[width=1\columnwidth]{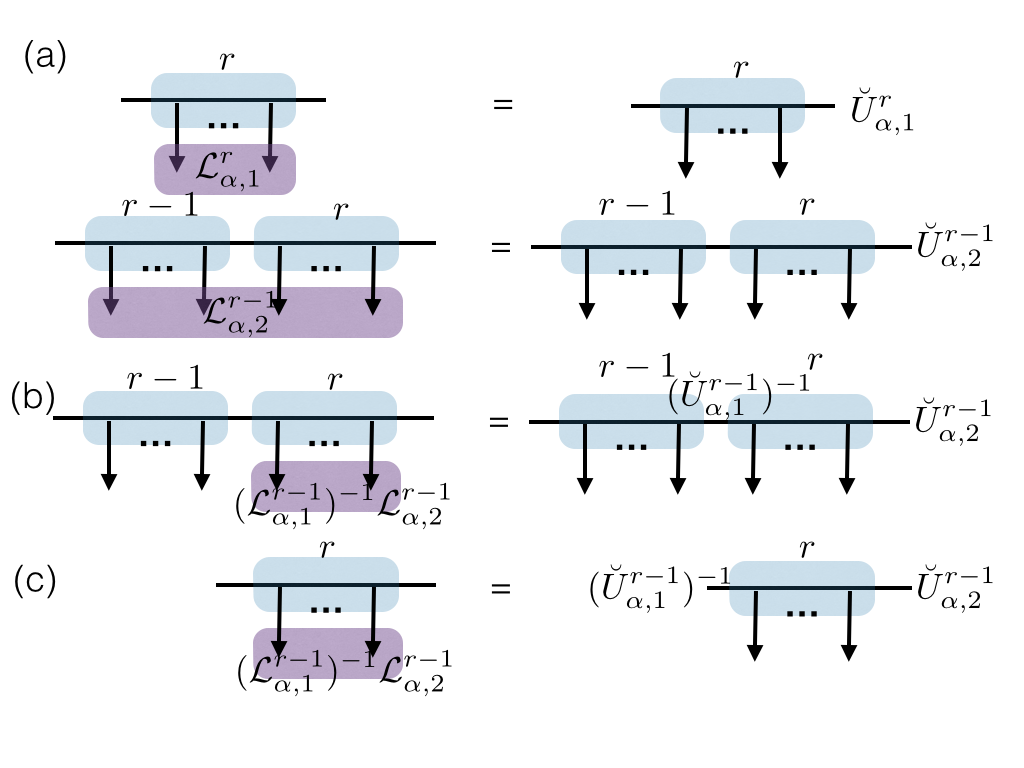}
		\caption{Graphical representation of (a) Eq.~\eqref{eq.2eqsforLRgenearl} and (b) Eq.~\eqref{Eq.F3}, and (c) Eq.~\eqref{Eq.F6}. }
		\label{Fig_AppF}
	\end{figure}
	
	By shifting the positions of Eq.~\eqref{Eq.LRgeneral}, we can obtain Eq.~\eqref{Eq.Lequations}, and we will mainly use Eq.~\eqref{Eq.Lequations}. We first consider the case when the MPS is canonical, and then discuss the general case. To prove that Eq.~\eqref{Eq.LFinial} is necessary of Eq.~\eqref{Eq.Lequations}, we use an recursive method. In particular, let us focus on the first two equations of Eq.~\eqref{Eq.Lequations} when the tensor is the canonical one $\breve{T}$: (See Fig.~\ref{Fig_AppF} (a) for the graphical representation)
	\begin{equation}\label{eq.2eqsforLRgenearl}
		\begin{split}
			\Lcal^r_{\alpha, 1} \circ \left( \breve{T}^{g^{r}_{1}\ldots g^{r}_q}\right) 
			&= \breve{T}^{g^{r}_{1}\ldots g^{r}_q} \breve{U}^r_{\alpha, 1}	\\
			\Lcal^{r-1}_{\alpha, 2} \circ \left( \breve{T}^{g^{r-1}_{1}\ldots g^{r-1}_q} \breve{T}^{g^{r}_{1}\ldots g^{r}_q} \right) 
			&= \breve{T}^{g^{r-1}_{1}\ldots g^{r-1}_q} \breve{T}^{g^{r}_{1}\ldots g^{r}_q} \breve{U}^{r-1}_{\alpha, 2}.	\\
		\end{split}
	\end{equation}
	We can apply $(\Lcal^{r-1}_{\alpha, 1})^{-1}$ to the second equation:
	\begin{equation}
		\begin{split}
			& (\Lcal^{r-1}_{\alpha, 1})^{-1} \Lcal^{r-1}_{\alpha, 2} \circ \left( \breve{T}^{g^{r-1}_{1}\ldots g^{r-1}_q} \breve{T}^{g^{r}_{1}\ldots g^{r}_q} \right) 	\\
			=& (\Lcal^{r-1}_{\alpha, 1})^{-1} \circ \left( \breve{T}^{g^{r-1}_{1}\ldots g^{r-1}_q} \breve{T}^{g^{r}_{1}\ldots g^{r}_q} \breve{U}^{r-1}_{\alpha, 2} \right).	\\
		\end{split}
	\end{equation}
	Using the first equation of Eq.~\eqref{eq.2eqsforLRgenearl} at the $(r-1)$-th site, we continue to simplify: (See Fig.~\ref{Fig_AppF} (b) for the graphical representation )
	\begin{equation}\label{Eq.F3}
		\begin{split}
			& (\Lcal^{r-1}_{\alpha, 1})^{-1} \Lcal^{r-1}_{\alpha, 2} \circ \left( \breve{T}^{g^{r-1}_{1}\ldots g^{r-1}_q} \breve{T}^{g^{r}_{1}\ldots g^{r}_q} \right) 	\\
			=& \breve{T}^{g^{r-1}_{1}\ldots g^{r-1}_q} (\breve{U}^{r-1}_{\alpha, 1})^{-1} \breve{T}^{g^{r}_{1}\ldots g^{r}_q} \breve{U}^{r-1}_{\alpha, 2}.	\\
		\end{split}
	\end{equation}
	Notice that the physical operator $(\Lcal^{r-1}_{\alpha, 1})^{-1} \Lcal^{r-1}_{\alpha, 2}$ only acts on the $r$-th site. We can rewrite this equation as:
	\begin{equation}
		\begin{split}
			& \breve{T}^{g^{r-1}_{1}\ldots g^{r-1}_q} \left( (\Lcal^{r-1}_{\alpha, 1})^{-1} \Lcal^{r-1}_{\alpha, 2} \circ \breve{T}^{g^{r}_{1}\ldots g^{r}_q} \right) 	\\
			=& \breve{T}^{g^{r-1}_{1}\ldots g^{r-1}_q} (\breve{U}^{r-1}_{\alpha, 1})^{-1} \breve{T}^{g^{r}_{1}\ldots g^{r}_q} \breve{U}^{r-1}_{\alpha, 2}.	\\
		\end{split}
	\end{equation}
	Multiply both sides with $\left( \breve{T}^{g^{r-1}_{1}\ldots g^{r-1}_q} \right)^\dagger$ and sum over the physical indices:
	\begin{widetext}
		\begin{equation}
			\begin{split}
				\sum_{g^{r-1}_{1}\ldots g^{r-1}_q} \left( \breve{T}^{g^{r-1}_{1}\ldots g^{r-1}_q} \right)^\dagger 
				\breve{T}^{g^{r-1}_{1}\ldots g^{r-1}_q} 
				\left( (\Lcal^{r-1}_{\alpha, 1})^{-1} \Lcal^{r-1}_{\alpha, 2} \circ \breve{T}^{g^{r}_{1}\ldots g^{r}_q} \right) 
				= \sum_{g^{r-1}_{1}\ldots g^{r-1}_q} \left( \breve{T}^{g^{r-1}_{1}\ldots g^{r-1}_q} \right)^\dagger \breve{T}^{g^{r-1}_{1}\ldots g^{r-1}_q} (\breve{U}^{r-1}_{\alpha, 1})^{-1} \breve{T}^{g^{r}_{1}\ldots g^{r}_q} \breve{U}^{r-1}_{\alpha, 2}.
			\end{split}
		\end{equation}
	\end{widetext}
	Now we can apply Eq.~\eqref{Eq.Canonical} at the $(r-1)$-th site: (See Fig.~\ref{Fig_AppF} (c) for the graphical representation)
	\begin{equation}\label{Eq.F6}
		(\Lcal^{r-1}_{\alpha, 1})^{-1} \Lcal^{r-1}_{\alpha, 2} \circ \breve{T}^{g^{r}_{1}\ldots g^{r}_q} 	
		= (\breve{U}^{r-1}_{\alpha, 1})^{-1} \breve{T}^{g^{r}_{1}\ldots g^{r}_q} \breve{U}^{r-1}_{\alpha, 2}.
	\end{equation}
	Hence, we have proved the following equations for the canonical MPS $\breve{T}$:
	\begin{equation}
		\begin{split}
			\Lcal^r_{\alpha, 1} \circ \breve{T}^{g^{r}_{1}\ldots g^{r}_q}
			&=\breve{T}^{g^{r}_{1}\ldots g^{r}_q} \cdot \breve{U}^r_{\alpha, 1}\\
			\left((\Lcal^{r-1}_{\alpha, 1})^{-1}\Lcal^{r-1}_{\alpha, 2}\right) \circ \breve{T}^{g^r_1\cdots g^r_q}
			&=(\breve{U}^{r-1}_{\alpha, 1})^{-1}\cdot \breve{T}^{g^r_1\cdots g^r_q}\cdot \breve{U}^{r-1}_{\alpha, 2}.\\
		\end{split}
	\end{equation}
	
	By iterating the process, we can prove the rest of the equations in Eq.~\eqref{Eq.LFinial} for the canonical MPS with tensor $\breve{T}$. The same statement is true for a general MPS with a tensor $T$, since the tensor $T$ and $\breve{T}$ are related by the similarity transformation in Eq.~\eqref{eq.SimilarityTransformation}. Therefore, we have completed our proof.
\end{proof}

Applying the theorem \ref{TheoremG1} to the $ZZXZZ$ model, we find that Eq.~\eqref{Eq.11} is the necessary and sufficient condition for Eqs.~\eqref{Eq.LRsingle1}, \eqref{Eq.LRsingle2} and \eqref{Eq.LRsingle3}.

\begin{theorem}\label{theoremF.sameLsameU}
	If $\Lcal^r_{\alpha, 1}, U^r_{\alpha, 1}$ and $ V^r_{\alpha, 1}$ satisfy
	\begin{equation}\label{Eq.21}
		\begin{split}
			\Lcal^r_{\alpha, 1}\circ T^{g^r_1\ldots g^r_q}&= T^{g^r_1\ldots g^r_q}\cdot U^r_{\alpha, 1}\\
			\Lcal^r_{\alpha, 1}\circ T^{g^r_1\ldots g^r_q}&= T^{g^r_1\ldots g^r_q}\cdot V^r_{\alpha, 1},
		\end{split}
	\end{equation}
	then $U^r_{\alpha, 1}=V^r_{\alpha, 1}$. 
\end{theorem}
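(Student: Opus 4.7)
The plan is to reduce to the canonical MPS setting and exploit the canonical identity $\sum_{g} (\breve{T}^{g})^{\dagger}\breve{T}^{g} = I$ to cancel the $T$-matrices.

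First, I subtract the two equations in the hypothesis. Writing $g \equiv (g^r_1,\ldots,g^r_q)$ for brevity, the two relations together yield
\begin{equation}
T^{g}\cdot \bigl(U^r_{\alpha,1}-V^r_{\alpha,1}\bigr) = 0 \qquad \text{for every fixed } g.
\end{equation}
So the theorem reduces to the linear-algebra statement: if $T^{g}\cdot W = 0$ for all physical indices $g$, with $W \equiv U^r_{\alpha,1}-V^r_{\alpha,1}$, then $W = 0$.

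Next I would handle the canonical case. As noted in App.~\ref{app.CanonicalForm}, writing out the definition of the transfer matrix $\breve{\mathbb{T}}_{\mathbf{h}_1\mathbf{h}_3,\mathbf{h}_2\mathbf{h}_4}=\sum_{g}\breve{T}^{g}_{\mathbf{h}_1,\mathbf{h}_2}(\breve{T}^{g})^{*}_{\mathbf{h}_3,\mathbf{h}_4}$, the second canonical condition in Eq.~\eqref{Eq.Canonical} says precisely
\begin{equation}
\sum_{g} (\breve{T}^{g})^{\dagger}\,\breve{T}^{g} = I.
\end{equation}
Assuming $\breve{T}^{g}\cdot \breve{W} = 0$ for every $g$, I multiply on the left by $(\breve{T}^{g})^{\dagger}$ and sum over $g$, obtaining $\bigl(\sum_{g}(\breve{T}^{g})^{\dagger}\breve{T}^{g}\bigr)\breve{W}= I\cdot \breve{W} = 0$, hence $\breve{W}=0$. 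This disposes of the canonical case.

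Finally, to promote this to a general (non-canonical) MPS, I invoke the similarity transformation $T^{g}=S\,\breve{T}^{g}\,S^{-1}$ from Eq.~\eqref{eq.SimilarityTransformation}. Under this change of basis the virtual operators transform as $U = S\,\breve{U}\,S^{-1}$ and $V = S\,\breve{V}\,S^{-1}$ (as used throughout Apps.~\ref{app.Deriving}--\ref{app.Ucommutation}), so $T^{g}\cdot (U-V) = S\,\breve{T}^{g}(\breve{U}-\breve{V})\,S^{-1}$. Since $S$ is invertible, $T^{g}(U-V)=0$ for all $g$ is equivalent to $\breve{T}^{g}(\breve{U}-\breve{V})=0$ for all $g$, which by the canonical argument forces $\breve{U}=\breve{V}$, and hence $U^r_{\alpha,1}=V^r_{\alpha,1}$. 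There is no real obstacle here; the only thing to be careful about is that the cancellation step genuinely requires \emph{all} physical indices $g$ simultaneously (a single $T^{g}$ need not be invertible and in fact is typically of low rank, as emphasized in Theorem~\ref{MainTheorem}), and it is the summed canonical identity, not any individual matrix, that carries the proof.
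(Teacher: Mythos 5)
Your proposal is correct and follows essentially the same route as the paper: reduce to the canonical MPS, cancel the $\breve{T}$-matrices by multiplying with $(\breve{T}^{g})^{\dagger}$ and summing over all physical indices via the canonical condition $\sum_{g}(\breve{T}^{g})^{\dagger}\breve{T}^{g}=I$, then transport the conclusion back through the similarity transformation $T^{g}=S\breve{T}^{g}S^{-1}$. The only cosmetic difference is that you subtract the two relations and work with $W=U^r_{\alpha,1}-V^r_{\alpha,1}$ rather than carrying both sides along, and your closing remark that no single $T^{g}$ is invertible (so the summed identity is essential) is exactly the right point of caution.
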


\begin{proof}
	We first prove when the $T$ matrix is canonical. Since $\Lcal^r_{\alpha, 1}$ and $\Lcal^r_{\beta, 1}$ are identical physical operators, LHS of Eq.~\eqref{Eq.21} are the same. Hence
	\begin{equation}
		\breve{T}^{g^r_1\ldots g^r_q}\cdot \breve{U}^r_{\alpha, 1}=\breve{T}^{g^r_1\ldots g^r_q}\cdot \breve{V}^r_{\alpha, 1},
	\end{equation}
	where $\breve{T}^{g^r_1\ldots g^r_q}$ is the canonical MPS matrix, and $\breve{U}^r_{\alpha, 1}$ and $\breve{V}^r_{\alpha, 1}$ are the associated virtual operator. In components, 
	\begin{equation}
		\sum_{\mathbf{h}_2}(\breve{T}^{g^r_1\ldots g^r_q})_{\mathbf{h}_1, \mathbf{h}_2} (\breve{U}^r_{\alpha, 1})_{\mathbf{h}_2, \mathbf{h}_3}=\sum_{\mathbf{h}_2}(\breve{T}^{g^r_1\ldots g^r_q})_{\mathbf{h}_1, \mathbf{h}_2} (\breve{V}^r_{\alpha, 1})_{\mathbf{h}_2, \mathbf{h}_3}.
	\end{equation}
	Multiplying $(\breve{T}^{g^r_1\ldots g^r_q})^*_{\mathbf{h}_1, \mathbf{h}_4}$ on both sides, and summing over $\mathbf{h}_1$ as well as the physical indices $g^r_1, \ldots , g^r_q$, and using the canonical condition Eq.~\eqref{Eq.Canonical}, we find
	\begin{equation}
		(\breve{U}^r_{\alpha, 1})_{\mathbf{h}_4, \mathbf{h}_3}=(\breve{V}^r_{\alpha, 1})_{\mathbf{h}_4, \mathbf{h}_3}.
	\end{equation}
	
	When the MPS is not canonical, we apply the similarity transformation Eq.~\eqref{eq.SimilarityTransformation}: 
	\begin{equation}
		U^r_{\alpha, 1}=S\cdot \breve{U}^r_{\alpha, 1}\cdot S^{-1}, ~ V^r_{\alpha, 1}=S\cdot \breve{V}^r_{\alpha, 1}\cdot S^{-1}.
	\end{equation}
	So
	\begin{equation}
		{U}^r_{\alpha, 1}=S\cdot \breve{U}^r_{\alpha, 1}\cdot S^{-1}=S\cdot \breve{V}^r_{\alpha, 1}\cdot S^{-1}={V}^r_{\alpha, 1}.
	\end{equation}
	This completes the proof. 
\end{proof}

\section{Virtual $U$ Operators as Tensor Products of Pauli Matrices}
\label{App. VirtualUOperator}

In this appendix, we show that the virtual $U$ operators can be constructed as tensor products of Pauli matrices. 

As discussed in the paragraph before Eq.~\eqref{Eq.29} in Sec.~\ref{Sec.GeneralMPS} and proved in Ref.~\onlinecite{2010arXiv1005.4300J}, the anti-symmetric integer matrix $\mathbf{t}$ can be block diagonalized by a unimodular integer matrix $V$, such that each nontrivial block is a $2\times 2$ anti-symmetric matrix with integer off-diagonal matrix elements.
Consider a general set of operators $\{U_i\}$ $(i=1, ..., N)$ which either commute or anti-commute, 
\begin{equation}\label{Eq.30}
	U_i U_j =(-1)^{\mathbf{t}_{ij}}U_{j} U_i.
\end{equation}
Let us define a new set of operators using the unimodular integer matrix $V$ as follows
\begin{equation}
	\widetilde{U}_i=U_{1}^{V_{i1}}U_{2}^{V_{i2}}...U_{N}^{V_{iN}},
\end{equation}
where $V_{ij}$ are the entries of the unimodular integer matrix $V$. It is straightforward to compute the commutation relations of $\{\widetilde{U}_i\}$, 
\begin{equation}
	\begin{split}
		\widetilde{U}_i \widetilde{U}_j &=(-1)^{\sum_{k, l}V_{ik}\mathbf{t}_{kl} (V^T)_{lj}}\widetilde{U}_{j} \widetilde{U}_i\\
		&=(-1)^{(V\cdot \mathbf{t}\cdot V^T)_{ij}}\widetilde{U}_{j} \widetilde{U}_i.
	\end{split}
\end{equation}
Due to Eq.\eqref{Eq.29}, $V\cdot \mathbf{t}\cdot V^T$ is block diagonalized. Since $V\cdot \mathbf{t}\cdot V^T$ appears on the exponent of $(-1)$, only the modulo $2$ values of the matrix elements matter. Hence the nontrivial $2\times 2$ blocks have off-diagonal elements $\pm 1$ where we keep the minus signs to make the anti-symmetry manifest. Suppose $n$ is the number of nontrivial blocks of the $V\cdot \mathbf{t}\cdot V^T$. Then one can find the representations of $\widetilde{U}_i$ by using the Pauli matrices, because each $2 \times 2$ block corresponds to a pair of anti-commuting operators. For an irreducible representation, we can assign for instance
\begin{equation}\label{Eq.31}
	\begin{split}
		\widetilde{U}_i=
		\begin{cases}
			\underbrace{I\otimes ...\otimes I }_{\frac{i-1}{2}}\otimes X\otimes \underbrace{I\otimes ...\otimes I}_{\frac{2n-i-1}{2}}&, ~~ i~\mathrm{is~ odd}, 1\leq i\leq 2n\\
			\underbrace{I\otimes ...\otimes I}_{\frac{i-2}{2}} \otimes Z\otimes \underbrace{I\otimes ...\otimes I}_{\frac{2n-i}{2}}&, ~~ i~\mathrm{is~ even}, 1\leq i\leq 2n\\
			\underbrace{I\otimes ...\otimes I \otimes I\otimes I\otimes ...\otimes I}_{n}&, ~~ 2n+1\leq i\leq N,\\
		\end{cases}
	\end{split}
\end{equation}
where $n = \frac{\mathrm{rank}(\mathbf{t})}{2}$, and each $\widetilde{U}_i$ is a tensor product of $n$ Pauli matrices, forming a $2^{\frac{\mathrm{rank}(\mathbf{t})}{2}}=2^{n}$ dimensional representation. 
Since $V$ is unimodular, we can do an inverse transformation from $\{\widetilde{U}_i\}$ to $\{U_i\}$. 
\begin{equation}
	U_i= \widetilde{U}_1^{(V^{-1})_{i1}}... \widetilde{U}_N^{(V^{-1})_{iN}}.
\end{equation}
Since $\{\widetilde{U}_i\}$ are tensor product of Pauli matrices, $\{U_i\}$ are also tensor product of Pauli matrices. This generalizes the construction of Sec.~\ref{Sec.Example}. 

\section{Projective Representations and 1D Symmetry Protected Topological Phases}
\label{app.SPT}

\subsection{Projective Representations and Cocycles}
\label{app.SubsectionA1}

In this section, we describe projective representations and cocycles. Suppose $G$ is a discrete group and $\rho(g)$ is a matrix representation of the group element $g\in G$. $\rho$ is the projective representation of $G$ if 
\begin{equation}\label{Eq.algebra}
	\rho(g_1)\rho(g_2)= \omega_2(g_1,g_2) \rho(g_1g_2), \quad \forall\; g_1,g_2\in G,
\end{equation}
where $\omega_2(g_1,g_2)$ is a $U(1)$ phase. As a result of Eq.~\eqref{Eq.algebra} being associative, i.e., 
\begin{equation}
	\Big(\rho(g_1)\rho(g_2)\Big)\rho(g_3)= \rho(g_1)\Big(\rho(g_2)\rho(g_3)\Big).
\end{equation}
$\omega_2(g_1,g_2)$ satisfies:
\begin{equation}\label{eq.cocyclecondition}
	\omega_2(g_1,g_2)\omega_2(g_1g_2,g_3) = \omega_2(g_2,g_3)\omega_2(g_1,g_2g_3).
\end{equation}
We further require that ${\rho}(g)$ and ${\rho}(g)\mu_1(g)$  belongs to the same class of the projective representation, where $ \mu_1(g)$ is a $U(1)$ phase. This yields that if two cocycles, $\omega_2$ and $\tilde{\omega}_2$, are related by $\mu_1$ as follows: 
\begin{equation}\label{eq.identification}
	\tilde{\omega}_2(g_1,g_2) = \mu_1(g_1) \mu_1(g_2)\mu_1(g_1g_2)^{-1} \omega_2(g_1,g_2),
\end{equation}
then they give rise to the same projective representation.
The conditions Eqs.~\eqref{eq.cocyclecondition} and \eqref{eq.identification} require the $U(1)$ phase $\omega_2$ belongs to the group cohomology $\mathcal{H}^2(G,U(1))$ and is a cocycle.\cite{propitius1995topological,dijkgraaf1990,chen2013symmetry}. 

Throughout the paper, $G$ is an Abelian group of the form $(\mathbb{Z}_2)^{q}$, and the group element $g$ is parametrized by $g=(g_1,g_2,\ldots,g_q)$ with $g_i \in \mathbb{Z}_2$ for $i=1,2,\ldots,q$. All the cocycles in $\mathcal{H}^2(G,U(1))$ are parametrized as in Eq.~\eqref{eq.cocycleexpression}\cite{wang2015field,propitius1995topological}.

\subsection{Cocycle States}

In this subsection, we summarize the construction of a class of short range entangled states which we dub as the the cocycle states,  following Ref.~\onlinecite{chen2013symmetry}. These states are interesting because they are the states describing the \emph{symmetry protected topological} (SPT) phase, protected by the on-site unitary symmetry $G$. We first set up the notations, and then review their results with Abelian groups for simplicity. 

Consider a 1D lattice with $L$ unit cells. In each unit cell, the local Hilbert space basis can be labeled by the elements of $G$: $\ket{g^r}, \forall\; {g}^r\in G$, ($r=0, 1, ..., L-1$). Besides the group elements $\{{g}^r\}$, Ref.~\onlinecite{chen2013symmetry} also introduced an auxiliary group element ${g}^\star\in G$ which does not belong to the Hilbert space, but nevertheless enables one to cons.  The cocycle state is constructed as follows (see Eq.~(54) of Ref.~\onlinecite{chen2013symmetry})
\begin{equation}\label{Eq.23}
	|\psi\rangle_{G, \omega_2}= \sum_{\{{g}^r\}} \bigg(\sum_{g^\star} \prod_{r=0}^{L-1} \omega_2({g}^{r}-{g}^{r-1}, {g}^{\star}-{g}^{r})\bigg)|\{{g}^r\}\rangle.
\end{equation}

We further restrict Eq.~\eqref{Eq.23} to the $(\Z_2)^q$ group. As introduced in App.~\ref{app.SubsectionA1}, each unit cell contains $q$ number of $\Z_2$ group elements/spins, i.e., ${g}^r=(g^r_1, ..., g^r_q)$. A generic $\omega_2$ is in Eq.~\eqref{eq.cocycleexpression}, i.e.,
\begin{equation}\label{Eq.cocycle}
	\begin{split}
		&\omega_2({g}^{r}-{g}^{r-1}, {g}^{\star}-{g}^{r})\\&=\exp\bigg( -i \pi \sum_{1\leq i<j\leq q} P_{ij} (g^{r}_j-g^{r-1}_j)(g^{\star}_i-g^{r}_i) \bigg).
	\end{split}
\end{equation}
Plugging Eq.~\eqref{Eq.cocycle} to \eqref{Eq.23}, the cocycle state of $(\Z_2)^q$ global symmetry becomes:
\begin{equation}\label{Eq.8}
	\begin{split}
		|\psi\rangle_{(\Z_2)^q, \omega_2}= &\sum_{\{g_i^r\}} \Bigg(\sum_{\{g_i^\star\}} \exp\bigg( -i\pi \sum_{r=0}^{L-1}\sum_{1\leq i<j\leq q}\\& P_{ij}(g^{r}_j-g^{r-1}_j)(g^{\star}_i-g^{r}_i) \bigg)\Bigg) |\{g^r_i\}\rangle.
	\end{split}
\end{equation}
Notice that in the exponent,  the coefficient of $g^\star_i$ with fixed $j$, i.e., $-i\pi \sum_{r=0}^{L-1}P_{ij}(g^{r}_j-g^{r-1}_j)$, vanishes due to PBC. This further simplifies the cocycle state Eq.~\eqref{Eq.8} to
\begin{equation}\label{Eq.SPTGS}
	\begin{split}
		&|\psi\rangle_{(\Z_2)^q, \omega_2}
		=\\& \sum_{\{g_i^r\}} \exp\left( i\pi \sum_{r=0}^{L-1}\sum_{1\leq i<j\leq q} P_{ij}(g^{r}_j-g^{r-1}_j)g^r_i \right) |\{g^r_i\}\rangle.
	\end{split}
\end{equation}

\subsection{Cocycle Hamiltonians}

We now construct a cocycle Hamiltonian $H_{(\Z_2)^q, \omega_2}$ whose ground state is Eq.~\eqref{Eq.SPTGS}. The cocycle Hamiltonian has been constructed in Refs.~\onlinecite{hu2013twisted,wan2015twisted}. We present a simplified construction.

\begin{lemma}\label{LemmaH1}
	There exist $qL$ operators $\Ocal^r_\alpha$ defined by 
	\begin{equation}\label{Eq.SPTOperator}
		\Ocal^{r}_\alpha\equiv \prod_{1\leq k< \alpha}(Z^{r+1}_k Z^{r}_k)^{P_{k\alpha}} X^r_\alpha \prod_{\alpha<l\leq q}(Z^r_lZ^{r-1}_l)^{P_{\alpha l}}
	\end{equation}
	satisfying
	\begin{equation}\label{Eq.StabilizerCondition}
		\Ocal^{r}_{\alpha}|\psi\rangle_{(\Z_2)^q, \omega_2}=|\psi\rangle_{(\Z_2)^q, \omega_2}, \forall r\in [0, L-1], \alpha\in \{1, \ldots , q\}.
	\end{equation}
\end{lemma}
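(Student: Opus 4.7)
The plan is to verify the stabilizer condition by direct computation on each basis vector. First, I would note that $\Ocal^r_\alpha$ splits as a product of a single spin flip $X^r_\alpha$ together with Pauli $Z$ operators that do not involve $g^r_\alpha$; hence applying $\Ocal^r_\alpha$ to $|\{g^{r'}_i\}\rangle$ produces a sign $\epsilon(\{g\})$ coming from the $Z$-eigenvalues, times the basis vector with $g^r_\alpha$ replaced by $1-g^r_\alpha$. Explicitly,
\begin{equation}
\epsilon(\{g\})=(-1)^{\sum_{k<\alpha}P_{k\alpha}(g^{r+1}_k+g^r_k)+\sum_{l>\alpha}P_{\alpha l}(g^r_l+g^{r-1}_l)}.
\end{equation}

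Next, I would substitute this into the expansion Eq.~\eqref{Eq.SPTGS} of $|\psi\rangle_{(\Z_2)^q,\omega_2}$ and relabel the dummy variable $g^r_\alpha\to 1-g^r_\alpha$. This turns the stabilizer condition into the identity
\begin{equation}
\epsilon(\{g\})\,\exp\!\Big(i\pi\big[\Phi(\{g\}|_{g^r_\alpha\to 1-g^r_\alpha})-\Phi(\{g\})\big]\Big)=1
\end{equation}
for every configuration $\{g\}$, where $\Phi(\{g\})=\sum_{r',i<j}P_{ij}(g^{r'}_j-g^{r'-1}_j)g^{r'}_i$ is the phase in the cocycle state. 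The heart of the proof is then a bookkeeping step: isolate every term of $\Phi$ containing $g^r_\alpha$ and compute the mod-2 change under the flip.

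The contributions to $\Phi(\{g\})-\Phi(\{g\}|_{g^r_\alpha\to 1-g^r_\alpha})$ come from three places: the terms with $i=\alpha$, $j>\alpha$ at row $r'=r$ (which yield $\sum_{j>\alpha}P_{\alpha j}(g^r_j+g^{r-1}_j)$ mod $2$), the terms with $j=\alpha$, $i<\alpha$ at row $r'=r$ (yielding $\sum_{i<\alpha}P_{i\alpha}g^r_i$), and the terms with $j=\alpha$, $i<\alpha$ at row $r'=r+1$ (yielding $\sum_{i<\alpha}P_{i\alpha}g^{r+1}_i$ after using $-1\equiv 1\pmod 2$). Adding these gives exactly the exponent that appears in $\epsilon(\{g\})$, so the sign from the $Z$'s precisely cancels the phase shift produced by the flip, verifying Eq.~\eqref{Eq.StabilizerCondition}.

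The main obstacle is purely organizational: one must be careful that $g^r_\alpha$ appears in $\Phi$ at \emph{two} different unit-cell indices $r'=r$ and $r'=r+1$ (because of the difference $g^{r'}_j-g^{r'-1}_j$), and that these two contributions supply the ``right'' halves of the $Z$-string in $\Ocal^r_\alpha$ --- namely the $(Z^{r+1}_k Z^r_k)^{P_{k\alpha}}$ factors on the left and the $(Z^r_l Z^{r-1}_l)^{P_{\alpha l}}$ factors on the right. Once the indices are tracked correctly, the cancellation is automatic and no use of the cocycle condition Eq.~\eqref{eq.cocyclecondition} is needed beyond what was already employed to derive Eq.~\eqref{Eq.SPTGS}.
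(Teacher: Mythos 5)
Your proposal is correct and follows essentially the same route as the paper's proof: both act with the operator on the basis expansion of the cocycle state, relabel the flipped spin $g^r_\alpha\to 1-g^r_\alpha$, and check that the mod-2 change of the phase $\Phi$ (coming from the two rows $r'=r$ and $r'=r+1$) exactly reproduces the sign contributed by the $Z$-string. The bookkeeping of the three contributions matches the paper's computation, so no gap remains.
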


In the main text, we adopt a slightly different but equivalent convention to label all the operators $\Ocal^{r}_\alpha$ using translation symmetry. See Eq.~\eqref{Eq.10}. In the main text, the convention adopted in Eq.~\eqref{Eq.10} is consistent with the discussion of general stabilizer code Eq.~\eqref{Eq.LR}. In this appendix, $\Ocal^{r}_\alpha$ in Eq.~\eqref{Eq.SPTOperator} shares the same label with $X^r_\alpha$ in its expression. The convention in Eq.~\eqref{Eq.SPTOperator}  will simplify the proof without repeating the same equations for different labels.

\begin{proof}
	We first act $X^r_\alpha$ on $|\psi\rangle_{(\Z_2)^q, \omega_2}$ \eqref{Eq.SPTGS}, 
	\begin{widetext}
		\begin{equation}
			\begin{split}
				X^r_\alpha|\psi\rangle_{(\Z_2)^q, \omega_2}&=\sum_{\{g^{\widehat{r}}_k\}}\exp\bigg( i\pi \sum_{\widehat{r}=0}^{L-1}\sum_{1\leq k<l\leq q} P_{kl}(g^{\widehat{r}}_l-g^{\widehat{r}-1}_l)g^{\widehat{r}}_k \bigg) X^r_\alpha|\{g^{\widehat{r}}_k\}\rangle\\
				&= \sum_{\{g^{\widehat{r}}_k\}}\exp\bigg( i\pi \sum_{\widehat{r}=0}^{L-1}\sum_{1\leq k<l\leq q} P_{kl}(g^{\widehat{r}}_l-g^{\widehat{r}-1}_l)g^{\widehat{r}}_k \bigg) |\{g^{\widehat{r}}_k+\delta^{\widehat{r}r}\delta_{k\alpha}\}\rangle.
			\end{split}
		\end{equation}
		In the second line, we used the fact that since the group element $g^r_\alpha$ is defined mod 2, $g^{\widehat{r}}_k+\delta^{\widehat{r}r}\delta_{k\alpha}$ is equivalent to flipping the value of the spin $g^{r}_\alpha$. We further redefine the spins as $\widetilde{g}^{\widehat{r}}_k=g^{\widehat{r}}_k+\delta^{\widehat{r}r}\delta_{k\alpha}$, and rewrite the equation as 
		\begin{equation}
			\begin{split}
				X^r_\alpha|\psi\rangle_{(\Z_2)^q, \omega_2}&=\sum_{\{\widetilde{g}^{\widehat{r}}_k\}}\exp\bigg( i\pi \sum_{\widehat{r}=0}^{L-1}\sum_{1\leq k<l\leq q} P_{kl}(\widetilde{g}^{\widehat{r}}_l-\widetilde{g}^{\widehat{r}-1}_l-\delta^{\widehat{r} r}\delta_{l\alpha}+\delta^{(\widehat{r}-1)r}\delta_{l\alpha})(\widetilde{g}^{\widehat{r}}_k-\delta^{\widehat{r}r}\delta_{k\alpha}) \bigg) |\{\widetilde{g}_{\widehat{r}}^k\}\rangle\\
				&=\sum_{\{\widetilde{g}^{\widehat{r}}_k\}}\exp\bigg( i\pi \sum_{\widehat{r}=0}^{L-1}\sum_{1\leq k<l\leq q} P_{kl}(\widetilde{g}^{\widehat{r}}_l-\widetilde{g}^{\widehat{r}-1}_l)\widetilde{g}^{\widehat{r}}_k-i\pi\sum_{1\leq k<\alpha} P_{k\alpha}(\widetilde{g}^{r}_k-\widetilde{g}^{r+1}_k)-i\pi \sum_{\alpha<l\leq q}P_{\alpha l}(\widetilde{g}_l^{r}-\widetilde{g}_l^{r-1}) \bigg) |\{\widetilde{g}_{\widehat{r}}^k\}\rangle\\
				&=\prod_{1\leq k< \alpha}(Z^{r+1}_k Z^{r}_k)^{P_{k\alpha}} \prod_{\alpha<l\leq q}(Z^r_lZ^{r-1}_l)^{P_{\alpha l}}\sum_{\{g^{\widehat{r}}_k\}}\exp\bigg( i\pi \sum_{\widehat{r}=0}^{L-1}\sum_{1\leq k<l\leq q} P_{kl}(g^{\widehat{r}}_l-g^{\widehat{r}-1}_l)g^{\widehat{r}}_k \bigg) |\{g^{\widehat{r}}_k\}\rangle\\
				&=\prod_{1\leq k< \alpha}(Z^{r+1}_k Z^{r}_k)^{P_{k\alpha}} \prod_{\alpha<l\leq q}(Z^r_lZ^{r-1}_l)^{P_{\alpha l}}|\psi\rangle_{(\Z_2)^q, \omega_2}.
			\end{split}
		\end{equation} 
	\end{widetext}
	In the second line, the first term on the exponent has exactly the same form as the original $|\psi\rangle_{(\Z_2)^q, \omega_2}$, while the second and the third terms on the exponent are extra terms. They can be reproduced by the acting with the product of Pauli $Z$ operators, $\prod_{1\leq k< \alpha}(Z_{r+1}^k Z_{r}^k)^{P_{k\alpha}} \prod_{\alpha <l\leq q}(Z_r^lZ_{r-1}^l)^{P_{\alpha l}}$. This observation directly leads to the third line. Hence we find the following combination leaves $|\psi\rangle_{(\Z_2)^q, \omega_2}$ invariant:
	\begin{equation}
		\Ocal^{r}_\alpha\equiv \prod_{1\leq k< \alpha}(Z^{r+1}_k Z^{r}_k)^{P_{k\alpha}} X^r_\alpha \prod_{\alpha<l\leq q}(Z^r_lZ^{r-1}_l)^{P_{\alpha l}}.
	\end{equation}
	This completes the proof. 
\end{proof}

\begin{lemma}\label{LemmaH2}
	The operators $\Ocal^r_\alpha$ in Lemma~\ref{LemmaH1} mutually commute, i.e,
	\begin{eqnarray}
		[\Ocal^r_\alpha, \Ocal^{r'}_{\alpha'}]=0, ~~~\forall r, r', \alpha, \alpha'.
	\end{eqnarray}
\end{lemma}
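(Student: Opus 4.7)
My plan is to verify that $[\Ocal^r_\alpha, \Ocal^{r'}_{\alpha'}]=0$ by directly counting anticommuting pairs of Pauli factors, using the fact that two products of Pauli operators commute iff the total number of anticommuting single-site factor pairs between them is even.

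The first step is to observe that each $\Ocal^r_\alpha$ contains exactly one $X$ factor, namely $X^r_\alpha$, with all remaining factors being tensor products of $Z$'s. Consequently, when multiplying $\Ocal^r_\alpha$ past $\Ocal^{r'}_{\alpha'}$ the $Z$-$Z$ factor pairs commute automatically, and the $X^r_\alpha$-$X^{r'}_{\alpha'}$ pair contributes a sign only when $(r,\alpha) = (r',\alpha')$, in which case the two operators are equal. Therefore the only possible sources of a sign are: (i) $X^r_\alpha$ anticommuting with each $Z$-factor of $\Ocal^{r'}_{\alpha'}$ that sits at site $(r,\alpha)$, and (ii) $X^{r'}_{\alpha'}$ anticommuting with each $Z$-factor of $\Ocal^r_\alpha$ that sits at site $(r',\alpha')$. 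I will reduce the commutation problem to checking that the sum of these two counts is even.

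Next I will read off the two counts directly from the definition \eqref{Eq.SPTOperator}. The multiplicity of $Z^{r}_\alpha$ inside $\Ocal^{r'}_{\alpha'}$ equals
\begin{equation*}
\begin{cases}
P_{\alpha\alpha'}\,(\delta_{r,r'}+\delta_{r,r'+1}) & \alpha<\alpha',\\
P_{\alpha'\alpha}\,(\delta_{r,r'}+\delta_{r,r'-1}) & \alpha>\alpha',\\
0 & \alpha=\alpha',
\end{cases}
\end{equation*}
and a symmetric expression holds for the multiplicity of $Z^{r'}_{\alpha'}$ inside $\Ocal^r_\alpha$ (obtained by swapping primed and unprimed variables). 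The case $\alpha=\alpha'$ is trivial since both counts vanish. For $\alpha\neq\alpha'$, it suffices by symmetry to take $\alpha<\alpha'$, and the total sign is then
\begin{equation*}
P_{\alpha\alpha'}\bigl[(\delta_{r,r'}+\delta_{r,r'+1})+(\delta_{r',r}+\delta_{r',r-1})\bigr]
= 2P_{\alpha\alpha'}\bigl[\delta_{r,r'}+\delta_{r,r'+1}\bigr]\equiv 0\pmod{2},
\end{equation*}
where I used $\delta_{r,r'+1}=\delta_{r',r-1}$.

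The argument has almost no hard step: the main thing to get right is the careful enumeration of where $Z$'s land in each operator, in particular keeping track of the asymmetric role played by the index ranges $k<\alpha$ and $l>\alpha$ in \eqref{Eq.SPTOperator}, which produce $Z$-strings displaced to the right (by one unit cell) versus to the left respectively. Once this bookkeeping is set up, the even-parity cancellation between the two symmetric contributions in the displayed formula above is automatic, and the commutation of all $\Ocal^r_\alpha$'s follows. Combined with Lemma~\ref{LemmaH1} and the manifest Hermiticity and involutivity of each $\Ocal^r_\alpha$, this shows that $H_{(\Z_2)^q,\omega_2}=-\sum_{r,\alpha}\Ocal^r_\alpha$ is indeed a stabilizer code whose ground state is $|\psi\rangle_{(\Z_2)^q,\omega_2}$.
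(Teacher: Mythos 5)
Your proof is correct and rests on the same idea as the paper's: since each $\Ocal^r_\alpha$ contains a single $X$ at $(r,\alpha)$ and otherwise only $Z$'s, the only sign factors come from $X$--$Z$ overlaps, and these occur in pairs both weighted by the same exponent $P_{\alpha\alpha'}$, so the total parity is even. The paper carries this out by explicit case enumeration (fixing $r=1$ and checking $r'\in\{0,1,2\}$ with the two orderings of $\alpha,\alpha'$), whereas you package the same count into a single Kronecker-delta formula plus the relabeling $\delta_{r',r-1}=\delta_{r,r'+1}$ --- a more uniform bookkeeping of the identical argument.
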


\begin{proof}
	Without loss of generality, we assume $r=1$. Then $\Ocal^{1}_{\alpha}$ only acts on the unit cells at $0, 1$ and $2$. $\Ocal^1_{\alpha}$ and $\Ocal^{r'}_{\alpha'}$ trivially commute unless $X^1_{\alpha}$ overlap with a Pauli $Z$ operator of $\Ocal^{r'}_{\alpha'}$ and/or $X^{r'}_{\alpha'}$ overlap with a Pauli $Z$ operator of $\Ocal^1_{\alpha}$. It is straightforward to check that the Pauli $X$ and $Z$ operators overlap when 
	\begin{enumerate}
		\item $r'=2$ and $\alpha>\alpha'$. 
		\item $r'=1$ and $\alpha\neq \alpha'$. 
		\item $r'=0$ and $\alpha<\alpha'$. 
	\end{enumerate}
	When $r'=2$ and $\alpha>\alpha'$, 
	\begin{eqnarray}
		\Ocal^1_\alpha\Ocal^{2}_{\alpha'}= (-1)^{P_{\alpha'\alpha}}(-1)^{P_{\alpha'\alpha}}\Ocal^{2}_{\alpha'}\Ocal^1_\alpha= \Ocal^{2}_{\alpha'}\Ocal^1_\alpha.
	\end{eqnarray}
	When $r'=1$ and $\alpha> \alpha'$, 
	\begin{eqnarray}
		\Ocal^1_\alpha\Ocal^{1}_{\alpha'}=(-1)^{P_{\alpha'\alpha}}(-1)^{P_{\alpha'\alpha}}\Ocal^{1}_{\alpha'}\Ocal^1_\alpha=\Ocal^{1}_{\alpha'}\Ocal^1_\alpha.
	\end{eqnarray}
	When $r'=1$ and $\alpha< \alpha'$, 
	\begin{eqnarray}
		\Ocal^1_\alpha\Ocal^{1}_{\alpha'}=(-1)^{P_{\alpha\alpha'}}(-1)^{P_{\alpha\alpha'}}\Ocal^{1}_{\alpha'}\Ocal^1_\alpha=\Ocal^{1}_{\alpha'}\Ocal^1_\alpha.
	\end{eqnarray}
	When $r'=0$ and $\alpha<\alpha'$, 
	\begin{eqnarray}
		\Ocal^1_\alpha\Ocal^{0}_{\alpha'}= (-1)^{P_{\alpha\alpha'}}(-1)^{P_{\alpha\alpha'}}\Ocal^{0}_{\alpha'}\Ocal^1_\alpha= \Ocal^{0}_{\alpha'}\Ocal^1_\alpha.
	\end{eqnarray}
	
	In summary, we have proven that for any $r', \alpha, \alpha'$, $[\Ocal^1_\alpha, \Ocal^{r'}_{\alpha'}]=0$. By translational invariance, $[\Ocal^r_\alpha, \Ocal^{r'}_{\alpha'}]=0, ~\forall r, r',\alpha, \alpha'$. This completes the proof. 
\end{proof}

\begin{lemma}\label{LemmaH3}
	The operators $\Ocal^r_\alpha$ in Lemma~\ref{LemmaH1} are all independent for different $r=0, ..., L-1$ and $\alpha=1, ..., q$. 
\end{lemma}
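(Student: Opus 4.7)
The plan is to show independence by extracting, for each pair $(r,\alpha)$, a distinguishing Pauli-$X$ fingerprint that is present in $\Ocal^r_\alpha$ and in no other generator. From this, any nontrivial product of the $\Ocal^r_\alpha$'s must contain an uncancelled Pauli-$X$ factor, hence cannot equal the identity.

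\smallskip

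First, I would formalize the notion of independence. The operators $\{\Ocal^r_\alpha\}$ are products of Pauli matrices, so they generate a subgroup of the Pauli group (modulo phases). Independence in the sense required here means that the only $\mathbb{Z}_2$-valued exponents $\{c_{r,\alpha}\in\{0,1\}\}$ satisfying
\begin{equation}
\prod_{r=0}^{L-1}\prod_{\alpha=1}^{q}\left(\Ocal^r_\alpha\right)^{c_{r,\alpha}} \;\propto\; I
\end{equation}
(up to a scalar phase) are $c_{r,\alpha}=0$ for all $(r,\alpha)$.

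\smallskip

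Next, I would extract the $X$-content of each generator. Inspecting Eq.~\eqref{Eq.SPTOperator}, the only Pauli-$X$ factor appearing in $\Ocal^r_\alpha$ is $X^r_\alpha$; all other Pauli matrices in the expansion are of type $Z$. Crucially, for any $(r',\alpha')\ne(r,\alpha)$, the operator $\Ocal^{r'}_{\alpha'}$ contains Pauli $X$ only at the single spin location $(r',\alpha')$, so it does \emph{not} act with a Pauli $X$ on the spin $(r,\alpha)$; its action on the spin $(r,\alpha)$ is either the identity or a Pauli $Z$. Therefore in any product $\prod_{r,\alpha}(\Ocal^r_\alpha)^{c_{r,\alpha}}$, the Pauli-$X$ content at the spin $(r,\alpha)$ is exactly $(X^r_\alpha)^{c_{r,\alpha}}$.

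\smallskip

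Finally, I would conclude by the Pauli-string comparison. A product of Pauli-$X$ and Pauli-$Z$ operators is proportional to the identity if and only if on every spin the accumulated $X$-power and the accumulated $Z$-power are both even. Applying the first condition at the spin $(r,\alpha)$ and using the previous paragraph forces $c_{r,\alpha}\equiv 0\pmod 2$. Running this argument over all $(r,\alpha)$ yields $c_{r,\alpha}=0$ for all $r$ and $\alpha$, proving independence. I do not anticipate any genuine obstacle: the argument is a one-line fingerprint argument once the $X$-content observation is stated, and the remaining bookkeeping of $Z$-factors is irrelevant because the $X$-obstruction alone is already decisive.
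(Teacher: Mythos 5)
Your proof is correct and takes essentially the same route as the paper, whose entire argument is the observation that each $\Ocal^r_\alpha$ contains exactly one Pauli $X$, namely $X^r_\alpha$, at a distinct spin location. You simply spell out the fingerprint bookkeeping that the paper leaves implicit.
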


\begin{proof}
	The observation is that each $\Ocal^r_\alpha$ involves only one Pauli $X$ operator, $X^r_{\alpha}$. Then all operators $\Ocal^r_\alpha$ are independent.
\end{proof}

\begin{lemma}\label{LemmaH4}
	The commuting Hamiltonian 
	\begin{eqnarray}\label{Eq.34}
		H_{(\Z_2)^q, \omega_2}=-\sum_{r=0}^{L-1} \sum_{\alpha=1}^{q} \Ocal^r_{\alpha}.
	\end{eqnarray}
	has only one ground state. 
\end{lemma}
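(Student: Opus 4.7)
The plan is to count the dimension of the common $+1$ eigenspace of the $qL$ mutually commuting stabilizers $\Ocal^r_\alpha$ by evaluating the trace of the projector onto that eigenspace. Since each $\Ocal^r_\alpha$ is Hermitian, squares to the identity, and they all commute (Lemma \ref{LemmaH2}), they can be simultaneously diagonalized with eigenvalues $\pm 1$, and the ground space of $H_{(\Z_2)^q,\omega_2}$ is precisely the simultaneous $+1$ eigenspace. Lemma~\ref{LemmaH1} already shows that $|\psi\rangle_{(\Z_2)^q, \omega_2}$ lies in this eigenspace, so it remains to show the eigenspace is one-dimensional.

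The Hilbert space has dimension $2^{qL}$, and the projector onto the common $+1$ eigenspace is
\begin{equation*}
P = \prod_{r=0}^{L-1}\prod_{\alpha=1}^{q}\frac{1+\Ocal^r_\alpha}{2}.
\end{equation*}
Expanding the product and using the fact that all $\Ocal^r_\alpha$ commute, one obtains
\begin{equation*}
\Tr(P) = \frac{1}{2^{qL}}\sum_{S\subseteq \{(r,\alpha)\}}\Tr\Big(\prod_{(r,\alpha)\in S}\Ocal^r_\alpha\Big),
\end{equation*}
where the sum is over all subsets $S$ of the index set $\{(r,\alpha): 0\leq r\leq L-1,\; 1\leq\alpha\leq q\}$. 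Since each $\Ocal^r_\alpha$ is a product of Pauli matrices, each term in the sum is a Pauli operator up to an overall $\pm 1$ sign, and its trace vanishes unless it is proportional to the identity.

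The key step is then to show that the only subset $S$ for which $\prod_{(r,\alpha)\in S}\Ocal^r_\alpha \propto I$ is the empty set, in which case the proportionality constant is $+1$. This follows from the crucial observation (already used in Lemma~\ref{LemmaH3}) that each $\Ocal^r_\alpha$ contains exactly one Pauli $X$ operator, namely $X^r_\alpha$, and no other Pauli $X$'s. Consequently, the product $\prod_{(r,\alpha)\in S}\Ocal^r_\alpha$ contains the Pauli $X$ operators $\{X^r_\alpha:(r,\alpha)\in S\}$ (multiplied by some Pauli $Z$ operators and an overall $\pm 1$ sign), and these $X$ operators cannot be cancelled by the $Z$ operators present in the product. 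Hence the product is proportional to the identity only when $S=\emptyset$, in which case the product is $+I$.

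Combining these observations, every non-empty $S$ contributes $0$ to the trace while $S=\emptyset$ contributes $\Tr(I)=2^{qL}$, so
\begin{equation*}
\Tr(P) = \frac{2^{qL}}{2^{qL}} = 1,
\end{equation*}
proving that the ground space is one-dimensional. The main (mild) obstacle is formalizing the claim that no non-trivial product of the $\Ocal^r_\alpha$'s equals $\pm I$; this reduces to the single-$X$ structure of each generator and is essentially the same independence argument as Lemma~\ref{LemmaH3}, strengthened to exclude the possibility of $-I$ arising as well.
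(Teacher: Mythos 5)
Your proof is correct and follows essentially the same route as the paper's: both arguments reduce to the observation that the $qL$ stabilizers are independent because each $\Ocal^r_\alpha$ contains exactly one Pauli $X$, located at the distinct site $(r,\alpha)$, so that $qL$ independent constraints on $qL$ spins leave a one-dimensional stabilized subspace. Your version is in fact slightly more careful than the paper's, since the trace-of-projector computation $\Tr(P)=1$ rigorously justifies the step that the paper only asserts, namely that each independent stabilizer condition halves the dimension (equivalently, that no nontrivial product of the $\Ocal^r_\alpha$ equals $\pm I$).
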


\begin{proof}
	We prove by counting the degrees of freedom and the number of independent constraints. Since each unit cell contains $q$ spins and there are $L$ unit cells, the total dimension of the Hilbert space is $2^{qL}$. From Lemma \ref{LemmaH2}, all the operators in the Hamiltonian commute. Thus the ground state $|\psi\rangle_{(\Z_2)^q, \omega_2}$ must be stabilized by all the operators satisfying 
	\begin{eqnarray}\label{Eq.33}
		\Ocal^r_\alpha|\psi\rangle_{(\Z_2)^q, \omega_2}=|\psi\rangle_{(\Z_2)^q, \omega_2}.
	\end{eqnarray}
	From Lemma \ref{LemmaH3}, all the operators $\Ocal^r_\alpha$ are independent. Hence each Eq.~\eqref{Eq.33} provides one independent constraint for the ground state Hilbert space. Because $\Ocal^r_\alpha$ is a product of Pauli operators, each equation in Eq.~\eqref{Eq.33} eliminates half of the Hilbert space dimension. Since there are $qL$ independent equations, the number of ground state is $2^{qL-qL}=1$. Hence there is only one ground state. 
\end{proof}

Summarizing Lemma \ref{LemmaH1}, \ref{LemmaH2}, \ref{LemmaH3} and \ref{LemmaH4}, we have constructed the cocycle Hamiltonian:
\begin{theorem}
	The cocycle state Eq.~\eqref{Eq.SPTGS} is stabilized by the cocycle Hamiltonian
	\begin{eqnarray}\label{Eq.CocycleHamiltonian}
		H_{(\Z_2)^q, \omega_2}=-\sum_{r=0}^{L-1} \sum_{\alpha=1}^{q} \Ocal^r_{\alpha},
	\end{eqnarray}
	where
	\begin{equation}
		\Ocal^{r}_\alpha\equiv \prod_{1\leq k< \alpha}(Z^{r+1}_k Z^{r}_k)^{P_{k\alpha}} X^r_\alpha \prod_{\alpha<l\leq q}(Z^r_lZ^{r-1}_l)^{P_{\alpha l}}.
	\end{equation}
	The Hamiltonian satisfies
	\begin{enumerate}
		\item All the operators $\Ocal^{r}_\alpha$ are products of Pauli operators, and mutually commute. 
		\item There is a unique ground state $|\psi\rangle_{(\Z_2)^q, \omega_2}$ with PBC. 
	\end{enumerate}
\end{theorem}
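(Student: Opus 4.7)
The plan is to assemble the four lemmas \ref{LemmaH1}, \ref{LemmaH2}, \ref{LemmaH3}, and \ref{LemmaH4} already proven earlier in this appendix, since the theorem is essentially a restatement of their conjunction. The theorem breaks cleanly into the two listed items.

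For item (1), the fact that each $\Ocal^r_\alpha$ is a product of Pauli operators is immediate from its definition Eq.~\eqref{Eq.SPTOperator}: it is a single $X^r_\alpha$ multiplied by (possibly empty) strings of Pauli $Z$'s determined by the coefficients $P_{k\alpha}$. The mutual commutation is exactly the content of Lemma \ref{LemmaH2}. The pattern behind that lemma is that if $X^r_\alpha$ in one stabilizer overlaps with a $Z$ in $\Ocal^{r'}_{\alpha'}$, there is always a reciprocal overlap of $X^{r'}_{\alpha'}$ with a $Z$ in $\Ocal^r_\alpha$; the two minus signs share the same exponent $P_{\min(\alpha,\alpha'),\,\max(\alpha,\alpha')}$ and cancel, so that any two stabilizers commute.

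For item (2), I would first invoke Lemma \ref{LemmaH1} to get the stabilizer condition $\Ocal^r_\alpha \ket{\psi}_{(\Z_2)^q,\omega_2}=\ket{\psi}_{(\Z_2)^q,\omega_2}$ for every $(r,\alpha)$. That lemma is a direct calculation: act by $X^r_\alpha$ on the sum Eq.~\eqref{Eq.SPTGS}, relabel $\widetilde{g}^{\hat{r}}_k=g^{\hat{r}}_k+\delta^{\hat{r}r}\delta_{k\alpha}$, and observe that the relabeling produces exactly the extra linear-in-$\widetilde{g}$ phases that can be reproduced by the $Z$-string decorating $X^r_\alpha$ in Eq.~\eqref{Eq.SPTOperator}. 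Uniqueness then follows via the dimension count in Lemma \ref{LemmaH4}: the Hilbert space has dimension $2^{qL}$; by item (1) the $qL$ stabilizers commute; by Lemma \ref{LemmaH3} they are independent, so each imposes an independent constraint that halves the subspace of common $+1$ eigenstates; hence the ground space has dimension $2^{qL-qL}=1$.

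The main obstacle, if any, sits inside Lemma \ref{LemmaH3}: to use the halving-the-Hilbert-space count, one really needs that no nontrivial product of the $\Ocal^r_\alpha$'s equals the identity. The argument via the unique $X^r_\alpha$ content of each stabilizer suffices, because any product of stabilizers carries one $X^r_\alpha$ for each stabilizer used and these Pauli $X$'s cannot be cancelled by the $Z$'s in the other factors; hence the product is trivial only when no stabilizers are used. Given this, combining Lemma \ref{LemmaH1}'s stabilizer equation with Lemma \ref{LemmaH4}'s dimension count establishes that $\ket{\psi}_{(\Z_2)^q,\omega_2}$ is the unique common $+1$ eigenstate of all $\Ocal^r_\alpha$, and therefore the unique ground state of $H_{(\Z_2)^q,\omega_2}$.
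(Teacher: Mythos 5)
Your proposal is correct and follows the same route as the paper: the theorem is proved there simply by assembling Lemmas \ref{LemmaH1}--\ref{LemmaH4}, exactly as you do, with the stabilizer condition from Lemma \ref{LemmaH1}, commutativity from Lemma \ref{LemmaH2}, and uniqueness from the independence and dimension count of Lemmas \ref{LemmaH3} and \ref{LemmaH4}. Your added remark that independence must mean no nontrivial product of stabilizers equals the identity, justified by the uncancellable single $X^r_\alpha$ in each stabilizer, is the same observation underlying the paper's one-line proof of Lemma \ref{LemmaH3}.
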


\section{Some Useful Identities}
\label{app.UsefulIden}

In this appendix, we prove that Eq.~\eqref{Eq.Identities_n} holds. We first prove a Lemma which turns out to be useful in proving Eq.~\eqref{Eq.Identities_n}. 
\begin{lemma}\label{LemmaI1}
	If $x$ is an integer, then the following equation holds. 
	\begin{equation}\label{Eq.appD3}
		\begin{split}
			\exp\bigg(i\pi \frac{1}{2} x^2 \bigg)=\frac{1+\exp\left(i\pi x\right)}{2}+i \frac{1-\exp\left(i\pi x\right)}{2}.
		\end{split}
	\end{equation}
\end{lemma}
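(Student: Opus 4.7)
The plan is to verify the identity by case analysis on the parity of the integer $x$, since both sides of Eq.~\eqref{Eq.appD3} are manifestly periodic in $x$ modulo $2$ (the RHS depends only on $\exp(i\pi x) = (-1)^x$, and the LHS satisfies $\exp(i\pi(x+2)^2/2) = \exp(i\pi x^2/2)\cdot \exp(i\pi(2x+2)) = \exp(i\pi x^2/2)$ because $2x+2$ is an even integer). Hence it suffices to check the two cases $x\equiv 0 \pmod 2$ and $x\equiv 1 \pmod 2$.

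For the even case, write $x = 2k$ with $k\in\mathbb{Z}$. Then $x^2/2 = 2k^2$ is an even integer, so the LHS evaluates to $\exp(2\pi i k^2) = 1$. On the RHS, $\exp(i\pi x) = 1$, so the first fraction equals $1$ and the second vanishes, giving $1$. For the odd case, write $x = 2k+1$. Then $x^2 = 4k^2+4k+1$, hence $x^2/2 = 2k(k+1) + 1/2$, and since $2k(k+1)$ is an even integer, the LHS simplifies to $\exp(i\pi/2) = i$. On the RHS, $\exp(i\pi x) = -1$, so the first fraction vanishes and the second equals $i$, again giving $i$.

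Both cases match, so the identity holds for all integers $x$. No step is especially delicate; the only subtlety to flag is that $x^2/2$ need not itself be an integer (it is a half-integer when $x$ is odd), which is why the LHS cannot be simplified directly to $\exp(i\pi x)^{x/2}$ and why the case split is needed. Because the proof is so short, I would present it inline in two displayed equations — one for each parity — rather than breaking out intermediate lemmas, and then note the periodicity observation above to justify reducing to these two representatives.
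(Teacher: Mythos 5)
Your proof is correct and follows essentially the same route as the paper's: a case analysis on the parity of $x$, checking that both sides equal $1$ for even $x$ and $i$ for odd $x$. The extra detail you supply (explicit substitution $x=2k$, $x=2k+1$ and the periodicity remark) merely fleshes out what the paper states in one line.
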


\begin{proof}
	When $x$ is an even integer, both sides are $1$. When $x$ is an odd integer, both sides are $i$. Hence Eq.~\eqref{Eq.appD3} holds. 
\end{proof}

\begin{lemma}
	Eq.~\eqref{Eq.Identities_n} holds. 
\end{lemma}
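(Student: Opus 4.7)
The plan is to reduce both sides of Eq.~\eqref{Eq.Identities_n} to a function of the single integer $S = \sum_{i=1}^n g_i$, exploit the identity $g_i^2 = g_i$ that holds because $g_i \in \{0,1\}$, and then verify equality by a case analysis on the parity of $S$, invoking Lemma \ref{LemmaI1} as the crucial ingredient.

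First I would simplify the left-hand side. Since $\left(\sum_i g_i\right)^2 = \sum_i g_i^2 + 2\sum_{i<j} g_i g_j = S + 2\,\Sym(g_1,\ldots,g_n)$, we get $\Sym(g_1,\ldots,g_n) = (S^2 - S)/2$ and hence LHS $= \exp\!\left(i\pi S^2/2\right)\exp\!\left(-i\pi S/2\right)$. Next I would simplify the right-hand side by combining the $h=0$ and $h=1$ terms into a single cosine and applying the angle-sum formula:
\begin{equation*}
\mathrm{RHS} = \sqrt{2}\cos\!\left(\tfrac{\pi S}{2} - \tfrac{\pi}{4}\right) = \cos\!\left(\tfrac{\pi S}{2}\right) + \sin\!\left(\tfrac{\pi S}{2}\right).
\end{equation*}

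Now I would apply Lemma \ref{LemmaI1} with $x = S$ (which is an integer) to rewrite $\exp(i\pi S^2/2) = \tfrac{1+(-1)^S}{2} + i\tfrac{1-(-1)^S}{2}$. This evaluates to $1$ if $S$ is even and to $i$ if $S$ is odd. Multiplying by $\exp(-i\pi S/2) = \cos(\pi S/2) - i\sin(\pi S/2)$, one finds: if $S$ is even, LHS $= \cos(\pi S/2) - i\sin(\pi S/2) = \cos(\pi S/2)$ since $\sin(\pi S/2) = 0$; if $S$ is odd, LHS $= i\cos(\pi S/2) + \sin(\pi S/2) = \sin(\pi S/2)$ since $\cos(\pi S/2) = 0$. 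In both cases the vanishing term can be added back for free, giving LHS $= \cos(\pi S/2) + \sin(\pi S/2)$, which matches the RHS.

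There is no serious obstacle here; the only subtlety is making sure that the Hubbard--Stratonovich-like linearization on the RHS really depends only on $S$ and not on the individual $g_i$'s, which is immediate from $\sum_i g_i = S$. Everything else is bookkeeping: rewriting the quadratic exponent using $g_i^2 = g_i$ and then applying Lemma \ref{LemmaI1} to eliminate the remaining $S^2$. An alternative would be a short induction on $n$, but the direct route above is cleaner because the relation $\Sym(g_1,\ldots,g_n) = S(S-1)/2$ makes the dependence on $n$ disappear entirely.
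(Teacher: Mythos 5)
Your proof is correct and follows essentially the same route as the paper: both rewrite $\mathrm{Sym}(g_1,\ldots,g_n)$ as $\frac{1}{2}\left(S^2-S\right)$ using $g_i^2=g_i$, invoke Lemma \ref{LemmaI1} with $x=S$, and match the result to the cosine form $\sqrt{2}\cos\left(\frac{\pi}{2}\left(S-\frac{1}{2}\right)\right)$ of the hidden-spin sum. The only cosmetic difference is that you verify the final equality by a parity case analysis on $S$ rather than reading the RBM form off directly.
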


\begin{proof}
	We start with the LHS of Eq.~\eqref{Eq.Identities_n}. Using $\sum_{i<j} g_i g_j=\frac{1}{2} \left( (\sum_{i}g_i)^2 - \sum_{i}g_i^2 \right)$, we reduce the LHS to 
	\begin{equation}\label{Eq.appD2}
		\exp\bigg(i\pi \sum_{i<j} g_i g_j\bigg)=\exp\bigg(i\pi \frac{1}{2} \left( (\sum_{i}g_i)^2 - \sum_{i}g_i^2 \right)\bigg).\\
	\end{equation}
	If we further restrict the value of $g_i$ as $g_i\in \{0,1\}$, we have $g_i^2=g_i$, hence $\sum_i g_i^2=\sum_i g_i$. Applying Lemma~\ref{LemmaI1} with $x=\sum_{i}g_i$, we further reduce Eq.~\eqref{Eq.appD2} to 
	\begin{eqnarray}
		\begin{split}
			&\bigg(\frac{1+e^{i \pi \sum_{i=1}^{n}g_i}}{2}+i \frac{1-e^{i \pi \sum_{i=1}^{n}g_i}}{2}\bigg)e^{-\frac{i\pi}{2}\sum_{i=1}^{n}g_i}\\&=\sqrt{2} \cos\left( \frac{\pi}{2} \left( \sum_{i=1}^{n} g_i - \frac{1}{2} \right)\right).
		\end{split}
	\end{eqnarray}
	Introducing a hidden variable $h$ to write the RHS in the RBM form, we find the RHS is precisely
	\begin{eqnarray}
		\frac{1}{\sqrt{2}}\sum_{h=0}^{1} \exp\bigg(i\frac{\pi}{2}(1-2h)\sum_{i=1}^n g_i-i\frac{\pi}{4}(1-2h)\bigg).
	\end{eqnarray}
	This completes the proof. 
\end{proof}

Two simple examples of Eq.~\eqref{Eq.Identities_n} are:
\begin{equation}
	\begin{split}
		&\exp\bigg(i\pi g_1 g_2\bigg)	
		\\&=\frac{1}{\sqrt{2}}\sum_{h=0}^{1} \exp\bigg(i\frac{\pi}{2}(1-2h)(g_1+g_2)-i\frac{\pi}{4}(1-2h)\bigg)
	\end{split}
\end{equation}
for $n=2$ and 
\begin{equation}
	\begin{split}
		&\exp\bigg(i\pi (g_1 g_2+ g_1 g_3+g_2g_3)\bigg)	
		\\&=\frac{1}{\sqrt{2}}\sum_{h=0}^{1} \exp\bigg(i\frac{\pi}{2}(1-2h)(g_1+g_2+g_3)-i\frac{\pi}{4}(1-2h)\bigg)
	\end{split}
\end{equation} 
for $n=3$.

\section{More Examples of RBM for Cocycle Model}
\label{app.J}


In this appendix, we exemplify the construction of the RBM state in Sec.~\ref{Sec.RBMCocycle} by the cocycle model with $P_{12}=P_{13}=\cdots = P_{1q}=1$ and $P_{ij}=0$ with $i\geq 2$ and $j>i$. 

The Hamiltonian of the model is 
\begin{equation}
	\begin{split}
		H_{(\Z_2)^q, \omega_2}=&-\sum_{r=0}^{L-1} \Bigg(\prod_{i=2}^{q}Z^r_{i} X^{r+1}_1 \prod_{i=2}^{q}Z^{r+1}_i+ \sum_{i=2}^{q} Z^{r}_1 Z^{r+1}_1 X^r_i~ \Bigg).
	\end{split}
\end{equation}
The ground state is 
\begin{equation}\label{Eq.GSExample2}
	|\mathrm{GS}\rangle_{(\Z_2)^q, \omega_2}=\sum_{\{g^r_i\}}\prod_{r=0}^{L-1}\exp\bigg(i\pi \sum_{i=2}^q (g_i^{r}-g^{r-1}_{i})g^r_{1}\bigg)|\{g^r_i\}\rangle.
\end{equation}
The $q\times q$ $\Gamma$ matrix (defined in Eq.~\eqref{Eq.Gamma}) is
\begin{equation}\label{Eq.GammaMatrixExample2}
	\Gamma=
	\begin{pmatrix}
		0 & 0 & \cdots & 0 & 0\\
		1 & 0& \cdots& 0& 0\\
		1 & 0 & \cdots & 0& 0\\
		\vdots &\vdots & \ddots & \vdots& 0 \\
		1 & 0 & \cdots &0 & 0
	\end{pmatrix}.
\end{equation}
Applying the procedures introduced in the proof of Lemma.~\ref{lemma.GaussianElimination}, we first use row operations to set the all the rows of Eq.~\eqref{Eq.GammaMatrixExample2} to zero except the first row. Recall $G_1$ and $G_2$ defined in Eq.~\eqref{Eq.G1G2}. The row operation is
\begin{eqnarray}
	G^T= G_2(1, q-1)G_2(1,q-2)\cdots G_{2}(1,2) G_1(1,q).
\end{eqnarray}
The visible spins transform as
\begin{eqnarray}
	\begin{pmatrix}
		g^r_{1}\\
		g^r_{2}\\
		\vdots\\
		g^r_{q-1}\\
		g^r_{q}
	\end{pmatrix}
	\to 
	\begin{pmatrix}
		\widehat{g}^r_{1}\\
		\widehat{g}^r_{2}\\
		\vdots\\
		\widehat{g}^r_{q-1}\\
		\widehat{g}^r_{q}
	\end{pmatrix}
	= 
	G^{-1}\cdot 
	\begin{pmatrix}
		g^r_{1}\\
		g^r_{2}\\
		\vdots\\
		g^r_{q-1}\\
		g^r_{q}
	\end{pmatrix}
	=
	\begin{pmatrix}
		\sum_{i=2}^{q}g^r_{i}\\
		g^r_{2}\\
		\vdots\\
		g^r_{q-1}\\
		g^r_{1}
	\end{pmatrix}.
\end{eqnarray}
The $\Gamma$ matrix is transformed to
\begin{eqnarray}
	\Gamma\to \widehat{\Gamma}=G^T\cdot \Gamma \cdot G= 
	\begin{pmatrix}
		0 & 0 & \cdots & 0 & 1\\
		0 & 0& \cdots& 0& 0\\
		0 & 0 & \cdots & 0& 0\\
		\vdots &\vdots & \ddots & \vdots& 0 \\
		0 & 0 & \cdots &0 & 0
	\end{pmatrix}.
\end{eqnarray}
Hence the rank of the $\Gamma$ matrix is
\begin{eqnarray}
	\mathrm{rank}(\Gamma)=\mathrm{rank}(\widehat{\Gamma})=1.
\end{eqnarray}
Using the identity Eq.~\eqref{Eq.Identities_n}, we only need to introduce one hidden spin of type $h$ and type $\widetilde{h}$ respectively to express the exponent in Eq.~\eqref{Eq.GSExample2} in terms of RBM, 
\begin{eqnarray}
	\begin{split}
		&\sum_{r=0}^{L-1}\sum_{i=2}^q (g_i^{r}-g^{r-1}_{i})g^r_{1}
		=\\&\sum_{r=0}^{L-1}\bigg(-\Sym(g_1^r, \sum_{i=2}^{q}g_i^{r-1})	
		+\Sym(g_1^r, \sum_{i=2}^{q}g_i^{r})\bigg).
	\end{split}
\end{eqnarray}

The ground state Eq.~\eqref{Eq.GSExample2} can be written as an RBM state
\begin{widetext}
	\begin{equation}
		\begin{split}
			|\mathrm{GS}\rangle_{(\Z_2)^q, \omega_2}
			&=\sum_{\{g^r_i\}, \{h^r_1\}, \{\widetilde{h}^r_1\}} \prod_{r=0}^{L-1}\exp\bigg(-i\frac{\pi}{2}(1-2h_1^r)(g_1^r+\sum_{i=2}^q g^{r-1}_i)+i\frac{\pi}{4}(1-2h_1^r)+i\frac{\pi}{2}(1-2\widetilde{h}^r_1)\sum_{i=1}^q g^r_i-i\frac{\pi}{4}(1-2\widetilde{h}^r_1)\bigg) |\{g^r_i\}\rangle.
		\end{split}
	\end{equation}
	This RBM can be casted into an MPS with bond dimension 2, and the matrix elements of the RBM-MPS are:
	\begin{equation}
		\begin{split}
			T^{g^r_1, \ldots , g^r_q}_{h^r_1, h^{r+1}_1}
			&=\exp\bigg(-i\frac{\pi}{2}(1-2h_1^r)g_1^r- i\frac{\pi}{2}(1-2h_1^{r+1})\sum_{i=2}^q g^r_i+i\frac{\pi}{4}(1-2h_1^r)\bigg)\sum_{\widetilde{h}^r_1=0}^1\exp\bigg(i\frac{\pi}{2}(1-2\widetilde{h}^r_1)\sum_{i=1}^q g^r_i-i\frac{\pi}{4}(1-2\widetilde{h}^r_1)\bigg).
		\end{split}
	\end{equation}
\end{widetext}
We also present the RBM for two examples in Fig.~\ref{FigRBMExample2q=3} and \ref{FigRBMExample2q=4} corresponding to $q=3$ and $q=4$. 

\begin{figure}[H]
	\centering
	\includegraphics[width=1\columnwidth]{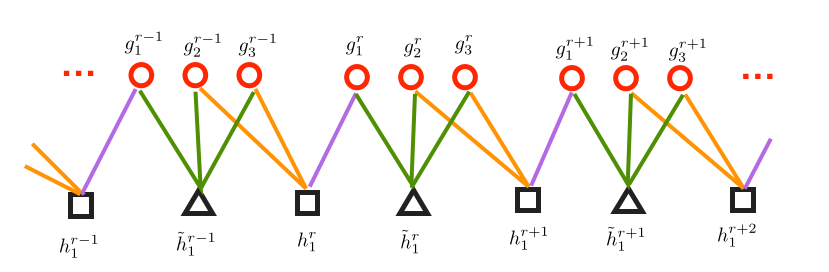}
	\caption{RBM network for cocycle model with $q=3, P_{12}=P_{13}=1, P_{23}=0$.}
	\label{FigRBMExample2q=3}
\end{figure}

\begin{figure}[H]
	\centering
	\includegraphics[width=1\columnwidth]{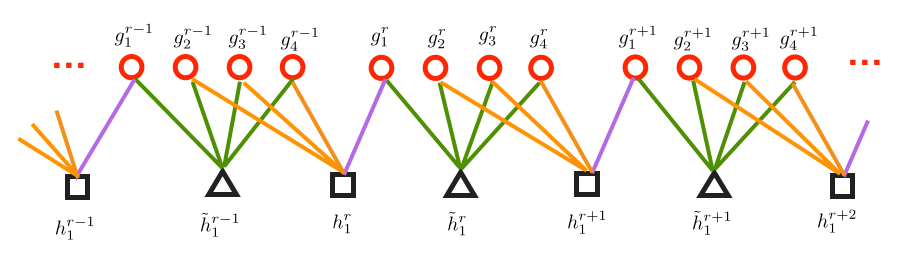}
	\caption{RBM network for cocycle model with $q=4, P_{12}=P_{13}=P_{14}=1, P_{23}=P_{24}=P_{34}=0$.}
	\label{FigRBMExample2q=4}
\end{figure}

\bibliography{RBM_MPS}

\end{document}